\documentclass{article}
\usepackage[utf8]{inputenc}
\usepackage{cancel} % For the NOT dsep line

\usepackage[top=2.5cm,bottom=2.5cm,right=2.5cm,left=2.5cm]{geometry}

% Packages

\usepackage{amsmath, amsthm, amsfonts, amssymb}
\usepackage{bbm} % Indicator function
\usepackage{mathrsfs}
\usepackage[hidelinks]{hyperref}
\usepackage{multicol, tabularx}
\usepackage[inline]{enumitem}
\usepackage{thmtools}
\usepackage{algorithm, algpseudocode}
\usepackage{caption} % Removes labels from images
\usepackage{subcaption}
\usepackage{graphicx,float}
\usepackage[
    nameinlink,
    noabbrev,
    capitalise
    ]{cleveref}

\usepackage[dvipsnames]{xcolor}

% Environements for theorems and definitions
\newtheorem{theorem}{Theorem}[section]
\newtheorem{proposition}{Proposition}[section]
\newtheorem{corollary}{Corollary}[section]
\newtheorem{example}{Example}[section]
\newtheorem{remark}{Remark}[section]
\newtheorem{assumption}{Assumption}[section]
\newtheorem{definition}{Definition}[section]

% Define refnames (all capitalised for now)
\crefname{example}{Example}{Examples}
\Crefname{example}{Example}{Examples}
\crefname{lemma}{Lemma}{Lemmas}
\Crefname{lemma}{Lemma}{Lemmas}
\crefname{thm}{Theorem}{Theorems}
\Crefname{thm}{Theorem}{Theorems}
\crefname{corollary}{Corollary}{Corollaries}
\Crefname{corollary}{Corollary}{Corollaries}

% Define new lists for inside the theorems

\declaretheorem[
    name=Lemma,
    %refname={lemma,lemmas},
    Refname={Lemma,Lemmas},
    numberwithin=section]{lemma}

\newlist{lemlist}{enumerate}{1}
\setlist[lemlist]{label=(\roman{lemlisti}), ref=\thethm(\roman{lemlisti})}

\crefname{lemlisti}{Lemma}{Lemmas}
\Crefname{lemlisti}{Lemma}{Lemmas}

% ============================================
%           TikZ commands
% ============================================

\usepackage{tikz}
\usepackage{tkz-graph} 

% Tikz libraries
\usetikzlibrary{graphs}
\usetikzlibrary{arrows.meta, chains, positioning, shapes.geometric}
\usetikzlibrary{quotes}
\usetikzlibrary{decorations.pathreplacing}
\usetikzlibrary{calc,fit}

% ============================================
%               Commands
% ============================================

% Custom labels
\makeatletter
\newcommand{\mylabel}[2]{#2\def\@currentlabel{#2}\label{#1}}
\makeatother

% Independence symbol
  
% The "d" in the integration
 \newcommand*\diff{\mathop{}\!\mathrm{d}}

% argmin and argmax

% ForEach command in pseudocode
\algnewcommand\algorithmicforeach{\textbf{for each}}
\algdef{S}[FOR]{ForEach}[1]{\algorithmicforeach\ #1\ \algorithmicdo}

% Iteration line

% commands to obtain the \bigtimes symbol
% from https://tex.stackexchange.com/questions/4955/how-to-use-bigtimes-without-pain
\DeclareFontFamily{U}{mathx}{\hyphenchar\font45}
\DeclareFontShape{U}{mathx}{m}{n}{
      <5> <6> <7> <8> <9> <10>
      <10.95> <12> <14.4> <17.28> <20.74> <24.88>
      mathx10
      }{}
\DeclareSymbolFont{mathx}{U}{mathx}{m}{n}
\DeclareMathSymbol{\bigtimes}{1}{mathx}{"91}

% From https://tex.stackexchange.com/questions/409698/reference-cite-in-the-title-of-a-subsection-and-in-pdf-bookmark-hyperref-us
%
\usepackage{etoolbox}
\makeatletter
\patchcmd{\@bibitem}{\ignorespaces}{\label{bib-#1}\ignorespaces}{}{}
\makeatother

\newcommand{\mycite}[1]{[\ref{bib-#1}]}

\newcommand{\myciteintitle}[1]{\texorpdfstring{\cite{#1}}{\mycite{#1}}}

% ============================================
%               Definitions
% ============================================

% Common sets
\newcommand{\R}{\mathbb{R}} % Real numbers

% Graph-related sets
\newcommand{\G}{{G}} % Graph
\renewcommand{\O}{{O}} % Orders
\newcommand{\V}{{V}} % Nodes
\newcommand{\E}{{E}} % Arcs

% Common node names
\newcommand{\yt}{\Tilde{y}} % o tilde
\newcommand{\ot}{\Tilde{o}} % o tilde
\newcommand{\Ot}{\Tilde{O}} % O tilde

% Harpoons and arrows
\newcommand{\har}{\rightleftharpoons}
\newcommand{\nothar}{\hspace{0.1em} \har\hspace{-1em}/ \hspace{0.5em}}
\newcommand{\ra}{\rightarrow}
\newcommand{\notrightarrow}{\hspace{0.1em} \rightarrow \hspace{-1em}/ \hspace{0.5em}}
\newcommand{\la}{\leftarrow}

% Random vectors
\newcommand{\X}{{ \mathbf{X} }} % vector X
\newcommand{\U}{{ \mathbf{U} }} % vector U

% Fixed vectors
\newcommand{\x}{{ \mathbf{x} }} % vector x
\renewcommand{\u}{{ \mathbf{u} }} % vector u

% Conditional copulas
\newcommand{\cwvparents}{{
c_{w, v | \parentsdown{v}{w}} }}
\newcommand{\cwvparentsbig}[3]{{\cwvparents
\big( #1 \, , \, #2 \, \big| \, #3 \big)}}

% Probability operators
\newcommand{\PP}{{ \mathbb{P} }}

% Statistical models
\newcommand{\modelcop}{{ \mathscr{C} }}
\newcommand{\modelcopwv}{{ \mathscr{C}_{w \ra v} }}

% Parametric conditional copulas
\newcommand{\thetavec}{{ \boldsymbol{\theta} }} % vector theta
\newcommand{\Thetavec}{{ \boldsymbol{\Theta} }} % vector Theta
\newcommand{\thetavecwv}{{ \thetavec_{w \ra v} }}
\newcommand{\Thetavecwv}{{ \Thetavec_{w \ra v} }}
\newcommand{\cthetawv}{{
c_{\rule{0pt}{0.6em} \thetavecwv } }}
\newcommand{\cthetawvbig}[3]{{\cthetawv
\big( #1 \, , \, #2 \, \big| \, #3 \big)}}

% True parameters
\newcommand{\thetavectrue}{{ \thetavec^* }} 
\newcommand{\thetavecwvtrue}{{ \thetavec_{w \ra v}^* }}

% Estimators
\newcommand{\thetavechat}{{ \hat{\thetavec} }}
\newcommand{\thetavecwvhat}{{ \hat{\thetavec}_{w \ra v} }}

% Data
\newcommand{\D}{{\mathscr{D}}} % Data sign
\newcommand{\Dwvhat}{{ \hat{\mathscr{D}}_{w \ra v} }}

% Estimating functions
\newcommand{\phiwv}{{ \phi_{w \ra v} }}
\newcommand{\phiwvtilde}{{ \tilde{\phi}_{w \ra v} }}
\newcommand{\phitilde}{{ \tilde{\phi} }}

% Order-related notations
\newcommand{\Ovk}{O^k_v}
\newcommand{\BOvk}{{B(O^k_v)}}

\newcommand{\smallerbig}[1]{<_{\displaystyle #1}}

% some set
\newcommand{\ad}[1]{{ad}(#1)}
\newcommand{\pa}[1]{{pa}(#1)}
\newcommand{\ch}[1]{{ch}(#1)}

\newcommand{\an}[1]{{an}(#1)}
\newcommand{\de}[1]{{de}(#1)}

% pa(1\downarrow 2)
\newcommand{\parentsdown}[2]{{ pa (#1 \downarrow #2) }}

\newcommand{\parentsup}[2]{{ pa(#1 \uparrow #2) }} 
% \newcommand{\parentsupunder}[2]{{ \underline{pa}(#1 \uparrow #2) }}
% \newcommand{\parentsupover}[2]{{ \overline{pa}(#1 \uparrow #2) }}

% d-sep(1, 2 | 3)
\newcommand{\dsepsymbol}{{ \operatorname{dsep} }}
\newcommand{\notdsepsymbol}{{ \cancel{\dsepsymbol} }}

\newcommand{\dsep}[3]{{ \dsepsymbol (#1, #2 | #3) }}
\newcommand{\dsepbig}[3]{{ \dsepsymbol \big(#1, #2 \, \big| \, #3 \big) }}
\newcommand{\dsepBig}[3]{{ \dsepsymbol \Big(#1, #2 \, \Big| \, #3 \Big) }}
\newcommand{\notdsep}[3]{{ \notdsepsymbol (#1, #2 | #3) }}
\newcommand{\notdsepbig}[3]{{ \notdsepsymbol \big(#1, #2 \, \big| \, #3 \big) }}

% TRAILS(1, 2 | 3)
\newcommand{\TRAILSbig}[3]{{ {TRAILS} \big(#1, #2 \, \big| \, #3 \big) }}

% Commands for trails with converging nodes
\newcommand{\smallerTRAIL}{<_{TRAIL}}

\newcommand{\cdotslong}{\cdots \cdots \cdots}
\newcommand{\prop}{\mathfrak{P}}

\newcommand{\n}[2]{{n_{{#1}}({#2})}}

% \newcommand{\dim}{{ \textrm{dim} }}

% commands for possible candidates
\newcommand{\PossCand}[1]{{{PossCand}(#1) }}
\newcommand{\PossCandInd}[1]{{{PossCandInd}(#1) }}
\newcommand{\PossCandIn}[1]{{{PossCandIn}(#1) }}
\newcommand{\PossCandOut}[1]{{{PossCandOut}(#1) }}

%-------------

\title{Restrictions of PCBNs for integration-free computations}
\author{Alexis Derumigny\thanks{Department of Applied Mathematics, Delft University of Technology, Delft, The Netherlands.
E-mail address: a.f.f.derumigny@tudelft.nl
},
Niels Horsman\thanks{Department of Applied Mathematics, Delft University of Technology, Delft, The Netherlands},
Dorota Kurowicka\thanks{Department of Applied Mathematics, Delft University of Technology, Delft, The Netherlands.
E-mail address: d.kurowicka@tudelft.nl
}
}
\date{\today}

\begin{document}

\maketitle

\begin{abstract}
The pair-copula Bayesian Networks (PCBN) are graphical models composed of a directed acyclic graph (DAG) that represents (conditional) independence in a joint distribution. The nodes of the DAG are associated with marginal densities, and arcs are assigned with bivariate (conditional) copulas following a prescribed collection of parental orders. The choice of marginal densities and copulas is unconstrained. However, the simulation and inference of a PCBN model may necessitate possibly high-dimensional integration.

We present the full characterization of DAGs that do not require any integration for density evaluation or simulations. Furthermore, we propose an algorithm that can find all possible parental orders that do not lead to (expensive) integration. Finally, we show the asymptotic normality of estimators of PCBN models using stepwise estimating equations. Such estimators can be computed effectively if the PCBN does not require integration. A simulation study shows the good finite-sample properties of our estimators.

\medskip

\noindent
\textbf{Keywords:} Pair-copula Bayesian Networks, graphical models, parental orders, stepwise inference.

\noindent
\textbf{MSC (2020):} 62H22, 62H05 (Primary), 62H12 (Secondary).
\end{abstract}

%\tableofcontents

\section{Introduction}

One of the main goals of statistics is to recover the unknown distribution of a random vector $\X$, often represented by its density $f_\X$ with respect to some dominating measure. Because of the curse of dimensionality, this is a hard task in general, and one of the way to make this feasible is to use Bayesian Networks (BNs).
BNs are composed of a direct acyclic graph (DAG) where the nodes correspond to each of the random variables of $\X$ and the arcs encode the dependence structure of these variables.
An extremely attractive feature of these models is their ability to represent complex dependencies in an intuitive way. This is especially important for practitioners, who can easily describe their problems and rely on a solid mathematical theory and many computer implementations of BNs.
These models have been applied in a wide variety of fields including medicine, finance, genetics, and forensic science (\cite{GBN_applications}).

\medskip

Each component of $\X$ is represented by an element $v$ in a set $\V$, often chosen to be $\{1, \dots, d\}$ where $\X = (X_1, \dots, X_d)$.
A key property of BNs is that the conditional independencies between components of $\X$, encoded by a graph $\G$ with node set $\V$ and arc set $\E$, allow for a factorization of the joint density into a product of conditional densities:
\begin{equation*}
    f_{\X}(\x) = \prod_{v\in \V} f_{v|\pa{v}}(x_v|\x_{\pa{v}}),
\end{equation*}
where $f_{v|\pa{v}}$ is the conditional density of a node $v \in \V$ given its parents $\pa{v}$, where a node $w$ is said to be a parent of node $v$ if the arc $w \rightarrow v$ is present in $\G$.
This factorization allows us to decompose the problem of estimating the (global) high-dimensional density $f_\X$ into a set of (local) lower-dimensional problems (a node given its parents).
BNs can be used to represent distributions that are purely discrete, purely continuous, or mixed (discrete and continuous with more restrictions, see \cite{Neil_hybrid}).

\medskip

In this paper we consider a particular type of statistical model based on Baysian Networks, which is called pair-copula Bayesian networks (PCBNs).
These models were introduced in \cite{kurowicka_2005}, and were further investigated in 
\cite{Czado_2016, Bauer_2012, Kurowicka2006, Kurowicka2008}.
In PCBNs, the conditional densities $f_{v|\pa{v}}$ are continuous with respect to Lebesgue's measure and decomposed as a product of bivariate (conditional) copulas. Recall that a copula is a distribution on the unit hypercube with uniform margins, and that, by Sklar's theorem, the joint density $f_\X$ can be decomposed as\
\begin{align*}
    f_\X(\x) = c_\X \big( (F_v(x_v))_{v \in \V} \big)
    \times \prod_{v \in \V} f_v(x_v),
\end{align*}
where $c_\X$ is the copula density of $\X$, $f_v$ is the marginal density of $X_v$ and $F_v$ is the marginal cumulative distribution function of $X_v$.
This allows us to separate the estimation of the marginal densities and the copula, which contains all the information about the dependencies between the components of $\X$.

\medskip

In a PCBN model, each arc $p \rightarrow v$ is assigned a continuous (conditional) bivariate copula representing the (conditional) dependence between the random variables $X_p$ and $X_v$; these copulas must be assigned in a specific manner.
If a node $v$ has more than one parent, then a total order $\smallerbig{v}$ is defined over the parental set $\pa{v}$. 
The parents of $v$ are ordered 
and copulas are then assigned as follows.
The arc from the first parent $p_1$ to $v$ is assigned the copula $C_{p_1, v}$, representing the dependence between $X_{p_1}$ and $X_v$.
Then, the arc from the second parent $p_2$ to $v$ is assigned the conditional copula $C_{p_2, v|p_1}$, representing the conditional dependence between $X_{p_2}$ and $X_v$ given $X_{p_1}$.
The arc from the third parent $p_3$ to $v$ is assigned the conditional copula $C_{p_3, v|p_1,p_2}$, representing the conditional dependence between $X_{p_3}$ and $X_v$ given $(X_{p_1}, X_{p_2})$ and so on.
Therefore, the conditional density $f_{v|\pa{v}}$ can be decomposed as
\begin{align}
    f_{v|\pa{v}}(x_v|\x_{\pa{v}})
    &= f_{v}(x_v) \cdot
    c_{p_1, v}(u_{p_1}, u_v) \cdot
    c_{p_2, v | p_1}(u_{p_2 | p_1}, u_{v | p_1} | X_{p_1} = x_{p_1}) 
    \label{eq:decomposition_f_v_pa_v} \\
    & \cdot \ldots \cdot
    c_{p_m, v | p_1, \dots, p_{m-1}}(
    u_{p_m | p_1, \dots, p_{m-1}},
    u_{v | p_1, \dots, p_{m-1}} \, | \,
    X_{p_1} = x_{p_1}, \dots,
    X_{p_{m-1}} = x_{p_{m-1}} ),
    \nonumber
\end{align}
where $m$ is the number of parents of $v$ in the graph
and $u_{w | S} := F(x_w | \X_S = \x_S)$
for any node $w \in \V$ and any set $S \subset \V$.
Thus, each arc $w\rightarrow v$ is assigned the conditional copula $C_{w, v | \parentsdown{v}{w}}$, where $\parentsdown{v}{w}$ is the set consisting of all parents of $v$, which are earlier than $w$ according to $\smallerbig{v}$.
It has been shown in \cite{kurowicka_2005} that such an assignment of copulas is consistent and provides us with a proper joint density $f_\X$, whose copula is given by
\begin{align*}
    c_\X(\u) = \prod_{v \in \V}
    \prod_{w \in \pa{v}}
    \cwvparentsbig
    {u_{w | \parentsdown{v}{w}}}
    {u_{v | \parentsdown{v}{w}}}
    {u_{\parentsdown{v}{w}}},
\end{align*}
for $\u \in [0,1]^d$, where $d$ is the dimension of $\X$.
Furthermore, if all copulas and marginal distributions in the PCBN are Gaussian, then the PCBN is equivalent to the Gaussian Bayesian Network, see \cite{Lauritzen2001,KollerFriedman,lauritzen_handbook,Scutari_2019}.

\medskip

Parental orders for all nodes are collected in the set $\O := (\smallerbig{v})_{v \in \V}$.
A PCBN includes the tuple $(\G,\O)$ where the graph determines (conditional) independencies between elements of the random vector, and the parental orders indicate (conditional) copula assignments.
Additionally, the copula types have to be determined and their parameters estimated as well as the marginal distributions of all nodes.
PCBNs are much more expensive computationally as compared to Gaussian Bayesian Networks, but they  can represent a much more flexible set of dependencies \cite{Bauer_2012}.

\medskip

To compute $c_\X(\u)$, the terms $u_{w|\parentsdown{v}{w}}$ and $u_{v|\parentsdown{v}{w}}$ are needed.
These conditional margins may require integration. 
\begin{figure}[H]
\centering
    \begin{tikzpicture}[scale=1, transform shape, node distance=1.2cm, state/.style={circle, draw=black}]
    \begin{scope}[every node/.style={circle, minimum size=0.8cm,thick,draw}]
        \node[state] (1) {$1$};
        \node[state, below left = of 1] (2) {$2$};
        \node[state, below right = of 1] (3) {$3$};
        \node[state, below right = of 2] (4) {$4$};
    \end{scope}
    \begin{scope}[>={Stealth[black]}]
        \path [->] (1) edge node[scale=0.8, below, sloped] {$1,2$} (2);
        \path [->] (1) edge node[scale=0.8, below, sloped] {$1,3$} (3);
        \path [->] (2) edge node[scale=0.8, below, sloped] {$2,4$} (4);
        \path [->] (3) edge node[scale=0.8, below, sloped] {$3,4|2$} (4);
    \end{scope}
    \end{tikzpicture}
\caption{Diamond PCBN on four nodes.}
\label{fig:PCBN_diamond}
\end{figure}
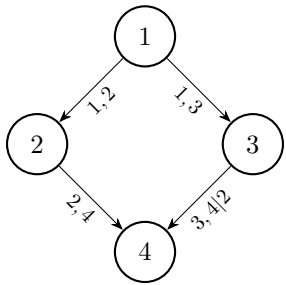

For example, in \cite{Czado_2016}, the graph presented in Figure~\ref{fig:PCBN_diamond} was found to require integration for any assignment of parental orders $\O$.
Note that for this graph we have two possible choices of orders for node 4:
$2 <_4 3$ and $3 <_4 2$.
When we pick $2 <_4 3$, as presented in Figure~\ref{fig:PCBN_diamond}, the copula density is
\begin{align*}
    c(u_1, u_2, u_3, u_4)
    = c_{1, 2}(u_1,u_2)
    \cdot c_{1, 3}(u_1, u_3)
    \cdot c_{2, 4}(u_2, u_4)
    \cdot c_{3, 4|2}(u_{3|2}, u_{4|2} | u_2). 
\end{align*}

The conditional margin $u_{3|2}$, which depends on $u_2$ and $u_3$ must in general be computed using integration. This is due to the fact that the marginalization in pair-copula based models cannot be performed analytically. We have that  
\begin{align*}
    c_{2, 3}(u_2, u_3)
    = \int_0^1 \int_0^1 c(u_1,u_2,u_3,u_4)du_1du_4 =
    \int_0^1 c_{1, 2}(u_1,u_2)
    \cdot c_{1, 3}(u_1, u_3)du_1
\end{align*}
which in general cannot be simplified any further. 
\medskip

In this paper the full characterization of graph structures that do not require integration in the density evaluation is presented. We show that if one restricts the structure of $\G$ by disallowing certain induced subgraphs then there exists a set of parental orders $\O$ for which the evaluation of density is free of integration.  We provide an algorithm that assigns copulas to the arcs of a restricted graph such that the joint density does not require integration.

\medskip

The rest of the paper is organized as follows. Section~\ref{sec:DAGs_and_BNs} presents the background information on DAGs and BNs.
Then in Section~\ref{sec:PCBN}, PCBNs are introduced and the conditions that lead to the need for integration in the density evaluation are studied. Section~\ref{sec:restricted_PCBN} details the restrictions of PCBNs that are sufficient and necessary for integration-free computations.
One of the main theorems of the paper, which guarantees that these restrictions are indeed sufficient for integration-free computations, 
is proved in Section~\ref{sec:proof:existence_possible_candidates},
while supporting lemmas can be found in the appendix.
Estimation of PCBN is studied in Section~\ref{sec:EstimationPCBN}.
The proposed methodology is implemented in the \texttt{R} package \texttt{PCBN} \cite{PCBN}.

\section{Background on DAGs and BNs}
\label{sec:DAGs_and_BNs}

This section contains the background information necessary in later parts of the paper.

\subsection{Directed graphs}
\label{sec:directed_graphs}
Let $\G = (V,E)$ be a directed graph with nodes $V$ and arcs $E$.
We consider only simple graphs without loops and multiple arcs.
Moreover, let $\G^*$ be the associated undirected graph called \textbf{skeleton} of $G$, obtained from $G$ by removing the directions of the arcs.
We say that $G'=(V',E')$ is a \textbf{subgraph} of $G$
if $V'\subseteq \V$ and  $E'\subseteq E$ and for all arcs $w \rightarrow v \in E'$ the nodes $w$ and $v$ are in $V'$.
If $E'$ contains all arcs in $G$ between nodes in $V'$, then $G'$ is said to be \textbf{induced} by $V'$. 
A \textbf{path} is a sequence of nodes $(v_1,v_2,\dots,v_n)$ such that $\{v_1,v_2,\dots,v_n\}\subseteq \V$ and $\{(v_1,v_2), (v_2,v_3), \dots, (v_{n-1},v_n) \} \subseteq \E$ for some integer $n > 0$ called the \textbf{length} of the path.
A \textbf{trail} is a sequence of nodes in $\G$ that forms an undirected path in $\G^*$, for which we use the notation $v_1 \har \cdots \har v_n$. Moreover, two nodes that are connected by an edge in $\G^*$ are called \textbf{adjacent}.
An arc between non-consecutive nodes in a given trail is referred to as a \textbf{chord}.
A path of the form $\{v_1,v_2,\dots,v_n,v_1\}$ is called a \textbf{cycle}.
We call $G$ \textbf{acyclic} if it does not contain any cycle. In this case $G$ is a directed acyclic graph (DAG).

\medskip

For each arc $w \rightarrow v\in \E$ the node $w$ is said to be the \textbf{parent} of $v$ and $v$ is said to be the \textbf{child} of $w$.
For a node $v\in \V$ the sets containing all its parents and children are denoted by $\pa{v}$ and $\ch{v}$, respectively.
If there exists a path from $w$ to $v$, then $w$ is said to be an \textbf{ancestor} of $v$ and $v$ is said to be a \textbf{descendant} of $w$.
For a node $v\in \V$ the sets containing all its ancestors and descendants are denoted by $\an{v}$ and $\de{v}$, respectively.

\medskip

If a node has at least two parents $v_1$ and $v_2$ then we say that $(v_1, v, v_2)$ is a \textbf{v-structure} at $v$ and
when $v$ has at least two children $v_1$ and $v_2$, $(v_1, v, v_2)$ is referred to as a \textbf{diverging connection}
Moreover, paths $(v_1, v_2, v_3 )$ or
$(v_3, v_2, v_1)$ in $\G$ will be called \textbf{serial connections}. 

\medskip

An important concept in graphical models and in particular in BNs whose qualitative part is represented by a directed graph, is that two subsets of nodes can be connected through trails.
These trails can be either blocked or activated given another subset of nodes. 

\begin{definition}[d-separation]\label{def:graph_dsep}
    Let $\G=(\V,\E)$ be a directed graph and
    let $S, Y, Z \subseteq \V$ be disjoint sets.
    Then, $Z$ is said to \textbf{d-separate} $S$ and $Y$ in $\G$, denoted by $\dsep{S}{Y}{Z}$, if every trail $v_1\har v_2 \har \cdots \har v_n$ with $v_1\in S$ and $v_n\in Y$ contains at least one node $v_i$ satisfying one of the following conditions:
        \begin{itemize}
            \item The trail forms a v-structure at $v_i$, i.e. $v_{i-1}\rightarrow v_{i}\leftarrow v_{i+1}$, and the set $\{v_i\}\sqcup de(v_i)$ is disjoint from $Z$.
            \item The trail does not contain a v-structure at $v_i$ and $v_i\in Z$.
        \end{itemize}
    If a trail satisfies one of the conditions above, it is said to be \textbf{blocked} by $Z$, else it is \textbf{activated} by $Z$.
    Furthermore, if $S$ and $Y$ are not d-separated by $Z$, we use the notation $\notdsep{S}{Y}{Z}$.
    Moreover, if a set $S$ or $Y$ is empty, then by convention $\dsep{S}{Y}{Z}$ holds.
\end{definition}

\subsection{Bayesian networks}

A graphical model is a representation of the distribution of a multivariate random vector $\X = (X_1, \dots, X_d)$ in terms of a graph.
Each node $v \in \V$ corresponds to a univariate random variable $X_v$, which is the $v$-th component of $\X$.
In this paper all random vectors are assumed to be absolutely continuous.
We denote by $f_v$ the probability density function (pdf) of $X_v$. 
For $K \subseteq \V$ we write 
$\X_{K} := (X_v)_{v\in K}$
and its pdf is denoted by $f_{K}$.
The cardinal of $K$ is denoted by $|K|$.
Furthermore, the pdf of a random variable $X_v$ conditional on $\X_K$ with $K\subseteq \V\setminus \{v\}$ is denoted by $f_{v|K}$ and the corresponding conditional cumulative distribution function is denoted by $F_{v|K}$.

\begin{definition}[Bayesian network]
    A \textbf{Bayesian network} (BN) is a graphical model composed of
    \begin{itemize}
        \item a DAG $\G = (\V,\E)$ where the nodes correspond to univariate random variables and the arcs describe the conditional independence through d-separation, in the sense that for any disjoints sets $S, Y, Z \subseteq \V$, $\dsep{S}{Y}{Z}$ implies that $\X_S$ and $\X_Y$ are independent given $\X_Z$;
        
        \item a sequence of conditional densities $(f_{v|\pa{v}})_{v \in \V }$.
    \end{itemize}
\end{definition}

The set of conditional independence statements given by $E$ allows for the decomposition of the joint density of $\X$ as a product of the specified conditional densities:
\begin{equation}
    f_\X(\x)
    = \prod_{v \in \V} f_{v|\pa{v}}(x_v|\x_{\pa{v}}).
    \label{eq:BN_markovian}
\end{equation}
Note that different graphical structures can induce the same set of conditional independence statements.
Such graphical structures are then called \textbf{equivalent}.
% A BN is called \textbf{faithful} if the full set of conditional independence that it represents by a faithful BN \cite[Sec.~2.3.3]{Spirtes_1993} can be read from $\G$ through the d-separation.

\medskip

As seen in the second point of the definition, a BN require the specification of all conditional densities $f_{v|\pa{v}}$.
The most popular BNs are discrete and Gaussian BNs, i.e. where each conditional density is either a density with respect to the counting measure (discrete case) or a Gaussian density with respect to Lebesgue's measure (Gaussian case).
Nevertheless, in practice it is rare that random variables follow Gaussian distributions, and it is necessary to have more flexible models that can adequately represent real-life distributions.
This is why, in this paper, we consider a copula-based Bayesian Network, which is presented next.

\section{Pair-Copula Bayesian Networks} \label{sec:PCBN}

\subsection{Introduction}

Each conditional density  $f_{v|pa(v)}$ in the density decomposition~\eqref{eq:BN_markovian}, can be rewritten as a product of the marginal density $f_v$ and the (conditional) bivariate copula densities as seen in \eqref{eq:decomposition_f_v_pa_v}:
the arc $w \rightarrow v$ is assigned the copula $c_{w, v | \parentsdown{v}{w}}$,
where sets $\parentsdown{v}{w}$ and  $\parentsdown{v}{w}$  defined below.

\begin{definition}[Parental order]
\label{def:graph_pa(v;w)}
    Let $\G=(\V, \E)$ be a directed graph and $v \in \V$ be a node.
    % with $\big| \pa{v} \big| > 1$.
    A \textbf{parental order} of $v$ is a total order on the set $\pa{v}$ denoted by $<_v$.
    For all $w\in \pa{v}$, the set of parents of $v$ \textbf{strictly up to} $w$
    (respectively, \textbf{after} $w$)
    is defined as
    $\parentsdown{v}{w} := \{z\in \pa{v};\, z<_v w\}\;$
    (respectively, $\;\parentsup{v}{w} := \{z\in \pa{v};\, w<_v z\}$).
\end{definition}

\medskip

The formal definition of the PCBN including the parental order for each node is presented.
\begin{definition}[Pair-copula Bayesian network]
\label{def:PCBN}
    A \textbf{pair-copula Bayesian network} is defined as a collection
    \({(\G, \O, (f_v)_{v \in \V},
    (C_{w, v|\parentsdown{v}{w}})_{w \ra v \in \E})}\)
    where
    \begin{itemize}
        \item the pair $(\G,\O)$, called the \textbf{structure of the PCBN} consists of
        a DAG $\G$ and a collection of orderings
        $\O = (\smallerbig{v})_{v \in \V}$,
        \item $(f_v)_{v \in \V}$ is a collection of univariate densities,
        \item 
        $(C_{w, v|\parentsdown{v}{w}})_{w \ra v \in \E}$
        is a collection of (conditional) copulas.
    \end{itemize}
\end{definition}

The set of conditional independencies allows for the decomposition of the joint density of a PCBN as a product of the marginal densities and copulas;
\begin{align}
    \label{eq:PCBN_Fac} f_{\X}(\x)
    = \prod_{v\in \V} f_v(x_v)\prod_{w\in \pa{v}} 
    \cwvparentsbig
    {F_{w|\parentsdown{v}{w}}(x_w|\x_{\parentsdown{v}{w}})}
    {F_{v|\parentsdown{v}{w}}(x_v|\x_{\parentsdown{v}{w}})}
    {\x_{\parentsdown{v}{w}}}.
\end{align}

To simplify the notation, for any node $v \in \V$ and for any set $S \subset \V$, we define $u_{v|S} := \PP(U_v \leq u_v | \U_S = \u_s)$.
In particular, we will denote 
$F_{v|\parentsdown{v}{w}}(x_w|\x_{\parentsdown{v}{w}})$
by $u_{v|\parentsdown{v}{w}}$.

\medskip
The graph in \cref{fig:vstruc_3_parents}, with order $1 <_4 2 <_4 < 3$, has copulas $c_{1,4}$, $c_{2, 4 | 1}$ and $c_{3, 4 | 1, 2}$ assigned to its arcs.
The corresponding copula density is as follows:
\begin{align*}
    c_{1,2,3,4}(u_1,u_2,u_3,u_4)
    = c_{1,4}(u_1,u_4) 
    \cdot c_{2, 4 | 1}(u_2,u_{4|1}|u_1)
    \cdot c_{3, 4 | 1, 2}(u_3,u_{4|1,2}|u_1,u_2).
\end{align*}
We see that $X_1$, $X_2$ and $X_3$ are mutually independent.

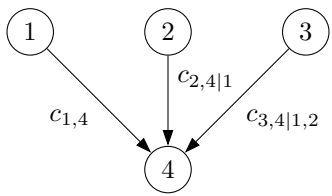
\begin{figure}[H]
    \centering
    \begin{tikzpicture}[scale=1, transform shape, node distance=1.2cm, state/.style={circle, draw=black}]
    
    \node[state] (X1) {$1$};
    \node[state, right = of X1] (X2) {$2$};
    \node[state, right = of X2] (X3) {$3$};
    \node[state, below = of X2] (X4) {$4$};
    
    \begin{scope}[>={Stealth[length=6pt,width=4pt,inset=0pt]}]
    
    \path [->] (X1) edge node[below left, scale=1] {$c_{1, 4}$} (X4);
    \path [->] (X2) edge node[above right, scale=1] {$c_{2, 4 | 1}$} (X4);
    \path [->] (X3) edge node[below right, scale=1] {$c_{3, 4 | 1, 2}$} (X4);
    
    \end{scope}
    \end{tikzpicture}
    \caption{PCBN where node $X_4$ has corresponding parental order $1 <_4 2 <_4 3$.}
    \label{fig:vstruc_3_parents}
\end{figure}

% To further simplify the notation, in the formulas we will omit the argument $\x_{\parentsdown{v}{w}}$ in \eqref{eq:PCBN_Fac}, 
Note that we do not assume that the copula $C_{w, v|\parentsdown{v}{w}}$ is constant with respect to $\x_{\parentsdown{v}{w}}$.
Such assumption is known as the ``simplifying assumption'' (see e.g. \cite{derumigny2017tests} for a review), and is not needed in this paper.  All the results presented in this paper are valid in both cases whether the simplifying assumption is made or not.

\medskip

From \cref{eq:PCBN_Fac}, it is clear that the  density can be written as a product of all marginal densities of $\X$,
% (the first product in \cref{eq:PCBN_Fac})
and a copula density.
% (the second product in \cref{eq:PCBN_Fac}). 
This copula density $c_\X$  corresponding to $f_\X$ is then decomposed as a product of (conditional) copula densities
% \footnote{These copula densities might depend directly on the conditioning variables. }
assigned to arcs in the graph, by
\begin{align}
\label{eq:PCBN_fac_unif}
    c_\X(\u) = \prod_{v\in \V \vphantom{\pa{v}}} \prod_{w\in \pa{v}}
    c_{w, v | \parentsdown{v}{w}}\big(
    u_{w | \parentsdown{v}{w}} ,
    u_{v | \parentsdown{v}{w}} 
    \, | \, \u_{\parentsdown{v}{w}}
    \big).
\end{align}
% where the conditional margins $u_{w|\parentsdown{v}{w}}$ and $u_{v|\parentsdown{v}{w}}$ are computed recursively using~\eqref{req}.

\medskip

The PCBN structure $(\G,\O)$ provides us with a collection of conditional copulas which are assigned to arcs in the graph:
$(C_{w, v|\parentsdown{v}{w}})_{w \ra v \in \E}$.
These copulas are said to be specified by the PCBN. 
Furthermore, the graph of a PCBN induces conditional independencies between random variables that can be read from the graph through the d-separation.
If two nodes are d-separated given set of nodes, then the conditional copula of variables corresponding to these nodes is also known; it is the independence copula.
Finally, adding conditionally independent variables to the conditioning set of an already specified copulas yields a copula that is still specified (see Figure~\ref{fig:example_PCBN_EE_specified}).
Hence, we formalize when $C_{w, v|K}$ is specified.

% For example, in Figure \ref{fig:example_PCBN_EE_specified}, the copula $C_{2,4|1,5}$ is specified, because
% $C_{2,4|1,5} = C_{2,4|1}$, which is explicitly specified.
\begin{figure}[H]
    \centering
    \begin{tikzpicture}[scale=1, transform shape, node distance=1.2cm, state/.style={circle, draw=black}]
    
    \node[state] (X1) {$1$};
    \node[state, right = of X1] (X2) {$2$};
    \node[state, right = of X2] (X3) {$3$};
    \node[state, below = of X2] (X4) {$4$};
    \node[state, right = of X4] (X5) {$5$};
    
    \begin{scope}[>={Stealth[length=6pt,width=4pt,inset=0pt]}]
    
    \path [->] (X1) edge node[below left, scale=1] {$c_{1, 4}$} (X4);
    \path [->] (X2) edge node[above right, scale=1] {$c_{2, 4 | 1}$} (X4);
    \path [->] (X3) edge node[below right, scale=1] {$c_{3, 4 | 1, 2}$} (X4);
    
    \end{scope}
    \end{tikzpicture}
    \caption{PCBN where $C_{2,4|1,5} = C_{2,4|1}$ is specified.}
    \label{fig:example_PCBN_EE_specified}
\end{figure}
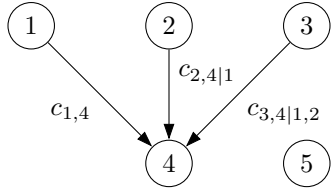

\begin{definition}\label{def:specified}
    Consider a PCBN with node set $\V$, and let $w, v \in \V$.
    We say that the following (conditional) copulas are {\bf specified}:
    \begin{enumerate}[label=(\roman*)]
        \item $C_{v, w | \parentsdown{v}{w}}$ and
        $C_{w, v | \parentsdown{v}{w}}$,
        if $w \rightarrow v$,
        since these copulas are explicitly specified in the PCBN.
        
        \item $C_{w, v | K}$,
        where $K \subseteq \V \setminus \{w, v\}$ and $\dsep{w}{v}{K}$. In this case, $C_{w, v | K}$ is known from the graph structure to be the independence copula.

        \item $C_{w, v | \parentsdown{v}{w} \sqcup J}$,
        where $w \rightarrow v$ and
        $J \neq \emptyset$ is such that
        $U_v$ is independent of 
        $\U_J$ given
        $(U_w, \U_{\parentsdown{v}{w}})$, i.e.
        $\dsepbig{v}{J}{\parentsdown{v}{w} \sqcup\{w\}}$.
        Then $C_{w, v | \parentsdown{v}{w} \sqcup J}$
        is equal to the explicitly specified copula
        $C_{v, w | \parentsdown{v}{w}}$.
    \end{enumerate}
%    For $K \subseteq \V \setminus \{w, v\}$,
%    we say that $C_{w, v | K}$ \textbf{needs integration} if it is not specified.
\end{definition}

\begin{remark}
\label{remark:necessary_condition_Copula_specified}
    Note that a necessary condition for $C_{w, v | K}$ to be specified is that
    $v$ and $w$ are either adjacent or d-separated by $K$.
    Therefore, copulas of the form $C_{w, v | K}$ where $v$ and $w$ are neither adjacent nor d-separated by $K$ always require integration.
    %The intuition is that these correspond to dependence pattern that implicitly relay  on several other conditional copulas specified elsewhere in the PCBN.
\end{remark}

\begin{remark}
\label{remark:cannot_remove_nodes_parent_v_w}
    Note that copulas $C_{w, v | K}$
    that are obtained from $C_{w, v | \parentsdown{v}{w}}$
    by removing nodes from the conditioning set $\parentsdown{v}{w}$ are never specified.
    They always must be computed by integration with respect to the nodes that need to be removed.
    More generally,
    \begin{equation}
        w \rightarrow v \text{ and }
        \exists o \in \parentsdown{v}{w}
        \text{ such that } o \notin K
    \label{eq:cannot_remove_nodes_parent_v_w}
    \end{equation}
    is a sufficient condition for $C_{w, v | K}$ to be not specified.
\end{remark}

\subsection{Problematic conditional margins for PCBNs}

Equation~\eqref{eq:PCBN_fac_unif} requires the computation of the conditional margins $u_{w | \parentsdown{v}{w}}$ and
$u_{v | \parentsdown{v}{w}}$, for $v \in V$ and $w \in \pa{v}$.
This means that $u_{w | \parentsdown{v}{w}}$ and
$u_{v | \parentsdown{v}{w}}$ must be computed in order to evaluate the density $f_\X$.
In this section, we will see two examples where the term
$u_{w | \parentsdown{v}{w}}$
cannot be computed without using integration.

\begin{example}    
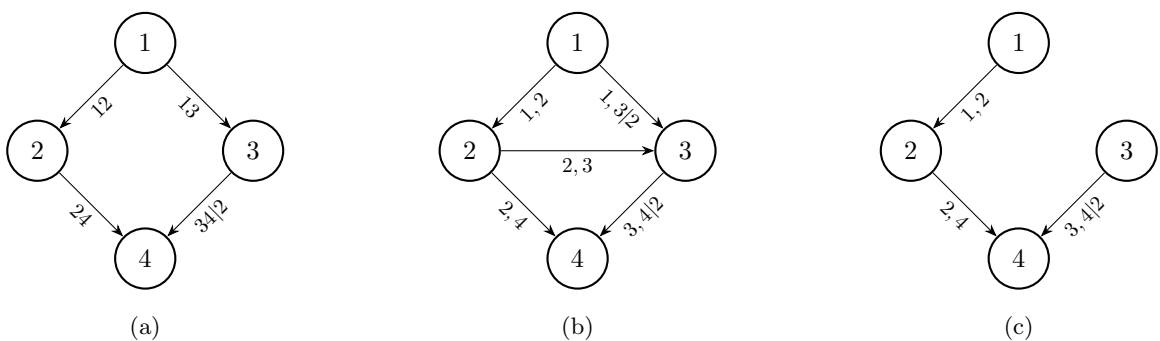
\begin{figure}[H]
\centering
\begin{subfigure}{.3\textwidth}
    \centering
    \begin{tikzpicture}[scale=1, transform shape, node distance=1.2cm, state/.style={circle, draw=black}]
    \begin{scope}[every node/.style={circle, minimum size=0.8cm,thick,draw}]
        \node[state] (1) {$1$};
        \node[state, below left = of 1] (2) {$2$};
        \node[state, below right = of 1] (3) {$3$};
        \node[state, below right = of 2] (4) {$4$};
    \end{scope}
    \begin{scope}[>={Stealth[black]}]
        \path [->] (1) edge node[scale=0.8, below, sloped] {$12$} (2);
        \path [->] (1) edge node[scale=0.8, below, sloped] {$13$} (3);
        \path [->] (2) edge node[scale=0.8, below, sloped] {$24$} (4);
        \path [->] (3) edge node[scale=0.8, below, sloped] {$34|2$} (4);
    \end{scope}
    \end{tikzpicture}
\caption{}
\label{fig:PCBN_integration_a}
\end{subfigure}%
\hfill
\begin{subfigure}{.3\textwidth}
    \centering
    \begin{tikzpicture}[scale=1, transform shape, node distance=1.2cm, state/.style={circle, draw=black}]
    \begin{scope}[every node/.style={circle, minimum size=0.8cm,thick,draw}]
        \node[state] (1) {$1$};
        \node[state, below left = of 1] (2) {$2$};
        \node[state, below right = of 1] (3) {$3$};
        \node[state, below right = of 2] (4) {$4$};
    \end{scope}
    \begin{scope}[>={Stealth[black]}]
        \path [->] (1) edge node[scale=0.8, below, sloped] {$1,2$} (2);
        \path [->] (1) edge node[scale=0.8, below, sloped] {$1,3|2$} (3);
        \path [->] (2) edge node[scale=0.8, below, sloped] {$2,4$} (4);
        \path [->] (2) edge node[scale=0.8, below, sloped] {$2,3$} (3);
        \path [->] (3) edge node[scale=0.8, below, sloped] {$3,4|2$} (4);
    \end{scope}
    \end{tikzpicture}
\caption{}
\label{fig:PCBN_integration_b}
\end{subfigure}
\hfill
\begin{subfigure}{.3\textwidth}
  \centering
     \centering
    \begin{tikzpicture}[scale=1, transform shape, node distance=1.2cm, state/.style={circle, draw=black}]
    \begin{scope}[every node/.style={circle, minimum size=0.8cm,thick,draw}]
        \node[state] (1) {$1$};
        \node[state, below left = of 1] (2) {$2$};
        \node[state, below right = of 1] (3) {$3$};
        \node[state, below right = of 2] (4) {$4$};
    \end{scope}
    \begin{scope}[>={Stealth[black]}]
        \path [->] (1) edge node[scale=0.8, below, sloped] {$1,2$} (2);
     
        \path [->] (2) edge node[scale=0.8, below, sloped] {$2,4$} (4);
     
        \path [->] (3) edge node[scale=0.8, below, sloped] {$3,4|2$} (4);
    \end{scope}
    \end{tikzpicture}
\caption{}
\label{fig:PCBN_integration_c}
\end{subfigure}
\caption{PCBNs where the computation of $u_{3|2}$ can be done without integrating for (b) and (c), but not for (a).
}
\label{fig:PCBN_integration}
\end{figure}

\noindent The copula density corresponding to the PCBN in \cref{fig:PCBN_integration_a} \footnote{This example was already shortly discussed in the Introduction.} is as follows:
\begin{align*}
    c(\u) = c_{1, 2}\big(u_1, u_2\big)
    \cdot c_{1, 3}\big(u_1, u_3\big)
    \cdot c_{2, 4}\big(u_2, u_4\big)
    \cdot c_{3, 4|2} \big(u_{3|2}, u_{4|2} \, | \, u_2 \big).
\end{align*}
To compute this copula density, we need the copulas $c_{1, 2}, c_{1, 3}, c_{2, 4}, c_{3, 4|2}$, which are already specified, but we also need to compute the conditional marginals $u_{3|2}$ and $u_{4|2}$.
By definition, $u_{3|2} = \frac{\partial C_{2,3}(u_2, u_3)}{\partial u_2}$.
Since the copula $c_{2,3}$ is not specified by the PCBN, it needs to be computed.
We know that
%\begin{align*}
%    c_{1,2,3}(u_1, u_2, u_3)
%    = c_{1, 2} \big(u_1, u_2\big)
%    \cdot c_{1, 3} \big(u_1, u_3\big).
%\end{align*}
%Therefore
\begin{align*}
    c_{2,3}(u_2, u_3)
    = \int_{0}^1 c_{1,2,3}(w_1, u_2, u_3) dw_1
    = \int_0^1 c_{1, 2} \big(w_1, u_2\big)
    \cdot c_{1, 3} \big(w_1, u_3\big) \diff w_1.
\end{align*}
As a consequence, we have
\begin{align*}
    u_{3|2}
    &= \frac{\partial }{\partial u_2}
    \int_0^{u_2} \int_0^{u_3} c_{2,3}(w_2, w_3)  \diff  w_2  \diff w_3
    = \int_0^{u_3} c_{2,3}(u_2, w_3)  \diff w_3 \\
%    &= \int_0^{u_3} \int_0^1
%    c_{1, 2} \big(w_1, u_2 \big)
%    \cdot c_{1, 3} \big(w_1, w_3 \big)
%    \diff w_1 \diff w_3 \\
    &= \int_0^1 c_{1, 2} \big(w_1, u_2\big)
    \int_0^{u_3}  c_{1, 3} \big(w_1, w_3\big) dw_3 \diff w_1 \\
    &= \int_0^1 c_{1,2}(w_1,u_2) \cdot 
    \frac{\partial C_{1,3}(w_1, u_3)}{\partial u_1}
    \, \diff w_1
    = \int_0^1 c_{1,2}(w_1,u_2) \cdot
    h_{1, 3}(w_1,u_3)\, \diff w_1.
\end{align*}
Hence, integration is needed, since $c_{1,2}$ and $C_{1,3}$ can be specified arbitrarily.
Note that the conditional margin $u_{4|2}$ does not pose a problem since it can be computed using the copula $C_{2, 4}$, which is specified by the PCBN.

\medskip

On the contrary, for the PCBNs in
\cref{fig:PCBN_integration_b,fig:PCBN_integration_c}
there is no problem with computing the conditional margin $u_{3|2}$.
Indeed, for the PCBN in \cref{fig:PCBN_integration_b} the copula $c_{2, 3}$ is assigned to the arc $v_2 \rightarrow v_3$.
For the PCBN in \cref{fig:PCBN_integration_c}, $c_{2,3}$ is also specified because it is the independent copula, due to $\dsepbig{2}{3}{\emptyset}$.

\medskip

It is important to notice that even for the same graph structure, certain ordering $\O$ necessitate integration, while other does not.
For example, if the PCBN \cref{fig:PCBN_integration_b}, we instead of $2 <_3 1$ choose the order $1 <_3 2$, then the copula $c_{2,3}$ is not specified by the PCBN.
Indeed, in this case copulas $c_{1,3}$ and $c_{2,3|1}$ are specified.
Hence, the computation of conditional margin $u_{3|2}$ requires integration.

\end{example}

\begin{example}\label{ex:PCBN_interfering}
Consider the PCBN in \cref{fig:PCBN_interferin_vstruc}.
The orderings at nodes $v_4$ and $v_5$ have been chosen and we need to determine the order at node $v_3$.
The v-structures at nodes $v_4$ and $v_5$ require us to compute the conditional margins $u_{3|1}$ and $u_{2|3}$, respectively.
The former implies that $c_{1,3}$ must be specified, and thus we must have $1<_3 2$.
But, the latter requires $c_{2,3}$ to be specified, implying that $2 <_3 1$.
Since we cannot have both, there does not exist a suitable ordering for node $v_3$.    
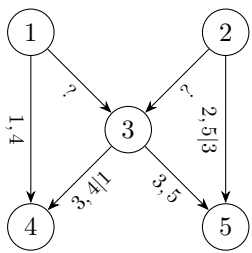
\begin{figure}[H]
  \centering
    \begin{tikzpicture}[scale=1, transform shape, node distance=1.2cm, state/.style={circle, draw=black}]

        \node[state] (3) at (3,0) {$3$};
        \node[state, above left = of 3] (1) {$1$};
        \node[state, above right = of 3] (2) {$2$};
        \node[state, below left = of 3] (4) {$4$};
        \node[state, below right = of 3] (5) {$5$};

        \begin{scope}[>={Stealth[black]}]
            \path [->] (1) edge node[scale=0.8, sloped, below] {?} (3);
            \path [->] (2) edge node[scale=0.8, sloped, below] {?} (3);
            \path [->] (1) edge node[scale=0.8, sloped, below] {$1,4$} (4);
            \path [->] (3) edge node[scale=0.8, sloped, below] {$3,4|1$} (4);
            \path [->] (3) edge node[scale=0.8, sloped, below] {$3,5$} (5);
            \path [->] (2) edge node[scale=0.8, sloped, below] {$2,5|3$} (5);
        \end{scope}
    \end{tikzpicture}
\caption{PCBN with interfering v-structures.}
\label{fig:PCBN_interferin_vstruc}
\end{figure}

\end{example}

If $\G$ contains three v-structures that interact in a similar fashion as in \cref{ex:PCBN_interfering}, then the joint density will require integration for any choice of $\O$.
Such problematic set of three v-structures will be called interfering v-structures (see \cref{def:interfering_v_structures} below).

\medskip

All copulas $C_{w_i,v|w_1,...,w_{i-1}}$, where $\pa{v}=\{w_1,...,w_s\}$ are specified by construction. Hence computation of $u_{v|\parentsdown{v}{w}}$ is in general not problematic. The main obstacle is in the computation of terms of the form $u_{w | \parentsdown{v}{w}}$. These conditional margins are computed recursively. Let $v\in\V$ and $K \subseteq \V \setminus \{v\}$ then
\begin{eqnarray}\label{recursions}
     u_{v|K}
   & =& \frac{\partial C_{k, v | K \setminus k}
    (u_{k | K \setminus k } ,
    u_{v | K \setminus k } \, | \, \u_{K \setminus k} )}{
    \partial u_{k | K \setminus k }
    }.
\end{eqnarray}
   
First we formally define such recursions, which will be used repeatedly throughout the paper. 

\begin{definition}
\label{def:recursion_cond_margin}
    Consider a PCBN with a node set $\V$.
    Let $v \in \V$ and $K \subseteq \V \setminus \{v\}$.
    $\mathcal{R}(v|K)$ is \textbf{the set of recursions} for the computation of $u_{v|K}$ in (\ref{recursions}) if
    \begin{enumerate}[label=(\roman*)]
        \item 
        $\mathcal{R}(v | \emptyset) := \{u_v\}$.
        \item 
        $\mathcal{R}(v|K) := \bigsqcup_{k \in K}
        \big\{ \big(
        u_{v|K},
        k,
        C_{v, k | K \setminus k},
        R_v,
        R_k \big):
        R_v \in \mathcal{R}(v|K \setminus k), 
        R_k \in \mathcal{R}(k|K \setminus k) \big\}.$
    \end{enumerate}
\end{definition}

%We formalize this intuitive result in the following lemma, after defining what it means not require integration for such terms of the form $u_{v | K}$.
%Note that $u_{v | K}$ is computed recursively, and that we use such recursions in the proofs of the results to come. 
%Therefore, we need to define formally such recursions. Note that several recursions are possible for the computation of a given $u_{v|K}$; therefore we introduce the set of possible recursions $\mathcal{R}(v|K)$ leading to the computation of $u_{v|K}$.

\begin{remark}
\label{remark:recursion_around_v}
    From Definition~\ref{def:recursion_cond_margin} it follows that  any recursion $R \in \mathcal{R}(v|K)$ requires the choice of some $k \in K$, which determines a conditional copula $C_{v, k | K \setminus k}$, and a recursion $R_v \in \mathcal{R}(v|K \setminus k)$.
    From this branch of the recursion, there exists an ordering $(k_1, \dots, k_{|K|})$ of $K$ such that copulas
    \begin{equation*}
        \forall i \in \{1, \dots, |K|\}, \,
        C_{v, k_i | K \setminus \{k_i, \dots, k_{|K|}\} }
        = C_{v, k_i | k_1, \dots, k_{i-1}}
        \text{ appear in } R.
    \end{equation*}
    The ordering of $K$ depends on the choice of the recursion $R \in \mathcal{R}(v|K)$.
\end{remark}

\begin{definition}\label{def}
    We say that a recursion $R \in \mathcal{R}(v|K)$ is \textbf{proper} if all (conditional) copulas appearing in $R$ are specified. Moreover, $u_{v|K}$ {\bf does not require integration} if there exist a proper recursion $R \in \mathcal{R}(v|K)$.
\end{definition}

\begin{remark}
    From Definition~\ref{def} it follows that $u_{v|K}$ 
    does not require integration
    if there exists $k \in K$ such that
    \begin{enumerate}
    \item $C_{k, v | K \setminus k}$ is specified by the PCBN;
    
    \item $u_{k | K \setminus k }$ and
    $u_{v | K \setminus k }$ do not require integration.
    \end{enumerate}
\end{remark}

\begin{remark}
    If a recursion to compute $u_{v|\parentsdown{v}{w}}$ is proper, then the ordering of $\parentsdown{v}{w}$ presented in Remark~\ref{remark:recursion_around_v}  must be the one induced by $\smallerbig{v}$.
\end{remark}

Computation of $u_{v|\parentsdown{v}{w}}$ does not require integration. We show next that computation of the joint density is easy when $u_{w|\parentsdown{v}{w}}$ can be computed without integration.

\medskip

\begin{lemma}
\label{lemma:necess_suff_condition_integration-free}
    The joint copula density $c_\X(\u)$ (and therefore $f_\X(\x)$) of a PCBN can be computed without integration if and only if 
    \begin{align}
        \forall v \in \V, \,
        \forall w \in \pa{v}, 
        \text{ the conditional margin }
        u_{w|\parentsdown{v}{w}}
        \text{ does not require integration.}
    \label{eq:sufficient_no_integration}
    \end{align}
\end{lemma}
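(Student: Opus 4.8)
The plan is to reduce the whole question to the finite product representation \eqref{eq:PCBN_fac_unif}. Evaluating $c_\X(\u)$ means evaluating each factor $c_{w,v|\parentsdown{v}{w}}\big(u_{w|\parentsdown{v}{w}}, u_{v|\parentsdown{v}{w}} \,\big|\, \u_{\parentsdown{v}{w}}\big)$ and taking a product over the finitely many arcs. Such a factor needs only three ingredients: the copula $c_{w,v|\parentsdown{v}{w}}$, which is specified by the PCBN and therefore available at no cost; the conditioning values $\u_{\parentsdown{v}{w}}$, which are merely components of the input vector $\u$ and require no computation; and the two conditional-margin arguments $u_{w|\parentsdown{v}{w}}$ and $u_{v|\parentsdown{v}{w}}$. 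Hence the only conceivable source of integration lies in computing these arguments, and the statement will follow from two facts: an auxiliary claim that the child-margins $u_{v|\parentsdown{v}{w}}$ never introduce an obstacle beyond the parent-margins already listed in \eqref{eq:sufficient_no_integration} (this yields sufficiency), together with the observation that each parent-margin $u_{w|\parentsdown{v}{w}}$ is genuinely unavoidable (this yields necessity).

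I would prove the auxiliary claim first: assuming \eqref{eq:sufficient_no_integration}, every child-margin $u_{v|\parentsdown{v}{w}}$ is integration-free. I argue by induction on $|\parentsdown{v}{w}|$. Writing the parents of $v$ in increasing $<_v$-order as $p_1 <_v \cdots <_v p_m$ and taking $w = p_j$, so that $\parentsdown{v}{w} = \{p_1, \dots, p_{j-1}\}$, the base case $j = 1$ is trivial since $u_{v|\emptyset} = u_v$. For the inductive step I apply the recursion \eqref{recursions} with $k = p_{j-1}$, the $<_v$-maximal element of $\parentsdown{v}{w}$. This produces the copula $C_{p_{j-1}, v | p_1, \dots, p_{j-2}} = C_{p_{j-1}, v | \parentsdown{v}{p_{j-1}}}$, which is specified because it is carried by the arc $p_{j-1} \to v$, together with the two sub-terms $u_{p_{j-1} | \parentsdown{v}{p_{j-1}}}$ and $u_{v | p_1, \dots, p_{j-2}}$. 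The first is of the form appearing in \eqref{eq:sufficient_no_integration}, hence integration-free by hypothesis; the second has a strictly smaller conditioning set and is integration-free by the inductive hypothesis. Peeling off the parents along $<_v$ in this way uses only specified copulas and primitive margins $u_{p_i | \parentsdown{v}{p_i}}$, so it assembles a proper recursion for $u_{v|\parentsdown{v}{w}}$ in the sense of \cref{def}.

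With the claim established, both directions are short. For sufficiency, assume \eqref{eq:sufficient_no_integration}; then in each factor of \eqref{eq:PCBN_fac_unif} the argument $u_{w|\parentsdown{v}{w}}$ is integration-free by hypothesis and $u_{v|\parentsdown{v}{w}}$ is integration-free by the claim, while the copula and the conditioning values are free, so the finite product $c_\X(\u)$ is integration-free; the joint density $f_\X(\x) = c_\X(\u) \prod_{v \in \V} f_v(x_v)$ differs only by the specified marginal densities and is therefore integration-free as well. For necessity, note that $u_{w|\parentsdown{v}{w}}$ enters \eqref{eq:PCBN_fac_unif} as the first argument of the specified copula on the arc $w \to v$; since this copula may be chosen arbitrarily in the model, for generic choices $c_\X$ depends non-trivially on $u_{w|\parentsdown{v}{w}}$, so any integration-free evaluation of $c_\X$ must in particular compute $u_{w|\parentsdown{v}{w}}$ without integration. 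This is exactly \eqref{eq:sufficient_no_integration}.

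I expect the necessity direction to be the main obstacle. Sufficiency and the auxiliary recursion claim are essentially bookkeeping along the parental order, but necessity requires ruling out that some special algebraic structure of $c_\X$ lets one bypass the explicit evaluation of a margin $u_{w|\parentsdown{v}{w}}$ that would otherwise require integration. The clean way to close this gap is to exploit the unconstrained choice of the specified copulas, so that no cancellation across factors can be relied upon; if a more careful argument is needed, I would invoke \cref{remark:necessary_condition_Copula_specified,remark:cannot_remove_nodes_parent_v_w} to confirm that these parent-margins are genuinely required and cannot be recovered from the other specified copulas.
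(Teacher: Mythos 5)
Your proposal is correct and follows essentially the same route as the paper: both reduce everything to the factorization \eqref{eq:PCBN_fac_unif}, prove sufficiency by induction along the parental order $\smallerbig{v}$ (peeling off the largest parent $w_{i-1}$ and using the specified copula $C_{w_{i-1}, v \mid \parentsdown{v}{w_{i-1}}}$ so that child-margins $u_{v\mid\parentsdown{v}{w}}$ inherit integration-freeness from the parent-margins in \eqref{eq:sufficient_no_integration}), and obtain necessity from the fact that each $u_{w\mid\parentsdown{v}{w}}$ appears as an argument of a factor in the product. The only cosmetic difference is that your necessity discussion is more cautious (invoking genericity of the copula choice) where the paper dispatches it in one sentence.
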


\begin{proof}
Assume that Condition~\eqref{eq:sufficient_no_integration} is not satisfied. Then one of the terms $u_{w|\parentsdown{v}{w}}$ in Equation~\eqref{eq:PCBN_fac_unif} must be computed with integration, and therefore $c_\X(\u)$ needs integration.
Assume now that Condition~\eqref{eq:sufficient_no_integration} is satisfied.
%Remember that the joint density is of the form presented in Equation~\eqref{eq:PCBN_fac_unif}.
Since all pair-copula appearing in~\eqref{eq:PCBN_fac_unif} are specified, and all conditional margins $u_{w|\parentsdown{v}{w}}$ can be computed without integration by Condition~\eqref{eq:sufficient_no_integration},
there only remains to prove that all terms of the form $u_{v|\parentsdown{v}{w}}$ can be computed without integration.

\medskip

Note that only nodes $v$ such that $\big| \pa{v} \big| > 0$
appear in the factorization~\eqref{eq:PCBN_fac_unif}.
Let $v \in \V$ be such a node and
let $\pa{v} := (w_1, \dots, w_n)$
be ordered according to $\smallerbig{v}$.
We show by induction on $i \in \{1, \dots, n\}$ that the conditional margins $u_{v|\parentsdown{v}{w_i}}$ do not require integration.
For $i = 1$, the statement holds since
$u_{v|\parentsdown{v}{w_1}}
% = u_{v | \emptyset}
= u_v$.
Now, suppose that $u_{v|\parentsdown{v}{w_{i-1}}}$ can be computed without integration.
The conditional margin $u_{v|\parentsdown{v}{w_i}}$ is
\begin{align*}
    u_{v|\parentsdown{v}{w_i}}
    = \frac{
    \partial C_{w_{i-1} , v | \parentsdown{v}{w_{i-1}}}}{
    \partial u_{w_{i-1} | \parentsdown{v}{w_{i-1}} }}
    % h_{ w_{i-1}, v | \parentsdown{v}{w_{i-1}} }
    \big(
    u_{w_{i-1}|\parentsdown{v}{w_{i-1}} },
    u_{v|\parentsdown{v}{w_{i-1}} }
    \, | \, \u_{\parentsdown{v}{w_{i-1}}} \big)
\end{align*}
where the copula
$C_{w_{i-1} , v | \parentsdown{v}{w_{i-1}}}$
is specified by the PCBN.
% , since we are in the case of \cref{def:specified}(ii).
The conditional margin
$u_{v | \parentsdown{v}{w_{i-1}}}$
does not require integration by the induction hypothesis.
Moreover,
$u_{w_{i-1} | \parentsdown{v}{w_{i-1}}}$ does not require integration by Condition~\eqref{eq:sufficient_no_integration}.
Hence, the conditional margin
$u_{v | \parentsdown{v}{w_i}}$ does not require integration.
\end{proof}

We now establish a convenient lemma, which gives a sufficient condition for~\eqref{eq:sufficient_no_integration} not to be satisfied, i.e. for $u_{w | \parentsdown{v}{w}}$ to require integration.
The condition require existence of a node $z$ for which none of the conditional copulas $C_{v,z|K}$ are specified.
Indeed, in such case none of the recursions can be proper, as a copula of this form necessarily appears at some point at these recursions.
\begin{lemma}
\label{lemma:zwO_not_specified}
    Consider a PCBN with node set $\V$,
    and let $v \in \V$ and $w \in \pa{v}$.
    The computation of the conditional margin $u_{w | \parentsdown{v}{w}}$ requires integration if
    % there exists a node $z \in \parentsdown{v}{w}$ such that for all $K\subseteq \parentsdown{v}{w}\setminus \{z\}$ the copula $C_{z, w|K}$ is not specified by the PCBN.
    \begin{align}
    \label{eq:suff_condition_integration}
        \exists z \in \parentsdown{v}{w}, \,
        \forall K \subseteq \parentsdown{v}{w} \setminus \{z\}, \,
        C_{z, w | K} \text{ is not specified}.
    \end{align}
\end{lemma}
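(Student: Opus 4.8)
The plan is to argue by contraposition: assuming that $u_{w|\parentsdown{v}{w}}$ does \emph{not} require integration, I would produce a set $K \subseteq \parentsdown{v}{w} \setminus \{z\}$ for which $C_{z, w | K}$ \emph{is} specified, contradicting hypothesis~\eqref{eq:suff_condition_integration}. Write $K_0 := \parentsdown{v}{w}$ for brevity. By Definition~\ref{def}, the assumption that $u_{w|K_0}$ does not require integration means that there exists a proper recursion $R \in \mathcal{R}(w|K_0)$, that is, every (conditional) copula appearing in $R$ is specified.

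The main step is to extract from $R$ a copula of the form $C_{w, z | K}$. Here I would invoke Remark~\ref{remark:recursion_around_v}, applied with $v$ replaced by $w$ and $K$ replaced by $K_0$: the recursion $R$ induces an ordering $(k_1, \dots, k_{|K_0|})$ of $K_0$ along its $w$-branch such that, for every index $i$, the copula $C_{w, k_i | k_1, \dots, k_{i-1}}$ appears in $R$. Since $z \in K_0$, there is an index $j$ with $z = k_j$, and hence the copula $C_{w, z | k_1, \dots, k_{j-1}}$ appears in $R$.

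Setting $K := \{k_1, \dots, k_{j-1}\}$, we have $K \subseteq K_0 \setminus \{z\}$ because $z = k_j$ is distinct from each of $k_1, \dots, k_{j-1}$. Since $R$ is proper, the copula $C_{w, z | K}$ occurring in it is specified; by Definition~\ref{def:specified}(i), specifying $C_{w, z | K}$ is the same as specifying $C_{z, w | K}$. This exhibits a set $K \subseteq K_0 \setminus \{z\}$ for which $C_{z, w | K}$ is specified, which directly contradicts~\eqref{eq:suff_condition_integration}. Consequently no proper recursion for $u_{w|K_0}$ can exist, and the conditional margin $u_{w|\parentsdown{v}{w}}$ requires integration.

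The step requiring the most care is the correct application of Remark~\ref{remark:recursion_around_v}: one must verify that the ordering it produces genuinely places each copula $C_{w, k_i | k_1, \dots, k_{i-1}}$ inside $R$, so that the appearance of $z$ as some $k_j$ really forces $C_{w, z | k_1, \dots, k_{j-1}}$ into $R$. Once this is granted, the remainder is a short logical chain with no computation, and the symmetry $C_{w, z | K} = C_{z, w | K}$ from Definition~\ref{def:specified}(i) closes the argument.
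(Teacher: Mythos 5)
Your proof is correct and follows essentially the same route as the paper: the paper's own argument also rests on Remark~\ref{remark:recursion_around_v} to conclude that any recursion in $\mathcal{R}(w \,|\, \parentsdown{v}{w})$ must contain a copula of the form $C_{z, w | K}$ with $K \subseteq \parentsdown{v}{w} \setminus \{z\}$, which by hypothesis~\eqref{eq:suff_condition_integration} is never specified, so no proper recursion exists. You merely phrase this as a contraposition and spell out the extraction of the index $j$ and the set $K$ explicitly, which is a fuller write-up of the same idea.
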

\begin{proof}
    According to Remark~\ref{remark:recursion_around_v}, we know that a copula of the form $C_{z, w|K}$ for some $K \subseteq \parentsdown{v}{w} \setminus \{z\}$
    necessarily appears at some point in any recursion $R \in \mathcal{R}(w | \parentsdown{v}{w})$ to compute $u_{w | \parentsdown{v}{w}}$.
    So, $u_{w | \parentsdown{v}{w}}$ requires integration.
% 
    % Let $R \in \mathcal{R}(w | \parentsdown{v}{w})$ be a recursion to compute
    % $u_{w | \parentsdown{v}{w}}$, and let $z \in \parentsdown{v}{w}$ such that for all $K\subseteq \parentsdown{v}{w}\setminus \{z\}$ the copula $C_{z, w|K}$ is not specified by the PCBN.
    % Using Remark~\ref{remark:recursion_around_v}, we know that a copula of this form necessarily appears at some point in the recursion.
    % By~\eqref{eq:suff_condition_integration},
    % this copula is not specified, and so $R$ is not proper. Since this is the case for all $R \in \mathcal{R}(v | \parentsdown{v}{w})$, we conclude that $u_{w | \parentsdown{v}{w}}$ requires integration.
% 
% 
    % Let $\parentsdown{v}{w} = (o_1, \dots, o_n)$
    % such that for all $i \in \{1,\dots,n\}$ the copula
    % $C_{o_i, w | o_1, \dots, o_{i-1}}$
    % appears in the recursion.
    % Let $j \in \{1,\dots, n\}$ such that $o_j = z$.
    % Subsequently, the conditional margin
    % $u_{w | o_1, \dots, o_{j-1}, z}$
    % is computed using the copula
    % $C_{z, w | o_1, \dots, o_{j-1}}$.
    % Since
    % $\{o_1, \dots, o_{j-1}\}
    % \subseteq \parentsdown{v}{w}\setminus \{z\}$, we have that $c_{z,w | o_1,\dots,o_{j-1}}$ is not specified by the PCBN.
    % Hence, the computation of
    % $u_{w|\parentsdown{v}{w}}$ requires integration.
\end{proof}

\subsection{Active cycles}
\label{sec:active_cycles}

The PCBN in \cref{fig:PCBN_integration_a} is an example of a more general structure which we call active cycle.
In this diamond-type graph,
the v-structure $v_2 \rightarrow v_4 \leftarrow v_3$ is combined with a diverging connection at node $v_1$.
Together they form the cycle $v_4 - v_2 - v_1 - v_3 - v_4$ in the corresponding undirected graph.
Such undirected cycles always lead to a problematic conditional margin.
The diverging connection at $v_1$ can be replaced by a serial connection and the problem with computing the conditional margin remains.
The general definition of this problematic structure
% which we call active cycle 
is given below.

\begin{definition}
\label{def:active_cycle}
    Let $\G$ be a DAG.
    Consider a node $v \in \V$ with distinct parents $w, z\in \pa{v}$ which are connected by a trail $w \har x_1 \har \cdots \har x_n \har z$, $n \geq 1$, satisfying the following conditions:
    \begin{enumerate}[label = (\roman*)]
        \item $w \har x_1 \har \cdots \har x_n \har z$ consists of only diverging or serial connections.
        \item $v \leftarrow w \har x_1 \har \cdots \har x_n \har z \rightarrow v$ contains no chords.
    \end{enumerate}
    Then the trail $v \leftarrow w \har x_1 \har \cdots \har x_n \har z \rightarrow v$ is called an \textbf{active cycle} in $\G$.
    Furthermore, $\G$ is said to contain an active cycle.
\end{definition}

The presence of an active cycle in the graph necessitates integration.
This statement is proven in \cref{thm:active_cycles}.
% 
% % Before proving the theorem, we first establish a convenient lemma.
% 
% Now we are ready to prove that a PCBN containing an active cycle requires integration.

\begin{theorem}
    Let $(\G,\O)$ be a PCBN. 
    If $\G$ contains an active cycle, then the computation of the joint density requires integration.
    \label{thm:active_cycles}
\end{theorem}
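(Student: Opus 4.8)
The plan is to reduce, via the two lemmas already established, to a purely graphical statement about d-separation that the no-chord condition makes easy to verify. By \cref{lemma:necess_suff_condition_integration-free} it suffices to exhibit a single node $v$ and a single parent whose conditional margin requires integration. So I would take $v$ to be the node with parents $w, z$ realizing the active cycle, and (since the trail $w \har x_1 \har \cdots \har x_n \har z$ is symmetric under swapping $w$ and $z$) relabel if necessary so that $z \smallerbig{v} w$, i.e.\ $z \in \parentsdown{v}{w}$. The goal is then to apply \cref{lemma:zwO_not_specified} with this particular $z$: it is enough to show that $C_{z, w | K}$ is not specified for \emph{every} $K \subseteq \parentsdown{v}{w} \setminus \{z\}$, which forces $u_{w | \parentsdown{v}{w}}$ to require integration.

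The heart of the argument is to show that no such $K$ can make $C_{z, w | K}$ specified. By \cref{remark:necessary_condition_Copula_specified}, a specified copula $C_{z, w | K}$ requires that $z$ and $w$ be either adjacent or d-separated by $K$, so I would rule out both. Adjacency is excluded at once: since $n \geq 1$, the nodes $w$ and $z$ are non-consecutive on the cycle $v \leftarrow w \har x_1 \har \cdots \har x_n \har z \rightarrow v$, so an edge between them would be a chord, contradicting condition (ii) of \cref{def:active_cycle}.

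For the d-separation part, I would use the cycle's own trail $w \har x_1 \har \cdots \har x_n \har z$ as the activated trail witnessing $\notdsep{z}{w}{K}$. By condition (i) this trail contains only serial and diverging connections, so it can be blocked only by placing one of its internal nodes $x_1, \dots, x_n$ into the conditioning set. But each $x_i$ is non-adjacent to $v$ — an edge $x_i \har v$ would again be a chord of the cycle — hence no $x_i$ lies in $\pa{v}$, and in particular none lies in $K \subseteq \parentsdown{v}{w} \setminus \{z\} \subseteq \pa{v}$. The endpoints $w, z$ cannot block either (they are the separated nodes, and $z \notin K$ by choice). Thus the chosen trail is never blocked, so $\notdsep{z}{w}{K}$ holds for every admissible $K$; combining the two cases shows $C_{z, w | K}$ is never specified.

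The step I expect to be most delicate is the bookkeeping around the no-chord condition: I must apply ``chord'' to exactly the right pairs of non-consecutive cycle nodes — namely $\{w, z\}$ and each $\{x_i, v\}$ — and make genuine use of $n \geq 1$ so that $w$ and $z$ really are non-consecutive. Once these graphical facts are pinned down, the conclusion follows mechanically: \cref{lemma:zwO_not_specified} gives that $u_{w | \parentsdown{v}{w}}$ requires integration, and \cref{lemma:necess_suff_condition_integration-free} upgrades this to the computation of the full joint density.
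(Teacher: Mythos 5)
Your proposal is correct and follows essentially the same route as the paper's proof: choose $z \smallerbig{v} w$, apply \cref{lemma:zwO_not_specified} together with \cref{remark:necessary_condition_Copula_specified}, and use the no-chord and no-converging-connection conditions of the active cycle to rule out both adjacency of $w,z$ and d-separation given any $K \subseteq \parentsdown{v}{w}\setminus\{z\}$. Your write-up merely makes explicit the bookkeeping (non-consecutiveness of $w,z$, and $x_i \notin \pa{v}$ via chords) that the paper's proof states in compressed form.
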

\begin{proof}
    Consider an active cycle in $\G$ of the form
    \begin{equation*}
    v \leftarrow w \rightleftharpoons x_1 \rightleftharpoons \dots \rightleftharpoons x_n \rightleftharpoons z \rightarrow v.
    \label{eq:active_cycle}
    \end{equation*}
    Since $w$ and $z$ are both parents of $v$, we have either $w \smallerbig{v} z$ or $z \smallerbig{v} w$.
    %Without loss of generality, we 
    Let us assume that
    $z \smallerbig{v} w$.
    % Note that
    % \begin{align*}
    %     u_{w|\parentsdown{v}{w}}
    %     = \frac{\partial C_{w, z |
    %     \parentsdown{v}{w} \setminus \{z\} }
    %     (u_{w |
    %     \parentsdown{v}{w} \setminus \{z\}} ,
    %     u_{z |
    %     \parentsdown{v}{w} \setminus \{z\}})
    %     }{\partial u_z}.
    % \end{align*}
    % By Lemma~\ref{lemma:necess_suff_condition_integration-free},
    % there only remains to show that $C_{w, z | \parentsdown{v}{w} \setminus \{z\}}$ is not specified, i.e. needs integration.
    % % 
    % Note that $w$ and $z$ are not adjacent and the trail between $w$ and $z$ is not blocked by any subset of nodes in $pa(v)$, because of the existence of trail $w \rightleftharpoons x_1 \rightleftharpoons \dots \rightleftharpoons x_n \rightleftharpoons z $ without a cord.
    % Because $\parentsdown{v}{w} \setminus \{z\} \subseteq \pa{v}$, we have that $\notdsepbig{w}{z}{\parentsdown{v}{w} \setminus \{z\}}$.
    % By using \cref{def:active_cycle}, we know that $w$ and $z$ are not adjacent (otherwise this would create a chord).
    % Thus, the copula $C_{z, w|\parentsdown{v}{w} \setminus \{z\}}$ is not specified by the PCBN since none of the conditions in \cref{def:specified} are satisfied, which completes the proof.
    % 
    % 
    We want to prove that the margin $u_{w|\parentsdown{v}{w}}$ requires integration.
    Due to \cref{lemma:zwO_not_specified}, it is sufficient to show that for any $K\subseteq \parentsdown{v}{w}\setminus \{z\}$ the copula $C_{z,w|K}$ is not specified by the PCBN.
    Consider an arbitrary $K\subseteq \parentsdown{v}{w}\setminus \{z\}$.

    \medskip
    
    Note that $w$ and $z$ are not adjacent and the trail between $w$ and $z$ is not blocked by any subset of nodes in $pa(v)$, due to the existence of trail $w \rightleftharpoons x_1 \rightleftharpoons \dots \rightleftharpoons x_n \rightleftharpoons z $ without a cord.
    Since $K\subseteq\parentsdown{v}{w} \subseteq \pa{v}$, we have that $\notdsepbig{w}{z}{K}$.
    %By \cref{def:active_cycle}, $w$ and $z$ are not adjacent (otherwise this would create a chord).
    Thus, by Remark~\ref{remark:necessary_condition_Copula_specified},
    the copula $C_{z,w|K}$ is not specified.
\end{proof}

\subsection{Interfering v-structures}
\label{sec:interfering_vs}

In \cref{fig:PCBN_interferin_vstruc}, we showed another example of PCBN that requires integration for any choice of the parental ordering.
This structure is formally defined below.  

\begin{definition}
\label{def:interfering_v_structures}
    Consider a PCBN with node set $\V$.
    Assume that there exist nodes
    $v_1, v_2, v_3, v_4, v_5 \in \V$, satisfying the following conditions:
    % \begin{itemize}
    %     \item $v_3 \in \pa{v_4} \cap \pa{v_5}$.
    %     \item $v_1 \in \pa{v_3} \cap \pa{v_4}$ and $v_1 \notin \pa{v_5}$.
    %     \item $v_2 \in \pa{v_3} \cap \pa{v_5}$ and $v_2 \notin \pa{v_4}$.
    % \end{itemize}
    \begin{itemize}
        \item $v_3 \ra v_4$, $v_3 \ra v_5$
        \item $v_1 \ra v_3$, $v_1 \ra v_4$, and
        $v_1 \notrightarrow v_5$
        \item $v_2 \ra v_3$, $v_2 \ra v_5$
        and $v_2 \notrightarrow v_4$.
    \end{itemize}
    Then, the nodes $v_1,\,v_2,\,v_3,\,v_4 \text{ and }v_5$ are said to be \textbf{interfering v-structures}.
    Moreover, $\G$ is said to contain interfering v-structures.
\end{definition}

Note that one or both of the arcs $v_1\to v_2$ or $v_2\to v_1$ can be added to the DAG in Figure \ref{fig:PCBN_interferin_vstruc}
and the interfering v-structure will remain.
However, this will not be the case if at least one of the arc $v_1 \to v_5$ or $v_2 \to v_4$ is added.
Removal of any of the arcs present in DAG in \cref{fig:PCBN_interferin_vstruc} alleviates the problem of the need to integrate. 

\medskip

We will prove next that for any graph containing interfering v-structures, the computation of the joint density requires integration for any choice of $\O$.
\begin{theorem}
    Consider a PCBN with DAG $\G$.
    If $\G$ contains interfering v-structures, then the computation of the joint density requires integration.
    \label{thm:interfering_vs}
\end{theorem}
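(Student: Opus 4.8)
The plan is to reduce the claim, via \cref{lemma:necess_suff_condition_integration-free}, to exhibiting a single conditional margin $u_{w|\parentsdown{v}{w}}$ that requires integration, regardless of how the orders in $\O$ are chosen. The margin will live at either $v_4$ or $v_5$. The mechanism is the tension already displayed informally in \cref{ex:PCBN_interfering}: the order $\smallerbig{v_3}$ at the common child $v_3$ forces one of the copulas on the adjacent pairs $\{v_1,v_3\}$ or $\{v_2,v_3\}$ to carry a ``forbidden'' node in its conditioning set, and that forbidden node is exactly the one missing at the corresponding downstream v-structure.

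First I would split on the order $\smallerbig{v_3}$. The defining conditions of interfering v-structures are symmetric under swapping $(v_1,v_4)$ with $(v_2,v_5)$, so it suffices to treat the case $v_2 \smallerbig{v_3} v_1$, which gives $v_2 \in \parentsdown{v_3}{v_1}$. I then fix attention on $v_4$, whose parents include $v_1$ and $v_3$ but not $v_2$ (because $v_2 \notrightarrow v_4$). Since $v_1,v_3 \in \pa{v_4}$, the order $\smallerbig{v_4}$ places one before the other: if $v_1 \smallerbig{v_4} v_3$ then $v_1 \in \parentsdown{v_4}{v_3}$ and I examine $u_{v_3|\parentsdown{v_4}{v_3}}$, while if $v_3 \smallerbig{v_4} v_1$ then $v_3 \in \parentsdown{v_4}{v_1}$ and I examine $u_{v_1|\parentsdown{v_4}{v_1}}$. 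In either subcase the margin is of the form $u_{w|\parentsdown{v_4}{w}}$ and is governed by the adjacent pair $\{v_1,v_3\}$.

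The key step is to verify the hypothesis of \cref{lemma:zwO_not_specified}: taking $z$ to be the remaining member of $\{v_1,v_3\}$, I would show $C_{v_1,v_3|K}$ is not specified for every $K \subseteq \parentsdown{v_4}{w}\setminus\{z\}$. Any such $K$ satisfies $K \subseteq \pa{v_4}$, and since $v_2 \notin \pa{v_4}$ we get $v_2 \notin K$. On the other hand, $v_1$ and $v_3$ are adjacent, so they can never be d-separated and clause (ii) of \cref{def:specified} is unavailable; the only remaining routes to specification, clauses (i) and (iii), force $K \supseteq \parentsdown{v_3}{v_1}$. As $v_2 \in \parentsdown{v_3}{v_1}$ in this case, this contradicts $v_2 \notin K$. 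Hence no admissible $K$ makes $C_{v_1,v_3|K}$ specified; \cref{lemma:zwO_not_specified} shows the margin requires integration, and \cref{lemma:necess_suff_condition_integration-free} concludes that the joint density requires integration.

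I expect the main obstacle to be the precise monotonicity claim that every specified copula on the adjacent pair $\{v_1,v_3\}$ must satisfy $K \supseteq \parentsdown{v_3}{v_1}$. Establishing it means dispensing with clause (ii) via adjacency and reading clauses (i) and (iii) as only ever enlarging the base conditioning set $\parentsdown{v_3}{v_1}$ (this is essentially the content of \cref{remark:cannot_remove_nodes_parent_v_w}, which asserts that a parent can never be dropped from the conditioning set of a specified copula). Once this is in hand, the contradiction with $v_2 \notin \pa{v_4}$ is immediate, and the symmetric case $v_1 \smallerbig{v_3} v_2$ is settled identically by working at $v_5$, where $v_1 \notin \pa{v_5}$ plays the role previously played by $v_2$.
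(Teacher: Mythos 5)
Your proof is correct and follows essentially the same route as the paper: both reduce to \cref{lemma:zwO_not_specified} applied to the adjacent pair meeting at the common child $v_3$, rule out clause (ii) of \cref{def:specified} by adjacency, and rule out clauses (i)/(iii) because the parent that is earlier under $\smallerbig{v_3}$ (your $v_2$, the paper's $v_1$) cannot lie in any $K \subseteq \pa{v_4}$ (respectively $\pa{v_5}$). The only difference is bookkeeping: the paper enumerates the eight joint orderings at $v_3, v_4, v_5$ and treats one as representative, whereas you split only on $\smallerbig{v_3}$ and absorb both orderings at the downstream v-structure into a single lemma application --- a slightly tighter organization of the same argument.
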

\begin{proof}
    Let $v_1,v_2,v_3,v_4,v_5\in \V$ be nodes that form one of the interfering v-structures in $\G$.
    We have eight distinct cases concerning constraints on parental orderings of nodes $v_3$, $v_4$, and $v_5$, these are:
    \begin{align*}
        &v_1 \smallerbig{v_3} v_2,\, v_1 \smallerbig{v_4} v_3 \text{ and } v_2 \smallerbig{v_5} v_3, & v_2 \smallerbig{v_3} v_1,\, v_1 \smallerbig{v_4} v_3 \text{ and } v_2 \smallerbig{v_5} v_3, \\
        &v_1 \smallerbig{v_3} v_2,\, v_1 \smallerbig{v_4} v_3 \text{ and } v_3 \smallerbig{v_5} v_2, & v_2 \smallerbig{v_3} v_1,\, v_1 \smallerbig{v_4} v_3 \text{ and } v_3 \smallerbig{v_5} v_2, \\
        &v_1 \smallerbig{v_3} v_2,\, v_3 \smallerbig{v_4} v_1 \text{ and } v_2 \smallerbig{v_5} v_3, & v_2 \smallerbig{v_3} v_1,\, v_3 \smallerbig{v_4} v_1 \text{ and } v_2 \smallerbig{v_5} v_3, \\
        &v_1 \smallerbig{v_3} v_2,\, v_3 \smallerbig{v_4} v_1 \text{ and } v_3 \smallerbig{v_5} v_2, & v_2 \smallerbig{v_3} v_1,\, v_3 \smallerbig{v_4} v_1 \text{ and } v_3 \smallerbig{v_5} v_2.
    \end{align*}
    Since all cases are analogous, we only consider the case when: $v_1 \smallerbig{v_3} v_2$, $v_1 \smallerbig{v_4} v_3$ and $v_2 \smallerbig{v_5} v_3$. Then we have
    \begin{align*}
        v_1 &\in \parentsdown{v_4}{v_3} \text{ and } v_1 \notin \parentsdown{v_5}{v_3} \subseteq \pa{v_5}, \\
        v_2 &\in \parentsdown{v_5}{v_3} \text{ and } v_2 \notin \parentsdown{v_4}{v_3} \subseteq \pa{v_4}.
    \end{align*}    
    % The factorization in \cref{eq:PCBN_fac_unif}
    % with $v = v_5$ and $w = v_3$ requires us to compute the conditional margin $u_{v_3|\parentsdown{v_5}{v_3}}$ where $v_2 \in \parentsdown{v_5}{v_3}$.
    % Note that
    % \begin{align*}
    %     u_{v_3 | \parentsdown{v_5}{v_3}}
    %     = \frac{\partial C_{v_2, v_3 |
    %     \parentsdown{v_5}{v_3} \setminus \{v_2\} }
    %     (u_{v_2 |
    %     \parentsdown{v_5}{v_3} \setminus \{v_2\}} ,
    %     u_{v_3 |
    %     \parentsdown{v_5}{v_3} \setminus \{v_2\}})
    %     }{\partial u_{v_2 |
    %     \parentsdown{v_5}{v_3} \setminus \{v_2\}}}.
    % \end{align*}
    % By Lemma~\ref{lemma:necess_suff_condition_integration-free},
    % there only remains to prove that
    % $C_{v_2, v_3 |
    % \parentsdown{v_5}{v_3} \setminus \{v_2\}}$
    % is not specified.
    % % Because of the arc $v_2 \rightarrow v_3$, the nodes $v_2$ and $v_3$ are not d-separated given $\parentsdown{v_5}{v_3} \setminus \{v_2\}$.
    % % 
    % The arc $v_2 \rightarrow v_3$ corresponds to the specified copula $C_{v_2, v_3 | \parentsdown{v_3}{v_2}}$ with $v_1 \in \parentsdown{v_3}{v_2}$.
    % Note that $v_1 \notin
    % \parentsdown{v_5}{v_3} \setminus\{v_2\}$.
    % Hence, the copula $C_{v_2, v_3 |
    % \parentsdown{v_5}{v_3} \setminus \{v_2\}}$ is not specified: integration is needed to remove $v_1$ from the conditioning set of the copula
    % $C_{v_2, v_3 | \parentsdown{v_3}{v_2}}$.

    %Note that $v_3 \in \pa{v_5}$.
    We  apply \cref{lemma:zwO_not_specified}, with $z=v_2$, $w=v_3$ and $v=v_5$ to find that the computation of
    $u_{v_3|\parentsdown{v_5}{v_3}}$
    requires integration.
   
    Let $K \subseteq \parentsdown{v_5}{v_3} \setminus \{v_2\}$. Since there is the arc $v_2 \rightarrow v_3$, the nodes $v_2$ and $v_3$ are not d-separated given any subset of $\V\setminus\{v_2, v_3\}$. Thus, the copula
    $C_{v_2, v_3|K}$ is not the independence copula.
    
    The arc $v_2 \rightarrow v_3$ has the assigned copula
    $C_{v_2, v_3 | \parentsdown{v_3}{v_2}}$.
    Note that $v_1 \in \parentsdown{v_3}{v_2}$,
    but $v_1 \notin \pa{v_5}$.
    Therefore, $v_1 \notin K$ as $K \subseteq \pa{v_5}$.
    Hence, the copula $C_{v_2, v_3|K}$ is also not specified by an arc.
    %However, $C_{v_2, v_3|K}$ is required in integration free computation of $u_{v_3|\parentsdown{v_5}{v_3}}$, which completes the proof.
\end{proof}

\section{Restricted PCBNs}
\label{sec:restricted_PCBN}

In the previous Sections~\ref{sec:active_cycles} and~\ref{sec:interfering_vs}, we have shown that PCBNs containing active cycles and/or interfering v-structures necessitate integration.
We now announce the main result, which is that these are the only graphical structures for which integration is needed.

\begin{theorem}\label{thm:PCBN_main_theorem}
    Let $\G$ be a DAG. There exists a collection of orderings $\O$ such that the computation of the joint density of the PCBN $(\G, \O)$ does not require integration
    if and only if
    $\G$ contains no active cycles nor interfering v-structures.
\end{theorem}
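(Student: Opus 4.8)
Since the statement is a biconditional, I would prove the two directions separately, and only one of them is substantial. The implication from integration-freeness to the graphical condition is already available: if $\G$ contains an active cycle then, by \cref{thm:active_cycles}, the joint density requires integration for \emph{every} collection of orderings, and if $\G$ contains interfering v-structures then the same conclusion holds by \cref{thm:interfering_vs}. Taking the contrapositive, the existence of a single $\O$ yielding an integration-free density forces $\G$ to contain neither structure, so this direction reduces to a one-line appeal to the two theorems proved above.

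The reverse implication is the heart of the theorem, and the plan is to construct a suitable $\O$ explicitly. By \cref{lemma:necess_suff_condition_integration-free} it suffices to produce orderings $(\smallerbig{v})_{v \in \V}$ for which every conditional margin $u_{w \mid \parentsdown{v}{w}}$ is integration-free. I would build each order by an iterative selection of \emph{possible candidates}: the first parent in $\smallerbig{v}$ is unconstrained, since $u_{w \mid \emptyset} = u_w$ needs no integration, and a further parent $w$ is appended only when $u_{w \mid \parentsdown{v}{w}}$ admits a proper recursion in $\mathcal{R}(w \mid \parentsdown{v}{w})$ --- one in which the already-ordered parents are peeled off one at a time, each step using a copula $C_{z,w \mid K}$ that is specified in the sense of \cref{def:specified}, i.e. because $z$ and $w$ are adjacent (possibly after discarding conditionally independent nodes, as in the extension clause) or because $\dsep{z}{w}{K}$. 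The PossCand-type subroutines announced in the introduction implement exactly this selection, classifying each candidate parent by the reason its peeling copulas are specified.

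Establishing that this procedure never stalls, and that the local orders chosen at different nodes remain mutually coherent, is where the two excluded structures enter, in complementary roles. The absence of active cycles is the intra-node guarantee: an active cycle (\cref{def:active_cycle}) is exactly a chordless diverging/serial trail joining two parents $w,z$ of a common node $v$, and the computation behind \cref{thm:active_cycles} (via \cref{lemma:zwO_not_specified}) shows that such a trail leaves $C_{z,w \mid K}$ unspecified for all admissible $K$, blocking every recursion at $v$; forbidding it keeps each partial parental order extendable. The absence of interfering v-structures is the inter-node guarantee: the configuration of \cref{def:interfering_v_structures} is the minimal obstruction in which, for \emph{every} ordering at the shared node $v_3$, one of the recursions forced at $v_4$ or at $v_5$ requires an unspecified peeling copula. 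For instance, if $v_1 \smallerbig{v_3} v_2$, the recursion at $v_5$ must peel $v_2$ from the conditioning set of $v_3$ using some $C_{v_2,v_3 \mid K}$, yet every specified instance of this copula conditions on $v_1$ (through the arc copula $C_{v_2,v_3 \mid \parentsdown{v_3}{v_2}}$ and its independence-extensions), and $v_1 \notin \pa{v_5}$, so no admissible $K \subseteq \pa{v_5}$ qualifies; the opposite order $v_2 \smallerbig{v_3} v_1$ fails symmetrically at $v_4$. Excluding interfering v-structures removes precisely these cross-node conflicts.

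The step I expect to be the main obstacle is upgrading these intra-node and pairwise observations to a genuinely \emph{global} existence statement: I must show that, once both forbidden structures are absent, the full system of ordering constraints collected over all nodes is simultaneously satisfiable by one coherent $\O$, with no cycle of constraints assembled from several v-structures chained together, even though every pairwise interaction is benign. I would organize this as an induction along a topological order of $\G$, processing nodes from sources toward sinks while maintaining the invariant that a possible candidate is always available; the technical core --- that for each required peeling step either an adjacency or a d-separation certificate exists --- is exactly what the supporting lemmas deferred to the appendix are designed to supply, culminating in the existence-of-possible-candidates result.
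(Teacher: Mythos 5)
Your proposal matches the paper's proof in both structure and substance: the forward direction is exactly the paper's contraposition via \cref{thm:active_cycles,thm:interfering_vs}, and the reverse direction is the paper's strategy of growing each parental order by possible candidates along a well-ordering (Algorithm~\ref{alg:finding_order_general}, justified by \cref{lemma:necess_suff_condition_integration-free} via \cref{thm:validity_main_algorithm}), with the crux being the never-empty candidate set, i.e.\ \cref{thm:existence_possible_candidates}. You correctly identify that this non-stalling guarantee is the hard combinatorial core, which the paper likewise defers to Section~\ref{sec:proof:existence_possible_candidates} and the appendix lemmas.
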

\begin{proof}
The sufficiency is proven using contraposition and applying \cref{thm:active_cycles,thm:interfering_vs}.
The necessity is proven by combining \cref{thm:validity_main_algorithm,thm:existence_possible_candidates}.
\end{proof}

To prove the necessity in \cref{thm:PCBN_main_theorem}, we will demonstrate that for any graph $\G$ that does not contain active cycles or interfering v-structures, we can find a collection of orderings $\O$ such that in computation of conditional margins integration is not needed.
Therefore, we construct an algorithm which is able to find a suitable $\O$ for any restricted DAG $\G$.

\subsection{Possible candidates and algorithm}

The algorithm  follows an arbitrary well ordering $(v_1, \dots, v_n)$ of nodes in $V$.
For any node $v \in \V$, a suitable ordering $\smallerbig{v}$ is chosen sequentially.
This means that when we arrive at a node $v$ we will have already chosen the order $\smallerbig{z}$ for all nodes in $z\in \pa{v}$.
The process of finding a suitable order $\smallerbig{v}$ involves growing an ordered set $\Ovk = (o_1,\dots,o_k)$, referred to as a partial order.
This should be interpreted as $o_1 \smallerbig{v} \cdots \smallerbig{v} o_k$.

\begin{definition}[Partial order]
    For a node $v\in \V$, an ordered subset of $k$ parents of $v$ will be referred to as a \textbf{partial order} denoted by $\Ovk$.
    Thus, we have $$
    \Ovk = (o_1,\dots,o_k)
    $$
    with $k\leq \big| \pa{v} \big|$ and $o_i\in \pa{v}$ for all $i\in \{1,\dots,k \}$.
    % Note that we will omit the subscript ($v$) whenever it is evident from the context to which node $v$ we refer, and we drop the superscript ($k$) if the partial order is of arbitrary size.
    \label{def:partial_order}
\end{definition}

An initial state is $O^0_v = \emptyset$ to which at each iteration a node from the set $\pa{v}$ is added until we have found $O^{|\pa{v}|}_v$.
A node can be added to a partial order $\Ovk$ if it satisfies certain constraints.
Specifically, we can add a $w \in \pa{v}$ such that we can compute the conditional margin $u_{w|\Ovk}$ without integration.
This motivates the definition.

\begin{definition}
    The set of \textbf{possible candidates} for a partial order $\Ovk$ is defined by
    \begin{align*}
        \PossCand{\Ovk}
        &:= \big\{ w\in \pa{v}
        \text{ such that } u_{w | \Ovk}
        \text{ can be computed without integration}
        \big\}.
    \end{align*}
\end{definition}

Therefore, we propose the following  Algorithm~\ref{alg:finding_order_general}.

\begin{algorithm}[H]
\caption{Finding a suitable $O$.}
\label{alg:finding_order_general}
\begin{algorithmic}[1]
\Require restricted DAG $\G$
\Ensure set of orderings $\O$ for which we will not require integration
\ForEach{node $v$ in $\V$ according to a well-ordering}\label{alg:well-ord}
    \State $O_{v}^0 \gets \emptyset$
    \ForEach{$k = 0, \dots, |\pa{v}| - 1$}
        \State Compute $\PossCand{\Ovk}$.

        \State Choose one element $w \in Poss.Cand(\Ovk)$

        \State Define $O_v^{k+1} := (\Ovk, w)$.
    \EndFor
    \State Set $\smallerbig{v}$ according to $O_v^{|\pa{v}|}$
\EndFor
\State \Return{$\O := \{\smallerbig{v};\, v\in\V \}$}
\end{algorithmic}
\end{algorithm}

Algorithm~\ref{alg:finding_order_general} allows finding all sets of orders $\O$ that do not result in integration. This algorithm seems very simple. The main issue here is how to find the set of possible candidates and whether it is non-empty. This will be discussed next. 

\begin{theorem}
\label{thm:validity_main_algorithm}
    Let $\G$ be a DAG containing no active cycles nor interfering v-structures.
    The joint density of any PCBN with DAG $\G$ does not requires integration if and only if its set of orders $\O$ is one of the possible outputs of Algorithm~\ref{alg:finding_order_general}.
\end{theorem}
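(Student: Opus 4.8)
The plan is to prove the equivalence by reducing both of its sides to the same pointwise condition on conditional margins, using \cref{lemma:necess_suff_condition_integration-free} on one side and the admissibility rule of \cref{alg:finding_order_general} on the other. By \cref{lemma:necess_suff_condition_integration-free}, the joint density of the PCBN $(\G,\O)$ is integration-free if and only if
\begin{align*}
\forall v \in \V, \ \forall w \in \pa{v}: \quad u_{w|\parentsdown{v}{w}} \text{ does not require integration.}
\end{align*}
So it suffices to show that this condition holds exactly when $\O$ is produced by some run of the algorithm.

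First I would record the elementary but crucial observation that furnishes the dictionary between the two descriptions. If $\smallerbig{v}$ orders $\pa{v}$ as $(o_1, \dots, o_m)$ with $m = |\pa{v}|$, then for every index $k$ we have $\parentsdown{v}{o_{k+1}} = \{o_1, \dots, o_k\} = \Ovk$. Hence the assertion ``$u_{o_{k+1}|\parentsdown{v}{o_{k+1}}}$ does not require integration'' is literally the assertion ``$o_{k+1} \in \PossCand{\Ovk}$''. Consequently, the displayed condition, restricted to a single node $v$, says precisely that the sequence $(o_1, \dots, o_m)$ is a legal sequence of choices in the inner loop of the algorithm at $v$.

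The step I expect to be the main obstacle is checking that the set $\PossCand{\Ovk}$ as computed during a run agrees with the integration-free status of $u_{w|\Ovk}$ in the fully specified PCBN $(\G,\O)$: a priori the former is evaluated while only the orders of already-processed nodes are known, whereas the latter is a property of the complete structure. To close this gap I would argue that the status of $u_{w|\Ovk}$ depends only on data available when $v$ is processed. By \cref{def:recursion_cond_margin}, every recursion $R \in \mathcal{R}(w|\Ovk)$ produces only copulas of the form $C_{a,b|K}$ with $a,b \in \{w\}\cup\Ovk \subseteq \pa{v}$ and $K \subseteq \Ovk$, and by \cref{def} the margin is integration-free iff one such recursion is proper. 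Whether a copula $C_{a,b|K}$ is specified is, by \cref{def:specified}, determined by the graph together with the parental orders at its endpoints $a,b$; since $a,b \in \pa{v}$, these orders are already fixed once every parent of $v$ has been processed, and, because $v$ itself is never such an endpoint, the as-yet-undetermined order $\smallerbig{v}$ plays no role. Thus membership in $\PossCand{\Ovk}$ is stable and coincides with the integration-free property in the final PCBN.

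With the dictionary in place the equivalence follows in both directions. If the density of $(\G,\O)$ is integration-free, then by \cref{lemma:necess_suff_condition_integration-free} every prefix condition $o_{k+1} \in \PossCand{\Ovk}$ holds, so the algorithm may select exactly the elements of $\smallerbig{v}$ at each step of each node and therefore output $\O$. Conversely, if $\O$ is an output, then every choice was legal, so every $u_{w|\parentsdown{v}{w}}$ is integration-free and \cref{lemma:necess_suff_condition_integration-free} yields that the density is integration-free. The hypothesis that $\G$ has no active cycles nor interfering v-structures is not what drives this equivalence; rather, it enters only through \cref{thm:existence_possible_candidates}, which guarantees that $\PossCand{\Ovk}$ is never empty, so that the algorithm always runs to completion and the set of possible outputs is non-empty, making the statement non-vacuous and usable in the proof of \cref{thm:PCBN_main_theorem}.
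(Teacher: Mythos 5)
Your proof is correct and takes essentially the same route as the paper's: reduce via Lemma~\ref{lemma:necess_suff_condition_integration-free} to the condition that each $u_{w|\parentsdown{v}{w}}$ is integration-free, identify $\parentsdown{v}{w} = \Ovk$, and invoke the definition of $\PossCand{\Ovk}$ to translate this into legality of the algorithm's choices. Your extra verification that membership in $\PossCand{\Ovk}$ is well-defined mid-run (because every copula arising in a recursion for $u_{w|\Ovk}$ has both endpoints in $\pa{v}$, whose parental orders are fixed before $v$ is processed) is a detail the paper leaves implicit, but it does not alter the argument.
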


\begin{proof}
    This is a direct consequence of Lemma~\ref{lemma:necess_suff_condition_integration-free}.
    Indeed, for any node $v$ and any $w \in \pa{v}$, there are $k = |\parentsdown{v}{w}|$ elements smaller than $w$ (with respect to $\smallerbig{v}$) in $\pa{v}$.
    Remark that $\parentsdown{v}{w} = \Ovk$.
    By definition, $u_{w | \parentsdown{v}{w}} = u_{w | \Ovk}$ does not require integration if and only if $w \in \PossCand{\Ovk}$, i.e. if and only if it is chosen by the algorithm.
\end{proof}

Before characterizing the set of possible candidates, we announce one important result. Its proof is difficult and is delayed to Section~\ref{sec:proof:existence_possible_candidates}.

\begin{theorem}
\label{thm:existence_possible_candidates}
    Let $\G$ be a DAG containing no active cycles nor interfering v-structures.
    The set $\PossCand{\Ovk}$ computed in line 4 of Algorithm~\ref{alg:finding_order_general} is never empty.
\end{theorem}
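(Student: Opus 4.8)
The plan is to argue by contradiction. Suppose that at some point the algorithm has reached a node $v$ together with a partial order $\Ovk = (o_1,\dots,o_k)$ with $k < |\pa{v}|$ for which $\PossCand{\Ovk} = \emptyset$, and write $R := \pa{v} \setminus \{o_1, \dots, o_k\}$ for the (non-empty) set of parents still to be placed. By definition of $\PossCand{\Ovk}$, every $w \in R$ is such that $u_{w | \Ovk}$ requires integration. I would then show that this situation forces $\G$ to contain either an active cycle or interfering v-structures, contradicting the hypothesis. A convenient intermediate step, which I would isolate as the supporting lemmas in the appendix, is a structural characterization of the candidate set decomposing $\PossCand{\Ovk}$ into the three explicit types $\PossCandInd{\Ovk}$, $\PossCandIn{\Ovk}$ and $\PossCandOut{\Ovk}$; emptiness of $\PossCand{\Ovk}$ then means all three types are empty, which is what I would translate into graph constraints.

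The first real step is to turn ``$u_{w|\Ovk}$ requires integration'' into a concrete obstruction. For this I would prove a partial converse of \cref{lemma:zwO_not_specified}: if no proper recursion for $u_{w|\Ovk}$ exists, then there is a witness $z_w \in \Ovk$ such that $C_{z_w, w | K}$ is not specified for every $K \subseteq \Ovk \setminus \{z_w\}$. Appealing to \cref{remark:necessary_condition_Copula_specified,remark:cannot_remove_nodes_parent_v_w}, non-specification for all such $K$ splits into exactly two situations: either (a) $z_w$ and $w$ are non-adjacent and $\notdsepbig{z_w}{w}{K}$ for every $K \subseteq \Ovk \setminus \{z_w\}$, or (b) $z_w \to w$ is an arc whose assigned conditioning set $\parentsdown{w}{z_w}$ contains a node $o \notin \Ovk$, so that $o$ can never be removed.

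Next I would assemble a forbidden structure from the obstruction, choosing $w \in R$ and its witness $z_w$ extremally (for instance with $w$ minimal in a topological order on $R$ and the activating trail as short as possible) so that the resulting substructure is clean. In case (a), the unblocked trail realizing $\notdsepbig{z_w}{w}{K}$, after a chord-elimination argument that preserves the serial/diverging character of its interior, closes with the two arcs $z_w \to v$ and $w \to v$ into precisely an active cycle $v \leftarrow w \har \cdots \har z_w \rightarrow v$ in the sense of \cref{def:active_cycle}. In case (b), the external conditioning node $o$ together with $w$, $z_w$ and $v$ supplies four of the five nodes of \cref{def:interfering_v_structures}; the fifth node is recovered by analysing why the already-fixed order $\smallerbig{w}$ put $o$ before $z_w$, which must be witnessed by a further v-structure having $o$ and $w$ as parents but not $z_w$. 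Either outcome contradicts the standing hypothesis, proving $\PossCand{\Ovk} \neq \emptyset$.

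The hard part will be the assembly in the previous paragraph, and in particular the two extraction sub-arguments. First, ruling out every proper recursion for $u_{w|\Ovk}$ is a statement about the whole branching recursion tree $\mathcal{R}(w|\Ovk)$, so obtaining a single witness $z_w$ requires controlling all peeling orders of $\Ovk$ simultaneously; this is the technical content I would push into the appendix. Second, the chord-elimination needed in case (a) must shorten an activating trail to a chordless one without introducing a v-structure in its interior, and in case (b) the recovery of the fifth node hinges on the invariant that the orders chosen so far remain part of a globally valid assignment, so that $o \smallerbig{w} z_w$ is genuinely forced rather than an arbitrary choice. Keeping track of this invariant, and ensuring the extremal choice of $(w, z_w)$ makes both extractions go through at once, is where the delicate work lies.
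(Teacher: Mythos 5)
You take a genuinely different route from the paper: you argue by contraposition, trying to extract an active cycle or interfering v-structures directly from the emptiness of $\PossCand{\Ovk}$, whereas the paper proves the positive statement by a descent argument --- if $\PossCandInd{\Ovk} = \emptyset$, some node of $\BOvk \setminus \Ovk$ is adjacent to $\Ovk$ (Lemma~\ref{lemma:adset_not_empty}), and failure of the incoming-arc (resp.\ outgoing-arc) case is then shown to generate an ever-growing directed sequence $o_1 \ra o_2 \ra \cdots$ (resp.\ a chain of $w_i$'s), contradicting acyclicity and finiteness (Lemmas~\ref{lemma:w1o1_to_w2o2} and~\ref{lemma:sequence_of_ws}). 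The problem is that your plan has a genuine gap at its first real step. The ``partial converse'' of \cref{lemma:zwO_not_specified} that you propose --- no proper recursion for $u_{w|\Ovk}$ implies a single witness $z_w \in \Ovk$ with $C_{z_w, w|K}$ unspecified for \emph{every} $K \subseteq \Ovk \setminus \{z_w\}$ --- is false as a local statement about the pair $(w, \Ovk)$. Take $\V = \{w, z, o, v\}$ with arcs $w \ra v$, $z \ra v$, $o \ra v$, $w \ra o$, $z \ra o$, and parental order $w \smallerbig{o} z$; this DAG has no active cycles and no interfering v-structures. For the conditioning set $\{o,z\}$, the margin $u_{w|o,z}$ requires integration: peeling $z$ needs $C_{w,z|o}$, which is unspecified since $w,z$ are non-adjacent and $\notdsepbig{w}{z}{o}$ (the v-structure at $o$ is activated), while peeling $o$ needs $C_{w,o|z}$, which is unspecified since the assigned copula is $C_{w,o}$ and $\notdsepbig{o}{z}{w}$ (the arc $z \ra o$). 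Yet there is no witness: $C_{z,w|\emptyset}$ is specified (it is the independence copula, because $\dsepbig{w}{z}{\emptyset}$), and $C_{o,w|\emptyset} = C_{w,o}$ is specified by the arc.

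What rescues your lemma is only that such states are never reached by the algorithm (here, after $O^1_v = (o)$ the unique candidate is $w$, so $\Ovk = (o,z)$ never occurs); but establishing that reachability invariant is exactly the hard content of the theorem --- it is what the B-set machinery (Corollary~\ref{cor:previous_orders_abide_by_bsets}), Lemma~\ref{lemma:adjacent_os}, and the induction over the algorithm's history accomplish in the paper's appendices. So the step you defer as ``technical content for the appendix'' cannot be proven by controlling the peeling orders of $\Ovk$ locally, and once the global invariant is brought in, you are essentially reconstructing the paper's actual proof rather than shortcutting it. A second, independent gap is your case (a): a trail witnessing $\notdsepbig{z_w}{w}{K}$ may contain converging connections (activated by conditioning nodes or their descendants), and chord elimination does not in general preserve a serial/diverging interior, so such a trail does not simply ``close up'' into an active cycle in the sense of \cref{def:active_cycle}; taming activated trails into chordless ones with controlled connection types is precisely the role of the minimal-trail results (\cref{thm:minimal_trails}, \cref{thm:minimal_trails_local_relationship}), for which your outline offers no substitute, and the recovery of the fifth node in your case (b) is likewise asserted without an argument.
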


This result implies that we will never encounter a case where there is no possible candidate to be added.
Hence, the algorithm will never terminate prematurely nor get ``stuck'' and will always return a suitable set of orders $\O$.

\subsection{B-sets}\label{sec:B_sets}

The construction of B-sets is motivated by observation made in \cref{sec:interfering_vs} that the children of a node $v$ with a v-structure can constrain the order of parents of $v$.
This happens because $v$ and its children can have common parents.
In the PCBN represented in \cref{fig:PCBN_integration_b}, we have $\pa{v_3}=\{v_1,v_2\}$ and $\pa{v_4}=\{v_2,v_3\}$.
Hence $v_2$ is their common parent and it has to be put first in the parental order of $v_3$.
Similarly, for the PCBN in \cref{fig:PCBN_interferin_vstruc} the common parent of $v_3$ and $v_4$ is $v_1$, hence it should be put as first in the parental order of $v_3$.
However, at the same time the common parent of $v_3$ and $v_5$ is $v_2$, which we would need to put first in the parental order of $v_3$.       
To formalize these observations we introduce the concept of B-sets.

\begin{definition}[B-set]
    Let $\G = (\V, \E)$ be a DAG.
    For $v, w \in \V$ such that $v \rightarrow w$, we denote by
    % $$B(v, w) :=
    % \begin{cases}
    %     \pa{v} \cap \pa{w}, &
    %     \text{ if } v \rightarrow w, \\
    %     \emptyset, & \text{ else.}
    % \end{cases} $$
    \begin{align*}
        B(v, w) := \pa{v} \cap \pa{w}
    \end{align*}
    We say that $B(v, w)$ is
    % the \textbf{B-set} of $v_1$ and $v_2$, and that $B(v_1,v_2)$ is
    a \textbf{B-set} of $v$.
\end{definition}
The B-sets will provide us with clear restrictions an order must abide to, so that the joint density does not require integration.
It should be noted that a node has as many B-sets as it has children.
Some of them can be empty and not all of them have to be distinct.
In the lemma below we prove that the B-sets of a node in DAG $\G$ are ordered by inclusion if and only if $\G$ does not contain interfering v-structures.

\begin{lemma}
\label{Bsets_eqcond}
    Let $\G=(\V,\E)$ be a DAG. The following two statements are equivalent:
    \begin{lemlist}[label=(\roman*)]
        \item $\G$ does not contain interfering v-structures.\label{lemma:Bsets_eqcondi}
        \item For all $v_1\in \V$ and $v_2,v_3\in \ch{v_1}$ we have $B(v_1,v_2)\subseteq B(v_1,v_3)$ or $B(v_1,v_3)\subseteq B(v_1,v_2)$\label{lemma:Bsets_eqcondii}.
    \end{lemlist}
\end{lemma}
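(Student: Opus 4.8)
The plan is to prove the two implications simultaneously by observing that the \emph{failure} of item \ref{lemma:Bsets_eqcondii} at a node is, after a renaming of symbols, literally the node configuration demanded by \cref{def:interfering_v_structures}. So I would establish the contrapositive equivalence: item \ref{lemma:Bsets_eqcondii} fails for $\G$ if and only if $\G$ contains interfering v-structures. Since item \ref{lemma:Bsets_eqcondi} is precisely the negation of ``$\G$ contains interfering v-structures,'' this is exactly the desired equivalence \ref{lemma:Bsets_eqcondi} $\Leftrightarrow$ \ref{lemma:Bsets_eqcondii}.

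First I would set up the dictionary between the two pictures in the forward direction. Suppose item \ref{lemma:Bsets_eqcondii} fails at a node $p$ with children $c_1, c_2 \in \ch{p}$, so $B(p, c_1)$ and $B(p, c_2)$ are incomparable; then I can choose witnesses $a \in B(p, c_1) \setminus B(p, c_2)$ and $b \in B(p, c_2) \setminus B(p, c_1)$. Unwinding $B(p, c_i) = \pa{p} \cap \pa{c_i}$ gives $a \in \pa{p}$, $a \in \pa{c_1}$, and (since $a \in \pa{p}$ already) the membership $a \notin B(p,c_2)$ forces $a \notin \pa{c_2}$; symmetrically $b \in \pa{p}$, $b \in \pa{c_2}$, and $b \notin \pa{c_1}$. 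Reading these memberships as arcs and identifying $v_3 = p$, $v_4 = c_1$, $v_5 = c_2$, $v_1 = a$, $v_2 = b$ produces exactly $v_3 \ra v_4$, $v_3 \ra v_5$, $v_1 \ra v_3$, $v_1 \ra v_4$, $v_1 \notrightarrow v_5$, $v_2 \ra v_3$, $v_2 \ra v_5$, $v_2 \notrightarrow v_4$, which are the defining conditions of \cref{def:interfering_v_structures}. The converse runs the same correspondence backwards: given interfering v-structures labelled as in the definition, I set $p = v_3$, $c_1 = v_4$, $c_2 = v_5$ and check that $v_1 \in B(p,c_1) \setminus B(p,c_2)$ and $v_2 \in B(p,c_2) \setminus B(p,c_1)$, so the two B-sets are incomparable and item \ref{lemma:Bsets_eqcondii} fails at $p$.

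The only point requiring genuine care — which I regard as the main obstacle — is that \cref{def:interfering_v_structures} implicitly needs the five nodes $v_1, \dots, v_5$ to be distinct, so in the forward direction I must verify distinctness of $a, b, p, c_1, c_2$. Incomparability of the B-sets gives $c_1 \neq c_2$ (equal children would give equal B-sets) and $a \neq b$ (a common witness would lie in both B-sets). Absence of self-loops yields $a \neq p$, $a \neq c_1$, $b \neq p$, $b \neq c_2$ directly from the arcs just derived. The two remaining coincidences are excluded by acyclicity: $a = c_2$ together with $a \ra p$ and $p \ra c_2$ would create the $2$-cycle $p \ra c_2 = a \ra p$, and $b = c_1$ is ruled out symmetrically. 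With distinctness confirmed, the dictionary is a genuine correspondence between the two forbidden patterns, and the equivalence follows.
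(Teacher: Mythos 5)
Your proof is correct and follows essentially the same route as the paper: both arguments are the direct dictionary identifying witnesses of B-set incomparability ($a \in B(p,c_1)\setminus B(p,c_2)$, $b \in B(p,c_2)\setminus B(p,c_1)$) with the nodes $v_1,\dots,v_5$ of \cref{def:interfering_v_structures}, run in both directions. Your extra verification that the five nodes are distinct (via acyclicity and absence of self-loops) is a careful touch the paper leaves implicit, but it does not change the substance of the argument.
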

\begin{proof}
    If $\G$ contains interfering v-structures, then \ref{lemma:Bsets_eqcondii} is violated.
    For example, in \cref{fig:PCBN_interferin_vstruc} we have $\{v_1\}=B(v_3,v_4)\, \cancel{\subseteq}\,B(v_3,v_5)=\{v_2\}$ and $B(v_3,v_5) \,\cancel{\subseteq}\, B(v_3,v_4)$.
    
    If \ref{lemma:Bsets_eqcondii} is violated, then we can find $v_4 \in B(v_1, v_2) \setminus B(v_1, v_3)$
    and $v_5 \in B(v_1, v_3) \setminus B(v_1, v_2)$.
    In this case the nodes $v_1,\dots,v_5$ are exactly interfering v-structures.
\end{proof}

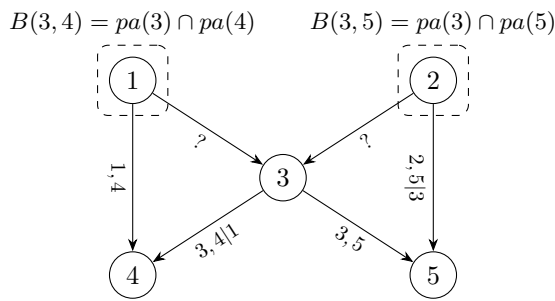
\begin{figure}[H]
  \centering
    \begin{tikzpicture}[scale=1, transform shape, node distance=1.2cm, state/.style={circle, draw=black}]

        \node[state] (3) at (3,0) {$3$};
        \node[state, above left = of 3, xshift = -2em] (1) {$1$};
        \node[state, above right = of 3, xshift = 2em] (2) {$2$};
        \node[state, below left = of 3, xshift = -2em] (4) {$4$};
        \node[state, below right = of 3, xshift = 2em] (5) {$5$};

        \begin{scope}[>={Stealth[black]}]
            \path [->] (1) edge node[scale=0.8, sloped, below] {?} (3);
            \path [->] (2) edge node[scale=0.8, sloped, below] {?} (3);
            \path [->] (1) edge node[scale=0.8, sloped, below] {$1,4$} (4);
            \path [->] (3) edge node[scale=0.8, sloped, below] {$3,4|1$} (4);
            \path [->] (3) edge node[scale=0.8, sloped, below] {$3,5$} (5);
            \path [->] (2) edge node[scale=0.8, sloped, below] {$2,5|3$} (5);
        \end{scope}

        % Box around node 1
        \node (box_v1) [
            draw, rounded corners, dashed, inner sep=1ex, 
            fit=(1),
            label={[node font=\small\itshape\bfseries,rotate=0]above: $B(3,4) = \pa{3} \cap \pa{4}$}
            ]
            {};
            
        % Box around node 2
        \node (box_v2) [
            draw, rounded corners, dashed, inner sep=1ex, 
            fit=(2),
            label={[node font=\small\itshape\bfseries,rotate=0]above: $B(3,5) = \pa{3} \cap \pa{5}$}
            ]
            {};
    \end{tikzpicture}
\caption{PCBN with interfering v-structures: the B-sets of $3$ are not ordered by $\subseteq$.}
\label{fig:PCBN_interferin_vstruc_B_sets}
\end{figure}

Thus, if graph $\G$ does not contain interfering v-structures, we can order the B-sets corresponding to each node $v$ according to the inclusion order $\subseteq$. The sorted sequence of these subsets with respect to the inclusion relation is referred to as the \textbf{B-sets} of $v$ and  it determines a partial order of the parental set of $v$.   

\begin{definition}[B-sets]
    Consider a DAG with no interfering v-structures and let $v \in \V$ with
    \begin{equation*}
        Q = Q(v)
        := \big| \{B(v,w); \, w\in \ch{v} \}
        \big|
    \end{equation*}
    the number of distinct B-sets corresponding to $v$.
    We denote by $B_1(v),\dots,B_Q(v)$ the sorted sequence of $\{B(v,v_2);\, v_2\in ch(v) \}$ in increasing order with respect to $\subseteq$.
    We also define $B_0(v):=\emptyset$ and $B_{Q+1}:=\pa{v}$.
    The sequence $\big( B_q(v) \big)_{q=0,\dots ,Q+1}$ is referred to as the \textbf{B-sets} of $v$.

    \medskip
    
    Furthermore, for each B-set $B_q$ with $q<Q(v)+1$, we denote by $b_q$ an arbitrary node such that $B_q=B(v,b_q)$.
    This node $b_q$ may not be unique.
    \label{def:bsets}
\end{definition}

The B-sets introduce the restriction that all nodes in $B_q$ must be smaller than nodes in $B_{q+1}\setminus B_q$ with respect to the order $\smallerbig{v}$.
We denote this by $B_q \smallerbig{v} B_{q+1}\setminus B_q$.
Hence, we must have 
$$
B_1 \smallerbig{v} B_2\setminus B_1 \smallerbig{v} B_3\setminus B_2 \smallerbig{v} \cdotslong \smallerbig{v} B_{Q+1}\setminus B_Q.
$$
We now state the following definition.

\begin{definition}[Abiding by the B-sets]
    \label{def:abide}
    Let $(\G,\O)$ be a PCBN where $\G$ contains no interfering v-structures.
    A parental order $\smallerbig{v}$ is said to \textbf{abide} by the B-sets if
    $$
    B_1
    \smallerbig{v} B_2\setminus B_1
    \smallerbig{v} B_3\setminus B_2
    \smallerbig{v} \cdotslong
    \smallerbig{v} B_{Q+1}\setminus B_Q.
    $$
    Similarly, a set of orders $\O:= \{\smallerbig{v};\, v\in \V \}$ abides by the B-sets if all its parental orders $\smallerbig{v}$ abide by the B-sets.
\end{definition}

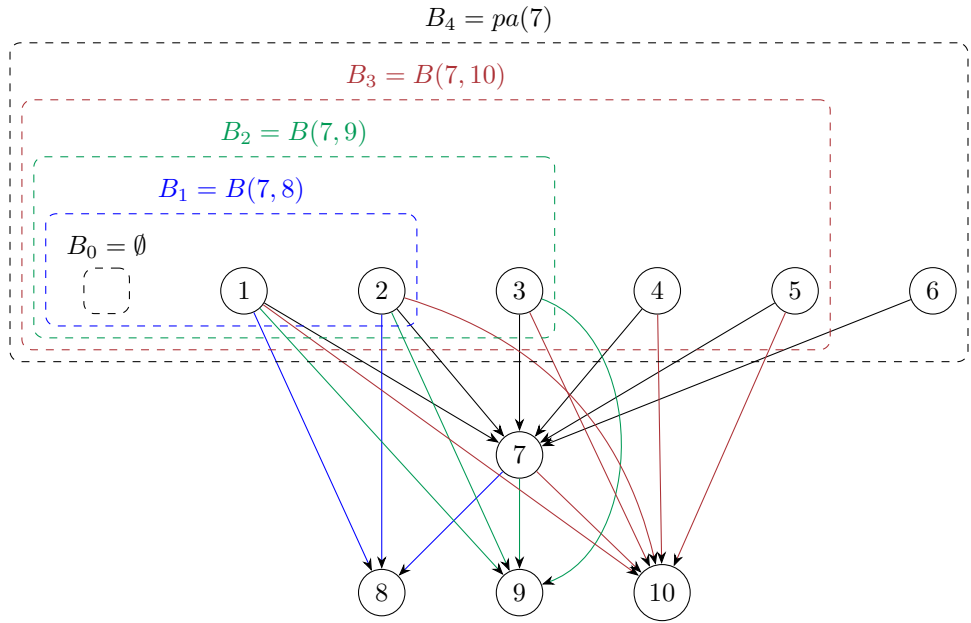
\begin{figure}[H]
  \centering
    \begin{tikzpicture}[scale=1, transform shape, node distance=1.2cm, state/.style={circle, draw=black}]

        \node[state] (1) {$1$};
        \node[state, right = of 1] (2) {$2$};
        \node[state, right = of 2] (3) {$3$};
        \node[state, right = of 3] (4) {$4$};
        \node[state, right = of 4] (5) {$5$};
        \node[state, right = of 5] (6) {$6$};
        
        \node[state, below = of 3, yshift = -1em] (7) {$7$};
        
        \node[state, below = of 7] (9) {$9$};
        \node[state, left = of 9] (8) {$8$};
        \node[state, right = of 9] (10) {$10$};
        
        \begin{scope}[>={Stealth[black]}]
            \path [->] (1) edge (7);
            \path [->] (2) edge (7);
            \path [->] (3) edge (7);
            \path [->] (4) edge (7);
            \path [->] (5) edge (7);
            \path [->] (6) edge (7);
            
            \path [->] (7) edge[color = blue] (8);
            \path [->] (7) edge[color = ForestGreen] (9);
            \path [->] (7) edge[color = Maroon] (10);
            
            \path [->] (1) edge[color = blue] (8);
            \path [->] (2) edge[color = blue] (8);
            
            \path [->] (1) edge[color = ForestGreen] (9);
            \path [->] (2) edge[color = ForestGreen] (9);
            \path [->] (3) edge[bend left=70, color = ForestGreen] (9);
            
            \path [->] (1) edge[color = Maroon] (10);
            \path [->] (2) edge[color = Maroon, bend left=30] (10);
            \path [->] (3) edge[color = Maroon] (10);
            \path [->] (4) edge[color = Maroon] (10);
            \path [->] (5) edge[color = Maroon] (10);
        \end{scope}
        
        % Box around the zero-th B-set
        \node (box_bset_0) [
            draw, rounded corners, dashed, inner sep=2ex, 
            left = of 1,
            label={[name = lab_bset_0]above:
            $B_0 = \emptyset$}
            ]
            {};
            
        % Box around the first B-set
        \node (box_bset_1) [
            draw, rounded corners, dashed, inner sep=1ex, 
            color = blue,
            fit=(1) (2) (box_bset_0) (lab_bset_0),
            label={[name = lab_bset_1, color = blue]above:
            $B_1 = B(7,8)$}
            ]
            {};
            
        % Box around the second B-set
        \node (box_bset_2) [
            draw, rounded corners, dashed, inner sep=1ex, 
            color = ForestGreen,
            fit=(1) (2) (3) (box_bset_1) (lab_bset_1),
            label={[name = lab_bset_2, color = ForestGreen]above:
            $B_2 = B(7,9)$}
            ]
            {};
            
        % Box around the third B-set
        \node (box_bset_3) [
            draw, rounded corners, dashed, inner sep=1ex,
            color = Maroon,
            fit=(1) (2) (3) (4) (5)
            (box_bset_1) (box_bset_2) (lab_bset_2),
            label={[name = lab_bset_3, color = Maroon]above:
            $B_3 = B(7,10)$}
            ]
            {};
            
        % Box around the fourth B-set
        \node (box_bset_4) [
            draw, rounded corners, dashed, inner sep=1ex,
            fit=(1) (2) (3) (4) (5) (6)
            (box_bset_1) (box_bset_2) (lab_bset_2) (box_bset_3) (lab_bset_3),
            label={above:
            $B_4 = \pa{7}$}
            ]
            {};
    \end{tikzpicture}
\caption{PCBN with ordered B-sets for the node $v = 7$.
Colors represent the different B-sets.
Any order $\smallerbig{7}$ that abides by the B-sets must start with $1$, $2$ in any order; then $3$; then $4$, $5$ in any order, and finally $6$. This means that the only possible orders $\smallerbig{7}$ that abide by the B-sets are
$(1, 2, 3, 4, 5, 6)$,
$(1, 2, 3, 5, 4, 6)$,
$(2, 1, 3, 4, 5, 6)$ and
$(2, 1, 3, 5, 4, 6)$.}
\label{fig:B_sets_ordered}
\end{figure}

Any PCBN whose set of orders does not abide by the B-sets will require integration.

\begin{lemma}\label{lemma:abide_to_bsets}
    Let $(\G,\O)$ be a PCBN where $\G$ contains no active cycles nor interfering v-structures.
    If $\O$ does not abide by the B-sets, then the computation of the joint density requires integration.
\end{lemma}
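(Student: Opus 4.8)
The plan is to deduce the statement from \cref{lemma:necess_suff_condition_integration-free}: it suffices to exhibit a single node $w$ and a parent $t\in\pa{w}$ for which $u_{t\mid\parentsdown{w}{t}}$ provably requires integration, and the natural tool for this is \cref{lemma:zwO_not_specified}. First I would unpack the hypothesis that $\O$ does not abide by the B-sets. This forces some node $v$ to carry an order $\smallerbig{v}$ that breaks the chain in \cref{def:abide}. Because $\G$ has no interfering v-structures, \cref{Bsets_eqcond} guarantees that the B-sets of $v$ are nested, so abiding is exactly the demand that $\smallerbig{v}$ respect the block partition $B_1, B_2\setminus B_1, \dots, B_{Q+1}\setminus B_Q$; since abiding is the conjunction of the consecutive links, its failure occurs already between two consecutive blocks, yielding an index $q\in\{1,\dots,Q\}$ together with parents $z\in B_q$ and $a\in\pa{v}\setminus B_q$ of $v$ whose order is reversed, $a\smallerbig{v} z$.

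Next I would record the structural consequences. Writing $B_q=B(v,w)$ for a child $w$ of $v$ (possible since $q\le Q$ is a genuine B-set index), the node $z\in B(v,w)=\pa{v}\cap\pa{w}$ is a common parent of $v$ and $w$, whereas $a\in\pa{v}\setminus B(v,w)=\pa{v}\setminus\pa{w}$ is a parent of $v$ that is \emph{not} a parent of $w$. Since $z\in\pa{v}$, the arc $z\to v$ carries the copula $C_{z,v\mid\parentsdown{v}{z}}$, and $a\smallerbig{v} z$ gives $a\in\parentsdown{v}{z}$. The observation I expect to be decisive is the combination $a\in\parentsdown{v}{z}$ together with $a\notin\pa{w}$: any conditioning set assembled from the parents of $w$ must necessarily miss $a$.

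Then I would pass to the child $w$, which has both $z$ and $v$ among its parents. Let $t$ be whichever of $z,v$ is larger under $\smallerbig{w}$ and $s$ the smaller, so that $s\in\parentsdown{w}{t}$. I would apply \cref{lemma:zwO_not_specified} to $u_{t\mid\parentsdown{w}{t}}$ with distinguished node $s$: for every $K\subseteq\parentsdown{w}{t}\setminus\{s\}$ we have $K\subseteq\pa{w}$ and hence $a\notin K$, so the copula $C_{s,t\mid K}=C_{z,v\mid K}$ attached to the arc $z\to v$ has the parent $a\in\parentsdown{v}{z}$ absent from $K$; by \cref{remark:cannot_remove_nodes_parent_v_w} it is therefore not specified. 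As this holds for every such $K$, \cref{lemma:zwO_not_specified} shows that $u_{t\mid\parentsdown{w}{t}}$ requires integration, and \cref{lemma:necess_suff_condition_integration-free} then forces the joint density to require integration. I note that only the absence of interfering v-structures is used, through the nestedness of the B-sets.

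The step I expect to be the main obstacle is the first reduction: turning the purely combinatorial failure of \cref{def:abide} into a concrete reversed pair across a genuine B-set index $q\le Q$, and in particular arguing that nestedness upgrades $a\notin B_q$ to the much stronger $a\notin\pa{w}$, which is precisely what makes every admissible conditioning set $K$ omit $a$. Checking that \cref{remark:cannot_remove_nodes_parent_v_w} applies symmetrically to $C_{z,v\mid K}$, irrespective of which of $z,v$ is the later parent of $w$, is routine but should be spelled out.
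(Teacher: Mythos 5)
Your proposal is correct and takes essentially the same route as the paper's own proof: both unpack the failure of abiding into a node $v$ with a reversed pair of parents and a child $w$ witnessing the violated B-set, then apply \cref{lemma:zwO_not_specified} at that child, using \cref{remark:cannot_remove_nodes_parent_v_w} with the missing parent ($a$ in your notation, which lies in $\parentsdown{v}{z}$ but outside $\pa{w}$, hence outside every admissible $K$) to show no conditioning set yields a specified copula, and finally conclude via \cref{lemma:necess_suff_condition_integration-free}. The only cosmetic difference is that you handle the two sub-cases of $\smallerbig{w}$ uniformly through the $s,t$ notation, where the paper invokes a without-loss-of-generality symmetry.
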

\begin{proof}
    By assumption, there is a node $v$ in $\V$ such that $\smallerbig{v}$ does not abide by the B-sets.
    Hence, there exist $s, w \in \pa{v}$ and a node $z\in \ch{v}$, $z\in \ch{s}$ and $w\notin \pa{z}$ with $w \smallerbig{v} s$. This means that $\G$ contains the subgraph below.
    $$
    \begin{tikzpicture}[scale=1, transform shape, node distance=1.2cm, state/.style={circle, font=\Large, draw=black}]

    \node (v) {$v$};
    \node[above left = of v] (s) {$s$};
    \node[above right = of v] (w) {$w$};
    \node[below left = of v] (z) {$z$};
    
    \begin{scope}[>={Stealth[length=6pt,width=4pt,inset=0pt]}]
    
    \path [->] (s) edge node[scale=0.8, sloped, below] {
    % $s, v | w$
    } (v);
    \path [->] (w) edge node[scale=0.8, sloped, below] {
    % $v,w$
    } (v);
    \path [->] (s) edge node {} (z);
    \path [->] (v) edge node {} (z);
    
    \end{scope}
    
    \end{tikzpicture}
    $$
    % The factorization of the joint density requires the computation of the conditional margins $u_{v | \parentsdown{z}{v}}$ and
    % $u_{s | \parentsdown{z}{s}}$.
    % Remark that we can have $s \smallerbig{z} v$ or $v \smallerbig{z} s$.
    % Since both cases are analogous, we only consider the case when $s \smallerbig{z} v$, which implies that $s \in \parentsdown{z}{v}$.
    % To compute the joint density, we need the conditional margin
    % $u_{v|\parentsdown{z}{v}}$,
    % and we show that this margin requires integration since
    % \begin{align*}
    %     u_{v | \parentsdown{z}{v}}
    %     = \frac{\partial C_{s, v | \parentsdown{z}{v} \setminus \{s\}}
    %     \big(
    %     u_{s | \parentsdown{z}{v} \setminus \{s\}} ,
    %     u_{v | \parentsdown{z}{v} \setminus \{s\}}
    %     \big)
    %     }{\partial u_{s | \parentsdown{z}{v} \setminus \{s\}}}.
    % \end{align*}
    % By Lemma~\ref{lemma:necess_suff_condition_integration-free},
    % there only remains to show that $C_{s, v | \parentsdown{z}{v} \setminus \{s\}}$ is not specified, i.e. needs integration.
    % Because of the arc $s \rightarrow v$, we have that $\notdsepbig{s}{v}{\parentsdown{v}{w} \setminus \{z\}}$.
    % Moreover, the copula $c_{s,v|\parentsdown{v}{s}}$, where $w \in \parentsdown{v}{s}$, is assigned to the arc $s \rightarrow v$. 
    % Note that $w \notin \parentsdown{z}{v}$, so
    % $C_{s, v | \parentsdown{z}{v} \setminus \{s\}}$ is not specified.

    The factorization of the joint density requires the computation of the conditional margins $u_{v|\parentsdown{z}{v}}$ and $u_{s|\parentsdown{z}{s}}$.
    Both $s \smallerbig{z} v$ or $v \smallerbig{z} s$ are possible but since they are analogous, we only consider the case when $s \smallerbig{z} v$, which implies that $s \in \parentsdown{z}{v}$.
    To compute the joint density, we need the conditional margin $u_{v|\parentsdown{z}{v}}$, and we show that this margin requires integration.
    
    To achieve this \cref{lemma:zwO_not_specified} is applied.
    %it remains to prove that for any $K\subseteq \parentsdown{z}{v}\setminus \{u\}$ the copula $C_{u,v|K}$ is not specified by the PCBN.
    Let us consider an arbitrary
    $K \subseteq \parentsdown{z}{v} \setminus \{s\}$. It is sufficient to show that the copula $C_{s,v|K}$ is not specified by the PCBN.
    Note that since $w \smallerbig{v} s$ then $w \in \parentsdown{v}{s}$. However, we have $w \notin \parentsdown{z}{v}$, which means $w\notin K$.
    Therefore, by Remark~\ref{remark:cannot_remove_nodes_parent_v_w}, the copula $C_{s,v|K}$ is not specified.
    % 
    % Since the arc $s\rightarrow v$ is present, we have that $\notdsepbig{s}{v}{K}$.
    % Hence, the copula $C_{s,v|K}$ is not the independence copula.
    % Moreover, the copula $C_{s,v|\parentsdown{v}{s}}$, where $w \in \parentsdown{v}{s}$, is assigned to the arc $s\rightarrow v$.    
    % Since $w\notin \parentsdown{z}{v} \subseteq \pa{z}$, we have that $w\notin K$.
    % Therefore, the copula $C_{s,v|\parentsdown{v}{w}}$ is not equal to the copula $C_{s,v|K}$.
    % Hence, the copula $C_{s,v|K}$ is not specified by the PCBN.
\end{proof}

By Lemma~\ref{lemma:abide_to_bsets} a parental order $\smallerbig{v}$ must abide to the B-sets in the sense of Definition~\ref{def:abide} to prevent integration.
Therefore, we obtain the following corollaries.
% All orders generated by our algorithm abide to the B-sets, as proven by the lemma below.

\begin{corollary}
    All parental orders determined by the algorithm abide by the B-sets.
    % in the sense of Definition~\ref{def:abide}.
    \label{cor:previous_orders_abide_by_bsets}
\end{corollary}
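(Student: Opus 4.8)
The plan is to obtain this corollary as an immediate consequence of the two results already in place, namely Theorem~\ref{thm:validity_main_algorithm} and Lemma~\ref{lemma:abide_to_bsets}. First I would fix an arbitrary set of orders $\O$ that is a possible output of Algorithm~\ref{alg:finding_order_general} run on the input DAG $\G$. Since the algorithm requires a restricted DAG as input, $\G$ contains neither active cycles nor interfering v-structures, so the hypotheses of both the validity theorem and the abiding lemma are met. I would also note that Theorem~\ref{thm:existence_possible_candidates} guarantees that $\PossCand{\Ovk}$ is never empty, so the algorithm indeed terminates with a \emph{complete} set of orders rather than getting stuck, which is what lets us speak of ``the parental orders determined by the algorithm'' in the first place.

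Next I would invoke the relevant direction of Theorem~\ref{thm:validity_main_algorithm}: because $\O$ is one of the possible outputs of Algorithm~\ref{alg:finding_order_general}, the joint density of the PCBN $(\G,\O)$ does not require integration. (Intuitively this is just the observation that the algorithm only ever appends a node $w$ to the partial order $\Ovk$ when $u_{w\mid \Ovk}$ is integration-free, which by the characterization in Lemma~\ref{lemma:necess_suff_condition_integration-free} is exactly the condition for integration-free density evaluation.)

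Finally I would apply the contrapositive of Lemma~\ref{lemma:abide_to_bsets}. That lemma asserts that, for a restricted $\G$, if $\O$ does \emph{not} abide by the B-sets then the joint density requires integration; contrapositively, an integration-free joint density forces $\O$ to abide by the B-sets. Chaining the two steps gives that any output $\O$ of the algorithm abides by the B-sets, which is precisely the claim. Since $\O$ was an arbitrary output, the conclusion holds for all parental orders produced by the algorithm.

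I do not expect a genuine obstacle here, as all the substantive work is carried by Lemma~\ref{lemma:abide_to_bsets} and Theorem~\ref{thm:validity_main_algorithm}. The only point requiring care is the bookkeeping of hypotheses: I must make sure the restricted-DAG assumption (no active cycles and no interfering v-structures) is carried consistently through both invoked results, and that the appeal to Theorem~\ref{thm:existence_possible_candidates} is made so that ``the orders determined by the algorithm'' are well defined. A fully self-contained alternative would be a direct induction on the construction of $\Ovk$, showing that each admissible choice of candidate respects the inclusion order of the B-sets; but this would merely re-derive Lemma~\ref{lemma:abide_to_bsets}, so the combination route above is the cleaner option.
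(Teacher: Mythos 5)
Your proposal is correct and follows essentially the same route as the paper: the paper's (implicit) proof is exactly the chain ``algorithm output $\Rightarrow$ integration-free (by construction of $\PossCand{\Ovk}$, i.e.\ Theorem~\ref{thm:validity_main_algorithm}) $\Rightarrow$ abides by the B-sets (contrapositive of Lemma~\ref{lemma:abide_to_bsets})''. One caution: you should drop the appeal to Theorem~\ref{thm:existence_possible_candidates}, since in the paper's logical order that theorem is proved \emph{after} this corollary and its proof depends on it (via Corollary~\ref{cor:B_sets_implies_rightarrow} and Assumption~\ref{assumpt:good_graphs}), so citing it here risks a circular dependency; fortunately it is not load-bearing, because the corollary is a universally quantified statement about whatever orders the algorithm determines and does not require knowing that the algorithm never gets stuck.
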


% \begin{lemma}
%     All parental orders determined by the algorithm abide by the B-sets, in the sense of Definition~\ref{def:abide}.
%     \label{lemma:previous_orders_abide_by_bsets}
% \end{lemma}
% \begin{proof}
%     When constructing the parental order for an arbitrary node $v$ in $\V$, the algorithm expands a partial order starting from the empty set.
%     Each node added to the partial order must have been a possible candidate to a partial order $\Ovk$.
%     By the definition of the set of possible candidates (\cref{prop:characterization_poss_candidates}), a possible candidate of $\Ovk$ must be in the smallest B-set $\BOvk$ which is strictly larger than $\Ovk$ (Definition~\ref{def:smallest_bset_larger_O}).
%     Since the construction of $\smallerbig{v}$ starts with $O^0_v = \emptyset$, this means that we can never add a node in $B_q$ before we have added all nodes from $B_{q-1}$.

%     Indeed, the smallest B-set which is strictly larger than the empty set is $B_1$.
%     Hence, to start, we must add a node from $B_1$ to $O^0_v$.
%     We will then continue adding nodes from $B_1$ until we reach a partial order $\Ovk$ which is not strictly smaller than $B_1$.
%     In this case we have exhausted the set $B_1$, i.e. $\Ovk = B_1$.
%     This means that $\BOvk = B_2$ (if it exists), and we add nodes from this set until it is exhausted as well.
    
%     Therefore, the proof is completed by a straightforward induction, showing that $\smallerbig{v}$ abides by the B-sets.
% \end{proof}

\begin{corollary}
    Let $x, o, w, q$ be nodes such that $x \smallerbig{o} w$
    and $w \in B(o, q)$, where the order is determined by the algorithm. 
    Then we have $x \in B(o, q)$ and in particular $x \ra q$.
    \label{cor:B_sets_implies_rightarrow}
\end{corollary}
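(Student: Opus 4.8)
The plan is to exploit the single structural constraint recorded in \cref{cor:previous_orders_abide_by_bsets}, namely that any order produced by the algorithm abides by the B-sets, and to recast the two hypotheses $x \smallerbig{o} w$ and $w \in B(o,q)$ in terms of the layered decomposition of $\pa{o}$ induced by the B-sets. Since $\smallerbig{o}$ is determined by the algorithm, its input $\G$ is a restricted DAG and in particular contains no interfering v-structures; by \cref{Bsets_eqcond} the B-sets of $o$ are then totally ordered by inclusion, so the sorted chain $B_0(o)=\emptyset\subseteq B_1(o)\subseteq\cdots\subseteq B_Q(o)\subseteq B_{Q+1}(o)=\pa{o}$ of Definition~\ref{def:bsets} is well defined.

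First I would locate $B(o,q)$ inside this chain. The set $B(o,q)=\pa{o}\cap\pa{q}$ is by construction a B-set of $o$ (so in particular $q\in\ch{o}$), and it is non-empty because $w\in B(o,q)$; hence it coincides with $B_j(o)$ for some index $j\in\{1,\dots,Q\}$. The goal then reduces to proving $x\in B_j(o)$, since this immediately gives $x\in\pa{o}\cap\pa{q}=B(o,q)$ and in particular the arc $x\ra q$.

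The key step is to show that each $B_j(o)$ is an initial segment of the total order $\smallerbig{o}$. Decompose $\pa{o}$ into the disjoint layers $B_i(o)\setminus B_{i-1}(o)$ for $i=1,\dots,Q+1$; the abiding property of Definition~\ref{def:abide}, available through \cref{cor:previous_orders_abide_by_bsets}, asserts exactly that every element of layer $i$ precedes every element of layer $i'$ whenever $i<i'$. Now $x\smallerbig{o} w$ forces $x\in\pa{o}$, so $x$ lies in exactly one layer, say layer $i'$; and $w\in B_j(o)$ lies in some layer $i\le j$. If we had $i'>j$, then $i\le j<i'$ would give $w\smallerbig{o} x$ by abiding, contradicting $x\smallerbig{o} w$ since $\smallerbig{o}$ is a total order. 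Hence $i'\le j$, i.e. $x\in B_j(o)=B(o,q)$, which is what we wanted.

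I do not expect a serious obstacle: the entire content is the ``initial segment'' observation of the third paragraph, and once the layered picture is in place the contradiction is immediate. The only points needing a touch of care are confirming that $B(o,q)$ is one of the indexed sets $B_j(o)$ with $j\ge 1$ (guaranteed by $w\in B(o,q)$ making it non-empty) and that $\G$ is genuinely free of interfering v-structures (guaranteed by its being a valid input to the algorithm), so that \cref{Bsets_eqcond} applies and the chain is totally ordered.
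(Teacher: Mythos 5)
Your proof is correct and follows essentially the same route as the paper's: both rest on \cref{cor:previous_orders_abide_by_bsets} (orders produced by the algorithm abide by the B-sets) together with the inclusion ordering of the B-sets guaranteed by the absence of interfering v-structures (\cref{Bsets_eqcond}). Your layered ``initial segment'' argument is just a more explicit spelling-out of the paper's one-line deduction that $x$ must lie in a B-set no larger than $B(o,q)$, and hence in $B(o,q)$ itself.
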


\begin{proof}
    By Corollary~\ref{cor:previous_orders_abide_by_bsets} all previously determined parental orders chosen by the algorithm abide by the B-sets in the sense of Definition~\ref{def:abide},
    so, since $x \smallerbig{o} w$,
    we deduce that $x$ belongs to a smaller B-set than $B(o, q)$.
    Since the DAG does not contain interfering v-structures, the B-sets are ordered by inclusion.
    Therefore, we must have $x \in B(o, q)$.
\end{proof}

\subsection{Explicit characterization of the set of possible candidates}

In Algorithm~\ref{alg:finding_order_general}, if $w$ is added to the partial order, then $\parentsdown{v}{w}$ will be $\Ovk$. Moreover, the ability to compute $u_{w|\Ovk}$ without integration is a necessary and sufficient condition for the joint density to be computed without integration, by Lemma~\ref{lemma:necess_suff_condition_integration-free}.
Hence, we must start the process with a node from the smallest possible B-set.
That is, we only incorporate a node from $B_i(v)$ if all nodes from $B_{i-1}(v)$ are already included in $\Ovk$. Then, the elements of the smallest B-set larger than $\Ovk$ (denoted as $\BOvk$) are added.
\begin{definition}\label{def:smallest_bset_larger_O}
    The smallest B-set strictly larger than a partial order $\Ovk$ with $k<\big| \pa{v} \big|$ is denoted by $\BOvk$. Thus, $\BOvk:=B_{\tilde{q}}(v)$ with $$
    \tilde{q}:=\min \{q\in \{1,\dots,Q(v)+1\};\, \Ovk\subsetneq B_q(v))\}.
    $$
\end{definition}
Remark that such a $\tilde{q}$ always exists, since by definition $B_{Q(v)+1}(v)=\pa{v}$.
Furthermore, note that that $B(O^{|\pa{v}|}_v)$ is not defined, since at that moment the complete order $\smallerbig{v}$ on $\pa{v}$ has been determined already.

\begin{remark}
    In Definition~\ref{def:smallest_bset_larger_O}, the set $B(O^{|\pa{v}|}_v)$ is not defined because it is not possible to find a strictly larger B-set than $O^{|\pa{v}|}_v$.
    Furthermore, the largest B-set $B_{Q(v)+1}$ is equal to $\pa{v}$, by definition.
    Consequently, we always have $\Ovk \subsetneq \BOvk$.
\label{rem:Okv_strictly_subset_B(Okv)}
\end{remark}

It is important to note that including a node $w\in \BOvk\setminus \Ovk$ ensures that this order abides by the B-sets.
Therefore, the only allowed additions to a partial order $\Ovk$ are nodes in $\BOvk\setminus \Ovk$ for which we can compute $u_{w|\Ovk}$ without integration.
This yields the following result.

\begin{lemma}
\label{lemma:first_characterization_PossCand}
       $
        \PossCand{\Ovk}
        = \big\{ w\in \BOvk\setminus \Ovk
        \text{ such that } u_{w | \Ovk}
        \text{ can be computed without integration}
        \big\}.
    $
\end{lemma}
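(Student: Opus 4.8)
The plan is to prove the set equality by establishing the two inclusions, where the key work lies in showing that the general definition of $\PossCand{\Ovk}$ can be sharpened to only those candidates lying in $\BOvk \setminus \Ovk$. Recall that by definition $\PossCand{\Ovk}$ consists of all $w \in \pa{v}$ such that $u_{w|\Ovk}$ can be computed without integration. The right-hand side restricts this to $w \in \BOvk \setminus \Ovk$. Since $\BOvk \setminus \Ovk \subseteq \pa{v}$ and the ``without integration'' condition is identical on both sides, the inclusion $\supseteq$ is immediate: any $w$ in the right-hand set is in particular a possible candidate. So the entire content is the forward inclusion $\subseteq$.

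For the inclusion $\subseteq$, I would argue by contraposition. Take $w \in \pa{v}$ with $w \notin \BOvk \setminus \Ovk$, and show $w \notin \PossCand{\Ovk}$, i.e. that $u_{w|\Ovk}$ requires integration. There are two sub-cases. If $w \in \Ovk$, then $w$ is already placed in the partial order and cannot be a candidate for the next position, so this case is trivial (or excluded by the convention that candidates are drawn from parents not yet ordered). The substantive case is $w \in \pa{v} \setminus \BOvk$, meaning $w$ belongs to a strictly \emph{larger} B-set than $\BOvk$, so that adding $w$ now would place it before some node of $\BOvk \setminus \Ovk$, thereby violating the B-set order. The goal is to exhibit a node $z \in \Ovk \sqcup (\BOvk \setminus \Ovk)$ that forces integration.

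The main obstacle, and the heart of the argument, is to produce the blocking node $z$ and invoke \cref{lemma:abide_to_bsets} (or more directly \cref{lemma:zwO_not_specified}). Concretely, since $w$ lies in a B-set strictly larger than $\BOvk$, there is a child $q \in \ch{v}$ with $\BOvk = B(v,q)$ and some node $s \in \BOvk \setminus \Ovk$ (nonempty by \cref{rem:Okv_strictly_subset_B(Okv)}) with $s \in \pa{q}$ but $w \notin \pa{q}$. Choosing the would-be order that places $w$ at position $k+1$ (so $\parentsdown{v}{w} = \Ovk$) makes $w \smallerbig{v} s$, giving exactly the forbidden configuration $v \leftarrow w$, $v \leftarrow s$, $s \ra q$, $v \ra q$ with $w \notin \pa{q}$ that appears in the proof of \cref{lemma:abide_to_bsets}. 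That lemma's argument then shows, via \cref{lemma:zwO_not_specified} and \cref{remark:cannot_remove_nodes_parent_v_w}, that $u_{w|\Ovk}$ requires integration, so $w \notin \PossCand{\Ovk}$. I expect the delicate point to be carefully matching the roles of the nodes to the hypotheses of the earlier lemma — in particular verifying that the relevant copula $C_{s, v | K}$ fails to be specified because $w \in \parentsdown{v}{s}$ is dropped from every $K \subseteq \parentsdown{q}{v}$ — rather than any computation. Combining both inclusions yields the claimed characterization.
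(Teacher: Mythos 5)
Your two-inclusion strategy and your identification of the forbidden B-set configuration match the intent of the paper, but the step you flag as "the heart of the argument" is exactly where the proof breaks, and it cannot be repaired as stated. You reduce the forward inclusion to the claim that $u_{w|\Ovk}$ requires integration whenever $w \in \pa{v} \setminus \BOvk$, and assert that the argument of \cref{lemma:abide_to_bsets} delivers this. It does not: that argument shows that $C_{s,v|K}$ is unspecified for every $K \subseteq \parentsdown{q}{v}\setminus\{s\}$, and hence that the margin $u_{v|\parentsdown{q}{v}}$ --- the conditional margin appearing in the factor of the \emph{child} $q$ of $v$ --- requires integration. It says nothing about $u_{w|\Ovk}$. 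Indeed \cref{lemma:zwO_not_specified} cannot even be instantiated for $u_{w|\Ovk}$ with your obstruction node $s$, because that lemma needs the obstruction node to lie \emph{inside} the conditioning set, whereas $s \in \BOvk\setminus\Ovk$ lies outside $\Ovk$. Worse, the local claim is simply false: take the DAG with arcs $1\ra 3$, $2\ra 3$, $1\ra 4$, $3\ra 4$ (no active cycles, no interfering v-structures), $v=3$ and $k=0$; then $\BOvk = B(3,4)=\{1\}$ and $w=2\notin\BOvk$, yet $u_{w|O^0_v}=u_2$ trivially requires no integration. At $k=0$ \emph{every} parent of $v$ satisfies the local ``no integration'' condition, so no argument can localize the obstruction at $u_{w|\Ovk}$.

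What actually justifies the lemma (the paper gives no formal proof; the statement is presented as a consequence of the discussion preceding it) is a \emph{global} argument. If $w\in\pa{v}\setminus\BOvk$ were chosen at step $k+1$, then in any completion of $\smallerbig{v}$ some $s\in\BOvk\setminus\Ovk$ (nonempty by \cref{rem:Okv_strictly_subset_B(Okv)}) comes after $w$, with $s\in B(v,q)$ and $w\notin B(v,q)$; so the completed order cannot abide by the B-sets, and \cref{lemma:abide_to_bsets} together with \cref{lemma:necess_suff_condition_integration-free} forces integration in the \emph{joint density}, through the child's margin $u_{v|\parentsdown{q}{v}}$. In other words, \cref{lemma:first_characterization_PossCand} is really a sharpening of the definition of $\PossCand{\Ovk}$ --- restricting to additions compatible with an eventually integration-free order --- rather than a set identity derivable from that definition alone; your configuration of $q$ and $s$ is the right one, but the conclusion it supports concerns the joint density, not the single margin $u_{w|\Ovk}$ that the literal statement asks about.
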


Lemma~\ref{lemma:first_characterization_PossCand} already narrows down the set of possible candidates by restricting the choice of $w$ from the whole set of parents of $v$ to the set $\BOvk \setminus \Ovk$.
Next we fully characterize the set of possible candidates.
It can be divided into three subsets, dependending on the local structure of node $w$ in the graph.
The most elementary case is when $w\in \BOvk$ is such that $\dsepbig{w}{\Ovk}{\emptyset}$.
Here, we have $u_{w|\Ovk} = u_w$ which obviously does not require any integration.
We will refer to these nodes as \textbf{possible candidates by independence}.
Note that this is in particular the case when $\Ovk = \emptyset$, i.e. $k = 0$.

\bigskip

Assume now that $w$ is not a possible candidate by independence.
The goal is to find conditions so $u_{w|\Ovk}$ can be computed without integration.
First the largest possible set of nodes in $\Ovk$ will be removed using conditional independence.
Let us define $J$ to be the largest set $J \subset \Ovk$
such that 
$\dsepbig{w}{J}{\Ovk \setminus J}$.
Then we have $u_{w|\Ovk} = u_{w | \Ovk \setminus J}$.
Since $w$ is not a possible candidate by independence, we have $\Ovk \setminus J \neq \emptyset$. $u_{w | \Ovk \setminus J}$ can be computed without integration if there exists a proper recursion.
By Definition~\ref{def:recursion_cond_margin}, such a recursion needs to start with a specified conditional copula
of the form $C_{w, o | \Ovk \setminus (J \sqcup \{o\})}$
where $o$ is a node in $\Ovk \setminus J$.

Since $J$ maximal then this copula cannot be specified to be the independence copula.
Therefore, there must be $w \rightarrow o$ or $o \rightarrow w$.
This means that a copula $C_{w, o | \Ovk \setminus  \{o\}}
= C_{w, o | \Ovk \setminus (J \sqcup \{o\})}$ must be equal to
\begin{itemize}
    \item $C_{w, o | \parentsdown{o}{w}}$, if $w \rightarrow o$, or,

    \item $C_{o, w | \parentsdown{w}{o}}$, if $o \rightarrow w$.
\end{itemize}
In the first case, we say that $w$ is a \textbf{possible candidate by incoming arc}. In the second case, we say that $w$ is a \textbf{possible candidate by outgoing arc}.
Such copulas have already been specified in the PCBN as they concern $w$ and $o$, which are both parents of node $v$, hence they appear earlier in the well order of nodes.
These conditions can be rewritten using d-separation, giving an explicit characterization of the set of possible candidates.
Note that $w$ is a possible candidate by incoming arc
if $w \rightarrow o$
and
\begin{align}
\label{eq:cond_incoming_arc}
    \parentsdown{o}{w} \subseteq \Ovk 
    \textnormal{ and }
    \dsepbig{w}{
    \Ovk \setminus \big(\parentsdown{o}{w} \sqcup \{o\} \big)
    }{\parentsdown{o}{w} \sqcup \{o\}}.
\end{align}
Indeed, the condition $\parentsdown{o}{w} \subseteq \Ovk$ is necessary
% since nodes in the conditioning set can be removed.
by Remark~\ref{remark:cannot_remove_nodes_parent_v_w}.
The second part of \eqref{eq:cond_incoming_arc} ensures that
\begin{align*}
    C_{w, o | \Ovk \setminus o}
    = C_{w, o | \parentsdown{o}{w}},
\end{align*}
precisely because the nodes that can be removed from the conditioning set due to  d-separation of $w$ given the remaining nodes $\parentsdown{o}{w} \sqcup \{o\}$ are
$\Ovk \setminus \big(\parentsdown{o}{w} \sqcup \{o\}$.

\medskip

Similary we get that $w$ is a possible candidates by outgoing arc if $o \rightarrow w$ and
\begin{align}
\label{eq:cond_outgoing_arc}
    \parentsdown{w}{o} \subseteq \Ovk 
    \textnormal{ and }
    \dsepbig{w}{\Ovk \setminus
    (\parentsdown{w}{o} \sqcup \{o\}) }{
    \parentsdown{w}{o} \sqcup \{o\} }.
\end{align}
The second part of \eqref{eq:cond_outgoing_arc} ensures that
\begin{align*}
    C_{o, w | \Ovk \setminus o}
    = C_{o, w | \parentsdown{w}{o}},
\end{align*}
as nodes in
$\Ovk \setminus (\parentsdown{w}{o} \sqcup \{o\})$
can be removed from the conditioning set due to the d-separation.

\medskip

Note that we have just proved the following result.
\begin{proposition}
\label{prop:characterization_poss_candidates}
    The set of possible candidates is composed
    \begin{equation*}
        \PossCand{\Ovk} = \PossCandInd{\Ovk} \sqcup \PossCandIn{\Ovk} \sqcup \PossCandOut{\Ovk},
    \end{equation*}
    where the three disjoint sets are defined as
    \begin{align*}
        \PossCandInd{\Ovk}
        &:= \big\{ w\in \BOvk\setminus \Ovk;\, \dsepbig{w}{\Ovk}{\emptyset}\big\},\\
        \PossCandIn{\Ovk}
        &:= \big\{w\in \BOvk\setminus \Ovk;\, \exists o\in \Ovk;\, w\ra o \textnormal{ and }
        \eqref{eq:cond_incoming_arc}
        \big\}
        ,\\
        \PossCandOut{\Ovk}
        &:= \big\{w\in \BOvk\setminus \Ovk;\, \exists o\in \Ovk;\, o\ra w \textnormal{ and }
        \eqref{eq:cond_outgoing_arc} \big\}.
    \end{align*}
\end{proposition}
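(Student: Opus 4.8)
The plan is to read off the decomposition from Lemma~\ref{lemma:first_characterization_PossCand}, which already identifies $\PossCand{\Ovk}$ with the set of $w \in \BOvk \setminus \Ovk$ whose conditional margin $u_{w|\Ovk}$ is integration-free. It then suffices to prove, for such a $w$, the equivalence between ``$u_{w|\Ovk}$ requires no integration'' and ``$w$ belongs to $\PossCandInd{\Ovk} \cup \PossCandIn{\Ovk} \cup \PossCandOut{\Ovk}$'', and afterwards to verify that the three sets are pairwise disjoint. I would split the equivalence into a sufficiency part (each defining condition yields an integration-free margin) and a necessity part (an integration-free margin forces one of the conditions), and treat disjointness on its own.

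For sufficiency I would exhibit a proper recursion in each case. If $\dsepbig{w}{\Ovk}{\emptyset}$, then $u_{w|\Ovk} = u_w$, which is the base case $\mathcal{R}(w \mid \emptyset) = \{u_w\}$ of Definition~\ref{def:recursion_cond_margin}, so no integration occurs. If $w \ra o$ and \eqref{eq:cond_incoming_arc} holds, the d-separation clause lets me discard $\Ovk \setminus (\parentsdown{o}{w} \sqcup \{o\})$ from the conditioning set, giving $u_{w|\Ovk} = u_{w \mid \parentsdown{o}{w} \sqcup \{o\}}$; opening the recursion at $o$ then calls the copula $C_{w, o \mid \parentsdown{o}{w}}$, which is specified since $\parentsdown{o}{w} \subseteq \Ovk$ makes it exactly the copula attached to the arc $w \ra o$. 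The two resulting sub-margins lie over strictly smaller conditioning sets, so properness propagates through the recursive definition of integration-freeness. The outgoing case is symmetric, opening instead with the specified copula $C_{o, w \mid \parentsdown{w}{o}}$.

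For necessity, assume $u_{w|\Ovk}$ is integration-free with $w \notin \PossCandInd{\Ovk}$. I would let $J$ be the largest subset of $\Ovk$ with $\dsepbig{w}{J}{\Ovk \setminus J}$, after first justifying that a unique maximal such $J$ exists (the family of d-separating sets is closed under union, a graphoid-type property verifiable directly on the trails of the DAG). Since $w$ is not a candidate by independence, $\Ovk \setminus J \neq \emptyset$ and $u_{w|\Ovk} = u_{w \mid \Ovk \setminus J}$. By Remark~\ref{remark:recursion_around_v}, every recursion for this margin must open with a copula $C_{w, o \mid \Ovk \setminus (J \sqcup \{o\})}$ for some $o \in \Ovk \setminus J$; properness forces it to be specified, and the maximality of $J$ forbids it from being the independence copula, so $w$ and $o$ are adjacent. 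Splitting on the orientation of this arc and invoking Remark~\ref{remark:cannot_remove_nodes_parent_v_w} (which gives $\parentsdown{o}{w} \subseteq \Ovk$, respectively $\parentsdown{w}{o} \subseteq \Ovk$) reproduces exactly condition \eqref{eq:cond_incoming_arc} or \eqref{eq:cond_outgoing_arc}.

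Disjointness follows from one observation: in each case there is a core conditioning set $C$ (namely $\emptyset$, $\parentsdown{o}{w} \sqcup \{o\}$, or $\parentsdown{w}{o} \sqcup \{o\}$) with $\dsepbig{w}{\Ovk \setminus C}{C}$, and any arc joining $w$ to a node of $\Ovk$ lying outside $C$ is an unblockable one-edge trail that would contradict that d-separation. If $w$ were at once an incoming candidate via $w \ra o_1$ and an outgoing candidate via $o_2 \ra w$, then acyclicity gives $o_1 \neq o_2$ and $o_1 \notin \pa{w}$, so $o_1$ sits outside the outgoing core and the arc $w \ra o_1$ violates \eqref{eq:cond_outgoing_arc}; the same argument separates $\PossCandInd{\Ovk}$ from the other two. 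The main obstacle I anticipate is the necessity direction, specifically the well-definedness of the maximal $J$ and the precise bookkeeping that the removable part of $w$'s conditioning set is exactly $\Ovk \setminus (\parentsdown{o}{w} \sqcup \{o\})$ --- no larger, by Remark~\ref{remark:cannot_remove_nodes_parent_v_w}, and no smaller, by the maximality of $J$ --- which is what converts the abstract recursion into the explicit d-separation conditions.
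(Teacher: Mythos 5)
Your proposal is correct and follows essentially the same route as the paper's own derivation (the discussion immediately preceding the proposition): restriction to $\BOvk \setminus \Ovk$ via Lemma~\ref{lemma:first_characterization_PossCand}, the independence case, then the maximal d-separated set $J$ forcing any proper recursion to open with a specified, non-independence copula $C_{w, o \,|\, \Ovk \setminus (J \sqcup \{o\})}$, whence adjacency of $w$ and $o$ and the case split on the orientation of the arc. Your additional care---closure under union of the d-separating sets (so the maximal $J$ is well defined), the explicit sufficiency recursions, and the disjointness argument---only fills in steps the paper leaves implicit.
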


This proposition gives an explicit way of finding all possible candidates by looping over all nodes in $\BOvk$ and testing whether they satisfy one of the conditions to be a possible candidates. This algorithm is implemented in the function \texttt{possible\_candidates} of the \texttt{R} package \texttt{PCBN} \cite{PCBN}.

\medskip

In the example in the next section we illustrate the process of finding the parental orders for a specific DAG.

\subsection{Example}
\label{ex:parental_orders}

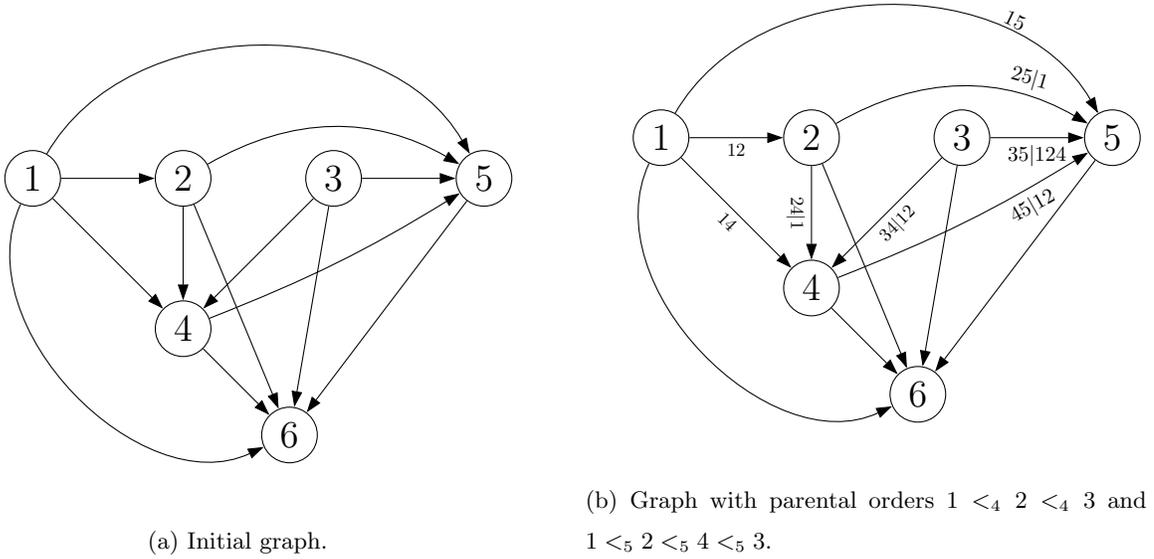
\begin{figure}[htb]
\centering
\hfill
\begin{subfigure}{.45\textwidth}
   \centering
\begin{tikzpicture}[scale=1, node distance= 2cm, transform shape, state/.style={circle, font=\Large,draw=black}]
\node[state] (1) {$1$};
\node[state, right of= 1] (2) {$2$};
\node[state, below of=2] (4) {$4$};
\node[state, right of=2] (3) {$3$};
\node[state, right of=3] (5) {$5$};
\node[state, below right of=4] (6) {$6$};

\begin{scope}[>={Stealth[length=6pt,width=4pt,inset=0pt]}]
\path [->] (1) edge node[scale=0.7, below, sloped] {} (2);
\path [->] (1) edge node[scale=0.7, below, sloped] {} (4);
\path [->] (2) edge node[scale=0.7, below, sloped] {} (4);
\path [->] (3) edge node[scale=0.7, below, sloped] {} (4);

\path [->] (1) edge[bend left=60] node[scale=0.8, sloped, above, near end] {} (5);
\path [->] (2) edge[bend left=30] node[scale=0.8, sloped, above, near end] {} (5);
\path [->] (3) edge[bend left=0] node[scale=0.8, sloped, below] {} (5);
\path [->] (4) edge[bend right=5] node[scale=0.8, sloped, below, near end] {} (5);

\path [->] (1) edge[bend right=70] node[scale=0.7, above left] {} (6);
\path [->] (2) edge[bend left=0] node[scale=0.8, above left] {} (6);
\path [->] (3) edge[bend left=0] node[scale=0.8, above left] {} (6);
\path [->] (4) edge node[scale=0.8, above left] {} (6);
\path[->] (5) edge node {} (6);
\end{scope}
\end{tikzpicture}
\caption{Initial graph.}
\end{subfigure}
 \hfill
\begin{subfigure}{.45\textwidth}
  \centering
\begin{tikzpicture}[scale=1, node distance= 2cm, transform shape, state/.style={circle, font=\Large,draw=black}]
\node[state] (1) {$1$};
\node[state, right of= 1] (2) {$2$};
\node[state, below of=2] (4) {$4$};
\node[state, right of=2] (3) {$3$};
\node[state, right of=3] (5) {$5$};
\node[state, below right of=4] (6) {$6$};

\begin{scope}[>={Stealth[length=6pt,width=4pt,inset=0pt]}]
\path [->] (1) edge node[scale=0.7, below, sloped] {$14$} (4);
\path [->] (1) edge node[scale=0.7, below, sloped] {$12$} (2);
\path [->] (2) edge node[scale=0.7, below, sloped] {$24|1$} (4);
\path [->] (3) edge node[scale=0.7, below, sloped] {$34|12$} (4);

\path [->] (1) edge[bend left=60] node[scale=0.8, sloped, above, near end] {$15$} (5);
\path [->] (2) edge[bend left=30] node[scale=0.8, sloped, above, near end] {$25|1$} (5);
\path [->] (3) edge[bend left=0] node[scale=0.8, sloped, below] {$35|124$} (5);
\path [->] (4) edge[bend right=5] node[scale=0.8, sloped, below, near end] {$45|12$} (5);

\path [->] (1) edge[bend right=70] node[scale=0.7, above left] {} (6);
\path [->] (2) edge[bend left=0] node[scale=0.8, above left] {} (6);
\path [->] (3) edge[bend left=0] node[scale=0.8, above left] {} (6);
\path [->] (4) edge node[scale=0.8, above left] {} (6);
\path[->] (5) edge node {} (6);
\end{scope}
\end{tikzpicture}
\caption{Graph with parental orders $1 <_4 2 <_4 3$ and $1 <_5 2 <_5 4 <_5 3$.}
\end{subfigure}
\caption{DAG on six nodes without active cycles and interfering v-structures.}
\label{fig:alg_ex_incoming_main}
\end{figure}

Let us consider the graph in Figure~\ref{fig:alg_ex_incoming_main} a). 
We choose to use the well-order $(1,2,3,4,5,6)$ of nodes in this graph; note that it is not unique.
The process for node $1, 2$ and $3$ is simple as $1$ and $3$ have no parents and $2$ has just one parent.
We start our consideration at node $4$.

\medskip

\noindent \textbf{1. Finding $\smallerbig{4}$.}
To find the parental order at $4$ the algorithm is initiated with  $O^0_4=\emptyset$. Since node $4$ has two children the B-sets are computed and we get that 
$B(4,5)= \{1,2,3\}$ and $B(4,6)=\{1,2,3\}$. This  means that we have just one $B_1(4)=\{1,2,3\}$. Any of these nodes can be added as first to $O^0_4$. Since $B(O^0_4)\setminus O^0_4=\{1,2,3\}\setminus \emptyset=\{1,2,3\}$ and by convention  $\dsepbig{1}{O^0_4}{\emptyset}$, $\dsepbig{2}{O^0_4}{\emptyset}$ and $\dsepbig{3}{O^0_4}{\emptyset}$ each of these nodes is a possible candidate by independence.  If we choose the node $1$, then  $O^1_4=(1)$. 

Now  $B(O^1_4)\setminus O^1_4=\{1,2,3\}\setminus\{1\}=\{2,3\}$. Since there is an arc $1\to 2$ and the copula
$C_{1, 2 | \parentsdown{4}{1}} = C_{1, 2}$
is specified then node $2$ is a possible candidate by incoming arc. Since $\dsepbig{3}{O^1_4}{\emptyset}$
then the node $3$ is a possible candidate by independence.
If we choose node $2$ and fix $O^2_4=(1, 2)$, the last choice is $O^3_4=(1, 2, 3)$,
since as before the node $3$ is a possible candidate by independence. 

\medskip

\noindent \textbf{2. Finding $\smallerbig{6}$.}
Let us now assume that we have followed the process and fixed the orders $<_4$ and $<_5$ to be:
$1 <_4 2 <_4 3$ and $1 <_5 2 <_5 4 <_5 3$. The graph together with the parental orderings for nodes $4$ and $5$ is presented in Figure~\ref{fig:alg_ex_incoming_main} b). 
Our objective now is to choose a suitable ordering $<_6$ by growing a partial order $O_6^k$.
Note that node $6$ does not have any corresponding B-sets, as it has no children.

\medskip
\noindent \textbf{2.1. Finding $O^1_6$.} Any node in $\pa{6}$ can be added to  $O^0_6=\emptyset$  by independence. 
Suppose that we choose node $4$ and get $O^1_6=(4)$. 

\medskip

\noindent \textbf{2.2. Finding $O^2_6$.}
None of the nodes in
$B(O^1_6) \setminus O^1_6 = \{1, 2, 3, 5\}$
are d-separated from $4$ by the empty set.
Therefore, we must use one of the copulas corresponding to an arc connected to $4$. The only suitable arc is the incoming arc $1\ra 4$, since its corresponding copula $C_{1, 4}$ allows computation of the margin $u_{1|O^1_6} = u_{1|4}$.

\medskip

\noindent \textbf{2.3. Finding $O^3_6$.}
Next, we consider as possible candidates nodes in the set
$B(O^2_6) \setminus O^2_6
= \{1,2,3,4,5\} \setminus \{4,1\}
= \{2,3,5\}.$
% 
% First, we remark that node $5$ is a possible candidate by the outgoing arc $1\ra 5$.
% To highlight this, the node and its corresponding arc will be colored green. For the sake of this example, we will focus on incoming arcs. 
% 
There are two incoming arcs to node $4$, i.e. $2\ra 4$ and $3\ra 4$.
%Node $1$ is excluded since it has already been selected as part of the partial order $O^2$.
Node $2$ is a possible candidate since the margin $u_{2|O^2_6}=u_{2|14}$ can be computed with the copula $C_{2,4|\parentsdown{4}{2}} = C_{2,4|1}$.
%Indeed, we remark that $\parentsdown{4}{2} = \{z \in pa(4); \, z <_4 2\} = \{1\}$.

Node $3$ is not a possible candidate by incoming arc $3\ra 4$.
The copula corresponding to this arc,
$C_{3,4|\parentsdown{4}{3}} = C_{3, 4 | 1, 2}$,
contains node $2$ in its conditioning set whereas
$2 \notin O^2_6 = \{1, 4\}$.
Hence, computing $u_{3|O^2_6} = u_{3|14}$
from the copula $C_{3, 4 | 1, 2}$
requires integration with respect to node $2$:
\begin{align*}
    u_{3|14} = \int_0^1 \frac{\partial C_{3, 4 | 1, 2}
    \big(u_3, u_4 \,|\, u_1, w_2\big)}{\partial u_4} \, \diff w_2.
\end{align*}

Note that
$
\parentsdown{4}{3} = \{z \in \pa{4}; \, z <_4 3\} = \{1, 2\}
\nsubseteq \{1, 4\} = O^2_6 $ 
and this condition is necessary for a node $w$ to be considered as a possible candidate for $\Ovk$ by incoming arc $w \ra o$.
It must be that $\parentsdown{o}{w} \subseteq \Ovk$, hence that parents of node $o$ earlier in the ordering than $w$ must have already been included in  $\Ovk$.  This condition, however, is not sufficient (as shown below). Indeed, if $\Ovk$ contains nodes that are not in $\parentsdown{o}{w}$, then these elements should be ``removable'' from the conditioning set of a copula by d-separation.
Hence, another required condition is
$\dsepbig{w}{\Ovk \setminus
\big( \parentsdown{o}{w} \sqcup \{o\} \big)}{
\parentsdown{o}{w} \sqcup \{o\}}$.
 We will examine this condition more closely below.

In this example, node $2$ is the only node satisfying both of these conditions.
Indeed, for $w=2$ and $o=4$, we have $$
\dsepbig{w}{\Ovk \setminus \big( \parentsdown{o}{w} \sqcup \{o\} \big)}{\parentsdown{o}{w} \sqcup \{o\}}
= \dsepbig{2}{\{1, 4\} \setminus \{1, 4\}}{\{1, 4\}}
= \dsepbig{2}{\emptyset}{\{1, 4\}}
$$
which is satisfied by convention.
Therefore, we add it to the partial order $O^2_6$,
and obtain $O^3_6 = (4,1,2)$
and move on to the next iteration of the algorithm.

\medskip

\noindent \textbf{2.4. Finding $O^4_6$.}
Now we have two options: node $3$ by incoming arc  $3 \ra 4$ and node $5$ by the outgoing arc $4 \ra 5$.  Let us consider first the incoming arc $3\ra 4$.
The condition
$\parentsdown{4}{3}
= \{1, 2\} \subseteq O^3_6 = \{4, 1, 2\}$ is satisfied.
Indeed, we can compute the conditional margin
$u_{3|O^3_6} = u_{3 | 1, 2, 4}$ with the copula
$C_{3, 4 | \parentsdown{4}{3}} = C_{3, 4 | 1, 2}$. 
Remark that the second condition is also satisfied since
\begin{align*}
    \dsepbig{3}{O^3_6 \setminus \big( \parentsdown{4}{3} \sqcup \{4\} \big)}{\parentsdown{4}{3} \sqcup \{4\}}
    &= \dsepbig{3}{\{1, 2, 4\}\setminus \{1, 2, 4\}}{\{1, 2, 4\}} \\
    &= \dsepbig{3}{\emptyset}{\{1,2,4\}}
\end{align*}
holds by convention.
Hence, node $3$ is a possible candidate by incoming arc $3\ra 4$ at this iteration. 
However, rather than adding node $3$, we add node $5$ by the outgoing arc $4 \ra 5$. The conditions are satisfied: 
$$\parentsdown{5}{4}
= \{1,2\} \subseteq O^3_6 = \{4,1,2\} \;\;\mbox{  and }$$
\begin{align*}
    \dsepbig{5}{O^3_6 \setminus \big(\parentsdown{5}{4} \sqcup \{4\} \big)}{\parentsdown{5}{4} \sqcup \{4\}}
    &= \dsepbig{5}{\{1,2,4\}\setminus \{1,2,4\}}{\{1,2,4\}} \\
    &= \dsepbig{5}{\emptyset}{\{1,2,4\}}
\end{align*}
holds by convention.

\medskip

\noindent \textbf{2.5. Finding $O^5_6$.}
After the addition of node $5$, the only remaining node is node $3$. There are two incoming arcs $3 \ra 5$ and $3 \ra 4$.
Note that in the previous iteration, it was possible to use the incoming arc $3 \ra 4$.
We now explain why we cannot use the incoming arc $3\ra 4$ anymore,
even though the condition $\parentsdown{4}{3} = \{1, 2\}
\subseteq \{1, 2, 4, 5\} = O^4_6$ is satisfied. Note that we need the conditional margin
$u_{3|O^4_6} = u_{3|1, 2, 4, 5}$
but the arc $3 \ra 4$ corresponds to the copula
$C_{3, 4|\parentsdown{4}{3}} = C_{3, 4|1, 2}$
from which $u_{3|1, 2, 4}$ can be computed. To remove node $5$ from the conditioning set in $u_{3|1, 2, 4, 5}$ the d-separation is required
$$
\dsepbig{3}{5}{\{1,2,4\}}
= \dsepBig{3}{
O^4_6 \setminus \big(\parentsdown{4}{3} \sqcup \{4\} \big)
}{\parentsdown{4}{3} \sqcup \{4\}}
$$
which does not hold due to the arc $3\ra 5$. 
% 
%The general condition is that a node $w$ is a possible candidate for $\Ovk$ by an incoming arc $w\ra o$, if
%$$
%\dsepBig{w}{
%\Ovk \setminus \big( \parentsdown{o}{w} \sqcup \{o\} \big)
%}{\parentsdown{o}{w} \sqcup \{o\}}
%$$
%and 
%$$
%\parentsdown{o}{w}\subseteq \Ovk
%$$
%hold. In this example, t
The node $3$ is a possible candidate by incoming arc $3 \ra 5$
as it satisfies 
$$
\dsepBig{3}{
\{1, 2, 4, 5 \} \setminus \{1, 2, 4, 5\}
}{\{1, 2, 4, 5\}}
= \dsepBig{3}{\emptyset
}{\{1, 2, 4, 5\}}.
$$
and
$$
\parentsdown{5}{3} = \{1,2,4\}
\subseteq \{1, 2, 4, 5\} = O^4_6.
$$
Therefore, we can add node $3$ by incoming arc $3\ra 5$
to obtain $O^5_6 = (4,1,2,5,3)$,
giving us the order $4 <_6 1 <_6 2 <_6 5 <_6 3.$

\section{Proof of Theorem~\ref{thm:existence_possible_candidates}}
\label{sec:proof:existence_possible_candidates}

In this section, we prove Theorem~\ref{thm:existence_possible_candidates}.
We need to prove that for every node $v$, at each step of our algorithm with current partial order $\Ovk$ with $k<\big| \pa{v} \big|$, the following property hold: $\PossCand{\Ovk} \neq \emptyset$.

% \begin{enumerate}
%     \item[\mylabel{P1}{P1}.] For any node $w$ in $\PossCand{\Ovk}$, we can compute $u_{w|\Ovk}$ and $u_{v|\Ovk}$ without integration.
% 
%     \item[\mylabel{P2}{P2}.] For any node $w$ not in $\PossCand{\Ovk}$, integration is required to compute $u_{w|\Ovk}$.
%     \item[\mylabel{P3}{P3}.] The set $\PossCand{\Ovk}$ is not empty.
% \end{enumerate}
% These results are proved in separate subsections.
% \ref{P1} will be proven by induction.

\medskip

We will assume that the arguments of copulas assigned to arcs of the BN up to the current point of the algorithm (copulas assigned to arcs pointing to a node earlier in the well-ordering than node $v$ and copulas assigned to arcs from nodes in $\Ovk$ to $v$) do not require integration.
This means that the following copulas have been assigned by our algorithm upon the arrival at $\Ovk$:
\begin{itemize}
    \item $C_{x, y | \parentsdown{y}{x}}$ with $x \ra y\in \E$ and $y < v$.
    \item $C_{o_j, v | O^{j-1}_v}$ with $j\leq k$.
\end{itemize}

% By proving \ref{P1} and \ref{P2} we show that the restrictions placed on the set of possible candidates (see \cref{prop:characterization_poss_candidates}) are necessary and sufficient to prevent integration in the density corresponding to a PCBN. Moreover, we can conclude that the algorithm is able to find all possible orderings of parents that do not lead to integration.

% \ref{P3} implies that we will never encounter a case where there is no possible candidate to be added.
% Hence, the algorithm will never terminate prematurely and will return a suitable set of orders $\O$.

\medskip

The proof requires many additional results regarding properties of trails, B-sets, partial orders and possible candidates.
These results can be found in the Appendix.

\medskip

We will show that at any point of the algorithm, we are able to extend the current order $\Ovk$ with a node $w\in \PossCand{\Ovk}$.
Thus, we must prove that there exists a node $w\in \PossCand{\Ovk}$.

\medskip

%The set of possible candidates is contained in the set $\BOvk\setminus \Ovk$.
%Furthermore, the set $\BOvk\setminus \Ovk$ is not empty by Remark~\ref{rem:Okv_strictly_subset_B(Okv)}.
%We will now find a $w$ in $\BOvk\setminus \Ovk$ such that $w\in \PossCand{\Ovk}$.
%By definition of the set of possible candidates, we have that 
%$$
%\PossCand{\Ovk} = \PossCandInd{\Ovk} \sqcup \PossCandIn{\Ovk} \sqcup %\PossCandOut{\Ovk}.
%$$
If $\PossCandInd{\Ovk}$ is not empty, then the proof is complete.
Therefore, in the rest of the proof we assume that $\PossCandInd{\Ovk} = \emptyset$.
Consequently, we can apply Lemma~\ref{lemma:adset_not_empty} to find that $ad(O^k_{v})\cap B(O^k_{v}) \neq \emptyset$. 
Thus, there exist a $w_1\in \BOvk\setminus \Ovk$ and $o_1\in \Ovk$
such that $w_1 \ra o_1$ or $o_1 \ra w_1$.

\medskip

In what follows we will show that the existence of an arc $w_1\ra o_1$ implies that $\PossCandIn{\Ovk}$ is not empty.
Subsequently we will assume that no arc of the form 
 $w_1\ra o_1$ exists and we prove that this together with the existence of an arc $o_1 \ra w_1$ implies that $\PossCandOut{\Ovk}$ is not empty, concluding the proof.
The cases of the existence of the arcs $w_1\ra o_1$ and $o_1\ra w_1$ are considered separately.

\subsection{First case: \texorpdfstring{$w_1 \ra o_1$}{w1 -> o1}}

% \textbf{First case:} $w_1\ra o_1\in \E$.

If $w_1$ can be added to $\Ovk$ by the incoming arc $w_1 \ra o_1$,
then $w_1 \in \PossCandIn{\Ovk}$, completing the proof.
Thus, we assume that $w_1 \notin \PossCandIn{\Ovk}$, for every $w_1$ which is a parent of $o_1$ and which belongs to $\BOvk\setminus \Ovk$.
Formally, this means
\begin{align}
\label{eq:first_case_empty_intersection}
    \pa{o_1}
    \cap \big(\BOvk\setminus \Ovk\big)
    \cap \PossCandIn{\Ovk}
    = \emptyset.
\end{align}
% 
% Since $w_1\in \pa{o_1} \cap \big(\BOvk\setminus \Ovk\big)$, then we must show that $w_1$ does not belong to the set
% $\pa{o_1} \cap \big(\BOvk \setminus \Ovk\big) \cap \PossCandIn{\Ovk}$.
% Assume now that for all $w \in \pa{o_1} \cap \big(\BOvk\setminus \Ovk\big)$, we have $w \notin \PossCandIn{\Ovk}$.
To get a contradiction our strategy is to apply the lemma below.
This lemma states that under the assumptions above, the arc $w_1\ra o_1$ implies the existence of another pair of nodes $w_2\in \BOvk\setminus \Ovk$ and $o_2\in \Ovk$ such that $w_2\ra o_2$ and $o_1\ra o_2$.
It is the case that $o_2\neq o_1$, but $w_1$ and $w_2$ may be the same node.
%We will again apply this lemma with arc $w_2\ra o_2$ and obtain a sequence of arcs.

\begin{lemma}
\label{lemma:sequence_of_os}
    Assume that there exist $w_1 \in \BOvk \setminus \Ovk$ and $o_1 \in \Ovk$ such that $w_1 \ra o_1$.
    If \eqref{eq:first_case_empty_intersection} holds, then
    there exist
    $w_2 \in \BOvk \setminus \Ovk$ and $o_2\in \Ovk$
    such that $w_2 \ra o_2$ and $o_1 \ra o_2$.
\label{lemma:w1o1_to_w2o2}
\end{lemma}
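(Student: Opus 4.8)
The plan is to work towards a contradiction with \eqref{eq:first_case_empty_intersection}. Since $w_1 \in \pa{o_1} \cap (\BOvk \setminus \Ovk)$, assumption \eqref{eq:first_case_empty_intersection} forces $w_1 \notin \PossCandIn{\Ovk}$, so the incoming-arc condition \eqref{eq:cond_incoming_arc} must fail for the pair $(w_1, o_1)$. First I would record a structural fact that holds independently of how \eqref{eq:cond_incoming_arc} fails. Because $w_1 \ra o_1$ and $w_1 \ra v$ (as $w_1 \in \BOvk \subseteq \pa{v}$), we have $w_1 \in \pa{o_1} \cap \pa{v} = B(o_1, v)$. Applying Corollary~\ref{cor:B_sets_implies_rightarrow} inside the already-determined order $\smallerbig{o_1}$, every parent of $o_1$ that is $\smallerbig{o_1} w_1$ also lies in $B(o_1, v)$ and is therefore a parent of $v$; in particular $\parentsdown{o_1}{w_1} \subseteq \pa{v}$, the fact I will use repeatedly.

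Next I would split on which conjunct of \eqref{eq:cond_incoming_arc} is violated. If $\parentsdown{o_1}{w_1} \not\subseteq \Ovk$, choose $z \in \parentsdown{o_1}{w_1} \setminus \Ovk$; the previous paragraph gives $z \ra v$ and $z \ra o_1$ with $z \notin \Ovk$, and I would argue that such a node either supplies the required pair directly or forces the d-separation conjunct to fail as well, so that in all cases we may assume $\notdsepbig{w_1}{\Ovk \setminus (\parentsdown{o_1}{w_1} \sqcup \{o_1\})}{\parentsdown{o_1}{w_1} \sqcup \{o_1\}}$. Writing $S := \parentsdown{o_1}{w_1} \sqcup \{o_1\}$, this yields an active trail $\tau$ given $S$ from $w_1$ to some node $o_2 \in \Ovk \setminus S$, and this $o_2 \in \Ovk$ is the candidate for the conclusion; note already that $o_2 \neq o_1$ since $o_1 \in S$ while $o_2 \notin S$.

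The heart of the argument is to turn $\tau$ into the two arcs $o_1 \ra o_2$ and $w_2 \ra o_2$. Here I would invoke that $\G$ has no active cycles: since $w_1, o_2 \in \pa{v}$, the closed trail $v \la w_1 \har \tau \har o_2 \ra v$ cannot be an active cycle in the sense of Definition~\ref{def:active_cycle}, so $\tau$ must carry a chord or contain a collider activated by $S$. A collider of $\tau$ at a node $m$ can be activated only through $\big(\{m\} \cup \de{m}\big) \cap S \neq \emptyset$, and since $S \subseteq \pa{v}$ its activating descendant is a parent of $v$ built from $o_1$ and the parents of $o_1$ below $w_1$. Tracking the collider closest to $o_2$ together with its activating descendant pins down the arc $o_1 \ra o_2$. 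To exhibit the witness $w_2 \in \BOvk \setminus \Ovk$ with $w_2 \ra o_2$, I would use that $o_1 \ra o_2$ and $o_2 \in \pa{v}$ give $o_1 \in \pa{o_2} \cap \pa{v} = B(o_2, v)$; the absence of interfering v-structures makes the $B$-sets of $o_2$ nested (Lemma~\ref{Bsets_eqcond}), and the node of $\tau$ adjacent to $o_2$ (or the activating descendant feeding the last collider) provides a parent of $o_2$ which, placed in the correct $B$-set layer, lies in $\BOvk \setminus \Ovk$.

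The main obstacle I anticipate is exactly the trail analysis of the third paragraph: controlling the colliders of $\tau$ and showing that their activating descendants, which are constrained to lie in $S$, force precisely the single out-arc $o_1 \ra o_2$ rather than merely some arc among the conditioning nodes, and simultaneously yield a parent of $o_2$ in the right $B$-set. This is where the combinatorial lemmas on trails, partial orders and $B$-sets collected in the appendix do the real work, and where both the no-active-cycle and the no-interfering-v-structure hypotheses are indispensable.
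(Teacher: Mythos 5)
Your high-level skeleton (negate the incoming-arc condition, extract an active trail, use the structural restrictions to read off the two arcs) points in the right direction, but there are three concrete gaps, each at a place where the paper has to do real work. First, you never take $w_1$ to be the \emph{smallest} element of $\pa{o_1}\cap\big(\BOvk\setminus\Ovk\big)$ with respect to $\smallerbig{o_1}$. This WLOG choice is what lets the paper prove (Lemma~\ref{lemma:x}) that the first conjunct $\parentsdown{o_1}{w_1}\subseteq\Ovk$ of \eqref{eq:cond_incoming_arc} always holds, so that only the d-separation conjunct can fail. Your treatment of the case $\parentsdown{o_1}{w_1}\not\subseteq\Ovk$ is a non-argument: a node $z\in\parentsdown{o_1}{w_1}\setminus\Ovk$ with $z\ra o_1$ and $z\ra v$ does \emph{not} ``supply the required pair directly'' (it is yet another arc \emph{into} $o_1$, whereas the conclusion demands a new node $o_2$ with $o_1\ra o_2$), and nothing forces the d-separation conjunct to fail as well. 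Second, you miss the case split on $\parentsup{o_1}{w_1}\cap\Ovk$. When this set is non-empty, the paper does not use a trail argument at all: it invokes Lemma~\ref{lemma:o2_if_higher_os}, which exploits how the algorithm previously admitted the higher parent into $\Ovk$, and takes $w_2:=w_1$. When it is empty, that emptiness is an explicit hypothesis (Assumption~\ref{assumpt:lemma:sequence_incoming_arc}(iii)) of Lemma~\ref{lemma:sequence_of_os_no_converging}, the result that produces a trail with \emph{no} converging connections; your collider analysis cannot even be launched without this case distinction.

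Third, the ``heart'' of your argument --- tracking the collider of $\tau$ closest to $o_2$ and claiming its activating descendant ``pins down'' the arc $o_1\ra o_2$ --- is exactly the hard part, and it is unsupported. An activated collider only gives a directed path from that collider down into $\parentsdown{o_1}{w_1}\sqcup\{o_1\}$; it does not produce an arc out of $o_1$ into a node of $\Ovk\setminus\big(\parentsdown{o_1}{w_1}\sqcup\{o_1\}\big)$. The paper's route is different: the minimal-trail machinery eliminates colliders altogether; Lemmas~\ref{lemma:generalisation_to_K} and~\ref{lemma:bset_btrails} keep the collider-free trail inside $\BOvk$ (giving $w_2 = x_n \in \BOvk\setminus\Ovk$ by a shortest-trail argument); and the adjacency $o_1\har\ot$ is obtained \emph{not} from the no-active-cycle definition but from the algorithmic invariant that $\Ovk$ has local relationships (Lemma~\ref{lemma:adjacent_os}), after which the order bookkeeping $\ot\notin\parentsdown{o_1}{w_1}\sqcup\{o_1\}$ and $\ot\notin\parentsup{o_1}{w_1}$ rules out $\ot\ra o_1$ and forces $o_1\ra\ot$. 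None of this inductive use of the algorithm's history appears in your proposal, so the proof as sketched would not go through.
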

Applying Lemma~\ref{lemma:w1o1_to_w2o2} iteratively, we obtain an infinite sequence
$
o_1 \ra o_2 \ra o_3 \ra \cdots
$
of connected nodes of $\G$.
Since the graph $\G$ is acyclic and has a finite number of nodes,
such a sequence cannot exist.
Therefore \eqref{eq:first_case_empty_intersection} cannot be true, which means that 
$\PossCandIn{\Ovk} \neq \emptyset$, completing the proof.
It remains to prove Lemma~\ref{lemma:w1o1_to_w2o2}.

% \begin{proof}[Proof of Lemma \ref{lemma:w1o1_to_w2o2}]

\subsubsection{Proof of Lemma \ref{lemma:w1o1_to_w2o2}}

Without loss of generality, we can assume that $w_1$ is the smallest element with respect to $\smallerbig{o_1}$ in $\pa{o_1}\cap \big(\BOvk\setminus \Ovk\big)$.
From \eqref{eq:first_case_empty_intersection}, we know that $w_1 \notin \PossCandIn{\Ovk}$, hence (at least) one of the two following restrictions must be violated:
\begin{enumerate}
    \item $\parentsdown{o_1}{w_1}\subseteq \Ovk$,
    \item 
    $\dsepbig{w_1}{\Ovk \setminus \big( \parentsdown{o_1}{w_1} \sqcup \{o_1\} \big)}
    {\parentsdown{o_1}{w_1} \sqcup \{o_1\}}$.
\end{enumerate}

\medskip

The first restriction is satisfied by the lemma below.
\begin{lemma}\label{lemma:x}
    Let $w_1$ be the smallest element in $\BOvk\cap pa(o_1)$ with respect to  $\smallerbig{o_1}$. 
    Then, $\parentsdown{o_1}{w_1}\subseteq \Ovk$.
\end{lemma}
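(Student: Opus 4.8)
The plan is to prove the stronger local statement that every node $x$ with $x \smallerbig{o_1} w_1$ already lies in $\Ovk$, from which $\parentsdown{o_1}{w_1} \subseteq \Ovk$ is immediate. The key reduction I would make is that it suffices to show $x \in \BOvk$: once this is known, minimality of $w_1$ closes the argument. To reach $x \in \BOvk$, I would transfer membership in the B-sets of $o_1$ from $w_1$ down to the smaller element $x$ using \cref{cor:B_sets_implies_rightarrow}.

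First I would read off the arcs forced by the hypotheses. Since $o_1 \in \Ovk$ is a parent of $v$, we have $o_1 \ra v$; and since $w_1 \ra o_1$ together with $w_1 \in \BOvk \subseteq \pa{v}$, the node $w_1$ lies in $B(o_1, v) = \pa{o_1} \cap \pa{v}$, a genuine B-set of $o_1$ because $v \in \ch{o_1}$. Fixing an arbitrary $x \in \parentsdown{o_1}{w_1}$, so that $x \smallerbig{o_1} w_1$, \cref{cor:B_sets_implies_rightarrow} applied with $(o,q) = (o_1, v)$ gives $x \in B(o_1, v)$, and in particular $x \ra v$. If $\BOvk = \pa{v}$, this already shows $x \in \BOvk$.

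In the remaining case $\BOvk = B(v, w^\ast) = \pa{v} \cap \pa{w^\ast}$ for some child $w^\ast \in \ch{v}$, I would exploit that $o_1 \in \Ovk \subseteq \BOvk \subseteq \pa{w^\ast}$ by \cref{rem:Okv_strictly_subset_B(Okv)}, which yields $o_1 \ra w^\ast$; hence $w^\ast$ is also a child of $o_1$ and $B(o_1, w^\ast) = \pa{o_1} \cap \pa{w^\ast}$ is a B-set of $o_1$. Because $w_1 \in \BOvk \subseteq \pa{w^\ast}$ and $w_1 \ra o_1$, we have $w_1 \in B(o_1, w^\ast)$, so a second application of \cref{cor:B_sets_implies_rightarrow}, now with $q = w^\ast$, gives $x \ra w^\ast$. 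Combining $x \ra v$ and $x \ra w^\ast$ yields $x \in \pa{v} \cap \pa{w^\ast} = \BOvk$.

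Finally I would conclude by minimality. Recall that in the proof of \cref{lemma:w1o1_to_w2o2} the node $w_1$ is chosen as the smallest element of $\pa{o_1} \cap (\BOvk \setminus \Ovk)$ with respect to $\smallerbig{o_1}$. If $x \notin \Ovk$, then $x \in \BOvk \setminus \Ovk$ together with $x \in \pa{o_1}$ and $x \smallerbig{o_1} w_1$ would exhibit an element of $\pa{o_1} \cap (\BOvk \setminus \Ovk)$ strictly below $w_1$, contradicting this choice; hence $x \in \Ovk$. I expect the only delicate point to be the deduction $o_1 \ra w^\ast$ from $o_1 \in \Ovk \subseteq \BOvk$: this is what makes the second B-set $B(o_1, w^\ast)$ available and forces $x \ra w^\ast$, whereas the arc $o_1 \ra v$ alone only gives $x \ra v$ and cannot by itself place $x$ inside $\BOvk$.
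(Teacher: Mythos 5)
Your proof is correct and follows essentially the same route as the paper's: two applications of \cref{cor:B_sets_implies_rightarrow} (first with $q = v$ to obtain $x \ra v$, then with the node defining $\BOvk$ to place $x$ inside $\BOvk$), finished off by the minimality of $w_1$ in $\pa{o_1} \cap \big(\BOvk \setminus \Ovk\big)$; the paper merely packages the same argument as a proof by contradiction. One point in your favour: you explicitly treat the case $\BOvk = \pa{v}$, whereas the paper's step ``let $q$ be a node such that $\BOvk = B(v,q)$'' is not fully justified, since by \cref{def:bsets} the largest B-set $B_{Q(v)+1} = \pa{v}$ need not coincide with $B(v,q)$ for any child $q$ of $v$.
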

\begin{proof}[Proof of Lemma \ref{lemma:x}]
    Suppose that there exists an $x\in \parentsdown{o_1}{w_1}\setminus \Ovk$.
    That is, $x\in pa(o_1)\setminus \Ovk$ and $x \smallerbig{o_1} w_1$.
    Since $w_1$ is the smallest node in $\BOvk\cap \pa{o_1}$, we have $x \notin \BOvk$.
    
    Remark that $B(o_1, v) = \pa{o_1} \cap \pa{v}$.
    Since $w_1 \ra v$ and $w_1 \ra o_1$, we know that $w_1 \in B(o_1, v)$.
    Since $x \smallerbig{o_1} w_1$, by Corollary~\ref{cor:B_sets_implies_rightarrow}, we obtain $x \ra v$.
    This mean that $\G$ contains the subgraph below.

    $$
    \begin{tikzpicture}[scale=1, transform shape, node distance=1.2cm, state/.style={circle, font=\Large, draw=black}]

    \node (o1) {$o_1$};
    \node[above left = of o1] (w1) {$w_1$};
    \node[above right = of o1] (x) {$x$};
    \node[below = of w1, yshift = -2em] (v) {$v$};
    
    \begin{scope}[>={Stealth[length=6pt,width=4pt,inset=0pt]}]
    
    \path [->] (w1) edge (o1);
    \path [->] (x) edge (o1);
    \path [->] (w1) edge (v);
    \path [->] (o1) edge (v);
    \path [->] (x) edge[bend left=30] (v);
    
    \end{scope}
    
    \end{tikzpicture}
    $$

    Let $q$ be a node such that $\BOvk = B(v, q)$.
    Then we must have $v \ra q$ and $w_1 \ra q$.
    Since the B-sets are ordered by inclusion and $o_1 \in \Ovk \subset \BOvk = B(v, q)$, we get that $o_1 \ra q$.
    So $w_1 \in B(o_1, q)$. By Corollary~\ref{cor:B_sets_implies_rightarrow}, we obtain $x \ra q$.
    So $x \in B(o_1, q) = \BOvk$, which is a contradiction.
    % 
    % {\color{red}Old text}.
    % 
    % Remark that $o_1$ has corresponding B-set $\BOvk\cap \pa{o_1}$ which contains $w_1$.
    % Since $x \smallerbig{o_1} w$, $x$ must also be included in this B-set. 
    % Otherwise, our algorithm would have never chosen the order $x \smallerbig{o_1} w$.
    % Indeed, by Corollary~\ref{cor:previous_orders_abide_by_bsets} all previously determined parental orders chosen by the algorithm abide by the B-sets in the sense of Definition~\ref{def:abide}.
    % But then, $x\in \BOvk$ as well, which is a contradiction with the definition of $w_1$.
\end{proof}

\medskip

Therefore, the second restriction must be violated.
We consider two possible cases.
Assume that $\parentsup{o_1}{w_1}\cap \Ovk\neq \emptyset$.
% \begin{itemize}
% \item \underline{$\parentsup{o_1}{w_1}\cap \Ovk\neq \emptyset$}: 
Lemma~\ref{lemma:o2_if_higher_os} immediately implies that there exists an $o_2\in \Ovk$ as desired
(where $w_1$ is $w$, $o_1$ is $o_i$ and $o_2$ is $\ot$ in the notation of Lemma~\ref{lemma:o2_if_higher_os})
and we set $w_2 := w_1$, completing the proof in this case.

\medskip

% \item \underline{$\parentsup{o_1}{w_1}\cap \Ovk=\emptyset$}:
Assume now that $\parentsup{o_1}{w_1} \cap \Ovk = \emptyset$.
We can apply Lemma~\ref{lemma:sequence_of_os_no_converging} (with $w = w_1$ and $o = o_1$ in the notation of Lemma~\ref{lemma:sequence_of_os_no_converging}) to find that there exists a trail between $w_1$ and a node
$\ot \in \Ovk \setminus \big( \parentsdown{o_1}{w_1} \sqcup \{o_1\} \big)$ which is activated by $\parentsdown{o_1}{w_1} \sqcup \{o_1\}$ containing no converging connections.
Let us pick a shortest such trail.
$$
w_1 \har x_1 \har \cdots \har x_n \har \ot.
$$
Remark that this trail is a shortest trail activated by the empty set between $w_1 \in \BOvk$ and $\ot \in \Ovk \subseteq \BOvk$ consisting of nodes in $\V \setminus \big( \parentsdown{o_1}{w_1} \sqcup \{o_1\} \big)$.
Therefore, we can combine Lemma~\ref{lemma:generalisation_to_K} (with $K = \V \setminus \big( \parentsdown{o_1}{w_1} \sqcup \{o_1\} \big)$) and Lemma~\ref{lemma:bset_btrails} to find that $\{x_1, \dots, x_n\} \subseteq \BOvk$.
In particular we obtain $x_n \in \BOvk$.

Furthermore, $x_n$ cannot be contained in $\Ovk$.
Otherwise, the trail from $w$ to $x_n$ would be an even shorter active trail from $w$ to a node in $\Ovk$.
Hence, $x_n\in \BOvk\setminus \Ovk$.

Now, the trail
$$
o_1 \la w\har x_1 \har \cdots \har x_n \har \ot
$$
is an active trail between two nodes in $\Ovk$ given the empty set consisting of nodes not in $\Ovk$. 
Thus, by Lemma~\ref{lemma:adjacent_os}, $o_1$ and $\ot$ must be adjacent.
By the assumption that $\parentsup{o_1}{w_1}\cap \Ovk=\emptyset$, we have 
$\ot\notin \parentsup{o_1}{w_1}$.
Remark that $\ot \notin \parentsdown{o_1}{w_1} \sqcup \{o_1\}$ (by definition of $\ot$) and that
$\pa{o_1} = \parentsdown{o_1}{w_1} \sqcup \parentsup{o_1}{w_1} \sqcup \{w_1\}$.
This shows that
$\ot\notin \pa{o_1}$.
Therefore, we must have $o_1\in \pa{\ot}$; this means that we have the subgraph below.
\begin{figure}[H]
    \centering
\begin{tikzpicture}[scale=1, transform shape]
\node (v1) {$o_1$};
\node[right= 1cm of v1] (w1) {$w_1$};
\node[right= 1cm of w1] (x1) {$x_1$};
\node[right= 1cm of x1] (x2) {$x_2$};
\node[right= 2cm of x2] (xn-1) {$x_{n-1}$};
\node[right= 1cm of xn-1] (xn) {$x_n$};
\node[right= 1cm of xn] (v2) {$\ot$};

\begin{scope}[>={Stealth[length=6pt,width=4pt,inset=0pt]}]
\path [->] (w1) edge node {} (v1);
\path [->] (xn) edge node {} (v2);
\path [->] (v1) edge[bend left=30] node {} (v2);

\draw[transform canvas={yshift=0.21ex},-left to,line width=0.25mm] (w1) -- (x1);
\draw[transform canvas={yshift=-0.21ex},left to-,line width=0.25mm] (w1) -- (x1);

\draw[transform canvas={yshift=0.21ex},-left to,line width=0.25mm] (x1) -- (x2);
\draw[transform canvas={yshift=-0.21ex},left to-,line width=0.25mm] (x1) -- (x2);

\draw[loosely dotted, line width=0.5mm] (x2) -- (xn-1);

\draw[transform canvas={yshift=0.21ex},-left to,line width=0.25mm] (xn-1) -- (xn);
\draw[transform canvas={yshift=-0.21ex},left to-,line width=0.25mm] (xn-1) -- (xn);
\end{scope}
\end{tikzpicture}
\end{figure}
Clearly, $\ot$ is our desired node $o_2$ and $x_n$ is our desired node $w_2$.
Indeed, we have $o_1\ra \ot$ and $x_n\ra \ot$, with $\ot\in \Ovk$ and $x_n\in \BOvk\setminus \Ovk$.
% \end{itemize}
    
% \end{proof}

\medskip

\subsection{Second case: \texorpdfstring{$o_1 \ra w_1$}{o1 -> w1}}

% \textbf{Second case:} $o_1\ra w_1\in \E$.

\medskip

First, we remark that if $\E$ contains arcs of the form $w\ra o$ with $w\in \BOvk\setminus \Ovk$ and $o\in \Ovk$, then by the previous case we have that $\PossCandIn{\Ovk}\neq \emptyset$. Therefore, we can assume without loss of generality that there are no such arcs in $\E$.

\medskip

If $w_1 \in \PossCandOut{\Ovk}$, then the proof is complete.
Assume now that $w_1\notin \PossCandOut{\Ovk}$. We will show that $\PossCandOut{\Ovk}\neq \emptyset$ with the lemma below.
The lemma states that under the assumptions above, the arc $o_1\ra w_1$ implies the existence of another pair of nodes $w_2\in \BOvk\setminus \Ovk$ and $o_2\in \Ovk$ such that $w_1$ and $w_2$ are connected by a trail where all arcs point in direction of $w_1$. We can repeat this argument and construct a sequence of nodes.

\begin{lemma}
    Let $w_1\in \BOvk \setminus \Ovk$ and $o_1 \in \Ovk$ such that $o_1 \ra w_1 \in \E$.
    Assume that $w_1\notin \PossCandOut{\Ovk}$, and that
    % $\E$ contains no arcs of the form $w\ra o$ with $w\in \BOvk\setminus \Ovk$ and $o\in \Ovk$.
    \begin{align}
        \E \text{ does not contain any arc from }
        \BOvk
        \text{ to }
        \Ovk.
    \label{eq:no_arc_BOvk_to_Ovk}
    \end{align}
    Then, there exist $w_2\in \BOvk \setminus \Ovk$ and $o_2 \in \Ovk$ such that $o_2\ra w_2$, and $w_1$ and $w_2$ are connected by a trail of the form$$
    w_1 \la x_1 \la \cdots \la x_n \la w_2. 
    $$\label{lemma:sequence_of_ws}
\end{lemma}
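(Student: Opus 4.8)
The plan is to follow the structure of the proof of \Cref{lemma:w1o1_to_w2o2} from the first case, the essential new ingredient being that the no-arc hypothesis \eqref{eq:no_arc_BOvk_to_Ovk} will force the trail we produce to be \emph{directed} towards $w_1$, which is exactly what the conclusion demands. First I would record the following fact $(\star)$: every node of $\BOvk\setminus\Ovk$ has a parent in $\Ovk$. Indeed, if some $z\in\BOvk\setminus\Ovk$ had no parent in $\Ovk$, then by \eqref{eq:no_arc_BOvk_to_Ovk}, which forbids arcs from $\BOvk$ into $\Ovk$, the node $z$ would not be adjacent to $\Ovk$ at all; but $\PossCandInd{\Ovk}=\emptyset$ gives $\notdsepbig{z}{\Ovk}{\emptyset}$, so there is an active, hence converging-free, trail from $z$ to a node of $\Ovk$, and since $z$ and that endpoint both lie in $\pa{v}$ this closes to an active cycle at $v$, contradicting \Cref{thm:active_cycles} (the chord-freeness being extracted as in the supporting lemmas of the first case). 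In particular $w_1$ has a parent in $\Ovk$; I would let $t^\ast$ be the \emph{largest} such parent with respect to $\smallerbig{w_1}$, the point being that then $\parentsup{w_1}{t^\ast}\cap\Ovk=\emptyset$ automatically.

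Next I would split on whether the inclusion $\parentsdown{w_1}{t^\ast}\subseteq\Ovk$ from \eqref{eq:cond_outgoing_arc} holds. Suppose it fails, so there is $z\in\parentsdown{w_1}{t^\ast}\setminus\Ovk$, i.e. $z\ra w_1$ with $z\smallerbig{w_1}t^\ast$. Writing $\BOvk=B(v,r)$ for a child $r$ of $v$ (the case $\BOvk=\pa{v}$ is simpler, as then $z\ra v$ already gives $z\in\BOvk$), we have $t^\ast\in B(w_1,v)$ and $t^\ast\in B(w_1,r)$ because $t^\ast\ra w_1$ and $t^\ast\in\Ovk\subseteq\pa{v}\cap\pa{r}$; applying \Cref{cor:B_sets_implies_rightarrow} to $z\smallerbig{w_1}t^\ast$ then gives $z\ra v$ and $z\ra r$, so $z\in\BOvk\setminus\Ovk$. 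By $(\star)$, $z$ has a parent $o_2\in\Ovk$, and $w_2:=z$ works: $o_2\ra w_2$ and the one-arc trail $w_1\la w_2$ is of the required form, with $w_2\neq w_1$ since $z\ra w_1$.

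Otherwise $\parentsdown{w_1}{t^\ast}\subseteq\Ovk$, so the first half of \eqref{eq:cond_outgoing_arc} holds and, as $w_1\notin\PossCandOut{\Ovk}$, the d-separation in \eqref{eq:cond_outgoing_arc} must fail, producing a trail activated by $C:=\parentsdown{w_1}{t^\ast}\sqcup\{t^\ast\}$ from $w_1$ to some $t\in\Ovk\setminus C$. Exactly as in the first case I would apply \Cref{lemma:sequence_of_os_no_converging} (available because $\parentsup{w_1}{t^\ast}\cap\Ovk=\emptyset$) to obtain a shortest such trail $w_1\har x_1\har\cdots\har x_m\har t$ without converging connections and with all nodes outside $C$, hence activated by $\emptyset$; \Cref{lemma:generalisation_to_K} and \Cref{lemma:bset_btrails} then place every $x_i$ in $\BOvk$, and minimality together with $x_m\notin C$ forces $x_m\in\BOvk\setminus\Ovk$. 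The key step is now the orientation: since $t\in\Ovk$ and $x_m\in\BOvk$, the hypothesis \eqref{eq:no_arc_BOvk_to_Ovk} forbids the arc $x_m\ra t$, so the last arc is $t\ra x_m$; because the trail has no converging connection this orientation propagates, forcing the directed path $t\ra x_m\ra\cdots\ra x_1\ra w_1$. Setting $o_2:=t$ and $w_2:=x_m$ then gives $o_2\ra w_2$ and the trail $w_1\la x_1\la\cdots\la x_{m-1}\la w_2$ of the required form, with $w_2\neq w_1$ because $t$, being neither a parent of $w_1$ (it lies in $\Ovk\setminus C$ while every $\Ovk$-parent of $w_1$ lies in $C$) nor a child of $w_1$ (by \eqref{eq:no_arc_BOvk_to_Ovk}), is non-adjacent to $w_1$, so the trail has length at least two.

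I expect the main obstacle to be precisely this orientation step in the last case: converting the merely \emph{active} trail coming from the failed d-separation into a genuine directed path. The sketch shows this is exactly where \eqref{eq:no_arc_BOvk_to_Ovk} is indispensable — it pins down the arc at the $\Ovk$-endpoint, and the converging-free property does the rest — and this is what distinguishes the second case from the first, where one instead obtains an infinite directed chain $o_1\ra o_2\ra\cdots$. The second delicate point is fact $(\star)$, whose active-cycle argument, like that of \Cref{thm:active_cycles} and \Cref{lemma:adjacent_os}, needs the usual care in producing a chord-free closing trail through $v$.
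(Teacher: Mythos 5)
Your second case (when $\parentsdown{w_1}{t^\ast}\subseteq\Ovk$) is essentially the paper's own argument, and it is sound: the failed d-separation gives an activated trail, converging connections are removed, all interior nodes are placed in $\BOvk\setminus\Ovk$, and \eqref{eq:no_arc_BOvk_to_Ovk} pins down the arc at the $\Ovk$-endpoint so that converging-freeness propagates the orientation. (Minor slip there: the lemma you need is \cref{lemma:sequence_of_ws_no_converging}, whose hypotheses are exactly $t^\ast\ra w_1$, $\parentsdown{w_1}{t^\ast}\setminus\Ovk=\emptyset$ and \eqref{eq:no_arc_BOvk_to_Ovk}; \cref{lemma:sequence_of_os_no_converging} is the incoming-arc lemma and requires $w\ra o$, so it does not apply here.)

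The first case, however, contains a genuine gap: your fact $(\star)$ --- every node of $\BOvk\setminus\Ovk$ has a parent in $\Ovk$ --- is false, and the active-cycle argument you sketch for it cannot be repaired. The trail obtained from $\notdsepbig{z}{\Ovk}{\emptyset}$ does not close to an active cycle at $v$: by \cref{lemma:ac_all_parents} (equivalently \cref{lemma:bset_btrails}), every interior node of a shortest empty-set-activated trail between two parents of $v$ is itself a parent of $v$, so each interior node contributes a chord $x_i\ra v$ of the cycle through $v$, while \cref{def:active_cycle} requires the cycle to be chord-free; hence \cref{thm:active_cycles} yields no contradiction. A concrete counterexample to $(\star)$: take nodes $\{o,x,z,v,q\}$ with arcs $o\ra x$, $x\ra z$, $o,x,z\ra v$ and $o,x,z,v\ra q$. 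This DAG has no active cycles and no interfering v-structures, the algorithm may choose $O^1_v=(o)$, one checks $\PossCandInd{O^1_v}=\emptyset$ and that no arc points from $B(O^1_v)=\{o,x,z\}$ into $O^1_v=\{o\}$, and yet $z\in B(O^1_v)\setminus O^1_v$ has no parent in $O^1_v$ (its only parent is $x$). Consequently, in your first case you cannot conclude that $w_2:=z$ works with the one-arc trail $w_1\la z$: nothing guarantees $z$ has an $\Ovk$-parent. The paper's proof instead proves the weaker (and correct) statement: starting from $z$, take a shortest trail to $\Ovk$ activated by the empty set; by \cref{lemma:bset_btrails} its interior lies in $\BOvk\setminus\Ovk$, by \eqref{eq:no_arc_BOvk_to_Ovk} its last arc cannot point into $\Ovk$, and converging-freeness then forces the directed path $z\la x_2\la\cdots\la x_n\la\ot$, so that $w_2:=x_n$ and $o_2:=\ot$, with $w_2$ in general several arcs away from $w_1$ rather than being its parent $z$. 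Your identification of $z\in\BOvk\setminus\Ovk$ via \cref{cor:B_sets_implies_rightarrow} is fine, but it must be followed by this trail-extension step, not by $(\star)$.
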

Applying Lemma~\ref{lemma:sequence_of_ws} iteratively, we obtain a sequence$$
w_1\la \cdots \la w_2 \la \cdots \la w_3 \la \cdots.
$$
Since the graph $\G$ is acyclic and has a finite set of nodes, this sequence must be finite.
Let $w^*$ be the last element of the longest sequence that can be constructed starting from $w_1$.
Then, $w^*$ must belong to $\PossCandOut{\Ovk}$, otherwise, it would not be the last.
Hence, we have $\PossCandOut{\Ovk}\neq \emptyset$, completing the proof.
It remains to prove Lemma~\ref{lemma:sequence_of_ws}.

\subsubsection{Proof of Lemma~\ref{lemma:sequence_of_ws}}

% \begin{proof}[Proof of Lemma~\ref{lemma:sequence_of_ws}]

Without loss of generality, we can assume that $o_1$ is the largest element in $\Ovk\cap \pa{w_1}$ with respect to $\smallerbig{w_1}$.
We consider two cases.

\medskip

\noindent
\textbf{First case: } $\parentsdown{w_1}{o_1}\setminus \Ovk\neq \emptyset$.
% \begin{itemize}
%     \item \underline{$\parentsdown{w_1}{o_1}\setminus \Ovk\neq \emptyset$}: 

\medskip

Let $x_1\in \parentsdown{w_1}{o_1}\setminus \Ovk$. 
That is, $x_1\in \pa{w_1} \setminus \Ovk$ and $x_1 \smallerbig{w_1} o_1$.
This means that $G$ contains the subgraph below.
$$
    \begin{tikzpicture}[scale=1, transform shape, node distance=1.2cm, state/.style={circle, font=\Large, draw=black}]

    \node (w1) {$w_1$};
    \node[above left = of w1] (x1) {$x_1$};
    \node[above right = of w1] (o1) {$o_1$};
    \node[below = of w1] (v) {$v$};
    
    \begin{scope}[>={Stealth[length=6pt,width=4pt,inset=0pt]}]
    
    \path [->] (o1) edge (w1);
    \path [->] (x1) edge (w1);
    \path [->] (w1) edge (v);
    \path [->] (o1) edge (v);
    
    \end{scope}
    
    \end{tikzpicture}
    $$

Note that $o_1 \in \pa{w_1} \cap \pa{v} = B(w_1, v)$, with $x_1 \smallerbig{w_1} o_1$
By Corollary~\ref{cor:B_sets_implies_rightarrow}, we get $v \ra x_1$.

\medskip

Let $q$ be a node such that $\BOvk = B(v, q)$. Then we know that
$v \ra q$,
$w_1 \ra q$ and
$o_1 \ra q$.
Therefore, $x_1 \in \pa{w_1} \cap \pa{v} = B(w_1, q)$, with $x_1 \smallerbig{w_1} o_1$.
By Corollary~\ref{cor:B_sets_implies_rightarrow}, we get $q \ra x_1$.

\medskip

We have shown that $x_1 \in \pa{v}$ and $x_1 \in \pa{q}$, so $x_1 \in B(v, q) = \BOvk$.

\medskip

By the assumption that $\PossCandInd{\Ovk} = \emptyset$, we have $\notdsepbig{x_1}{\Ovk}{\emptyset}$.
Let us pick a shortest trail from $x_1$ to $\Ovk$ 
\begin{equation}
    x_1 \har x_2 \har \cdots \har x_n \har \ot
    \label{eq:shortest_trail_seq_of_os}
\end{equation}
activated by the empty set with $\ot \in \Ovk$.
Note that $x_1$ and $\ot$ are both included in the B-set $\BOvk$.
Therefore, by Lemma~\ref{lemma:bset_btrails}, we have that $x_i\in \BOvk$ for all $i=1,\dots,n$.
In particular, $x_n\in \BOvk$.
Furthermore, $x_n$ does not belong to the set $\Ovk$.
Otherwise, the trail 
$$
x_1 \har x_2 \har \cdots \har x_n
$$
would be a shorter trail from $x_1$ to $\Ovk$ activated by the empty set than \eqref{eq:shortest_trail_seq_of_os}, which is a contradiction.
Therefore, $x_n$ must be in $\BOvk \setminus \Ovk$.

By \eqref{eq:no_arc_BOvk_to_Ovk}, we know that there is no arc pointing from a node in $\BOvk\setminus \Ovk$ to a node in $\Ovk$.
This means that the arc $x_n \ra \ot$ is not possible.
Consequently, the trail \eqref{eq:shortest_trail_seq_of_os} must contain the arc $x_n \la \ot$.
Since the trail is activated by the empty set it contains no converging connections by Lemma~\ref{lemma:trail_activ_emptyset}.
Therefore, \eqref{eq:shortest_trail_seq_of_os} must be of the form 
$
w_1\la x_1 \la \cdots \la x_n \la \ot
$
with $x_n\in \BOvk\setminus \Ovk$ and $\ot \in \Ovk$. 
Hence, $x_n$ is our desired node $w_2$ and $\ot$ is our desired node $o_2$.
    
    % \item \underline{$\parentsdown{w_1}{o_1}\setminus \Ovk= \emptyset$}:

\medskip

\noindent
\textbf{Second case:} $\parentsdown{w_1}{o_1}\setminus \Ovk= \emptyset$.

\medskip

By assumption, we have $w_1 \notin \PossCandOut{\Ovk}$.
Since $\parentsdown{w_1}{o_1} \setminus \Ovk = \emptyset$,
we must have $\parentsdown{w_1}{o_1}\subseteq \Ovk$.
Therefore, using the definition of $\PossCandOut{\Ovk}$ (see Equation~\eqref{eq:cond_outgoing_arc}),
we must have
$$
    \notdsepbig{w_1}{\Ovk\setminus (\parentsdown{w_1}{o_1} \sqcup \{o_1\}) }{\parentsdown{w_1}{o_1} \sqcup \{o_1\}}.
$$
Therefore, there exists a trail between $w_1$ and a node in $\Ovk\setminus (\parentsdown{w_1}{o_1} \sqcup \{o_1\})$ activated by $\parentsdown{w_1}{o_1} \sqcup \{o_1\}$. 
By Lemma~\ref{lemma:sequence_of_ws_no_converging}, there exists such a trail containing no converging connections.
Thus, we can pick a shortest trail from $w$ to
$\Ovk \setminus \parentsdown{w_1}{o_1} \sqcup \{o_1\}$
activated by $\parentsdown{w_1}{o_1} \sqcup \{o_1\}$
containing no converging connections:
\begin{equation}
    \label{trail:sequence_of_ws_case2}
    w_1 \har x_1 \har \cdots \har x_n \har \ot
\end{equation}
with $\ot \in \Ovk \setminus \parentsdown{w_1}{o_1} \sqcup \{o_1\}$.

First, we show that \eqref{trail:sequence_of_ws_case2} must be of length $n > 1$.
If $n=0$, then we would have that $w_1 \har \ot$.
This arc must point to the left, since the arc $w_1 \ra \ot$ is an arc from a node in $\BOvk \setminus \Ovk$ to a node in $\Ovk$ which cannot be present by Condition~\eqref{eq:no_arc_BOvk_to_Ovk}. Because there is the arc $w_1 \la \ot$, we know that $\ot \in \pa{w_1}$.
Moreover, by definition the node $\ot$ does not belong to $\parentsdown{w_1}{o_1} \sqcup \{o_1\}$, and thus $\ot \in \parentsup{w_1}{o_1}$.
This means that $o_1 \smallerbig{w_1} \ot$.
Since we picked $o_1$ to be largest element in $\Ovk \cap \pa{w_1}$ according to $\smallerbig{w_1}$ then $o_1 \smallerbig{w_1} \ot$ is not possible, proving that $n>1$.

Now, we show that for all $i=1,\dots,n$, $x_i \notin \Ovk$.
Suppose that for some $i$,  we have that $x_i \in \Ovk$.
The set $\Ovk$ can be rewritten as
$\Ovk=\big( \Ovk\setminus (\parentsdown{w_1}{o_1} \sqcup \{o_1\}) \big) \sqcup (\parentsdown{w_1}{o_1} \sqcup \{o_1\})$.
Since $x_i \in \Ovk$, it must be
in $\Ovk\setminus (\parentsdown{w_1}{o_1} \sqcup \{o_1\})$
or in $\parentsdown{w_1}{o_1} \sqcup \{o_1\}$.
If $x_i \in \Ovk\setminus (\parentsdown{w_1}{o_1} \sqcup \{o_1\})$, then the $w_1 \har x_1 \har \cdots \har x_i$ would be a shorter trail between $w_1$
and $\Ovk\setminus (\parentsdown{w_1}{o_1} \sqcup \{o_1\})$
activated by $\parentsdown{w_1}{o_1} \sqcup \{o_1\}$
than trail~\eqref{trail:sequence_of_ws_case2}.
This is a contradiction, because we picked the shortest such trail.
Hence, $x_i$ must be in $\parentsdown{w_1}{o_1} \sqcup \{o_1\}$.
However, in this case the trail \eqref{trail:sequence_of_ws_case2} would be blocked by $\parentsdown{w_1}{o_1} \sqcup \{o_1\}$ which is also a contradiction.
Therefore, $x_i$ cannot be in $\Ovk$ proving the claim.

The trail \eqref{trail:sequence_of_ws_case2} is the shortest trail activated by the empty set between two nodes in $\BOvk$ ($w$ and $\ot$), and thus by Lemma~\ref{lemma:bset_btrails}, $x_i \in \BOvk$ for all $i=1,\dots,n$.
This means that for all $i=1,\dots,n$, $x_i \in \BOvk \setminus \Ovk$, in particular $x_n \in \BOvk \setminus \Ovk$.
Therefore, the arrow $x_n \ra \ot$ is forbidden by Condition~\eqref{eq:no_arc_BOvk_to_Ovk}
and so we must have  $x_n \la \ot$.
As a consequence, using the fact that \eqref{trail:sequence_of_ws_case2} has no converging connection,
we deduce that \eqref{trail:sequence_of_ws_case2} takes the form
$$
w_1 \la x_1 \la \cdots \la x_n \la \ot
$$
with $x_n\in \BOvk\setminus \Ovk$ and $\ot\in \Ovk$ and $n>0$.
So, $x_n$ is our desired node $w_2$ and $\ot$ is our desired node $o_2$.
    
% \end{itemize}

\medskip

This completes the proof of Lemma \ref{lemma:w1o1_to_w2o2}. The proof of Theorem~\ref{thm:existence_possible_candidates} is completed. 
% \end{proof}

\section{Parameter estimation of PCBN models via estimating equations}
\label{sec:EstimationPCBN}

\subsection{Methodology and results}

% 
% A PCBN is characterized by the (conditional) copulas attached to its arcs.
% Hence, estimating the ``optimal'' density for given data involves determining the families and corresponding parameters of these copulas and the order in which they are assigned.
% 
% Each copula takes the form $c_{wv|\parentsdown{v}{w}}$, where $v \in \V$ and $w \in \pa{v}$.
% Under the simplifying assumption, such copulas are characterized by a constant vector $\thetavec_{w \ra v}$ which specifies the family and parameters for the arc $w \ra v$.
% The information of all these vectors is stored in a parameter vector denoted by $\thetavec$
% (If the simplifying assumptions was relaxed the vectors $\thetavec_{w \ra v}$ are functions $\thetavec_{w \ra v}(\cdot)$ which take observations $\x_{\parentsdown{v}{w}}$ as inputs).
% 

%Recall \cref{def:PCBN}: a PCBN is composed of a structure (DAG and collection of parental orders), a collection of marginal densities and a collection of conditional copulas (one for each arc in the dag).
%In practice, we would like to fit a PCBN model to a given dataset.
%For this, we need to define what such a PCBN model would look like.
%In the most general case, the components of the model could be learned from the data.

%\medskip

In this section, we focus on the estimation of the (conditional) copulas in PCBN model. Indeed, estimation of the marginal densities can be done using classical univariate techniques (e.g. using parametric methods such as maximum likelihood or method of moments, or non-parametric methods such as kernel smoothing).

We will assume that the marginal distributions have been estimated nonparametrically via ranking. The results can be adapted in the same way for parametric margins.

%The choice of the structure is difficult and is left for future work.
%Inference of conditional copulas has received much attention recently, see {\color{red}TODO: put citations here}. 

 Due to the product structure of the joint copula density in PCBNs it is necessary to estimate $c_{wv|\parentsdown{v}{w}}$, for each $v \in \V$ and $w \in \pa{v}$. We focus on parametric conditional copulas which  are of the form
\begin{equation*}
    \cwvparentsbig{
    u_{w | \parentsdown{v}{w}}}{
    u_{v | \parentsdown{v}{w}}}{
    u_{\parentsdown{v}{w}}}
    = \cthetawvbig{
    u_{w | \parentsdown{v}{w}}}{
    u_{v | \parentsdown{v}{w}}}{
    u_{\parentsdown{v}{w}}}
\end{equation*}
where $u_{w | \parentsdown{v}{w}} \in [0,1],$
$u_{v | \parentsdown{v}{w}} \in [0,1],$
$u_{\parentsdown{v}{w}} \in [0,1]^{|\parentsdown{v}{w}|},$
and $\thetavecwv \in \Thetavecwv$, for a given parametric family of (conditional) copulas
$\modelcopwv
= \{c_\theta; \, \theta \in \Thetavecwv\}$.

\medskip

By definition, $c_{w v | \parentsdown{v}{w}}$ is the conditional copula of
$u_{w | \parentsdown{v}{w}}$ and
$u_{v | \parentsdown{v}{w}}$ given
$u_{\parentsdown{v}{w}}$.
From Theorem~\ref{thm:PCBN_main_theorem}, we know that a PCBN $(\G, \O)$ with neither active cycles nor interfering v-structures does not necessitate integration to compute these conditional margins.
Moreover, we have shown that to prevent integration the assignment of copulas, $\O$, must be determined by Algorithm~\ref{alg:finding_order_general}.
Therefore, we impose both restrictions on the class of PCBNs.

\begin{definition}[PCBN model]
\label{def:restricted_PCBN_model}
    Let $(\G,\O)$ be a PCBN.
    % where $\G$ contains no active cycles nor interfering v-structures and $\O$ is one possible collection of parental ordering  found by Algorithm~\ref{alg:finding_order_general}.
    Let 
    $\modelcopwv = \big\{ \cthetawv ;\, \thetavecwv \in \Thetavecwv \big\}$ be a collection of bivariate (conditional) pair-copula densities for each arc $w \ra v \in \E$.
    Let $\Thetavec := \bigtimes_{v \in \V} \bigtimes_{w \in \pa{v}} \Thetavecwv$, and for $\thetavec = (\thetavecwv)_{v \in \V, w \in \pa{v}} \in \Thetavec$, let
    \begin{equation}
    \label{eq:model_fac}
       c_{\thetavec}(\u_{\V})
       :=  \prod_{v\in \V} \prod_{w\in \pa{v}} 
       \cthetawvbig{
        u_{w | \parentsdown{v}{w}}}{
        u_{v | \parentsdown{v}{w}}}{
        u_{\parentsdown{v}{w}}},
    \end{equation}
    where $u_{w | \parentsdown{v}{w}}$,
    $u_{v | \parentsdown{v}{w}}$,
    $u_{\parentsdown{v}{w}}$ have been computed using the recursion of h-functions with the parameter $\thetavec$.
    Then, the collection of densities
    $\modelcop = \{c_\thetavec; \, \thetavec \in \Thetavec\}$
    % \begin{align*}
    %     \modelcop = \left\{\prod_{v\in \V} \prod_{w\in \pa{v}} c_{\thetavec_{w \ra v}}; \, c_{\thetavec_{w \ra v}} \in \modelcopwv \right\}
    % \end{align*}
    is a \textbf{PCBN model} on $[0,1]^{|V|}$ corresponding to the PCBN $(\G,\O)$.
    We say that this is a \textbf{restricted PCBN model} if $(\G,\O)$ does not require integration.
    % The tuple $(\G,\O,\P)$ is called a \textbf{restricted PCBN model}.
\end{definition}

% The terms $u_{w| \parentsdown{v}{w}}$ and $u_{v| \parentsdown{v}{w}}$ are computed with a recursion of h-functions.
% For example, computing a conditional margin $u_{2|1}$ requires the copula $c_{12}$, and therefore the parameter vector $\thetavec_{1\rightarrow 2}$ (assuming that $1\rightarrow 2 \in \E$).Thus, formally we should write $u_{2|1; \thetavec_{1\rightarrow 2}}$ in the joint density. However, to simplify the equation, this is omitted.

\medskip

Let us assume that we observe $N>1$ i.i.d. observations $\D = (\X^1, \dots, \X^N)$ from the density $f_V$, whose copula is assumed to belong to $\modelcop$ with parameter $\thetavectrue
= (\thetavecwvtrue)_{v \in \V, \, w \in \pa{v}}
\in \Thetavec$.
Let $\hat\D := (\hat\U^1, \dots, \hat\U^N)$
be the dataset after applying the marginal empirical cdfs component-wise.

\medskip

The estimation problems for the PCBN model will be similar as the ones encountered for the estimation of vine copulas models \cite{haff2013parameter}.
A naive way to estimate $\thetavectrue$ is by maximizing the pseudo-log-likelihood
\begin{align*}
    \ell(\thetavec ; \, \D) 
    &= \log \left(\prod_{i=1}^n \prod_{v \in \V} \prod_{w \in \pa{v}} 
    \cthetawvbig{
    \hat U_{w | \parentsdown{v}{w}}^i}{
    \hat U_{v | \parentsdown{v}{w}}^i}{
    \hat U_{\parentsdown{v}{w}}^i}\right) \\
    &=: \sum_{v\in \V} \sum_{w\in \pa{v}}
    \ell_{w \ra v}(\thetavecwv ; \, \Dwvhat),
\end{align*}
where $\Dwvhat := (
\hat U_{w | \parentsdown{v}{w}}^i,
\hat U_{v | \parentsdown{v}{w}}^i,
\hat U_{\parentsdown{v}{w}}^i
)_{i = 1, \dots, N}$.
This is difficult since $\Dwvhat$ depends on $\thetavec$, implicitly, via the recursion of h-functions. Therefore, as in the vine copula models~\cite{haff2013parameter}, we propose to estimate $\thetavec$ using a stepwise procedure:
for each $v \in \V$ and for every $w \in \pa{v}$, we estimate the copula $\cwvparents$ using the dataset $\Dwvhat$.
This dataset can be obtained easily, without integration in an iterative way assuming that the PCBN is restricted, and the stepwise procedure is done in the `right' order, i.e. using a well-ordering on the node set $V$
and the parental orderings $\O$.

\medskip

We follow \cite[Section 3.1]{haff2013parameter} and apply the framework presented in \cite{tsukahara2005semiparametric}.
To construct convergent and asymptotically normal estimators for parameters of PCBN models we propose to use stepwise estimating equations.
For every arc $w \ra v$, let $\phiwv$ be an $\R^{\dim(\Thetavecwv)}$-valued function on $[0,1]^2 \times [0, 1]^{|\parentsdown{v}{w}|} \times \Thetavecwv$. We estimate $\thetavecwvtrue$ by the rank approximate Z-estimator $\thetavecwvhat$ defined as a solution of
\begin{align}
    \sum_{i = 1}^N \phiwv(
    \hat U_{w | \parentsdown{v}{w}}^i,
    \hat U_{v | \parentsdown{v}{w}}^i,
    \hat U_{\parentsdown{v}{w}}^i,
    \thetavecwv) = 0.
    \label{eq:estimating_equation_thetavecwvhat}
\end{align}
where $\Dwvhat$ depends on the previously estimated parameters. We get that there exists a function
$\phiwvtilde: [0, 1]^{|V|} \times \R^{\dim(\Thetavec)}
\ra \R^{\dim(\Thetavecwv)}$ such that
\begin{align*}
    \phiwvtilde(\hat\U^i, \, \thetavec)
    = \phiwv(
    \hat U_{w | \parentsdown{v}{w}}^i,
    \hat U_{v | \parentsdown{v}{w}}^i,
    \hat U_{\parentsdown{v}{w}}^i,
    \thetavecwv).
\end{align*}
Moreover, this function $\phiwvtilde$ only depends on the $\hat U_j^i$ for $j$ earlier than $v$ in the well-ordering and parental ordering up to $w$.
This holds also for the components of $\thetavec$.
Therefore, the stepwise rank approximate Z-estimator
$\thetavechat = (\thetavecwvhat)_{v \in \V, \, w \in \pa{v}}$
is the solution of the estimating equations
\begin{align*}
    \sum_{i = 1}^N \phitilde(
    \hat U_1^i, \dots, \hat U_n^i, 
    \thetavec) = 0,
\end{align*}
where $\phitilde$ is the function obtained by concatenating all the outputs of $\phiwvtilde$, for $v \in \V$ and $w \in \pa{v}$.
The estimation procedure is summarized in \cref{alg:estimation_thetavecwvhat}.

\begin{algorithm}[H]
\caption{Computation of the stepwise rank approximate Z-estimator of $\thetavec$}
\label{alg:estimation_thetavecwvhat}
\begin{algorithmic}
\Require PCBN $(\G, \O)$, PCBN model $\modelcop$,
data $\D = (\X^1, \dots, \X^N)$
\State Compute the pseudo-observations $\hat\U^1, \dots, \hat\U^N$
\ForEach{node $v$ in $\V$ according to a well-ordering}
\ForEach{node $w$ in $\pa{v}$ according the order $\smallerbig{v}$}
    \State Compute the conditional margins
    $\hat U_{w | \parentsdown{v}{w}}^i$ and
    $\hat U_{v | \parentsdown{v}{w}}^i$, for $i = 1, \dots, N$, via the recursion \State of h-functions,
    using the previously estimated copula parameters.
    \State Compute $\thetavecwvhat$ as the solution of the Estimating Equation \eqref{eq:estimating_equation_thetavecwvhat}.
\EndFor
\EndFor
\State
\Return{$\thetavechat = (\thetavecwvhat)_{v \in \V, \, w \in \pa{v}}$}
\end{algorithmic}
\end{algorithm}

The following result is a direct application of Theorem~1 in~\cite{tsukahara2005semiparametric}.

\begin{theorem}
\label{thm:asymptotic_normality}
    Under classical conditions (A1)--(A5) in \cite{tsukahara2005semiparametric} on $\phitilde$, $\thetavectrue$ and $\PP$, there exists a positive definite matrix $A$ of size $\dim(\Theta)^2$ such that
    $\sqrt{n} (\thetavechat - \thetavectrue)$ converges in distribution, as $N \to \infty$, to a multivariate normal distribution with mean $0$ and covariance matrix $A$.
\end{theorem}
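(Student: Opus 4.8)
The plan is to recognize the stepwise rank approximate Z-estimator $\thetavechat$ as a single rank approximate Z-estimator in the sense of \cite{tsukahara2005semiparametric}, and then to invoke Theorem~1 therein. The first task is to rewrite the sequence of stepwise equations~\eqref{eq:estimating_equation_thetavecwvhat} as the one stacked system $\sum_{i=1}^N \phitilde(\hat U_1^i, \dots, \hat U_n^i, \thetavec) = 0$. This reformulation is already carried out in the discussion preceding the statement: the concatenated estimating function $\phitilde$ depends only on the pseudo-observations $\hat\U^i$ (obtained from the componentwise empirical cdfs) and on $\thetavec$. The structural observation I would emphasize is that each block $\phiwvtilde$ depends only on those components of $\thetavec$ indexed by arcs appearing earlier in the well-ordering, and in the parental ordering $\smallerbig{v}$ up to $w$. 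Consequently the Jacobian of $\phitilde$ with respect to $\thetavec$ is block lower-triangular, which guarantees that solving the stacked system is equivalent to solving the stepwise equations in order, so that the output of Algorithm~\ref{alg:estimation_thetavecwvhat} coincides with a zero of $\phitilde$.

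The second step is to check that the ingredients match the framework of \cite{tsukahara2005semiparametric}: the data $\hat\U^i$ are exactly the rank-based pseudo-observations used there, $\phitilde$ plays the role of their estimating function, and $\thetavechat$ is by construction the associated rank approximate Z-estimator. Here the restriction to integration-free PCBNs, imposed through \cref{def:restricted_PCBN_model}, plays an essential role. By \cref{thm:PCBN_main_theorem}, every conditional margin $\hat U_{w|\parentsdown{v}{w}}^i$ and $\hat U_{v|\parentsdown{v}{w}}^i$ is obtained as a finite composition of h-functions, that is, partial derivatives of the specified pair-copulas in $\modelcopwv$. Each such margin is therefore an explicit and smooth function of $\thetavec$ and of the pseudo-observations, with no integral operator involved. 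This is what allows the smoothness and differentiability requirements in conditions (A1)--(A5) to be meaningful and to be inherited from the regularity of the chosen pair-copula families.

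With these identifications in place, the conclusion follows by a direct application of Theorem~1 in \cite{tsukahara2005semiparametric}: under (A1)--(A5), which are assumed here on $\phitilde$, $\thetavectrue$ and $\PP$, the estimator is eventually well defined and satisfies $\sqrt{N}(\thetavechat - \thetavectrue) \to \mathcal{N}(0, A)$ in distribution, for a positive definite asymptotic covariance matrix $A$ of the stated size. Since the regularity conditions are assumed rather than verified, there is no genuine analytic obstacle; the point requiring the most care is the legitimacy of the stacked formulation, namely that the block lower-triangular dependence of $\phitilde$ on $\thetavec$ simultaneously renders the stepwise scheme consistent with a single Z-estimation problem and ensures that the limiting matrix $A$ is well defined and positive definite, its triangular structure propagating through the sandwich formula of \cite{tsukahara2005semiparametric}.
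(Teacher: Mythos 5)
Your proposal is correct and takes essentially the same route as the paper: the stepwise equations are stacked into the single system $\sum_{i=1}^N \phitilde(\hat\U^i, \thetavec) = 0$ (legitimate precisely because each block $\phiwvtilde$ depends only on pseudo-observations and parameter components earlier in the well-ordering and parental ordering), and Theorem~1 of \cite{tsukahara2005semiparametric} is then applied directly. The only inaccuracy is your claim that the integration-free restriction plays an ``essential role'': as the paper remarks right after the theorem, the result also holds for unrestricted PCBNs, the restriction affecting only the computational cost of obtaining the conditional pseudo-observations, not the validity of the asymptotic argument.
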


Note that \cref{thm:asymptotic_normality} also holds if the PCBN is not restricted. In this case, however, computation of the pseudo-observations becomes much more computationally costly, since integration on potentially high-dimensional spaces will be required.

\medskip

\cref{thm:asymptotic_normality} shows the consistency and asymptotic normality of the stepwise pseudo-maximum likelihood estimator, using the estimating functions
\begin{align*}
    \phiwv(\hat\U^i, \, \thetavecwv )
    = \nabla_\thetavecwv \cthetawvbig{
    \hat U_{w | \parentsdown{v}{w}}^i}{
    \hat U_{v | \parentsdown{v}{w}}^i}{
    \hat U_{\parentsdown{v}{w}}^i} /
    \cthetawvbig{
    \hat U_{w | \parentsdown{v}{w}}^i}{
    \hat U_{v | \parentsdown{v}{w}}^i}{
    \hat U_{\parentsdown{v}{w}}^i}
\end{align*}
This holds under usual conditions for pseudo-maximum likelihood estimators (domination condition on the derivatives of $\phitilde$, integrability condition of $\phitilde$, identifiability of the model and existence of a nonsingular Fisher information matrix). Other estimation techniques are also included in this framework, such as estimation by inversion of Kendall's tau.

\medskip

To get practical insights about the performance of these estimation techniques, a simulation example is presented in the next section.
% Appendix~\ref{sec:Simulation_Estimation}.

% \section{Choice of the order}

% %Now that we have established how to estimate $\thetavec$, we move on to the estimation of $\O$.
% %For larger graphs, the number of possible orderings can be extremely large.
% %Therefore, it would be beneficial to find a heuristic which finds the best choice of $\O$ in an intelligent manor.
% %However, we simply check every possible choice of $\O$, and choose $\O$ such that it has the highest maximum likelihood or the lowest minimal $AIC$ or $BIC$.
% The steps described above provide us with the algorithm displayed below.
% %The estimated assignment of copulas and parameter vector are denoted by 
% %$(\thetaloglik, \Ologlik )$, $(\thetaAIC, \OAIC)$ or $(\thetaBIC, \OBIC)$ depending on the chosen selection criterion.

% \begin{algorithm}[H]
% \caption{estimation of $\thetavec$ and $\O$}
% \label{alg:estimation_of_order_and_params}
% \begin{algorithmic}
% \Require restricted DAG $\G$, data $\D$, selection criterion $s()$
% \Ensure estimated set of orders and parameter vector $(\hat{\O},\hat{\thetavec})$
% \ForEach{set of orders $\O$ found by Algorithm~\ref{alg:finding_order_general}}
%     \State $\thetavec(\O) \gets \underset{\thetavec}{\argmax}\,\big[ s\big( \thetavec;\, \G,\O,\D \big) \big]$
% \EndFor
% \State $\hat{\O} \gets \underset{\O}{\argmax}\, \big[ s \big( \thetavec(\O);\, \G,\O,\D \big) \big]$
% \State $\hat{\thetavec} \gets \thetavec(\hat{\O})$
% \State
% \Return{$(\hat{\O},\hat{\thetavec})$}
% \end{algorithmic}
% \end{algorithm}

\subsection{Small simulation study}
\label{sec:Simulation_Estimation}

In this section, we show that Algorithm~\ref{alg:estimation_thetavecwvhat}
can accurately estimate the parameters given a data set generated from a known PCBN.
We study a particular PCBN with graphical structure and assignment of copulas as in Figure~\ref{fig:Estimation_True_PCBN} and parameters from Table~\ref{tab:Estimation_True_Copulas}. 

\begin{minipage}{0.45\textwidth}
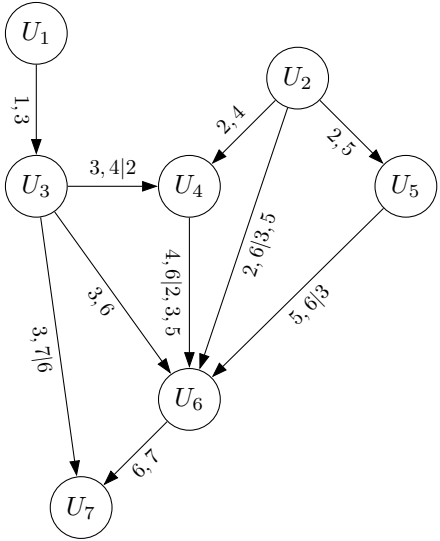
\begin{figure}[H]
    \centering
    \begin{tikzpicture}[scale=1, transform shape, node distance=1.2cm, state/.style={circle, draw=black}]

    \node[state] (X1) {$U_1$};
    \node[state, below = of X1] (X3) {$U_3$};
    \node[state, right = of X3] (X4) {$U_4$};
    \node[state, above right = of X4] (X2) {$U_2$};
    \node[state, below right = of X2] (X5) {$U_5$};
    \node[state, below = of X4, yshift = -0.8cm] (X6) {$U_6$};
    \node[state, below left = of X6] (X7) {$U_7$};
    
    \begin{scope}[>={Stealth[length=6pt,width=4pt,inset=0pt]}]
    
    \path [->] (X1) edge node[below, sloped, scale=0.8] {$1,3$} (X3);
    
    \path [->] (X2) edge node[below, sloped, above, scale=0.8] {$2,4$} (X4);
    \path [->] (X2) edge node[below, sloped, scale=0.8] {$2,5$} (X5);
    \path [->] (X2) edge node[below, sloped, scale=0.8] {$2,6|3,5$} (X6);

    \path [->] (X3) edge node[below, sloped, above, scale=0.8] {$3,4|2$} (X4);
    \path [->] (X3) edge node[below, sloped, scale=0.8] {$3,6$} (X6);
    \path [->] (X3) edge node[below, sloped,  scale=0.8] {$3,7|6$} (X7);

    \path [->] (X4) edge node[below, sloped, scale=0.8] {$4,6|2,3,5$} (X6);

    \path [->] (X5) edge node[below, sloped, scale=0.8] {$5,6|3$} (X6);

    \path [->] (X6) edge node[below, sloped, scale=0.8] {$6,7$} (X7);
    \end{scope}
    \end{tikzpicture}\hspace{5mm}
    \caption{PCBN used for the simulation studies.}
    \label{fig:Estimation_True_PCBN}
\end{figure}
\end{minipage}\hfill
\begin{minipage}{0.45\textwidth}
\begin{table}[H]
    \centering
    \begin{tabular}{cccc}
         Arc & Copula & Family & Kendall's $\tau$ \\
         \hline
        $1 \ra 3$ & $c_{1,3}$ & Gumbel & 0.6 \\
        $2 \ra 4$ & $c_{2,4}$ & Joe & 0.8 \\
        $3 \ra 4$ & $c_{3,4|2}$ & Gumbel & 0.6 \\
        $2 \ra 5$ & $c_{2,5}$ & Frank & 0.7 \\
        $3 \ra 6$ & $c_{3,6}$ & Joe & 0.9 \\
        $5 \ra 6$ & $c_{5,6|3}$ & Frank & 0.6 \\
        $2 \ra 6$ & $c_{2,6|3,5}$ & Frank & 0.85 \\
        $4 \ra 6$ & $c_{4,6|2,3,5}$ & Gumbel & 0.75 \\
        $6 \ra 7$ & $c_{6,7}$ & Gumbel & 0.65 \\
        $3 \ra 7$ & $c_{3,7|6}$ & Joe & 0.55 \\
    \end{tabular}
    \caption{The copula families and parameters of the PCBN in Figure~\ref{fig:Estimation_True_PCBN}.}
    \label{tab:Estimation_True_Copulas}
\end{table}
\hspace{1cm}
\end{minipage}

\medskip

We study the influence of the sample size $n$, with possible values 
$10 , 20 , 50 , 80 ,
100 , 200, 500, 800,
1000$ on the mean square error of the estimated parameters.
Since the copula families that are studied are different, we reparametrize them by their Kendall's tau.

\medskip

We study two possible estimation methods for the parameter of the copula: by maximum likelihood and by inversion of Kendall's tau. In both cases, we distinguish between known margins, and unknown margins, estimated non-parametrically via their rank (by the function \textit{pobs()} from the \textit{VineCopula} package \cite{RVineCopula}).
Since this article does not focus on model selection, we assume that the graph structure, orders and copula families are known. Model selection for PCBN will be treated in a future work.

\medskip

The mean-square error (MSE) of an estimator is then defined as the average squared difference between the true value of the parameter and its estimate. We do $100$ replications to estimate the MSE.
The estimation results are presented on Figure~\ref{fig:MSE_n}.

\begin{figure}[p]
    \centering
    \includegraphics[width=0.9\linewidth]{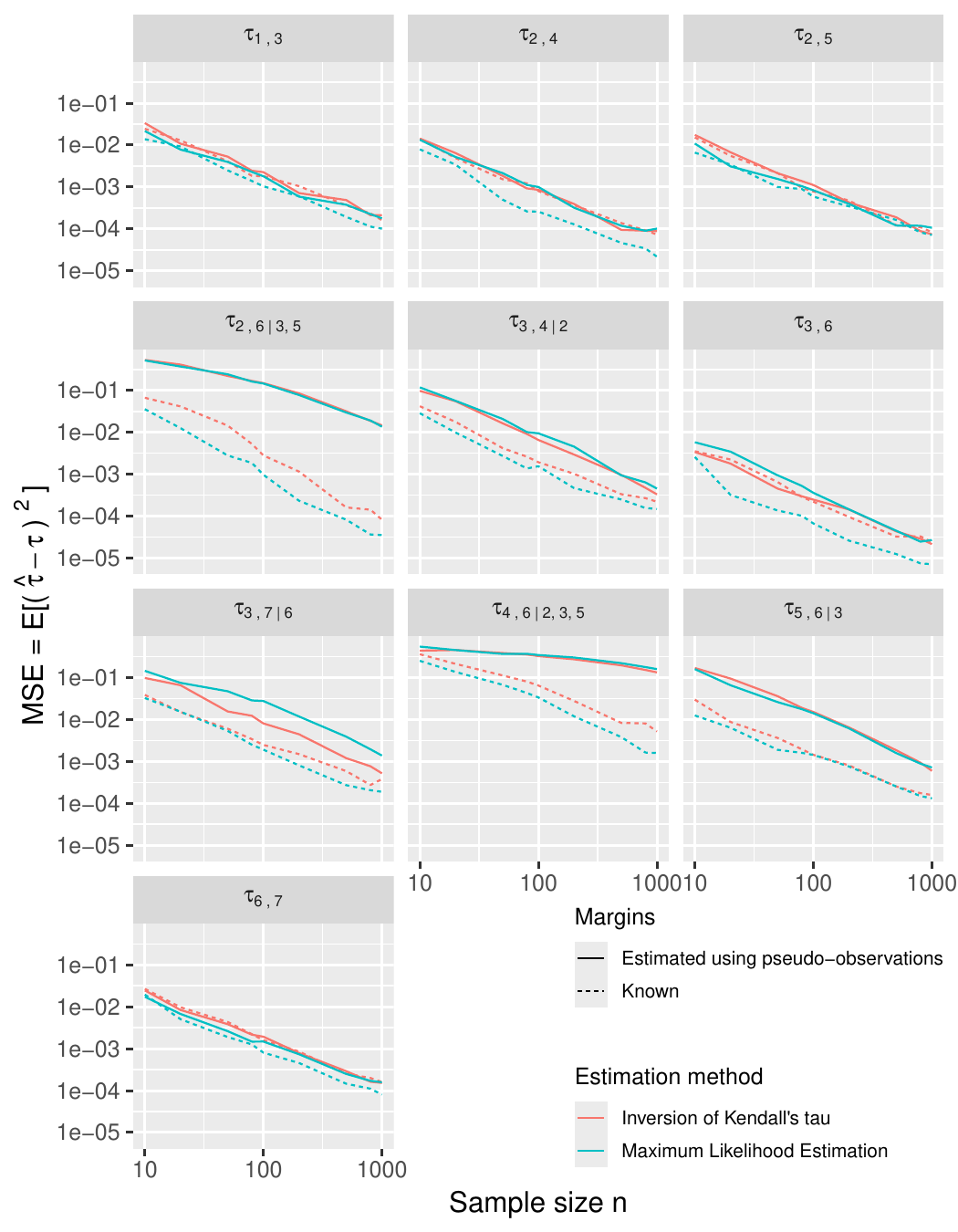}
    \caption{MSE as a function of the sample size, for different nodes of the PCBN.}
    \label{fig:MSE_n}
\end{figure}

\medskip

We can observe that our estimators are all converging (eventually) at the rate $1 / \sqrt{n}$.
But the time to reach this regime and the constant multiplicative factor can be very different.

\medskip

Estimating unconditional copulas is  relatively easy and the estimate converge fast to the true values.
However, for larger conditioning sets we can see a slower convergence of the estimates. This happens  even in the case where the simplifying assumption is made, both in the model specification and in the estimation procedure.
The observation is in line with the recent results of \cite{gauss2024asymptotics} about the asymptotics of statistical models with diverging number $p$ of parameters (meaning that $p = p_n \to \infty$).
Indeed, when the number of parameters increases (i.e., here when we are adding arcs to our graph), the sample size necessary to reach a certain accuracy needs to be larger.

\medskip

Interestingly, the knowledge of the margins seems to have a stronger influence on the MSE for the estimation of the copulas with larger conditioning sets.

% \printbibliography[heading=bibintoc]
\bibliographystyle{abbrv}
\bibliography{main}{}

\newpage
\appendix
\renewcommand{\thesection}{\Alph{section}}

\section{Results from \myciteintitle{derumigny2025MinimalTrails}}

In this paper, we required some results concerning properties of minimal trails in restricted DAGs, Such results are presented and proved in the paper \cite{derumigny2025MinimalTrails}. We summarize these results here to keep the paper self-contained. 

\medskip

\begin{theorem}
\label{lemma:po_active_cycle1}
    Let $\G$ be a DAG with no active cycles and let $v_1,v_2\in \V$ such that $v_1\ra v_2$.
    Suppose that 
      \begin{equation}\label{trail:subtrail2}
        v_1\la x_1 \har \cdots \har x_n \har v_2 
    \end{equation}
    is a shortest trail activated by the empty set starting with an arc $v_1 \la x_1$.
    Assume that $n \geq 1$.
    Then, for all $i \in \{1, \dots, n\}$,
    $x_i \ra x_{i+1}$ with the convention that $x_{n+1} := v_2$,
    and for all $i \in \{2, \dots, n\}$,
    $v_1 \ra x_{i}$ .
    
    This means that $\G$ contains the subgraph below.
    $$
    \begin{tikzpicture}[scale=0.8, transform shape, node distance=1.2cm, state/.style={circle, font=\Large, draw=black}]
    \node (v1) {$v_1$};
    \node[right= of v1] (v2) {$v_2$};
    \node[above left= of v1] (x2) {$x_2$};
    \node[above right= of v2] (xn-1) {$x_{n-1}$};
    \node[left = of x2] (x1) {$x_1$};
    \node[right = of xn-1] (xn) {$x_n$};

    \begin{scope}[>={Stealth[length=4pt,width=3pt,inset=0pt]}]
    \path [->] (v1) edge node {} (v2);
    \path [->] (v1) edge[bend right=0] node {} (x2);
    \path [->] (v1) edge node {} (xn-1);
    \path [->] (v1) edge node {} (xn);
    \path [->] (xn) edge node {} (v2);
    \path [->] (x1) edge node {} (v1);
    \path [->] (x1) edge node {} (x2);
    \path [->] (xn-1) edge node {} (xn);
    \draw[loosely dotted, line width=0.4mm] (x2) -- (xn-1);
    \end{scope}
    \end{tikzpicture}
    $$ 
    Furthermore, the theorem also holds for shortest trails activated by the empty set and of the form 
    \begin{equation}
    \label{trail:subtrail1}
        v_1\la x_1 \har \cdots \har x_n \ra v_2
    \end{equation}
    with $n\geq 1$.
\end{theorem}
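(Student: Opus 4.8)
The plan is to induct on the length $n$ of the trail, peeling off the node $x_n$ adjacent to $v_2$. Throughout I use that a trail is activated by the empty set exactly when it contains no converging connection: reading \cref{def:graph_dsep} with empty conditioning set, every v-structure blocks while every non-collider lets the trail pass. Consequently the arcs of \eqref{trail:subtrail2} split into a (possibly empty) initial stretch pointing back towards $v_1$ followed by a stretch pointing forward towards $v_2$, meeting at a single source node; the first arc $v_1 \la x_1$ being backward only records that the source is not $v_1$ itself.

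For the base case $n=1$ the trail is $v_1 \la x_1 \har v_2$ and I must show $x_1 \ra v_2$. If instead $v_2 \ra x_1$, then together with $x_1 \ra v_1$ and the hypothesis $v_1 \ra v_2$ I obtain the directed cycle $v_1 \ra v_2 \ra x_1 \ra v_1$, contradicting that $\G$ is acyclic; no chord claims arise here.

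For the inductive step ($n \ge 2$) I first fix the last arc. If $v_2 \ra x_n$, the no-converging property forces $x_n \ra x_{n-1}$, and propagating this backward along the trail yields $v_2 \ra x_n \ra x_{n-1} \ra \cdots \ra x_1 \ra v_1$; with $v_1 \ra v_2$ this is again a directed cycle, so $x_n \ra v_2$. Next I claim $v_1$ and $x_n$ are adjacent. If not, then $v_1$ and $x_n$ are two non-adjacent parents of $v_2$ joined by the no-converging trail $v_1 \la x_1 \har \cdots \har x_n$; taking a shortest trail activated by the empty set between them and closing it through $v_2$ produces a chordless cycle of exactly the shape required by \cref{def:active_cycle}, i.e.\ an active cycle, contradicting the hypothesis. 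Once adjacency is known, the arc must be $v_1 \ra x_n$: were it $x_n \ra v_1$, the trail $v_1 \la x_n \ra v_2$ would be a strictly shorter backward-starting activated trail from $v_1$ to $v_2$, contradicting minimality. Finally, $v_1 \la x_1 \har \cdots \har x_n$ is itself a shortest backward-starting activated trail from $v_1$ to $x_n$ (any shorter one, extended by $x_n \ra v_2$, would beat the original), so applying the induction hypothesis to the arc $v_1 \ra x_n$ and this sub-trail yields $x_i \ra x_{i+1}$ for $i \le n-1$ and $v_1 \ra x_i$ for $2 \le i \le n-1$; together with $x_n \ra v_2$ and $v_1 \ra x_n$ this is the full conclusion. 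The variant \eqref{trail:subtrail1} is handled identically, its last arc $x_n \ra v_2$ being given rather than derived.

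The main obstacle is the adjacency claim. Pinning down individual arc orientations is cheap (acyclicity and minimality), but upgrading a ``wrongly routed'' connecting trail to a genuine active cycle requires the closing cycle to be chordless. Chords incident to $v_2$ are easy to exclude, since such a chord would either shorten the principal trail or create a directed cycle; the delicate part is removing interior chords by repeatedly shortcutting while preserving both the backward start and the absence of converging connections. This shortest-/minimal-trail bookkeeping is precisely what is developed in \cite{derumigny2025MinimalTrails}, on which the statement relies.
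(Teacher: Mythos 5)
Your induction skeleton is sound: the base case, the orientation of the last arc via acyclicity, the derivation of $v_1 \ra x_n$ from minimality once adjacency is known, and the application of the induction hypothesis to the sub-trail $v_1 \la x_1 \har \cdots \har x_n$ are all correct (note that the paper itself gives no proof of this statement --- it imports it from \cite{derumigny2025MinimalTrails} --- so your argument must stand on its own). The genuine gap is the adjacency claim, which is the crux of the whole theorem. You assert that a shortest trail activated by the empty set between $v_1$ and $x_n$, closed through $v_2$, is chordless. Shortestness does exclude chords among the nodes of that trail itself, but it does not exclude chords joining $v_2$ to its interior nodes, and \cref{def:active_cycle} forbids those as well. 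Your fallback (``such a chord would either shorten the principal trail or create a directed cycle'') fails in the following case: the shortest activated trail between $v_1$ and $x_n$ is a directed path $v_1 \ra z_1 \ra \cdots \ra z_m \ra x_n$ and some interior node satisfies $z_i \ra v_2$. No directed cycle arises, and the shortcut $v_1 \ra z_1 \ra \cdots \ra z_i \ra v_2$, although activated and strictly shorter, starts with a \emph{forward} arc, so it does not contradict the minimality of the principal trail, which is minimal only within the class of trails starting with $v_1 \la x_1$. Hence the cycle you construct carries the chord $z_i \ra v_2$, it is not an active cycle, and no contradiction follows. (If instead you insist on a backward-starting $v_1$--$x_n$ trail, the unexcludable chords become those of the form $v_1 \ra z_j$ --- which can never be ruled out, since the theorem itself asserts such arcs exist.)

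Moreover, the unqualified principle you invoke --- two non-adjacent parents of $v_2$ joined by an activated trail force an active cycle --- is simply false: take nodes $z, w, u, v_2$ with arcs $z \ra w$, $z \ra u$, $z \ra v_2$, $w \ra v_2$, $u \ra v_2$; then $w$ and $u$ are non-adjacent parents of $v_2$ joined by the shortest activated trail $w \la z \ra u$, yet the graph contains no active cycle, precisely because the closing cycle has the chord $z \ra v_2$. This is why the imported results impose that the connecting trail avoid $\pa{v_2}$ (compare the hypothesis $\{x_i\}_{i=1}^n \cap \pa{v_3} = \emptyset$ in \cref{lemma:v1_trail_to_v2_parents_of_v3}); interior nodes that are parents of $v_2$ must then be handled by an additional extremal or recursive argument (e.g.\ passing to the smaller pair $(v_1, z_i)$, or to the sub-cycle through the largest $j$ with $v_1 \ra z_j$). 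Your proposal does not contain this argument, and deferring it to the ``bookkeeping'' of \cite{derumigny2025MinimalTrails} is circular, because the statement you are proving is itself a result of that paper.
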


\begin{theorem}\label{lemma:v1_trail_to_v2_parents_of_v3}
    Let $\G$ be a DAG with no active cycles
   % , nor interfering v-structures
    and let $v_1,v_2,v_3 \in \V$ such that $v_1, v_2\in \pa{v_3}$. 
    Suppose that $v_1$ and $v_2$ are connected by a trail
    \begin{equation}\label{trail:trail_start}
        v_1 \har x_1 \har \cdots \har x_n \ra v_2
    \end{equation}
    activated by the empty set with $\{x_i \}_{i=1}^n \cap \pa{v_3}=\emptyset$ and $n \geq 1$.
    If this is a shortest such trail, then $\G$ contains the subgraph below, with the convention $x_0:=v_1$.
    
   $$
    \begin{tikzpicture}[scale=0.8, transform shape, node distance=1.2cm, state/.style={circle, font=\Large, draw=black}]
    \node (v3) {$v_3$};
    \node[above left = of v3] (v1) {$v_1$};
    \node[above right = of v3] (v2) {$v_2$};
    \node (help) at ($(v1)!0.5!(v2)$) {};
    \node[above = of help] (xm) {};
    \node[left  = of xm] (xm-1) {};
    \node[left = of xm-1] (x2) {$x_2$};
    \node[left = of x2] (x1) {$x_1$};
    \node[right = of xm] (xm+1) {};
    \node[right = of xm+1] (xn-1) {$x_{n-1}$};
    \node[right = of xn-1] (xn) {$x_n$};

    \begin{scope}[>={Stealth[length=4pt,width=3pt,inset=0pt]}]
        \path [->] (v1) edge node[scale=0.8, above left] {} (v3);
         \path [->] (v2) edge node[scale=0.8, above left] {} (v1);
        \path [->] (v2) edge node[scale=0.8, above left] {} (v3);
        \path [->] (v2) edge node[scale=0.8, above left] {} (x2);
        \path [->] (v2) edge node[scale=0.8, above left] {} (xn-1);
        \path [->] (xn) edge node[scale=0.8, above left] {} (v2);
        \path [->] (x2) edge node {} (x1);
        \path [->] (xn) edge node {} (xn-1);
        \path [->] (x1) edge node {} (v1);

        \draw[loosely dotted, line width=0.4mm] (x2) -- (xn-1);
    \end{scope}
    \end{tikzpicture}
    $$
\end{theorem}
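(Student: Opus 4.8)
The plan is to reduce the statement to Theorem~\ref{lemma:po_active_cycle1} by reversing and relabelling the trail. Reading the trail backwards and setting $V_1 := v_2$, $V_2 := v_1$ and $X_i := x_{n+1-i}$, the hypothesis $x_n \ra v_2$ becomes an arc $V_1 \la X_1$, so $V_1 \la X_1 \har \cdots \har X_n \har V_2$ is a trail activated by the empty set starting with an arc into $V_1$. If in addition I can establish the chord $V_1 \ra V_2$, i.e. $v_2 \ra v_1$, and that this reversed trail is a shortest activated trail of the required form, then Theorem~\ref{lemma:po_active_cycle1} yields $X_i \ra X_{i+1}$ (with $X_{n+1} := V_2$) and $V_1 \ra X_i$ for $i \geq 2$. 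Translating back gives exactly $x_{i+1} \ra x_i$, $x_1 \ra v_1$ and $v_2 \ra x_i$ for $i \leq n-1$, which together with the given arcs $v_1 \ra v_3$ and $v_2 \ra v_3$ is the claimed subgraph. Throughout I use that a trail activated by the empty set has no converging connection (Lemma~\ref{lemma:trail_activ_emptyset}), so every internal node is a serial or diverging connection.

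The real work is therefore to produce the chord $v_2 \ra v_1$. First I would form the cycle $C : v_3 \la v_1 \har x_1 \har \cdots \har x_n \ra v_2 \ra v_3$. Its middle part $v_1 \har \cdots \har v_2$ has no converging connection and both endpoints are parents of $v_3$, so by Definition~\ref{def:active_cycle} the only way $C$ can fail to be an active cycle --- which it must, since $\G$ has none --- is to carry a chord. I would then classify the chords. A chord between two nodes of $\{v_1, x_1, \dots, x_n\}$ is excluded by minimality: the associated shortcut is a strictly shorter trail whose internal nodes still lie in $\{x_1, \dots, x_n\}$, hence still activated by the empty set and still disjoint from $\pa{v_3}$ (here I reuse the shortcut/activation bookkeeping of \cite{derumigny2025MinimalTrails} behind Theorem~\ref{lemma:po_active_cycle1}). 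A chord incident to $v_3$ must be $v_3 \ra x_i$, since $x_i \notin \pa{v_3}$, and is ruled out by a short acyclicity/minimality argument. This leaves only chords of the form $v_2 \har x_j$ and $v_1 \har v_2$.

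To force the chord $v_1 \har v_2$ itself, I would take, among the chords $v_2 \har x_j$, the one with smallest index $j$ (if none exists, $C$ already carries a chord, which by the exclusions above can only be $v_1 \har v_2$). The sub-cycle $C'' : v_3 \la v_1 \har x_1 \har \cdots \har x_j \har v_2 \ra v_3$ again joins two parents of $v_3$ through a no-converging middle part, hence is not active and must carry a chord; by minimality of $j$ and the exclusions above, that chord can only be $v_1 \har v_2$. Finally I orient it: an arc $v_1 \ra v_2$ is impossible, because the single-arc trail $v_1 \ra v_2$ would be an activated trail from $v_1$ to $v_2$ ending with an arc into $v_2$ and with no internal node, hence strictly shorter than the given trail, contradicting minimality. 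Therefore the chord is $v_2 \ra v_1$, and the reduction of the first paragraph applies.

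The main obstacle is this chord analysis. Establishing $v_2 \ra v_1$ relies on the nested-cycle argument above, and in particular on checking that every shortcut remains activated by the empty set (the converging-connection bookkeeping at the splice node $x_j$) and disjoint from $\pa{v_3}$, as well as on excluding the chords $v_3 \ra x_i$. A secondary point to verify is that the minimality hypothesis really transfers to the reversed trail: since every competing shortcut uses only the nodes $x_1, \dots, x_n$, all outside $\pa{v_3}$, the ``shortest in its class'' assumption we are given coincides with the local minimality that Theorem~\ref{lemma:po_active_cycle1} consumes, so no stronger global minimality is needed.
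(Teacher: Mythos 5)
The paper itself contains no proof of this statement: it is imported verbatim from \cite{derumigny2025MinimalTrails}, so there is no in-paper argument to compare yours against, and I can only judge your proposal on its own terms. Your strategy is the natural one --- the claimed subgraph is exactly the conclusion of Theorem~\ref{lemma:po_active_cycle1} applied to the reversed trail once the chord $v_2 \ra v_1$ is available, and forcing that chord via Definition~\ref{def:active_cycle} (no active cycles, hence the cycle through $v_3$ must carry a chord, then eliminate all candidates except $v_1 \har v_2$) is the right mechanism; your orientation step is also fine, provided the minimality class is read as containing the degenerate trail $v_1 \ra v_2$, which is indeed the only reading under which the theorem is true. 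However, there are two genuine gaps. The more serious one is the final appeal to Theorem~\ref{lemma:po_active_cycle1}: its hypothesis requires the reversed trail to be shortest among \emph{all} trails activated by the empty set starting with an arc into $v_2$, while you only know minimality within the class of trails whose interior avoids $\pa{v_3}$. A shorter competitor passing through $\pa{v_3}$ is not excluded by anything you establish; for instance, a common parent $p \in \pa{v_3}$ of $v_1$ and $v_2$ would give the length-two competitor $v_2 \la p \ra v_1$, which lies in exactly the class that Theorem~\ref{lemma:po_active_cycle1} quantifies over. Your patch --- that ``every competing shortcut uses only the nodes $x_1,\dots,x_n$'' --- is an assertion about the unseen proof in \cite{derumigny2025MinimalTrails} and cannot be read off from the statement of Theorem~\ref{lemma:po_active_cycle1}. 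The paper supplies the precise tool to close this hole, namely Lemma~\ref{lemma:generalisation_to_K} with $K = \V \setminus \pa{v_3}$ (noting that the class of DAGs without active cycles is closed under vertex removal, since induced subgraphs preserve both cycles and chords); your proof should invoke it explicitly rather than gesture at the companion paper's bookkeeping.

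The second gap is in the nested-cycle step. You assert that $v_3 \la v_1 \har x_1 \har \cdots \har x_j \har v_2 \ra v_3$ ``again joins two parents of $v_3$ through a no-converging middle part''. That is not automatic: if the chord is $v_2 \ra x_j$ while the trail arc is $x_{j-1} \ra x_j$, then $x_j$ is a converging node of this cycle, condition (i) of Definition~\ref{def:active_cycle} fails, and the no-active-cycle hypothesis yields nothing. To repair it you must first classify the chords at $v_2$: chords $x_j \ra v_2$ with $j < n$ die by minimality (the shortcut still ends with an arc into $v_2$ and remains activated), and chords $v_2 \ra x_j$ with $x_j$ on the part of the trail carrying a directed path $x_j \ra \cdots \ra x_n \ra v_2$ die by acyclicity (recall that, by Lemma~\ref{lemma:trail_activ_emptyset}, the trail has no converging connections, hence has the single-peak form $v_1 \la \cdots \la x_a \ra \cdots \ra x_n \ra v_2$). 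Only after these exclusions do the surviving chords $v_2 \ra x_j$, with $x_j$ strictly on the $v_1$-side of the peak, enter your sub-cycle as serial connections, and only then does your argument close. Both gaps are repairable with techniques you already deploy, but as written the proof is incomplete at precisely the two points where the real work of the theorem sits.
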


\medskip

We will discuss trails between nodes, e.g. $x_0$ and $x_{n+1}$, for which all nodes on the trail are included in a certain subset $K \subseteq \V$.
In this case we say that the trail consists only of elements of $K$.
This does not include the end-points ($x_0$ and $x_{n+1}$), i.e. these end-points may or may not be in $K$.

\begin{definition}
    Let $\G=(\V, \E)$ be a DAG, let $K \subseteq \V$,
    and let 
    $x_0 \har x_1 \har \cdots \har x_{n+1}$
    be a trail.
    We say that the trail consists only of elements of $K$ if $\forall i = 1, \dots, n$, $x_i \in K$.
\end{definition}

We discuss the case when a shortest trail satisfying a certain property also satisfies a second property.
% For example, in Lemma~\ref{lemma:no_chords} we show that a shortest trail activated by the empty set does not have chord.
Let us first formalize what is meant by a property of a trail.

\begin{definition}[Trail property]
\label{def:trail_property}
    Let $\G$ be a DAG containing a trail $x_0 \har x_1 \har \cdots \har x_{n+1}$.
    A \textbf{property} $\prop := \prop(x_0, \dots, x_{n+1})$ specifies the existence of certain arcs between the nodes on the trail.
    Here, we mean that $\prop$ states that $\E$ contains a certain set of arcs $\{x_i \ra x_j; \, i \in I, j \in J\}$ with $I,J \subseteq \{0, 1, \dots,  n+1\}$.

    For instance, the following are regarded as trail properties:
    \begin{itemize}
        \item The first arc of the trail points to the left; $x_0 \la x_1$.
        \item The $i$-th and $j$-th node on the trail are adjacent; $x_i \har x_j$.
        \item The trail is of the form $x_0 \la x_1 \ra x_2 \ra \cdots \ra x_{n-1} \ra x_n$, and we have that $x_0 \ra x_{n-1}$. 
    \end{itemize}
\end{definition}

\begin{lemma}
    \label{lemma:generalisation_to_K}
    For a DAG $\G$ in $\mathcal{G}$,
    for a trail
    \begin{align}
        x_0 \har x_1 \har \cdots \har x_{n+1},
        \label{trail:lemma:generalisation_to_K}
    \end{align}
    let $\prop_1(x_0, x_1, \dots, x_{n+1})$ 
    and $\prop_2(x_0, x_1, \dots, x_{n+1})$
    be two properties.
    Let $\mathcal{G}$ be a set of DAGs such that
    \begin{itemize}
        \item for any DAG $\G = (\V, \E) \in \mathcal{G}$, for any $x_0, x_{n+1} \in \V$,
        and for any shortest trail \eqref{trail:lemma:generalisation_to_K} between $x_0$ and $x_{n+1}$ that satisfies $\prop_1$,
        the property $\prop_2$ holds.

        \item if $G$ belongs to $\mathcal{G}$, then any graph obtained by removing vertices from $G$ also belong to $\mathcal{G}$.
    \end{itemize}

    Let $\G = (\V, \E)$ be a DAG in $\mathcal{G}$, let $K \subseteq \V$. 
    Then for any shortest trail between $x_0$ and $x_{n+1}$ that satisfies $\prop_1$ and that consists only of elements of $K$, the property $\prop_2$ still holds.
\end{lemma}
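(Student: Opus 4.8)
The plan is to reduce the constrained statement (trails confined to $K$) to the unconstrained hypothesis by passing to an induced subgraph. Given a shortest trail $T : x_0 \har x_1 \har \cdots \har x_{n+1}$ between $x_0$ and $x_{n+1}$ that satisfies $\prop_1$ and consists only of elements of $K$, I would form the subgraph $\G'$ of $\G$ induced by $K \cup \{x_0, x_{n+1}\}$, that is, the graph obtained from $\G$ by deleting every vertex outside $K \cup \{x_0, x_{n+1}\}$. Since $\mathcal{G}$ is closed under vertex removal (the second bullet of the hypothesis), we have $\G' \in \mathcal{G}$, so the unconstrained implication ``$\prop_1$ for a shortest trail $\Rightarrow \prop_2$'' is available inside $\G'$.

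The first key step is to verify that the trails between $x_0$ and $x_{n+1}$ in $\G'$ are exactly the trails between $x_0$ and $x_{n+1}$ in $\G$ that consist only of elements of $K$. Indeed, the intermediate nodes of any such trail are distinct from its endpoints and therefore lie in $K$, so they survive in $\G'$; conversely, since $\G'$ is an induced subgraph, all arcs of $\G$ between vertices of $\G'$ are retained, so any $\G'$-trail is a $\G$-trail whose internal nodes lie in $K$. In particular $T$ is itself a trail in $\G'$.

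The second key step is that $T$ is a \emph{shortest} trail in $\G'$ between $x_0$ and $x_{n+1}$ satisfying $\prop_1$. By the correspondence above, any strictly shorter $\G'$-trail satisfying $\prop_1$ would be a $\G$-trail satisfying $\prop_1$, consisting only of elements of $K$, and shorter than $T$, contradicting the minimality of $T$. Here I use that $\prop_1$, being the assertion that certain arcs are present among the nodes of the trail, holds in $\G'$ if and only if it holds in $\G$, again because $\G'$ is induced. Applying the hypothesis to $\G' \in \mathcal{G}$ and the shortest trail $T$ then yields that $\prop_2$ holds in $\G'$.

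Finally, since $\prop_2$ asserts only the presence of certain arcs $x_i \ra x_j$ among nodes lying on $T$ — all of which are vertices of $\G'$ — the agreement of the arc sets of $\G$ and $\G'$ on these vertices transfers $\prop_2$ back to $\G$, completing the argument. I expect the only care needed to be in the trail-correspondence step: making precise that ``consists only of elements of $K$'' matches exactly the vertex set of the induced subgraph (with the endpoints included even if they lie outside $K$), and that both $\prop_1$ and $\prop_2$, being arc-existence properties on trail nodes, are insensitive to passing between $\G$ and its induced subgraph.
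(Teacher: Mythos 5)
Your proof is correct and is essentially the intended argument: this lemma is one of the results the paper imports from the companion work \cite{derumigny2025MinimalTrails}, and the induced-subgraph technique you use (pass to $\G$ restricted to $K \cup \{x_0, x_{n+1}\}$, invoke closure of $\mathcal{G}$ under vertex removal, identify shortest $K$-confined trails in $\G$ with shortest trails in the induced subgraph, apply the unconstrained hypothesis, and transfer the arc-existence properties back) is exactly how the present paper itself deploys this lemma, e.g.\ in the proof of Corollary~\ref{cor:bset_trails}. The one subtlety you correctly flag—that intermediate nodes of a shortest trail are distinct from the endpoints, so the trail correspondence with the induced subgraph on $K \cup \{x_0, x_{n+1}\}$ is exact—is handled properly in your argument.
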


We start with a simple lemma about trails activated by the empty set. 

\begin{lemma} \label{lemma:trail_activ_emptyset}
    A trail is activated by the empty set if and only if it does not contain a converging connection.
\end{lemma}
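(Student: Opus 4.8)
The plan is to argue directly from the definition of d-separation (Definition~\ref{def:graph_dsep}), specialized to the conditioning set $Z = \emptyset$. Recall that a converging connection at an internal node $v_i$ of the trail is precisely a v-structure $v_{i-1} \ra v_i \la v_{i+1}$, and that ``activated by $Z$'' means ``not blocked by $Z$''. So the statement reduces to checking, for $Z = \emptyset$, exactly when a node of the trail can satisfy one of the two blocking conditions. I would prove the two implications separately by contraposition, since in each direction one of the two blocking bullets is vacuously impossible.

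First I would handle the direction ``contains a converging connection $\Rightarrow$ blocked by $\emptyset$''. Suppose the trail has a v-structure at some node $v_i$, i.e. $v_{i-1} \ra v_i \la v_{i+1}$. The first blocking condition requires that $\{v_i\} \sqcup de(v_i)$ be disjoint from $Z$. Since $Z = \emptyset$ and every set is disjoint from the empty set, this condition holds automatically at $v_i$. Hence the trail is blocked by $\emptyset$, so it is not activated. By contraposition, if the trail is activated by $\emptyset$ it contains no converging connection.

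For the converse, ``no converging connection $\Rightarrow$ activated by $\emptyset$'', I would show that no node of the trail can witness a block. Take any node $v_i$. The first blocking bullet needs a v-structure at $v_i$, which is excluded by hypothesis. The second bullet needs $v_i \in Z = \emptyset$, which is impossible. Thus neither blocking condition can be met at any node, so the trail is not blocked, i.e. it is activated by $\emptyset$.

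The argument is essentially a matter of unwinding the definition, so there is no serious obstacle; the only point that requires care is the handling of the ``disjoint from $Z$'' clause in the v-structure case. The key observation is that when $Z = \emptyset$ this descendant condition is always satisfied, so a converging connection alone suffices to block the trail regardless of where its descendants lie. Making this explicit is what drives both implications, and I would state it plainly rather than attempt any graph-theoretic computation.
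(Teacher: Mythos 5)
Your proof is correct. It is the direct definitional argument: with $Z=\emptyset$ the descendant clause in the first blocking condition of Definition~\ref{def:graph_dsep} is vacuous, so a converging connection alone blocks the trail, and the second blocking condition ($v_i \in Z$) can never fire; both implications follow immediately, and you also correctly handle the degenerate cases (endpoints cannot be converging connections, and your ``any node'' check covers them). Note that the paper itself states this lemma without proof, importing it from the companion reference \cite{derumigny2025MinimalTrails}, so there is no in-paper argument to compare against; your write-up is precisely the proof one would supply, and it is complete as it stands.
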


The lemma below states that if $\G$ contains a shortest trail $x_0 \har x_1 \har \cdots \har x_n \har x_{n+1}$ activated by the empty set for which $x_0 \ra v$ and $x_{n+1} \ra v$ for some node $v \in \V$, then for all $i=1,\dots,n$, $x_i \ra v$.

\begin{lemma}\label{lemma:ac_all_parents}
    Let $\G$ be a DAG with no active cycles % nor interfering v-structures
    and let
    \begin{align}
        x_0 \har x_1 \har \cdots \har x_n \har x_{n+1}
        \label{trail:lemma:ac_all_parents}
    \end{align}
    be a trail in $\G$ for some $n \geq 0$.
    If this is a shortest trail between $x_0$ and $x_{n+1}$ activated by the empty set, then
    \begin{enumerate}[label=(\roman*)]
        \item $\ch{x_0} \cap \ch{x_{n+1}} \subseteq
        \bigcap_{i = 1}^{n} \ch{x_i}$,

        \item $\forall i = 1, \dots, n,$
        $x_i \notin \ch{x_0} \cap \ch{x_{n+1}}$.
    \end{enumerate}
\end{lemma}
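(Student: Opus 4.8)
The plan is to exploit the rigid shape that a shortest trail activated by the empty set must have. First I would dispose of $n = 0$: there are then no intermediate nodes, so $\bigcap_{i=1}^{n}\ch{x_i}$ ranges over no indices and both assertions hold vacuously. For $n \geq 1$, by \cref{lemma:trail_activ_emptyset} the trail contains no converging connection, and a trail with no converging connection necessarily has a single \emph{peak}: reading the arc directions, no ``$\ra$ immediately followed by $\la$'' can occur, so there is an index $a$ with
$$x_0 \la x_1 \la \cdots \la x_a \ra x_{a+1} \ra \cdots \ra x_{n+1},$$
whence $x_a$ is a common ancestor of every node on the trail and, on each side of $x_a$, the arcs flow monotonically away from it.

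The key structural fact I would establish next is that a shortest activated trail has \textbf{no internal chords}, i.e. no arc $x_i \har x_j$ with $|i-j|\ge 2$. Suppose one existed; replacing the segment $x_i \har \cdots \har x_j$ by that single chord gives a strictly shorter trail between $x_0$ and $x_{n+1}$, and the peak orientation guarantees the shortcut is still free of converging connections — on each side of $x_a$ the arcs all point outward, so acyclicity forces the chord to respect the ancestral direction and neither of its endpoints can become a collider in the shortcut. This contradicts minimality. I expect this to be the main obstacle, since a chordal shortcut in a general activated trail \emph{could} create a collider; it is precisely the peak structure that prevents it, so the three cases (both chord-endpoints left of $x_a$, both right, or straddling $x_a$) must be checked carefully.

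Granting chord-freeness, part (ii) is short. If some intermediate $x_i$ had $x_0 \ra x_i$ and $x_{n+1}\ra x_i$, then $x_0 \ra x_i$ forces $i=1$ (otherwise $x_0 \har x_i$ is an internal chord) and $x_{n+1}\ra x_i$ forces $i=n$; hence $n=1$ and $i=1$, and the trail $x_0 \ra x_1 \la x_2$ is a converging connection, contradicting that it is activated by the empty set.

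For part (i), fix $v \in \ch{x_0}\cap \ch{x_{n+1}}$ and set $S := \{\, i : x_i \ra v \,\}$, so $0,\, n+1 \in S$ and I must show $S=\{0,\dots,n+1\}$. I would first note that for $1\le k\le n$ any adjacency $v \har x_k$ must be oriented $x_k \ra v$, since $v \ra x_k$ would close a directed cycle running along the monotone side of the peak from $x_k$ back to $x_0$ or to $x_{n+1}$ and then through the arc into $v$. Then I would show $S$ is convex by contradiction: a gap $i<k<j$ with $i,j\in S$ but $i{+}1,\dots,j{-}1\notin S$ would make $v \la x_i \har \cdots \har x_j \ra v$ an active cycle in the sense of \cref{def:active_cycle} — its inner trail is a sub-trail of the original (hence has no converging connection), it has no internal chords, and it has no chord $v \har x_k$ because such a chord would place $k$ in $S$ — contradicting the hypothesis that $\G$ has no active cycles. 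Thus $S$ is convex, and containing both endpoints it equals $\{0,\dots,n+1\}$, giving $x_i \ra v$ for all $i$, i.e. $\ch{x_0}\cap \ch{x_{n+1}} \subseteq \bigcap_{i=1}^{n}\ch{x_i}$.
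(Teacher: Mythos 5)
A preliminary remark on the comparison itself: this paper never proves Lemma~\ref{lemma:ac_all_parents}. It is stated in the appendix among the results imported from the companion paper \cite{derumigny2025MinimalTrails}, so there is no internal proof to measure your argument against; I can only assess it on its own terms. On those terms, it is correct. The single-peak shape $x_0 \la \cdots \la x_a \ra x_{a+1} \ra \cdots \ra x_{n+1}$ is exactly what Lemma~\ref{lemma:trail_activ_emptyset} yields, and your chord-freeness claim does survive the case check you flag as the main obstacle: if both endpoints of a putative chord lie on the same side of $x_a$, acyclicity forces the chord to point in the ancestral direction and the shortcut trail keeps the left-then-right shape; if the chord straddles $x_a$, acyclicity forces nothing, but either orientation produces a diverging (never a converging) connection at the chord's endpoints. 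In every case the shortcut is a strictly shorter trail with no converging connection, hence activated by the empty set, contradicting minimality. Part (ii) then follows as you state, and the active-cycle contradiction for part (i) is sound: the gap trail $v \la x_i \har \cdots \har x_j \ra v$ has $j - i - 1 \geq 1$ intermediate nodes (so it meets the $n \geq 1$ requirement of Definition~\ref{def:active_cycle}), inherits the absence of converging connections from the original trail, and has no chords.

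Two steps should be written out rather than left implicit, though neither is a genuine gap. First, the cycle $v \la x_i \har \cdots \har x_j \ra v$ is only a legitimate active cycle if $v$ is distinct from every node $x_i, \dots, x_j$ of the trail; this is supplied by your part (ii) together with acyclicity (which excludes $v = x_0$ and $v = x_{n+1}$), so your ordering of (ii) before (i) is essential and should be invoked explicitly. Second, when verifying that the cycle has no chords, you must also exclude an arc between $x_i$ and $x_j$ themselves, since they are non-consecutive nodes of the cycle when $j - i \geq 2$; this is again your trail chord-freeness, but it is a third family of potential chords beyond the two you list (internal trail chords and adjacencies $v \har x_k$).
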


In \cite{derumigny2025MinimalTrails}, the set of trails was defined as follows. 

\begin{definition}
    Let $X,Y,Z \subseteq \V$ be disjoint subsets of $\V$.
    We define $\TRAILSbig{X}{Y}{Z}$ to be the set of trails from $X$ to $Y$ activated by $Z$. 
\end{definition}

Moreover, the subtrails as well as the partial order  $\smallerTRAIL$ on such sets are considered.
\begin{definition}[Subtrails]
\label{def:subtrails}
    Let $T$ be a trail in $\TRAILSbig{X}{Y}{Z}$.
    Suppose that $T$ takes the form
    $$
    x \har t^{0}_1 \har \cdots \har t^{0}_{\n{t}{0}} \ra 
    c_1 \la t^{1}_1 \har \cdots \har t^{1}_{\n{t}{1}} \ra
    c_{2} \la \cdotslong \ra
    c_C \la t^{C}_1 \har \cdots \har t^{C}_{\n{t}{C}} \har y.
    $$
    The following are referred to as the \textbf{subtrails} of $T$:
    \begin{align*}  
        &x \har t^{0}_1 \har \cdots \har t^{0}_{\n{t}{0}} \ra c_1,\\
        &c_i \la t^{i}_1 \har \cdots \har t^{i}_{\n{t}{i}} \ra c_{i+1}, \text{ with } i \in \{1,\dots,C-1 \},\\
        &c_C \la t^{C}_1 \har \cdots \har t^{C}_{\n{t}{C}} \har y.
    \end{align*}
    The nodes on the subtrails are denoted by the symbol ``$t$'' where a superscript $i$ indicates that $t^i_j$ lies in between $c_i$ and $c_{i+1}$ with the conventions $c_0 := x$ and $c_{C+1}:=y$.
    The subscript indicates its location on the subtrail.
    The length of a subtrail is formally denoted by $\n{t}{i}$, but we will often simply write $n:=\n{t}{i}$.
    
    Furthermore, we use the conventions $c_{0} := x$, $c_{C+1} := y$, $t^i_0 := c_i$ and $t^i_{n + 1} := c_{i+1}$.
   % Also, we denote the common ancestor among a trail between $c_i$ and $c_{i+1}$ by $t^i_m$, see Definition~\ref{def:common_ancestor}.
\end{definition}

The 'minimal' according to $\smallerTRAIL$ trail in $\TRAILSbig{X}{Y}{Z}$ has the following properties: 

\begin{enumerate}[label = C\arabic*.]
    \item It is a trail from $X$ to $Y$ activated by $Z$. \label{cond:active_XYZ}
    \item It contains a smaller number of converging nodes not contained in $Z$. \label{cond:least_converging_not_Z}
    \item Under the restrictions above, it contains fewer converging connections. \label{cond:least_converging}
    \item Under the restrictions above, the paths from converging nodes not contained in $Z$ to its closest descendants are shorter.\label{cond:shortest_descendant_paths}
    \item Under the restrictions above, it is a shorter such trail. \label{cond:shortest}
\end{enumerate}

We also define the notion of a closest descendant in a trail.

\begin{definition}[Closest descendant]
\label{def:closest_descendant}
    Let $T$ be a trail in $\TRAILSbig{X}{Y}{Z}$ and $i \in \{1,\dots, C(T) \}$.
    If $c_i \notin Z$, then its \textbf{closest descendant} in $Z$ is a node $Z(c_i) \in Z$ such that there exist a shortest path
    $$
    c_i \ra d^i_1 \ra \cdots \ra d^i_{\n{Z}{i}} \ra Z(c_i)
    $$
    with $d^i_j \notin Z$ for all $j = 1, \dots, \n{Z}{i}$.

    Such a path is referred to as a \textbf{descendant path} of $c_i$.
    Its nodes on the descendant path are denoted by the symbol ``$d$'' where a superscript $i$ indicates that $d^i_j$ lies on the descendant path of $c_i$, and the subscript $j$ indicates that it is the $j$-th node on this path.
    The length of the descendant path is formally denoted by $\n{Z}{i}$, but we will often simply write $n:=\n{Z}{i}$.
    If $c_i \in Z$, we also say that $c_i = Z(c_i)$.
    Finally, we use the conventions $d^i_0 := c_i$ and $d^i_{n + 1} := Z(c_i)$.
\end{definition}

The following assumptions are often used below.

\begin{assumption}
\label{assumpt:good_graphs}
    Let $\G=(\V,\E)$ be a DAG. 
    The following conditions are assumed to be satisfied:
    \begin{enumerate}
        % \item $\G$ has a finite node-set, i.e. $\mid \V \mid<\infty$.
        \item $\G$ does not contain any active cycles, nor interfering v-structures.
        \item $<$ is a well-ordering corresponding to $\G$.
        \item $v$ is a node in $\V$ with $\mid \pa{v} \mid>0$.
        \item All previous orders, i.e. $<_w$ with $w<v$, have already been determined by our algorithm.
        \item $\Ovk$ is a partial order determined by our algorithm with $k< \big| \pa{v} \big|$.
    \end{enumerate}
\end{assumption}

\begin{theorem}
\label{thm:minimal_trails}
    Let $X,Y,Z \subseteq \V$ be three disjoint subsets.
    Assume that $\TRAILSbig{X}{Y}{Z} \neq \emptyset$ and
    \begin{equation}
        x \har t^{0}_1 \har \cdots \har t^{0}_{\n{t}{0}} \ra 
        c_1 \la 
        %t^{1}_1 \har \cdots \har t^{1}_{\n{t}{1}} \ra
        %c_{2} 
        \la \cdotslong \ra
        c_C \la t^{C}_1 \har \cdots \har t^{C}_{\n{t}{C}} \har y.
    \label{trail:lemma_XYZ_Cs}
    \end{equation}
    be a minimal element of $\TRAILSbig{X}{Y}{Z} $
    with respect to the order $\smallerTRAIL$.

    Then, the following properties hold:
    %\begin{thmlist}
    \begin{enumerate}[label = (\roman*)]
        \item
        % [(i)]
        For all $i,j$, $t^{i}_j \notin X \sqcup Y \sqcup Z$
        and $d^{i}_j \notin X \sqcup Y \sqcup Z$.
        \label{prop:nodes_along_subtrails}

        \item
        % [(ii)]
        For all $i=1,\dots,C$, the trails $c_i \ra d^i_1 \ra \cdots \ra d^i_n \ra Z(c_i)$ and $t^i_1 \har \cdots \har t^i_n$ do not contain a chord.
        Furthermore, the trails $x \har t^0_1 \har \cdots \har t^0_n$ and
        $t^C_1 \har \cdots \har t^C_n \har y$ do not contain a chord.
        \label{prop:no_chords}

        \item
        % [(iii)]
        If $c_i \ra c_{i+1}$ and $c_{i+1} \in Z$, then $c_i \in Z$.
        \label{prop:rightarrow_in_Z}
        
        \item
        % [(iv)]
        If $c_i \la c_{i+1}$ and $c_{i} \in Z$, then $c_{i+1} \in Z$.
        \label{prop:leftarrow_in_Z}

        \item
        % [(v)]
        For all $i=1,\dots,C-1$, the $i$-th subtrail is a shortest trail between $c_i$ and $c_{i+1}$ 
        starting with a leftward pointing arrow,
        ending with rightward pointing arrow,
        consisting of nodes in $\V \setminus Z$
        and with no converging connection.
        The $C$-th subtrail is a shortest trail between $c_C$ and $y$
        starting with a leftward pointing arrow,
        consisting of nodes in $\V \setminus Z$
        and with no converging connection.
        \label{prop:shortest_subtrails}
    %\end{thmlist}
    \end{enumerate}
\end{theorem}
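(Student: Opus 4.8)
The plan is to prove all five properties by contradiction, exploiting that the given trail $T$ is a minimal element of $\TRAILSbig{X}{Y}{Z}$ for the lexicographic order $\smallerTRAIL$ assembled from the criteria~\ref{cond:active_XYZ}--\ref{cond:shortest}. In each case I would assume the stated property fails and perform a local surgery on $T$ to produce a trail $T'$ that still joins $X$ to $Y$ and is activated by $Z$, but is strictly smaller than $T$ in one criterion while being no worse in all earlier ones. The recurring bookkeeping is to track, at every node where the surgery changes the incident arcs, whether that node becomes converging or non-converging and whether it lies in $Z$, since these control both activeness (through Lemma~\ref{lemma:trail_activ_emptyset} and the d-separation rule) and the values of criteria~\ref{cond:least_converging_not_Z}--\ref{cond:shortest}.

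For property (i), the memberships in $Z$ are immediate: each $t^i_j$ is a non-converging node of $T$, so activeness forces $t^i_j \notin Z$, and each $d^i_j$ satisfies $d^i_j \notin Z$ by the definition of a closest descendant (Definition~\ref{def:closest_descendant}). To exclude $X \sqcup Y$, I truncate or reroute. If some $t^i_j \in X$ (resp. $Y$) I keep only the tail $t^i_j \har \cdots \har y$ (resp. the head $x \har \cdots \har t^i_j$); this is a shorter active trail from $X$ to $Y$ whose converging nodes form a subset of those of $T$, so it is strictly smaller in the order. If some $d^i_j \in X \sqcup Y$, say $d^i_j \in Y$, I replace the tail of $T$ after $c_i$ by the descendant path, routing $x \har \cdots \ra c_i \ra d^i_1 \ra \cdots \ra d^i_j$. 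Since $c_i$ now carries an outgoing arc to $d^i_1$ it becomes serial, so the number of converging nodes outside $Z$ drops by one (recall $c_i \notin Z$), contradicting~\ref{cond:least_converging_not_Z}; the case $d^i_j \in X$ is symmetric.

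Properties (iii) and (iv) use the same ``collapse a converging node'' idea, now via a chord between consecutive converging nodes. If the chord $c_i \ra c_{i+1}$ exists with $c_{i+1} \in Z$ but $c_i \notin Z$, I replace the entire $i$-th subtrail by that single arc: this turns $c_i$ into a serial node, again lowering the count in~\ref{cond:least_converging_not_Z}, while $c_{i+1}$ stays converging and remains activated because $c_{i+1} \in Z$; the contradiction forces $c_i \in Z$. Property (iv) is the mirror image, collapsing through $c_i \la c_{i+1}$ to make $c_{i+1}$ serial. Property (ii) is handled by chord-shortcutting: a chord inside a descendant path shortens it, contradicting~\ref{cond:shortest_descendant_paths} (a backward chord being impossible in a DAG), and a chord inside one of the subtrails $x \har \cdots \ra c_1$, $c_i \la \cdots \ra c_{i+1}$, or $c_C \la \cdots \har y$ yields a shorter replacement subtrail, contradicting~\ref{cond:shortest}. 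Property (v) then packages all of this: each subtrail starts with an arc into $c_i$, ends with an arc into $c_{i+1}$, has no converging connection, and by (i) avoids $Z$; were it not shortest among trails with these constraints, replacing it by a shorter one would preserve the converging status at the junctions $c_i, c_{i+1}$ and hence all of~\ref{cond:active_XYZ}--\ref{cond:shortest_descendant_paths}, while strictly lowering the length in~\ref{cond:shortest}.

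The main obstacle is making the surgeries in (ii) and (v) rigorous: when a sub-path is shortcut or replaced one must check that no new converging connection is created at the splice points and that the result stays a simple trail (no repeated nodes), since an inadvertent new converging node could worsen~\ref{cond:least_converging_not_Z} or~\ref{cond:least_converging} and invalidate the comparison. This calls for a case analysis on the orientation of the chord and of the arcs adjacent to the splice, using the fact that a subtrail without converging connections has all its arcs oriented away from a single diverging source node (or is monotone directed), together with acyclicity to rule out the orientations that would close a directed cycle.
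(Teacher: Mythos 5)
First, a point of comparison: this paper does \emph{not} prove Theorem~\ref{thm:minimal_trails}. It is imported verbatim from \cite{derumigny2025MinimalTrails} (see the preamble of Appendix~A), so there is no in-paper proof to measure your attempt against; I can only assess your plan on its own merits.

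The surgery-plus-minimality strategy is the natural one, and every step of yours that only \emph{deletes} nodes from the given trail is sound: the truncations for $t^i_j \in X \sqcup Y$ in (i), the collapse of a subtrail onto the chord $c_i \har c_{i+1}$ in (iii) and (iv), and the chord shortcuts in (ii). In those cases the new object is automatically a trail (its nodes form a subset of the old ones), and your bookkeeping is correct: the collapsed $c_i$ (resp.\ $c_{i+1}$) becomes serial, so the count of converging nodes outside $Z$ (criterion C2) strictly drops, while a shortcut inside a converging-free, hence ``source-shaped'', segment creates no new converging connection and strictly lowers the length (criterion C5). Your parenthetical ``recall $c_i \notin Z$'' in (i) is also legitimate, since by Definition~\ref{def:closest_descendant} descendant paths only exist for converging nodes outside $Z$.

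The genuine gap is exactly where you splice \emph{new} nodes in: the reroute along a descendant path when $d^i_j \in X \sqcup Y$ in (i), and the substitution of a shorter subtrail in (v). Minimality is minimality within $\TRAILSbig{X}{Y}{Z}$, a set of \emph{trails}; the spliced object is in general only a walk, because the descendant path or the candidate replacement subtrail may revisit nodes of the rest of the trail. A walk is not an element of $\TRAILSbig{X}{Y}{Z}$, so no contradiction with minimality is obtained, and the obvious repair---shortcutting repeated nodes to extract a trail---does not preserve your comparison: when a repeated node $u$ is shortcut, the retained pair of arcs can turn $u$ into a \emph{new} converging node (activation then has to be rescued by exhibiting a descendant of $u$ in $Z$), so criteria C2 and C3 can \emph{increase} even as C5 decreases, and the extracted trail need not be $\smallerTRAIL$-smaller than the original. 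Consequently your key claim in (v), that the substituted trail ``preserves all of C1--C4 while strictly lowering C5,'' is false as stated unless you first show the replacement subtrail can be chosen node-disjoint from the remainder of the trail, or supply a walk-to-trail reduction that is monotone for $\smallerTRAIL$. For (i) the hole is patchable with moderate work (splice at the first repetition; each splice point is serial and outside $Z$, and all converging nodes from $c_i$ onward disappear, so C2 still strictly drops), but for (v) this is a substantive missing argument---you name it as ``the main obstacle'' and sketch a case analysis that only addresses the no-new-converging-connection bookkeeping for deletion-type shortcuts, not the repeated-node problem.
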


\begin{definition}
    \label{def:propadj}
    Let $\G$ be a DAG and $K$ a subset of $\V$.
    We say that $K$ has \textbf{local relationships} if
    for all $ v_1, v_2 \in K$ such that there exists a trail
    $$
    v_1 \har x_1 \har \cdots \har x_n \har v_2
    $$
    with $x_i \notin K$ for all $i=1, \dots, n$ and no converging connections, then $v_1$ and $v_2$ are adjacent.
\end{definition}

\begin{theorem}
\label{thm:minimal_trails_local_relationship}
    Let $X,Y,Z \subseteq \V$ be three disjoint subsets and $Y \sqcup Z$ has local relationships (\cref{def:propadj}).
    Assume that $\TRAILSbig{X}{Y}{Z} \neq \emptyset$ and let $T$ a trail of the form (\ref{trail:lemma_XYZ_Cs}) be a minimal element of $\TRAILSbig{X}{Y}{Z}$
    with respect to the order $\smallerTRAIL$. Then, the following properties hold.
    %\begin{thmlist}
    \begin{enumerate}[label = (\roman*)]
        \item
        % [(i)]
        The final converging node $c_C$ is in $Z$.
        \label{prop:final_c_in_Z}

        \item
        % [(ii)]
        For all $i=1, \dots, C - 1$, we have $c_i \in Z$ or $c_{i+1} \in Z$.
        \label{prop:most_ci_in_Z}
        
        \item
        % [(iii)]
        For all $i=1,\dots,C$, the nodes $c_i$ and $c_{i+1}$ are adjacent.
        \label{prop:adjacency}
        
        \item
        % [(iv)]
        \label{prop:subgraphs}
        If this trail contains a total of $C>0$ converging connections, then $\G$ contains the subgraph below.
    
        \begin{figure}[H]
        \centering
        \begin{tikzpicture}[scale=1, transform shape, node distance=1cm, state/.style={circle, font=\Large, draw=black}]

        % Cs
        \node (c1) {$c_{1}$};
        \node[right= of c1] (c2) {$c_{2}$};
        \node[right = 1.5cm of c2] (cC-1) {$c_{C-1}$};
        \node[right= of cC-1] (cC) {$c_{C}$};
        \node[right = of cC] (y) {$y$};

        % Nodes to x
        \node[left = of c1] (t0n) {$t^0_n$};
        \node[left = of t0n] (t01) {$t^0_1$};
        \node[left = of t01] (x) {$x$};

        \begin{scope}[>={Stealth[length=6pt,width=4pt,inset=0pt]}]
            \path [-] (c1) edge[bend left=45] node {} (c2);
            \path [-] (cC-1) edge[bend left=45] node {} (cC);
            \path [-] (cC) edge[bend left=45] node {} (y);
        
            \draw[transform canvas={yshift=0.21ex},-left to,line width=0.25mm] (c1) -- (c2);
            \draw[transform canvas={yshift=-0.21ex},left to-,line width=0.25mm] (c1) -- (c2);
        
            \draw[transform canvas={yshift=0.21ex},-left to,line width=0.25mm] (cC-1) -- (cC);
            \draw[transform canvas={yshift=-0.21ex},left to-,line width=0.25mm] (cC-1) -- (cC);
        
            \draw[transform canvas={yshift=0.21ex},-left to,line width=0.25mm] (cC) -- (y);
            \draw[transform canvas={yshift=-0.21ex},left to-,line width=0.25mm] (cC) -- (y);
        
            \draw[loosely dotted, line width=0.5mm] (c2) -- (cC-1);

            % path x - c1
            \path [->] (t0n) edge node {} (c1);
            \draw[loosely dotted, line width=0.5mm] (t01) -- (t0n);
            \draw[transform canvas={yshift=0.21ex},-left to,line width=0.25mm] (x) -- (t01);
            \draw[transform canvas={yshift=-0.21ex},left to-,line width=0.25mm] (x) -- (t01);
            
        \end{scope}
        \end{tikzpicture}
        \end{figure}
        Here, the curved lines represent one of the following two subgraphs.
        
        \begin{figure}[H]
        \centering
        \begin{subfigure}{0.45\linewidth}
        \centering
        \begin{tikzpicture}[scale=0.8, transform shape, node distance=1cm, state/.style={circle, font=\Large, draw=black}]
        \node (v1) {$c_i$};
        \node[right= of v1] (v2) {$c_{i+1}$};
        \node[above left= of v1] (x1) {$t^i_1$};
        \node[above right= of v2] (xn) {$t^i_n$};
        \node[above = of x1] (x2) {$t^i_2$};
        \node[above = of xn] (xn-1) {$t^i_{n-1}$};
        \begin{scope}[>={Stealth[length=4pt,width=3pt,inset=0pt]}]
        \path [->] (v1) edge node {} (v2);
        \path [->] (v1) edge[bend right=0] node {} (x2);
        \path [->] (v1) edge node {} (xn-1);
        \path [->] (v1) edge node {} (xn);
        \path [->] (xn) edge node {} (v2);
        \path [->] (x1) edge node {} (v1);
        \path [->] (x1) edge node {} (x2);
        \path [->] (xn-1) edge node {} (xn);
        \draw[loosely dotted, line width=0.4mm] (x2) -- (xn-1);
        \end{scope}
        \end{tikzpicture}
                \end{subfigure}
                \begin{subfigure}{0.45\linewidth}
                \centering
        \begin{tikzpicture}[scale=0.8, transform shape, node distance=1cm, state/.style={circle, font=\Large, draw=black}]
        \node (v1) {$c_i$};
        \node[right= of v1] (v2) {$c_{i+1}$};
        \node[above left= of v1] (x1) {$t^i_1$};
        \node[above right= of v2] (xn) {$t^i_n$};
        \node[above = of x1] (x2) {$t^i_2$};
        \node[above = of xn] (xn-1) {$t^i_{n-1}$};
        \begin{scope}[>={Stealth[length=4pt,width=3pt,inset=0pt]}]
        \path [->] (v2) edge node {} (v1);
        \path [->] (v2) edge[bend right=0] node {} (x2);
        \path [->] (v2) edge node {} (xn-1);
        \path [->] (v2) edge node {} (x1);
        \path [->] (xn) edge node {} (v2);
        \path [->] (x1) edge node {} (v1);
        \path [->] (x2) edge node {} (x1);
        \path [->] (xn) edge node {} (xn-1);
        \draw[loosely dotted, line width=0.4mm] (x2) -- (xn-1);
        \end{scope}
        \end{tikzpicture}
        \end{subfigure}
        \end{figure}
    %\end{thmlist}
    \end{enumerate}
\end{theorem}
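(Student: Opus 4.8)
The plan is to build on the structural description of minimal trails already established in \cref{thm:minimal_trails} (properties \cref{prop:nodes_along_subtrails}--\cref{prop:shortest_subtrails}), exploit the new hypothesis that $Y \sqcup Z$ has local relationships (\cref{def:propadj}), and argue by contradiction against the minimality conditions \ref{cond:active_XYZ}--\ref{cond:shortest} whenever a converging node fails to lie in $Z$. A useful preliminary is that no converging node $c_i$ lies in $X \sqcup Y$: if $c_i \in Y$, truncating $T$ at $c_i$ gives an $X$-to-$Y$ trail still activated by $Z$ (the earlier converging nodes keep their descendants in $Z$) but with strictly fewer converging connections, contradicting \ref{cond:least_converging}; the case $c_i \in X$ is symmetric. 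Together with \cref{prop:nodes_along_subtrails}, this means that all interior nodes of $T$ and of every descendant path avoid $Y \sqcup Z$, the only exceptions being the $Z(c_i)$ themselves.

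For \cref{prop:final_c_in_Z}, suppose $c_C \notin Z$, so $c_C$ has a closest descendant $Z(c_C) \in Z$ along $c_C \ra d^C_1 \ra \cdots \ra Z(c_C)$. Reversing the $C$-th subtrail and appending this path gives a trail $y \har t^C_{\n{t}{C}} \har \cdots \har t^C_1 \ra c_C \ra d^C_1 \ra \cdots \ra Z(c_C)$ between $y$ and $Z(c_C)$, both in $Y \sqcup Z$, with a serial junction at $c_C$, no converging connection (by \cref{prop:shortest_subtrails} and \cref{lemma:trail_activ_emptyset}), and interior outside $Y \sqcup Z$. Local relationships then force $y \har Z(c_C)$, so I reroute the tail of $T$ as $c_C \ra d^C_1 \ra \cdots \ra Z(c_C) \har y$: now $c_C$ is serial and $Z(c_C) \in Z$ absorbs any new converging connection, yielding an $X$-to-$Y$ trail activated by $Z$ with one fewer converging node outside $Z$, contradicting \ref{cond:least_converging_not_Z}. \cref{prop:most_ci_in_Z} is the same move on an internal subtrail: assuming $c_i, c_{i+1} \notin Z$, the trail $Z(c_i) \la \cdots \la c_i \la t^i_1 \har \cdots \har t^i_{\n{t}{i}} \ra c_{i+1} \ra \cdots \ra Z(c_{i+1})$ joins two nodes of $Z$ with serial junctions and no converging connection, local relationships give $Z(c_i) \har Z(c_{i+1})$, and rerouting through both descendant paths eliminates two converging nodes outside $Z$, again violating \ref{cond:least_converging_not_Z}.

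For \cref{prop:adjacency} and \cref{prop:subgraphs} I would proceed subtrail by subtrail. The $C$-th subtrail joins $c_C$ and $y$, both in $Y \sqcup Z$ by \cref{prop:final_c_in_Z}, with interior outside $Y \sqcup Z$ and no converging connection, so local relationships give $c_C \har y$; each internal subtrail joins $c_i$ and $c_{i+1}$, at least one of which lies in $Z$ by \cref{prop:most_ci_in_Z}. Once adjacency is established, I feed each chordless (\cref{prop:no_chords}) subtrail $c_i \la t^i_1 \har \cdots \har t^i_{\n{t}{i}} \ra c_{i+1}$, which is a shortest empty-activated trail by \cref{prop:shortest_subtrails}, into \cref{lemma:po_active_cycle1}; its conclusion produces exactly the two pictures claimed in \cref{prop:subgraphs}, the orientation of the $c_i \har c_{i+1}$ arc selecting which one occurs, and the leading piece $x \har t^0_1 \har \cdots \har t^0_{\n{t}{0}} \ra c_1$ is treated identically. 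Assembling these local pictures yields the global subgraph.

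The hard part will be the adjacency claim \cref{prop:adjacency} in the mixed case where exactly one of $c_i, c_{i+1}$ lies in $Z$: local relationships then only yields $c_i \har Z(c_{i+1})$ (or the symmetric statement), and one must transport this back to $c_i \har c_{i+1}$ by using chordlessness of the subtrail together with the absence of active cycles to rule out a chord-free cycle through the descendant path. A secondary, pervasive point of care in \cref{prop:final_c_in_Z} and \cref{prop:most_ci_in_Z} is checking that each rerouted trail stays simple and activated by $Z$, and that any shortcutting needed to restore simplicity never reintroduces a converging node outside $Z$; these verifications are routine but are what make the minimality contradictions airtight.
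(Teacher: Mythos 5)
This paper does not actually prove the statement: Theorem~\ref{thm:minimal_trails_local_relationship} is quoted from the companion paper \cite{derumigny2025MinimalTrails}, so your argument has to stand on its own. It does not, because of one recurring gap: the local-relationships hypothesis gives you an adjacency but no control over its \emph{orientation}, and your rerouting arguments for (i) and (ii) only produce an activated trail for one of the two orientations. In (i), after obtaining $y \har Z(c_C)$ you reroute the tail as $\cdots \ra c_C \ra d^C_1 \ra \cdots \ra Z(c_C) \har y$ and claim that $Z(c_C) \in Z$ ``absorbs any new converging connection''. That is true if $y \ra Z(c_C)$; but if the adjacency is oriented $Z(c_C) \ra y$, then $Z(c_C)$ is a \emph{serial} node lying in $Z$, the rerouted trail is blocked, and no contradiction with condition~\ref{cond:least_converging_not_Z} is obtained. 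The situation in (ii) is worse: in the rerouted trail $\cdots \ra c_i \ra \cdots \ra Z(c_i) \har Z(c_{i+1}) \la \cdots \la c_{i+1} \la \cdots$, whichever way the edge $Z(c_i) \har Z(c_{i+1})$ is oriented (when these nodes are distinct), one of its endpoints receives one incoming and one outgoing arc, i.e.\ is a serial node in $Z$; so this trail is \emph{always} blocked and the minimality of $T$ is never contradicted.

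This cannot be closed by the ``routine verifications'' you allude to, because the bad orientation genuinely occurs unless one invokes the restricted-DAG hypotheses, which your proofs of (i) and (ii) never use. Indeed, without the no-active-cycle assumption, property (i) is simply false: take $\V = \{x,a,c,z,y\}$ with arcs $x \ra c$, $a \ra c$, $a \ra y$, $c \ra z$, $z \ra y$, and $X=\{x\}$, $Y=\{y\}$, $Z=\{z\}$. Then $Y \sqcup Z$ has local relationships (the only relevant trail $y \la a \ra c \ra z$ is matched by the arc $z \ra y$), the unique activated trail $x \ra c \la a \ra y$ is trivially minimal, yet its converging node $c_C = c$ is not in $Z$ --- precisely because $Z(c_C)=z$ points \emph{into} $y$, which is your unhandled case. (This graph contains the active cycle $y \la a \ra c \ra z \ra y$, which is what the companion paper's standing assumptions exclude.) A correct proof must therefore use the absence of active cycles and of interfering v-structures to rule out or exploit the bad orientations, e.g.\ by showing they force chords and propagating those chords; your own deferral of the ``hard part'' of (iii) has this character, but the same substantial work is already needed in (i) and (ii). Two smaller points: your application of Theorem~\ref{lemma:po_active_cycle1} to the subtrails in (iv) needs Lemma~\ref{lemma:generalisation_to_K}, since by Theorem~\ref{thm:minimal_trails}\ref{prop:shortest_subtrails} the subtrails are shortest only among trails through $\V \setminus Z$; and your preliminary observation that no $c_i$ lies in $X \sqcup Y$ is correct and a useful step.
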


\begin{corollary}
\label{cor:all_cs_in_Z}
    Let us consider the setting of \cref{thm:minimal_trails_local_relationship}.
    %\begin{corlist}
    \begin{enumerate}[label = (\roman*)]
        \item If the trail
        $c_1 \har \cdots \har  c_C$
        takes the form
        $c_1 \ra \cdots \ra  c_C$,
        then $\forall i=1,\dots,C$, $c_i \in Z$.
        \label{prop:rightarroww_all_in_Z}
        
        \item If $c_1 \in Z$ and the trail
        $c_1 \har \cdots \har c_C$
        takes the form
        $c_1 \la \cdots \la  c_C$,
        then $\forall i=1,\dots,C$, $c_i \in Z$.
        \label{prop:leftarrow_all_in_Z}
        
        \item Let $i \in \{2, \dots, C-1\}$.
        If the trail $c_{i-1} \har c_i \har c_{i+1}$ is not a converging connection, then $c_i \in Z$.
        \label{prop:no_convCon_c_i_in_Z}
    %\end{corlist}
    \end{enumerate}
\end{corollary}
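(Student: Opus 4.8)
The plan is to derive all three parts from the directed ``propagation'' rules already established for minimal trails, namely property~\ref{prop:rightarrow_in_Z} and property~\ref{prop:leftarrow_in_Z} of \cref{thm:minimal_trails}, together with property~\ref{prop:final_c_in_Z} and property~\ref{prop:most_ci_in_Z} of \cref{thm:minimal_trails_local_relationship}. Recall that property~\ref{prop:rightarrow_in_Z} says that an arc $c_i \ra c_{i+1}$ with $c_{i+1} \in Z$ forces $c_i \in Z$, while property~\ref{prop:leftarrow_in_Z} says that an arc $c_i \la c_{i+1}$ with $c_i \in Z$ forces $c_{i+1} \in Z$. Informally, membership in $Z$ propagates \emph{backward} along rightward arcs and \emph{forward} along leftward arcs, and the arcs between consecutive converging nodes are available because of the adjacency property~\ref{prop:adjacency}.

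For part~\ref{prop:rightarroww_all_in_Z} I would argue by descending (backward) induction on $i$. The base case $c_C \in Z$ is exactly property~\ref{prop:final_c_in_Z}. For the inductive step, assume $c_{i+1} \in Z$; since the trail has the form $c_1 \ra \cdots \ra c_C$, the arc $c_i \ra c_{i+1}$ is present, so property~\ref{prop:rightarrow_in_Z} yields $c_i \in Z$. This gives $c_i \in Z$ for all $i$. For part~\ref{prop:leftarrow_all_in_Z} I would argue symmetrically by ascending (forward) induction: the base case $c_1 \in Z$ is the hypothesis, and if $c_i \in Z$ then, since the trail has the form $c_1 \la \cdots \la c_C$, the arc $c_i \la c_{i+1}$ is present and property~\ref{prop:leftarrow_in_Z} gives $c_{i+1} \in Z$.

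For part~\ref{prop:no_convCon_c_i_in_Z}, fix $i \in \{2,\dots,C-1\}$ and suppose the connection at $c_i$ in $c_{i-1} \har c_i \har c_{i+1}$ is not converging. Then at least one incident arc points away from $c_i$, i.e. $c_i \ra c_{i-1}$ or $c_i \ra c_{i+1}$; call its other endpoint $c_j$ with $j \in \{i-1,i+1\}$. Because $i \in \{2,\dots,C-1\}$, both neighbouring pairs $\{c_{i-1},c_i\}$ and $\{c_i,c_{i+1}\}$ fall within the scope of property~\ref{prop:most_ci_in_Z}, so in particular the pair containing $c_i$ and $c_j$ has one of its members in $Z$. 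If $c_i \in Z$ we are done; otherwise $c_j \in Z$, and since $c_i \ra c_j$ with $c_j \in Z$, property~\ref{prop:rightarrow_in_Z} forces $c_i \in Z$. This single argument covers all three non-converging configurations (serial pointing right, serial pointing left, and diverging) uniformly.

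I do not expect any genuine obstacle here: once the directed propagation rules are in hand, each part is a short induction or case check. The only point needing a little care is part~\ref{prop:no_convCon_c_i_in_Z}, where one must observe that a non-converging connection always supplies an arc leaving $c_i$, and that restricting to $i \in \{2,\dots,C-1\}$ keeps both adjacent pairs within the range where property~\ref{prop:most_ci_in_Z} is valid; both observations are immediate.
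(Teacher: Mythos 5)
Your derivation is correct, and it is the natural one: the paper states Corollary~\ref{cor:all_cs_in_Z} without proof (it is imported from \cite{derumigny2025MinimalTrails}), so the only available route is exactly the one you take, namely deriving all three parts from the listed properties of Theorems~\ref{thm:minimal_trails} and~\ref{thm:minimal_trails_local_relationship}. Parts (i) and (ii) are precisely the right inductions: backward from Theorem~\ref{thm:minimal_trails_local_relationship}\ref{prop:final_c_in_Z} via Theorem~\ref{thm:minimal_trails}\ref{prop:rightarrow_in_Z}, and forward from the hypothesis $c_1 \in Z$ via Theorem~\ref{thm:minimal_trails}\ref{prop:leftarrow_in_Z}.

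The one thing to fix is a citation slip in part (iii). When the arc leaving $c_i$ is $c_i \ra c_{i-1}$ and $c_{i-1} \in Z$, the rule that applies is Theorem~\ref{thm:minimal_trails}\ref{prop:leftarrow_in_Z}, not \ref{prop:rightarrow_in_Z}: relabelling with $k = i-1$, the configuration reads $c_k \la c_{k+1}$ with $c_k \in Z$, which yields $c_{k+1} = c_i \in Z$. As stated, \ref{prop:rightarrow_in_Z} only concerns an arc pointing from a lower-indexed converging node to the next one, so your blanket appeal to it literally covers only the sub-case $j = i+1$. This is a misattribution rather than a gap: your opening summary states both propagation rules correctly (membership in $Z$ passes from the head of each adjacency arc to its tail, whichever way the arc points relative to the trail's indexing), and the argument closes once each sub-case cites the matching rule. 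Your observation that $i \in \{2, \dots, C-1\}$ keeps both pairs $\{c_{i-1}, c_i\}$ and $\{c_i, c_{i+1}\}$ inside the index range of Theorem~\ref{thm:minimal_trails_local_relationship}\ref{prop:most_ci_in_Z} is also correct, and it is the only place where the restriction on $i$ is actually needed.
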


\section{Properties of B-sets and possible candidates}
\label{sec:properties_of_bsets_and_posscand}

%Throughout this section we will repeatedly need the same assumptions, and therefore we formalize them below.

%\begin{assumption}
%    Let $\G=(\V,\E)$ be a DAG. 
%    The following conditions are assumed to be satisfied:
%    \begin{enumerate}
%        % \item $\G$ has a finite node-set, i.e. $\mid \V \mid<\infty$.
%        \item $\G$ does not contain any active cycles, nor interfering v-structures.
%        \item $<$ is a well-ordering corresponding to $\G$
%        \item $v$ is a node in $\V$ with $\mid \pa{v} \mid>0$.
%        \item All previous orders, i.e. $<_w$ with $w<v$, have already been determined by our algorithm.
%        \item $\Ovk$ is a partial order determined by our algorithm with $k< \big| \pa{v} \big|$.
%    \end{enumerate}
%    \label{assumpt:good_graphs}
%\end{assumption}

Informally, the lemma below states that two nodes in a B-set $B_q$ are either d-separated given the empty set or any shortest trail activated by the empty set between them must be contained in $B_q$.

\begin{lemma}\label{lemma:bset_btrails}
    Under Assumption~\ref{assumpt:good_graphs},  let $v\in \V$.
    Let $q \in \{1, \dots, Q(v) + 1\}$ and let $w_1, w_2 \in B_q(v)$.
    Then, $\dsepbig{w_1}{w_2}{\emptyset}$ or any shortest trail activated by the empty set joining $w_1$ and $w_2$ must consist entirely of nodes contained in $B_q$.
\end{lemma}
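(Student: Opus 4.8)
The plan is to dispose of the trivial alternative first: if $\dsepbig{w_1}{w_2}{\emptyset}$ holds there is nothing to prove, so I would assume $\notdsepbig{w_1}{w_2}{\emptyset}$ and fix an arbitrary shortest trail $w_1 \har x_1 \har \cdots \har x_n \har w_2$ activated by the empty set. By \cref{lemma:trail_activ_emptyset} this trail contains no converging connection, which is what makes the active-cycle machinery applicable. If $n = 0$ (or $w_1 = w_2$) the claim is vacuous, so I would assume $n \geq 1$ and aim to show $x_i \in B_q(v)$ for every internal node $x_i$.

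The key structural observation is that the two endpoints share one or two common children. Since $B_q(v) \subseteq \pa{v}$ for every $q \in \{1, \dots, Q(v)+1\}$, both $w_1$ and $w_2$ are parents of $v$, so $v \in \ch{w_1} \cap \ch{w_2}$. Because $\G$ contains no active cycles (Assumption~\ref{assumpt:good_graphs}), I would apply \cref{lemma:ac_all_parents}(i) to this shortest active trail with $x_0 = w_1$ and $x_{n+1} = w_2$, obtaining $v \in \ch{x_i}$, i.e. $x_i \in \pa{v}$, for every $i = 1, \dots, n$. This already settles the case $q = Q(v)+1$, where $B_{Q(v)+1}(v) = \pa{v}$.

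For $q \leq Q(v)$ I would repeat the argument with the second common child. By Definition~\ref{def:bsets} there is a node $b_q$ with $B_q(v) = B(v,b_q) = \pa{v} \cap \pa{b_q}$, and since $w_1, w_2 \in B_q(v)$ we have $w_1 \ra b_q$ and $w_2 \ra b_q$, hence $b_q \in \ch{w_1} \cap \ch{w_2}$. Applying \cref{lemma:ac_all_parents}(i) a second time to the same trail gives $b_q \in \ch{x_i}$, i.e. $x_i \in \pa{b_q}$, for all $i$. Combining the two applications yields $x_i \in \pa{v} \cap \pa{b_q} = B_q(v)$ for every internal node, which is exactly the claim.

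I do not expect a serious obstacle: essentially all of the content is packaged into \cref{lemma:ac_all_parents}, whose proof (carried out in \cite{derumigny2025MinimalTrails}) is where the absence of active cycles is actually exploited. The only points needing care are (a) recording, via \cref{lemma:trail_activ_emptyset}, that ``activated by the empty set'' is equivalent to ``no converging connection'' so that \cref{lemma:ac_all_parents} applies; and (b) recognising that the lemma may legitimately be invoked twice on the \emph{same} trail, once for each common child $v$ and $b_q$, because its conclusion is a statement about $\bigcap_{i} \ch{x_i}$ and both $v$ and $b_q$ lie in $\ch{w_1} \cap \ch{w_2}$. The degenerate cases are immediate.
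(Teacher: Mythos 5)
Your proposal is correct and follows essentially the same route as the paper: fix a shortest trail activated by the empty set and use \cref{lemma:ac_all_parents}(i) with the common children $v$ and $b_q$ to conclude that every internal node lies in $\pa{v} \cap \pa{b_q} = B_q(v)$, with the case $q = Q(v)+1$ handled by $v$ alone. The only cosmetic difference is that the paper invokes \cref{lemma:ac_all_parents} once for both children simultaneously rather than twice, and your appeal to \cref{lemma:trail_activ_emptyset} is harmless but not needed, since \cref{lemma:ac_all_parents} is stated directly for shortest trails activated by the empty set.
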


\begin{proof}
    If $\dsepbig{w_1}{w_2}{\emptyset}$, or if $w_1 = w_2$, then the proof of this lemma is completed. Therefore we can assume that they are not d-separated by the empty set and different from each other.
    Thus $\notdsepbig{w_1}{w_2}{\emptyset}$; let 
    \begin{align}
        w_1 \har x_1 \har \cdots \har x_n \har w_2
        \label{trail:lemma:bset_btrails}
    \end{align}
    be a shortest trail between $w_1$ and $w_2$  activated by the empty set.
    First, we assume that $q \leq Q(v)$.
    Let $x_0 := w_1$, $x_{n+1} := w_2$, and $b_q$ be a node corresponding to $B_q(v)$, see Definition~\ref{def:bsets}.
    Because $w_1, w_2 \in B_q(v)$, we know that
    $v, b_q \in \ch{w_1} \cap \ch{w_2}$.
    By Lemma~\ref{lemma:ac_all_parents} 
    for all $i = 1, \dots, n$, we have
    $v, b_q \in \ch{x_i}$ and
    $x_i \in B_q(v) = \pa{v} \cap \pa{b_q}$.

    If $q = Q(v) + 1$, we are at the last stage of the algorithm and there is no $b_q$, but the same reasoning shows that for $i = 1, \dots, n$,
    $x_i \in B_q(v) = \pa{v}$.
    This concludes the proof.
\end{proof}

A useful lemma proven in \cite{derumigny2025MinimalTrails} (included without the proof in Lemma~\ref{lemma:generalisation_to_K}) shows that one property of a trail that implies another will not only hold for shortest trails but also for shortest trails consisting of nodes in a subset $K \subseteq \V$. However, this result cannot be directly applied to Lemma~\ref{lemma:bset_btrails}, because the property that for all $i=1,\dots,n$, $x_i \in B_q(v)$ concerns a node $v$ which is not on the trail.
Therefore, we prove the generalization of Lemma~\ref{lemma:generalisation_to_K}  in the corollary hereunder.

\begin{corollary}\label{cor:bset_trails}
    Let $\G$ be a DAG with no active cycles nor interfering v-structures, and let $v\in \V$.
    Let $q \in \{1, \dots, Q(v) + 1\}$ and let $w_1, w_2 \in B_q(v)$. Let $K$ be a set included in $\V$.
    Then, $w_1$ and $w_2$ are either independent or for any shortest trail activated by the empty set joining $w_1$ and $w_2$ consisting of nodes in $K$ must consist entirely of nodes contained in $B_q$.
\end{corollary}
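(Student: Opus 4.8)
The plan is to run the proof of Lemma~\ref{lemma:bset_btrails} again, but inside a carefully chosen induced subgraph so that the off-trail node $v$ (and, when $q \le Q(v)$, the auxiliary node $b_q$ with $B_q(v) = B(v,b_q)$) becomes a legitimate node to which Lemma~\ref{lemma:ac_all_parents} can be applied. First I would dispose of the trivial cases $w_1 = w_2$ and $\dsepbig{w_1}{w_2}{\emptyset}$; otherwise I fix a shortest trail $T : w_1 \har x_1 \har \cdots \har x_n \har w_2$ activated by the empty set and consisting of nodes in $K$ (if no such trail exists the claim is vacuous), and I let $\G'$ be the subgraph of $\G$ induced by $V' := \{w_1, x_1, \dots, x_n, w_2, v, b_q\}$, dropping $b_q$ in the case $q = Q(v)+1$ where $B_q(v) = \pa{v}$.

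The first ingredient I would record is that the class of DAGs with neither active cycles nor interfering v-structures is closed under deletion of vertices: an induced subgraph retains exactly the arcs (and hence the relevant non-arcs) among the kept vertices, so neither forbidden configuration can be newly created by removing nodes. In particular $\G'$ still has no active cycles, which is all that Lemma~\ref{lemma:ac_all_parents} requires. Moreover, since activation by the empty set is equivalent to the absence of a converging connection (Lemma~\ref{lemma:trail_activ_emptyset}), a condition depending only on arc directions along the trail, $T$ remains activated by the empty set in $\G'$.

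The key step, and the step I expect to be the main obstacle, is showing that $T$ is still a shortest empty-activated trail between $w_1$ and $w_2$ in $\G'$: a priori $v$ or $b_q$ might provide a shortcut. I would resolve this without circularity by a length count. Let $T^{\ast}$ be any shortest empty-activated trail from $w_1$ to $w_2$ in $\G'$ and apply Lemma~\ref{lemma:ac_all_parents}(i) to $T^{\ast}$ in $\G'$. Since $w_1$ and $w_2$ are common parents of both $v$ and $b_q$, we have $v, b_q \in \ch{w_1}\cap\ch{w_2}$, so every interior node of $T^{\ast}$ lies in $\pa{v}\cap\pa{b_q}$; in particular no interior node of $T^{\ast}$ is $v$ (else a self-loop) or $b_q$ (else $b_q \ra v$ contradicting $v \ra b_q$). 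Hence $T^{\ast}$ is an empty-activated trail of $\G$ whose interior lies in $\{x_1,\dots,x_n\}\subseteq K$, so its length is at least that of $T$; and trivially its length is at most that of $T$, since $T$ itself is a trail of $\G'$. Therefore $T$ attains the minimal length in $\G'$.

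Finally I would apply Lemma~\ref{lemma:ac_all_parents}(i) to the now-shortest trail $T$ inside $\G'$: using again $v, b_q \in \ch{w_1}\cap\ch{w_2}$, this yields $x_i \ra v$ and $x_i \ra b_q$ for every $i$, i.e. $x_i \in \pa{v}\cap\pa{b_q} = B_q(v)$. In the case $q = Q(v)+1$ the identical argument with only $v$ retained gives $x_i \in \pa{v} = B_q(v)$. This is the desired conclusion, and it explains why the restriction to $K$-trails, which could not be read off directly from Lemma~\ref{lemma:generalisation_to_K} because the target property refers to the off-trail node $v$, nevertheless holds once $v$ (and $b_q$) are incorporated into the ambient subgraph.
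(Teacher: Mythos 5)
Your proof is correct and follows essentially the same route as the paper's: pass to an induced subgraph containing $v$ (and $b_q$), use that the no-active-cycle property survives vertex deletion, argue the given trail is still a shortest empty-activated trail there, and conclude via Lemma~\ref{lemma:ac_all_parents} (the paper invokes Lemma~\ref{lemma:bset_btrails}, which is just that lemma packaged). The only differences are cosmetic—you induce on the trail nodes rather than on all of $K \cup \{v,w_1,w_2,b_q\}$, and you spell out the length-comparison argument showing $T$ remains shortest in the subgraph, a step the paper passes over more quickly.
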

\begin{proof}
    First, we assume that $q \leq B_q(v)$.
    Let $b_q$ be a node corresponding to $B_q$.
    Let $\G^*=(\V^*,\E^*)$ be the subgraph induced by the nodes in
    $\V^*:=\{v,w_1,w_2,b_q\}\cup K$.
    Note that $v$ and $b_q$ are children of both $w_1$ and $w_2$ in $\G^*$.
    Therefore, by Lemma~\ref{lemma:ac_all_parents}(ii), any shortest trail between $w_1$ and $w_2$ in $\G^*$ activated by the empty set must not contain $v$ nor $b_q$.
    This means that any shortest trail between $w_1$ and $w_2$ in $\G^*$ activated by the empty set must consist only of elements of $K$.
    
    Consider a shortest trail in $\G$
    \begin{equation}\label{eq:cor_bset_trail}
        w_1 \har x_1 \har \cdots \har x_n \har w_2
    \end{equation}
    consisting of nodes in $K$, i.e. $\{ x_i\}_{i=1}^n\subseteq K$.
    Therefore, it is a shortest trail activated by the empty set between $w_1$ and $w_2$ in $\G^*$.
    We now apply Lemma~\ref{lemma:bset_btrails}, since $w_1$ and $w_2$ belong to the B-set $B_q \cap \V^*$ corresponding to $v$ in the graph $\G^*$.
    Therefore, for all $i=1, \dots, n$,
    $x_i \in B_q$, completing the proof.

    If $q = Q(v) + 1$, then the proof is analogous to the previous case, but then with $\V^*:= (v,w_1,w_2)\cup K$.
\end{proof}

Informally, the lemma below states that if the set $\PossCandInd{\Ovk}$ is empty, then there is a node in $\BOvk \setminus \Ovk$ which is adjacent to a node in the set $\Ovk$.
\begin{lemma}\label{lemma:adset_not_empty}
    Under Assumption~\ref{assumpt:good_graphs}, let $w\in B(\Ovk)\setminus \Ovk$ such that $\notdsepbig{w}{\Ovk}{\emptyset}$,
    that is, $w \notin \PossCandInd{\Ovk}$.
    Then, $\ad{\Ovk}\cap B(\Ovk) \neq \emptyset$, where $\ad{\Ovk}$ is the set of nodes  adjacent to an element of $\Ovk$.
\end{lemma}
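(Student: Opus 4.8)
The plan is to turn the hypothesis $\notdsepbig{w}{\Ovk}{\emptyset}$ into an explicit active trail from $w$ into $\Ovk$, and then to single out the last node on this trail just before it first enters $\Ovk$; this node will be the desired element of $\ad{\Ovk}\cap \BOvk$.

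First I would unpack the non-separation hypothesis. Since $\notdsepbig{w}{\Ovk}{\emptyset}$, there is some node $o\in \Ovk$ together with a trail from $w$ to $o$ activated by the empty set. Among all such trails between $w$ and $o$ I would fix a shortest one, say
$$ w = x_0 \har x_1 \har \cdots \har x_n \har x_{n+1} = o. $$
By Lemma~\ref{lemma:trail_activ_emptyset}, being activated by the empty set is the same as containing no converging connection, so this trail is genuinely active even though it may pass through nodes of $\Ovk$.

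The key step is to confine the whole trail to $\BOvk$. Recall that $\BOvk = B_{\tilde q}(v)$ for some $\tilde q\in \{1,\dots,Q(v)+1\}$ (Definition~\ref{def:smallest_bset_larger_O}) and that $\Ovk \subsetneq \BOvk$ (Remark~\ref{rem:Okv_strictly_subset_B(Okv)}). Hence both endpoints $w\in \BOvk\setminus\Ovk$ and $o\in \Ovk$ lie in the same B-set $B_{\tilde q}(v)$, and the trail above shows they are not d-separated by the empty set, i.e. $\notdsepbig{w}{o}{\emptyset}$. Applying Lemma~\ref{lemma:bset_btrails} (whose statement covers both $\tilde q\le Q(v)$ and $\tilde q = Q(v)+1$), every node of this shortest active trail must lie in $B_{\tilde q}(v)=\BOvk$; in particular $x_0,\dots,x_{n+1}\in \BOvk$.

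Finally I would locate the crossing node. Let $j$ be the smallest index with $x_j\in \Ovk$. Such a $j$ exists because $x_{n+1}=o\in \Ovk$, and $j\ge 1$ because $x_0=w\notin \Ovk$. Then $x_{j-1}$ and $x_j$ are consecutive, hence adjacent, with $x_j\in \Ovk$, so $x_{j-1}\in \ad{\Ovk}$; and $x_{j-1}\in \BOvk$ by the previous step. This yields $x_{j-1}\in \ad{\Ovk}\cap \BOvk$, proving the intersection is nonempty. I do not anticipate a serious obstacle: this is a shortest-trail / first-entry argument, and its only nontrivial input is Lemma~\ref{lemma:bset_btrails}, which is precisely designed to trap such trails inside one B-set. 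The sole point demanding care is verifying the hypotheses of that lemma — namely that $w$ and the chosen $o$ truly belong to the same B-set and are not d-separated by the empty set — but both facts are immediate from the definition of $\BOvk$ and from the choice of $o$.
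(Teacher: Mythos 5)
Your proof is correct and follows essentially the same route as the paper's: pick a shortest trail activated by the empty set from $w$ into $\Ovk$, use Lemma~\ref{lemma:bset_btrails} to confine it to $\BOvk$, and exhibit the trail node adjacent to $\Ovk$. The only cosmetic difference is that the paper minimizes length over all active trails from $w$ to the \emph{set} $\Ovk$ (so the penultimate node automatically avoids $\Ovk$), while you minimize to a fixed endpoint $o$ and recover the same node via a first-entry index; both yield a node of $\BOvk\setminus\Ovk$ adjacent to $\Ovk$, as required.
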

\begin{proof}
By assumption, we have
$\notdsepbig{w}{\Ovk}{\emptyset}$.
Therefore $w$ must be connected to $\Ovk$ by some trail activated by the empty set.
We pick a shortest trail from $w$ to $\Ovk$ activated by the empty set, as
\begin{align}
    w \har x_1 \har \cdots \har x_n \har o
    \label{trail:lemma:adset_not_empty}
\end{align}
where $o\in \Ovk$.

If $n = 0$, then $w\in \BOvk \setminus \Ovk$ is adjacent to $o$ and thus $w\in \ad{\Ovk}\cap B(\Ovk)$.

Now, assume that $n>0$. 
We will prove that $x_n\in \ad{\Ovk}\cap B(\Ovk)$.
Since we have a shortest trail between two nodes ($w$ and $o$) in $\BOvk$ with no chords, Lemma~\ref{lemma:bset_btrails} implies that all $x_i \in \BOvk$.
As a particular case, we have $x_n \in \BOvk$. 

If $x_n\in \Ovk$, then the trail
$w \har x_1 \har \cdots \har x_n$
would be a shorter trail from $w$ to $\Ovk$
than the trail in \eqref{trail:lemma:adset_not_empty}.
This is a contradiction, proving that $x_n\notin \Ovk$.
Therefore $x_n \in \BOvk\setminus \Ovk$.
Note that $x_n$ is adjacent to $o$, and thus $x_n \in \ad{\Ovk}\cap B(\Ovk)$.
This concludes the proof.
\end{proof}

\noindent By \cref{prop:characterization_poss_candidates}, we know that a node $w$ is not a possible candidate for partial order $\Ovk$, if $\notdsepbig{w}{\Ovk}{\emptyset}$ ($w \notin \PossCandInd{\Ovk}$) and $w$ and $\Ovk$ are not adjacent ($w \notin \PossCandIn{\Ovk} \sqcup \PossCandOut{\Ovk}$).
We will now prove an even stronger claim.
That is, a node $w$ is not a possible candidate to be added to a partial order $\Ovk$, if there exists an $o$ in $\Ovk$ such that:
\begin{itemize}
    \item $w$ and $o$ are not adjacent.
    \item There exists a trail between $w$ and $o$ activated by the empty set which does not contain any nodes in $\Ovk$.
\end{itemize}
The lemma below provides a clear intuition into how the algorithm grows a partial order.
For example, consider the a trail
$$
o \har w_1 \har w_2 \har \cdots \har w_n,
$$
with no converging connections where $o \in \Ovk$ and $\{w_i \}_{i=1}^n \subseteq \pa{v} \setminus \Ovk$.
In this case, we cannot add the node $w_i$ to $\Ovk$ for any $i \in \{2,\dots,n \}$ since it is connected to $o$ by an active trail $o \har w_1 \har \cdots \har w_i$ consisting of nodes in $\V \setminus \Ovk$.
Consequently, we must add node $w_1$ before adding node $w_i$.
If $w_1$ is added to $\Ovk$, then the same argument applies to the trail $w_1 \har w_2 \har \cdots \har w_n$, i.e. we must add $w_2$ next.
The recursion is clear; any node $w_i$ can only be added after $w_1,\dots,w_{i-1}$ have been added.
So, the algorithm ``walks'' over trails with no converging connections, adding elements of these trails one node at a time, and it is only allowed to make ``jumps'' whenever a node is d-separated by the empty set from the current partial order.

\begin{lemma}\label{lemma:adjacent_os}
     Under Assumption~\ref{assumpt:good_graphs}, let $o \in \Ovk$ and $w \in \PossCand{\Ovk}$.
    If there exists a trail
    \begin{align}\label{trail:lemma:adjacent_os}
        o \har x_1 \har \cdots \har x_n \har w,
    \end{align}
    with no converging connection such that $\forall i = 1, \dots, n$,
    $x_i \in \V \setminus \Ovk$,
    then $w$ and $o$ are adjacent.
\end{lemma}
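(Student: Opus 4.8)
The plan is to argue by contradiction: assume $o$ and $w$ are not adjacent, so the given trail has length $n \geq 1$, and exploit that $w \in \PossCand{\Ovk}$. First I would observe that, since the trail $o \har x_1 \har \cdots \har x_n \har w$ contains no converging connection, it is activated by the empty set (Lemma~\ref{lemma:trail_activ_emptyset}); as $o \in \Ovk$ this gives $\notdsepbig{w}{\Ovk}{\emptyset}$, so $w \notin \PossCandInd{\Ovk}$. By \cref{prop:characterization_poss_candidates}, $w$ is therefore a possible candidate by an incoming or an outgoing arc: there is a node $o' \in \Ovk$ and a conditioning set $S \subseteq \Ovk$ — namely $S = \parentsdown{o'}{w} \sqcup \{o'\}$ (incoming, $w \ra o'$) or $S = \parentsdown{w}{o'} \sqcup \{o'\}$ (outgoing, $o' \ra w$) — with $\dsepbig{w}{\Ovk \setminus S}{S}$.

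The key first step is to show $o \in S$. Every node of the trail lies outside $S$ except possibly $o$: the interior nodes satisfy $x_i \in \V \setminus \Ovk$, hence $x_i \notin S \subseteq \Ovk$, while $w \in \BOvk \setminus \Ovk$ gives $w \notin S$. If $o \notin S$, then $o \in \Ovk \setminus S$, and the trail, having no converging connection and no node in $S$, would be activated by $S$, contradicting $\dsepbig{w}{\Ovk \setminus S}{S}$. Hence $o \in S$. Next I would dispose of the easy cases. In the outgoing case, $o \in \parentsdown{w}{o'} \sqcup \{o'\}$ forces $o \ra w$ (either $o = o'$ with $o' \ra w$, or $o \in \pa{w}$), so $o$ and $w$ are adjacent, a contradiction. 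In the incoming case with $o = o'$ we get $w \ra o$, again adjacency.

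The only remaining situation is the incoming case with $o \in \parentsdown{o'}{w}$, where $o \ra o'$ and $w \ra o'$ with $o <_{o'} w$; thus $o$ and $w$ are distinct, non-adjacent parents of $o'$ joined by a trail activated by the empty set. Here I would derive a forbidden active cycle: conceptually, $o' \la o \har x_1 \har \cdots \har x_n \har w \ra o'$ is a cycle whose $o$-to-$w$ part consists only of serial or diverging connections, so if it is chordless it is an active cycle (\cref{def:active_cycle}), contradicting \cref{assumpt:good_graphs}. Concretely, I would apply \cref{lemma:v1_trail_to_v2_parents_of_v3} with $v_3 = o'$ and $\{v_1, v_2\} = \{o, w\}$: a no-converging trail has the shape $o \la \cdots \la x_j \ra \cdots \ra w$, so after orienting it to end with an arrow into one endpoint and choosing it shortest, the theorem concludes that $o$ and $w$ are adjacent — the desired contradiction.

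The main obstacle is verifying the hypothesis of \cref{lemma:v1_trail_to_v2_parents_of_v3} that the interior nodes avoid $\pa{o'}$, together with the attendant chordlessness of the cycle. This is exactly where the minimal-trail machinery enters: a shortest trail activated by the empty set is chordless (it is a single subtrail with no converging connection, so \cref{prop:no_chords} of \cref{thm:minimal_trails} applies with $C = 0$), which rules out chords of the form $x_i \har x_j$; and any interior node lying in $\pa{o'}$ would, together with $o$, form a shorter empty-set-active trail between two parents of $o'$, allowing one to prune the trail until the avoidance condition $\{x_i\} \cap \pa{o'} = \emptyset$ holds. Securing this reduction rigorously is the delicate part of the argument; once it is in place, \cref{lemma:v1_trail_to_v2_parents_of_v3} closes the final case.
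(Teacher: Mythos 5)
Your proposal coincides with the paper's proof up to and including the step $o \in S$: the paper likewise notes the trail is activated by the empty set (so $w \notin \PossCandInd{\Ovk}$), uses the d-separation in the second restriction of \cref{prop:characterization_poss_candidates} to force $o \in \parentsdown{o'}{w} \sqcup \{o'\}$ (resp. $\parentsdown{w}{o'} \sqcup \{o'\}$), and dispatches the outgoing case and the incoming case with $o = o'$ exactly as you do. The genuine gap is your treatment of the remaining case ($w \ra o'$ with $o \in \parentsdown{o'}{w}$). The hypothesis of \cref{lemma:v1_trail_to_v2_parents_of_v3} that the interior nodes avoid $\pa{o'}$ is not just ``delicate to secure'' --- it is provably unobtainable here. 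Since $o' \in \ch{o} \cap \ch{w}$, part (i) of \cref{lemma:ac_all_parents} (applied through \cref{lemma:generalisation_to_K} with $K = \V \setminus \Ovk$) shows that \emph{every} interior node of a shortest such trail satisfies $x_i \in \pa{o'}$. For the same reason, the cycle $o' \la o \har x_1 \har \cdots \har x_n \har w \ra o'$ is never chordless: each arc $x_i \ra o'$ is a chord, so no active cycle is created and \cref{assumpt:good_graphs} yields no contradiction through \cref{def:active_cycle}. Your pruning step fails as well: taking the last interior node $x_j \in \pa{o'}$ leaves a subtrail from $x_j$ to $w$ with \emph{no} interior nodes at all (the theorem requires $n \geq 1$), and even where pruning produces a usable trail, the theorem would only certify adjacency of the pruned endpoints, which says nothing about the pair $(o, w)$ whose non-adjacency you assumed.

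What your argument misses is that this final case is not a graph-theoretic fact: a DAG with $o \ra o'$, $w \ra o'$, $o$ and $w$ non-adjacent but joined by a no-converging trail of nodes outside $\Ovk$, is perfectly compatible with the absence of active cycles and interfering v-structures. What excludes the configuration is the \emph{algorithmic history} in items 4--5 of \cref{assumpt:good_graphs}, and the paper's proof is accordingly an induction over the partial orders previously produced by the algorithm (with base case $k = 1$, where $\Ovk = \{o\}$ gives adjacency immediately). In the problematic case, the paper shows that all $x_i$ lie in $\pa{o'}$ (as above), that none equals $w$ or lies in $\Ovk \supseteq \parentsdown{o'}{w}$, hence $x_i \in \parentsup{o'}{w}$, i.e.\ $o \smallerbig{o'} w \smallerbig{o'} x_i$ for all $i$. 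Consequently, when $\smallerbig{o'}$ was being constructed, $w$ was a possible candidate for an earlier partial order $O_{o'}$ containing $o$ but none of the $x_i$, and the induction hypothesis applied to that earlier step delivers the adjacency of $o$ and $w$. Without this induction on the algorithm's run, the case you isolate cannot be closed by structural restrictions on $\G$ alone.
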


\begin{proof}
We will employ an inductive argument, assuming that the lemma holds for all previous partial orders determined by the algorithm. 
By ``previous partial orders'' we mean all partial orders $O^p_w$ with $w<v$ and
$p \in \{1,\dots, | \pa{w} |-1 \}$,
and $O^p_v$ with $p\in \{1,\dots,k-1 \}$.

Without loss of generality we can assume that the trail \eqref{trail:lemma:adjacent_os} is a shortest trail between $o$ and $w$ with no converging connection and satisfying $\forall i = 1, \dots, n$,
$x_i \in \V \setminus \Ovk$.
Because $o$ and $w$ are parents of $v$ by construction, $\G$ contains the subgraph below.
\begin{figure}[H]
    \centering
\begin{tikzpicture}[scale=1, transform shape, node distance=1.2cm, state/.style={circle, font=\Large, draw=black}]

\node (oj) {$o$};
\node[right= of oj] (x1) {$x_1$};
\node[below right= of x1] (ot) {$v$};
\node[above right= of ot] (xn) {$x_n$};
\node[right= of xn] (oi) {$w$};

\begin{scope}[>={Stealth[length=6pt,width=4pt,inset=0pt]}]

\path [->] (oj) edge node {} (ot);
\path [->] (oi) edge node {} (ot);

\draw[loosely dotted, line width=0.5mm] (x1) -- (xn);
\draw[transform canvas={yshift=0.21ex},-left to,line width=0.25mm] (oj) -- (x1);
\draw[transform canvas={yshift=-0.21ex},left to-,line width=0.25mm] (oj) -- (x1);

\draw[transform canvas={yshift=0.21ex},-left to,line width=0.25mm] (xn) -- (oi);
\draw[transform canvas={yshift=-0.21ex},left to-,line width=0.25mm] (xn) -- (oi);

\end{scope}

\end{tikzpicture}
\end{figure}

Because \eqref{trail:lemma:adjacent_os} has no converging connection, we know that $\notdsepbig{w}{\Ovk}{\emptyset}$.
Thus, if $w\in \PossCand{\Ovk}$, then we must have $w\in \PossCandIn{\Ovk}$ or $w\in \PossCandOut{\Ovk}$.

In the base case where $k = 1$ and $\big| \Ovk \big| = 1$, we know that $\Ovk = \{o\}$. Therefore we directly know that $w$ and $o$ are adjacent (because $w\in \PossCandIn{\Ovk}$ or $w\in \PossCandOut{\Ovk}$, so $w$ must be connected to some node in $\Ovk$, and this must be $o$).

We now prove the induction step.
If $n = 0$, then $w$ and $o$ are adjacent, which concludes the proof. We now assume $n > 0$.
For this, we consider both cases depending on whether $w\in \PossCandIn{\Ovk}$ or $w\in \PossCandOut{\Ovk}$.

\underline{Case 1: $w\in \PossCandIn{\Ovk}$}. 
By definition of $\PossCandIn{\Ovk}$ (\cref{prop:characterization_poss_candidates}),
there exists an $\ot\in \Ovk$ such that $w\ra \ot$ satisfying the following restrictions:
\begin{enumerate}
    \item $\parentsdown{\ot}{w}\subseteq \Ovk$.
    \item $\dsepbig{w}{\Ovk\setminus ( \parentsdown{\ot}{w} \sqcup \{\ot\} )}{\parentsdown{\ot}{w} \sqcup \{\ot\}}$.
\end{enumerate}
First, note that if $\ot= o$, then $o$ and $w$ are adjacent, completing the proof.
Thus, we assume that $\ot\neq o$.

To satisfy the second restriction above,
any trail between $w$
and $\Ovk\setminus ( \parentsdown{\ot}{w} \sqcup \{\ot\} )$
must be blocked by $\parentsdown{\ot}{w} \sqcup \{\ot\}$.
If we assume that $o\in \Ovk\setminus ( \parentsdown{\ot}{w} \sqcup \{\ot\} )$, then the trail~\eqref{trail:lemma:adjacent_os} must be blocked by $\parentsdown{\ot}{w} \sqcup \{\ot\}$.
Since this trail~\eqref{trail:lemma:adjacent_os} contains no converging connections, there must be an $x_k \in ( \parentsdown{\ot}{w} \sqcup \{\ot\} )$ for some $k \in \{1, \dots, n\}$.
The first restriction combined with the definition of $\ot$ implies that $\parentsdown{\ot}{w} \sqcup \{\ot\} \subseteq \Ovk$,
and thus $x_k \in \Ovk$ which contradicts the assumption of Lemma~\ref{lemma:adjacent_os} that $x_i$ belongs to $\V \setminus \Ovk$ for every $i = 1, \dots, n$.

In the previous paragraph, we have proved that
$o \notin \Ovk \setminus ( \parentsdown{\ot}{w} \sqcup \{\ot\} )$.
Since $o\in \Ovk$, this implies that $o \in \parentsdown{\ot}{w} \sqcup \{\ot\}$.
Moreover, we assumed that $o\neq \ot$, and therefore $o\in \parentsdown{\ot}{w}$ which means that 
$o \ra \ot$ and $o<_{\displaystyle\ot} w$.
% 
% Remark that we now have the subgraph below.
% 
% \begin{figure}[H]
% \centering
% \begin{tikzpicture}[scale=1, transform shape, node distance=1.2cm, state/.style={circle, font=\Large, draw=black}]
% 
% \node (oj) {$o$};
% \node[right= of oj] (x1) {$x_1$};
% \node[below right= of x1] (ot) {$\ot$};
% \node[above right= of ot] (xn) {$x_n$};
% \node[right= of xn] (oi) {$w$};
% 
% \begin{scope}[>={Stealth[length=6pt,width=4pt,inset=0pt]}]
% 
% \path [->] (oj) edge node {} (ot);
% \path [->] (oi) edge node {} (ot);
% 
% \draw[loosely dotted, line width=0.5mm] (x1) -- (xn);
% 
% \draw[transform canvas={yshift=0.21ex},-left to,line width=0.25mm] (oj) -- (x1);
% \draw[transform canvas={yshift=-0.21ex},left to-,line width=0.25mm] (oj) -- (x1);
% 
% \draw[transform canvas={yshift=0.21ex},-left to,line width=0.25mm] (xn) -- (oi);
% \draw[transform canvas={yshift=-0.21ex},left to-,line width=0.25mm] (xn) -- (oi);
% 
% \end{scope}
% 
% \end{tikzpicture}
% \end{figure}

We have $\ot \in \ch{o}$ and $\ot \in \ch{w}$. Therefore, by Lemma~\ref{lemma:ac_all_parents}(i),
$\ot \in \ch{x_i}$ for all $i \in \{1,\dots, n\}$. 
Therefore, all $x_i$ must belong to
$\pa{\ot} = \parentsdown{\ot}{w} \sqcup \{w\}
\sqcup \parentsup{\ot}{w}$.
None of them is equal to $w$ since \eqref{trail:lemma:adjacent_os} is a shortest trail.
By assumption, none of the $x_i$ belong to $\Ovk$;
the first restriction states that
$\parentsdown{\ot}{w} \subseteq \Ovk$;
therefore all $x_i$ belong to $\parentsup{\ot}{w}$.
This means that
$w <_{\displaystyle \ot} x_i$ for all $i=1,\dots,n$.

Now, we have
$o
<_{\displaystyle \ot} w
<_{\displaystyle \ot} x_i$
for all $i\in \{1,\dots,n\}$.
This means that during the construction of $\smallerbig{\ot}$ in the algorithm we had $w\in \PossCand{O_{\ot}}$ for a partial order $O_{\ot}$ which contains $o$ but not $\{x_i\}_{i=1}^n$.
Therefore, by the induction hypothesis we obtain that $w$ and $o$ are adjacent, which finishes the proof for this case.

\medskip

\underline{Case 2: $w\in \PossCandOut{\Ovk}$}.
By Definition of $\PossCandOut{\Ovk}$ (\cref{prop:characterization_poss_candidates}),
there exists an $\ot\in \Ovk$ such that
$\ot \ra w$ satisfying the following conditions:
\begin{enumerate}
    \item $\parentsdown{w}{\ot}\subseteq \Ovk$.
    \item $\dsepbig{w}{\Ovk \setminus ( \parentsdown{w}{\ot} \sqcup \{ \ot \} )}{ \parentsdown{w}{\ot} \sqcup \{ \ot \} }$.
\end{enumerate}
If $\ot = o$ the proof is complete. We now assume $\ot \neq o$.

If $o \notin \parentsdown{w}{\ot} \sqcup \{ \ot \}$, then $o \in \Ovk \setminus ( \parentsdown{w}{\ot} \sqcup \{ \ot \} )$ and therefore \eqref{trail:lemma:adjacent_os} is a trail from $w$ to $\Ovk \setminus ( \parentsdown{w}{\ot} \sqcup \{ \ot \} )$.
By the second restriction above,
this trail must be blocked by $\parentsdown{w}{\ot} \sqcup \{ \ot \}$.
Because this trail has no converging connection there must be an
$i \in \{1, \dots, n\}$ such that
$x_i \in \parentsdown{w}{\ot} \sqcup \{ \ot \}
\subseteq \Ovk$
by the first restriction and the definition of $\ot$.
This is a contradiction since by the assumption of the lemma $x_i$ is in $\V \setminus \Ovk$.

Therefore we have shown that $o \in \parentsdown{w}{\ot} \sqcup \{ \ot \}$, which implies (by definition of this set) that $o$ and $w$ are adjacent, proving the lemma.
\end{proof}

Lemma~\ref{lemma:adjacent_os} immediately implies a very useful property of partial orders generated by our algorithm, which is proven in the corollary below.
%This corollary will play an important role in the proofs in Section~\ref{sec:proof:properties_trails_with_converging}.
\begin{corollary}
    Under Assumption~\ref{assumpt:good_graphs},
    let $o_i, o_j \in \Ovk$,
    such that $o_i$ (respectively $o_j$) is the $i$-th node
    (respectively $j$-th node) in the partial order $\Ovk$ and $i \neq j$.
    If there exists a trail
    \begin{align}\label{eq:cor_adjacent_os_trail}
        o_i \har x_1 \har \cdots \har x_n \har o_j,
    \end{align}
    with no converging connection such that
    $\forall m = 1, \dots, n$,
    $x_m \in \V \setminus \Ovk$,
    then $o_i$ and $o_j$ are adjacent.
   \label{cor:adjacent_os}
\end{corollary}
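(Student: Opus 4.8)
The plan is to deduce this corollary directly from \cref{lemma:adjacent_os}, applying that lemma not to the full partial order $\Ovk$ but to the intermediate partial order that the algorithm held at the moment the \emph{later} of the two nodes $o_i, o_j$ was inserted. Since the indices play symmetric roles, I may assume without loss of generality that $i < j$, i.e.\ that $o_i$ precedes $o_j$ in $\Ovk$. I then single out the step of \cref{alg:finding_order_general} at which $o_j$ was appended: at that moment the partial order was $O^{j-1}_v = (o_1, \dots, o_{j-1})$, so that $o_i \in O^{j-1}_v$ (because $i \le j-1$) and $o_j \in \PossCand{O^{j-1}_v}$ (because the algorithm selected it). This reframes the two nodes $o_i, o_j \in \Ovk$ as a node of a partial order and a possible candidate for that same partial order, which is exactly the situation \cref{lemma:adjacent_os} addresses.

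The key observation is that the hypotheses on the trail \eqref{eq:cor_adjacent_os_trail} transfer from $\Ovk$ to $O^{j-1}_v$. Indeed, $O^{j-1}_v \subseteq \Ovk$, so $\V \setminus \Ovk \subseteq \V \setminus O^{j-1}_v$; hence the assumption that every intermediate node $x_m$ lies in $\V \setminus \Ovk$ immediately gives $x_m \in \V \setminus O^{j-1}_v$. The trail still has no converging connection, and its endpoints are now $o_i \in O^{j-1}_v$ and $o_j \in \PossCand{O^{j-1}_v}$, matching the form required by \cref{lemma:adjacent_os}. Invoking that lemma with $o := o_i$ and $w := o_j$ yields that $o_i$ and $o_j$ are adjacent, which is the desired conclusion.

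The only point needing care is to confirm that \cref{assumpt:good_graphs} genuinely holds for the intermediate partial order $O^{j-1}_v$, since \cref{lemma:adjacent_os} is stated under that assumption. Conditions (1)--(4) of \cref{assumpt:good_graphs} depend only on $\G$, the well-ordering, and the node $v$, none of which change, so they carry over unchanged. For condition (5) I must verify that $O^{j-1}_v$ is a partial order produced by the algorithm with $j-1 < |\pa{v}|$; this holds because $O^{j-1}_v$ is precisely the partial order reached just before $o_j$ is added, and $j-1 < j \le k < |\pa{v}|$. This index bookkeeping is the main (and essentially the only) obstacle, and once it is settled the corollary follows from \cref{lemma:adjacent_os} with no further argument.
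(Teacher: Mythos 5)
Your proposal is correct and matches the paper's own proof essentially verbatim: both reduce to the step at which the later node $o_j$ was appended, note that $o_i \in O^{j-1}_v$, $o_j \in \PossCand{O^{j-1}_v}$, and that the trail's intermediate nodes lie in $\V \setminus O^{j-1}_v$ since $O^{j-1}_v \subseteq \Ovk$, then invoke Lemma~\ref{lemma:adjacent_os}. Your extra check that Assumption~\ref{assumpt:good_graphs} carries over to $O^{j-1}_v$ is a sound (and slightly more careful) addition, but does not change the argument.
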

\begin{proof}
    Without loss of generality, we can assume that $i<j$.
    This means that $o_{j} \in \PossCand{O^{j-1}_v}$, with $o_i \in O^{j-1}_v$.
    Remark that $O^{j-1}_v \subseteq \Ovk$. 
    Therefore, for all $m = 1, \dots, n$,
    $x_m \in \V \setminus O^{j-1}_v$.
    Hence, by Lemma~\ref{lemma:adjacent_os},
    $o_i$ and $o_j$ are adjacent.
\end{proof}

We now prove a lemma which states that sets which are d-separated cannot be adjacent.
It is quite trivial but it will be useful in Lemma~\ref{lemma:o2_if_higher_os}.

\begin{lemma}
    \label{lemma:dsep_adjacent}
    Let $\G = (\V, \E)$ be a DAG and $X$, $Y$, $Z$ subsets of $\V$ such that $\dsepbig{X}{Y}{Z}$.
    Then $X$ and $Y$ cannot be adjacent, in the sense that $\forall x, y \in X \times Y$,
    $x \nothar y$.
\end{lemma}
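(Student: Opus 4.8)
The plan is to argue directly by contradiction from the definition of d-separation (\cref{def:graph_dsep}). First I would suppose that the conclusion fails, i.e. that there exist $x \in X$ and $y \in Y$ with $x \har y$. Since the notation $\dsepbig{X}{Y}{Z}$ presupposes that $X$, $Y$, $Z$ are disjoint, we have $x \neq y$, so the adjacency $x \har y$ is a genuine edge of the skeleton $\G^*$ and hence constitutes a trail $v_1 \har v_2$ with $v_1 = x \in X$ and $v_2 = y \in Y$, that is, a trail of length one joining $X$ and $Y$.

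The key observation is that both blocking conditions in \cref{def:graph_dsep} can only be met at an \emph{interior} node $v_i$ of a trail: the v-structure condition $v_{i-1} \rightarrow v_i \leftarrow v_{i+1}$ requires both a predecessor $v_{i-1}$ and a successor $v_{i+1}$, and the second condition likewise refers to a non-endpoint node $v_i \in Z$. The trail $v_1 \har v_2$ has no interior node, so there is no $v_i$ that could satisfy either condition. Consequently this trail is not blocked by $Z$; by the definition it is therefore activated, which yields $\notdsepbig{X}{Y}{Z}$ and contradicts the hypothesis.

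The degenerate case in which $X$ or $Y$ is empty needs no separate treatment, since then the universally quantified conclusion $\forall (x,y) \in X \times Y,\ x \nothar y$ holds vacuously, consistently with the empty-set convention of \cref{def:graph_dsep}. There is essentially no obstacle here beyond the one point already isolated above, namely reading off from the precise wording of the blocking conditions that a single-edge trail admits no blocking node; this is where all the content of the lemma resides. In particular, no acyclicity-beyond-DAG, active-cycle, or interfering-v-structure hypotheses enter, which is why the statement is valid for arbitrary DAGs.
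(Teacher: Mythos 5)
Your proof is correct and takes essentially the same route as the paper, which likewise observes that the one-edge trail $x \har y$ admits no node satisfying either blocking condition, hence is activated by $Z$, contradicting $\dsepbig{X}{Y}{Z}$. Your extra remarks (disjointness forcing $x \neq y$ and keeping the endpoints out of $Z$, and the vacuous empty-set case) merely make explicit what the paper's terser proof leaves implicit.
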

\begin{proof}
    Let $x \in X$ and $y \in \V$ such that $x \har y$.
    $y$ is adjacent to $x \in X$ so the trail $x \har y$ is active given $Z$.
    This shows that $\notdsepbig{X}{\{y\}}{Z}$, which contradicts $\dsepbig{X}{Y}{Z}$.
\end{proof}

The lemma below states that under certain conditions an arc between a node $w\in \BOvk\setminus \Ovk$ and a node $o_i\in \Ovk$ implies the existence of another node $\ot \in \Ovk$ such that $w\ra \ot\in \E$ and $o_i\ra \ot\in \E$.

\begin{lemma}
\label{lemma:o2_if_higher_os}
    Following Definition~\ref{def:partial_order},
    let us write the partial order $\Ovk$ as $\Ovk = (o_1, \dots, o_k)$.
    Under Assumption~\ref{assumpt:good_graphs},
    let $i \in \{1, \dots, k\}$
    and $w \in \BOvk\setminus \Ovk$. 
    If $w \ra o_i$ and
    $\parentsup{o_i}{w} \cap \Ovk \neq \emptyset$,
    then there exists an
    $\ot\in \Ovk$ such that $\ot \neq o_i$ and $\G$ contains the subgraph below.
    \begin{figure}[H]
    \centering
    \begin{tikzpicture}[scale=1, transform shape, node distance=1.2cm, state/.style={circle, font=\Large, draw=black, minimum size=1cm}]
    \node (w) {$w$};
    \node[below = of w] (oi) {$o_i$};
    \node[right= of oi] (ot) {$\ot$};
    
    \begin{scope}[>={Stealth[length=6pt,width=4pt,inset=0pt]}]
    \path [->] (w) edge node[scale=0.8, above left] {} (oi);
    \path [->] (w) edge node[scale=0.8, above left] {} (ot);
    \path [->] (oi) edge node[scale=0.8, above left] {} (ot);
    \end{scope}
    \end{tikzpicture}
    \end{figure}
    \end{lemma}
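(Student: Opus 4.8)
The plan is to exhibit the required node $\ot$ as a common child of $o_i$ and a carefully chosen parent of $o_i$ lying inside $\Ovk$, and then to transfer the resulting arc onto $w$ using the B-set monotonicity of Corollary~\ref{cor:B_sets_implies_rightarrow}. Concretely, I would fix $z$ to be the $\smallerbig{o_i}$-largest element of $\parentsup{o_i}{w} \cap \Ovk$ (nonempty by hypothesis), so that $z \ra o_i$, $z \in \Ovk$ and $w \smallerbig{o_i} z$. The key reduction is then: it suffices to find $\ot \in \Ovk$ with $o_i \ra \ot$ and $z \ra \ot$. Indeed, such a $\ot$ satisfies $z \in \pa{o_i} \cap \pa{\ot} = B(o_i, \ot)$, and since $w \smallerbig{o_i} z$, Corollary~\ref{cor:B_sets_implies_rightarrow} (with its $(x,w,q)$ read as $(w,z,\ot)$) gives $w \in B(o_i,\ot)$, hence $w \ra \ot$; together with $o_i \ra \ot$ (which also forces $\ot \neq o_i$) and $w \ra o_i$, this is exactly the claimed subgraph.

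To produce $\ot$ I would inspect which of $z$ and $o_i$ was appended later to the partial order $\Ovk$, and split on that. In either case the later node was selected as a possible candidate relative to the partial order preceding its insertion, which is a prefix of $\Ovk$ already containing the earlier node; so by Proposition~\ref{prop:characterization_poss_candidates} it is a possible candidate by independence, by incoming arc, or by outgoing arc. The independence case is impossible: $z \ra o_i$ makes $z$ and $o_i$ adjacent, so by Lemma~\ref{lemma:dsep_adjacent} they cannot be d-separated. The productive case is the incoming-arc one: if $o_i$ entered later through an arc $o_i \ra \ot$, the d-separation requirement in~\eqref{eq:cond_incoming_arc}, combined with the adjacency of $z$ and $o_i$, forces $z$ into the conditioning set $\parentsdown{\ot}{o_i} \sqcup \{\ot\}$, and $z \neq \ot$ then yields $z \ra \ot$; symmetrically, if $z$ entered later through $z \ra \ot$, the same requirement forces $o_i \ra \ot$. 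Either way $\ot$ is the sought common child, and the reduction applies.

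The delicate point, and the main obstacle, is to discard the outgoing-arc certificates, and this is where the two cases genuinely differ. If the later node is $z$ and it were a candidate by an outgoing arc $\ot \ra z$, then~\eqref{eq:cond_outgoing_arc} together with the adjacency of $z$ and $o_i$ would force $o_i \ra z$, contradicting $z \ra o_i$. If instead the later node is $o_i$ and it were a candidate by an outgoing arc $\ot \ra o_i$, the analogous argument forces $z \smallerbig{o_i} \ot$ with $\ot \ra o_i$ and $\ot \in \Ovk$, that is, $\ot \in \parentsup{o_i}{w} \cap \Ovk$ lying strictly above $z$ in $\smallerbig{o_i}$ --- precisely what the maximal choice of $z$ forbids. (The auxiliary containment $\parentsdown{o_i}{z} \subseteq (\text{prefix})$ is likewise never available, since $w \in \parentsdown{o_i}{z}$ would then force $w$ into $\Ovk$, against $w \in \BOvk \setminus \Ovk$.)

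Thus every admissible scenario produces a common child $\ot \in \Ovk$ of $o_i$ and $z$, and the first-step reduction then finishes the proof. The two routine checks to keep straight are the exact form of the conditioning sets in~\eqref{eq:cond_incoming_arc}--\eqref{eq:cond_outgoing_arc} and the fact that all invoked partial orders are genuine prefixes of $\Ovk$, so that $w$ indeed stays outside them throughout.
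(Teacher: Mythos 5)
Your proposal is correct and takes essentially the same route as the paper's own proof: the same maximal choice of $z$ (the paper's $o_j$), the same reduction via Corollary~\ref{cor:B_sets_implies_rightarrow} to producing a common child $\ot \in \Ovk$ of $o_i$ and $z$, and the same case split on which of $z$, $o_i$ was inserted later, classified by Proposition~\ref{prop:characterization_poss_candidates}, with independence killed by adjacency and Lemma~\ref{lemma:dsep_adjacent}, the outgoing cases killed by a $2$-cycle and by the maximality of $z$, and the incoming cases yielding $\ot$. The only differences are expository: the paper spells out the exclusions $\ot \neq z$ (cycle) and $\ot \neq o_i$ (failure of $\parentsdown{o_i}{z} \subseteq O^{i-1}_v$ because $w$ lies in that set but not in $\Ovk$) that you compress into the phrase ``$z \neq \ot$'' and your final parenthetical remark.
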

\begin{proof}
Note that $w \in \pa{o_i} = \parentsdown{o_i}{w}
\sqcup \{ w \} \sqcup \parentsup{o_i}{w}$.
Since $\parentsup{o_i}{w} \cap \Ovk \neq \emptyset$,
let $o_j$ be its maximum element
% in $\pa{o_i}\cap \Ovk$
according to $<_{\displaystyle o_i}$.
Consequently, $o_j \ra o_i$ and $w<_{\displaystyle o_i} o_j$.
% We remark that $\G$ contains the subgraph below.
% \begin{figure}[H]
% \centering
% \begin{tikzpicture}[scale=1, transform shape, node distance=1.2cm, state/.style={circle, font=\Large, draw=black, minimum size=1cm}]
% \node (oi) {$w$};
% \node[below right= of oi] (op) {$o_i$};
% \node[above right= of op] (w) {$o_j$};
% 
% \begin{scope}[>={Stealth[length=6pt,width=4pt,inset=0pt]}]
% \path [->] (w) edge node[scale=0.8, above left] {} (op);
% \path [->] (oi) edge node[scale=0.8, above left] {} (op); 
% \end{scope}
% \end{tikzpicture}
% \end{figure}

Assume that there exists an $\ot\in \Ovk$ such that $\G$ contains the v-structure $o_i \ra \ot \la o_j$.
Thus, $\G$ contains the subgraph below.
\begin{figure}[H]
\centering
\begin{tikzpicture}[scale=1, transform shape, node distance=1.2cm, state/.style={circle, font=\Large, draw=black, minimum size=1cm}]
\node (w) {$w$};
\node[below = of w] (oi) {$o_i$};
\node[above right= of oi] (oj) {$o_j$};
\node[right= of oi] (ot) {$\ot$};
\begin{scope}[>={Stealth[length=6pt,width=4pt,inset=0pt]}]
\path [->] (w) edge node[scale=0.8, above left] {} (oi);
\path [->] (oj) edge node[scale=0.8, above left] {} (oi); 
\path [->] (oi) edge node[scale=0.8, above left] {} (ot); 
\path [->] (oj) edge node[scale=0.8, above left] {} (ot); 
\end{scope}
\end{tikzpicture}
\end{figure}
Observe that $o_j \in \pa{o_i}$ and $o_j \in \pa{\ot}$;
by definition of the B-sets, $o_j\in B(o_i,\ot)
= \pa{o_i} \cap \pa{\ot}$.
By Corollary \ref{cor:B_sets_implies_rightarrow},
since $w <_{\displaystyle o_i} o_j$,
we obtain $w \ra \ot$, giving us the desired subgraph and finishing the proof under the assumption of existence of $\ot$.

\medskip

There remains to prove the existence of such an $\ot$.
We consider two cases; when $i<j$ and $j<i$.

\underline{Case 1: $i<j$.}
In this case, since $o_j \ra v$, $o_j$ was added at the step $j-1$, and therefore
we must have $o_j \in \PossCand{O^{j-1}_v}$.
Because $i < j$, we obtain $o_i \in O^{j-1}_v$.
Therefore, $\notdsepbig{o_j}{O^{j-1}_v}{\emptyset}$ because of the arc $o_j\ra o_i$.
Hence, $o_j\notin \PossCandInd{O^{j-1}_v}$, and therefore $o_j$ must be in $\PossCandIn{O^{j-1}_v}$ or $\PossCandOut{O^{j-1}_v}$.
We consider both cases.
\begin{itemize}
    \item \underline{$o_j\in \PossCandIn{O^{j-1}_v}$}: 
    By \cref{prop:characterization_poss_candidates} there must be a node $\ot \in O^{j-1}_v$ such that $o_j\ra \ot$ satisfying following restrictions:
    \begin{enumerate}
        \item $\parentsdown{\ot}{o_j}\subseteq O^{j-1}_v$.
        \item $\dsepbig{o_j}{O^{j-1}_v\setminus ( \parentsdown{\ot}{o_j} \sqcup \{ \ot \} )}{ \parentsdown{\ot}{o_j} \sqcup \{ \ot \} }$.
    \end{enumerate}
    Remark that $w\in \parentsdown{o_i}{o_j}$ and $w \notin \Ovk \supseteq O^{j-1}_v$, and thus $\parentsdown{o_i}{o_j} \nsubseteq O^{j-1}_v$.
    This shows that $\ot \neq o_i$
    otherwise the first restriction could not be satisfied.
    % if $\ot$ would be equal to $o_i$.
    
    By combining Lemma~\ref{lemma:dsep_adjacent} and the second restriction,
    no node in $O^{j-1}_v\setminus ( \parentsdown{\ot}{o_j} \sqcup \{ \ot \} )$ can be adjacent to $o_j$.
    Because we have the arc $o_j \ra o_i$ we can deduce that $o_i\notin O^{j-1}_v\setminus ( \parentsdown{\ot}{o_j} \sqcup \{ \ot \} )$.
    
    Since $o_i\in O^{j-1}_v$, this means that
    $o_i \in \parentsdown{\ot}{o_j} \sqcup \{ \ot \}$,
    and therefore $o_i \ra \ot$.
    Now, we have $o_j \ra \ot$ and $o_i\ra \ot$ which is the desired v-structure.
    
    \item \underline{$o_j\in \PossCandOut{O^{j-1}_v}$}: 
    By \cref{prop:characterization_poss_candidates} there must be a node $\ot \in O^{j-1}_v$ such that $\ot\ra o_j$ and the following restrictions are satisfied:
    \begin{enumerate}
        \item $\parentsdown{o_j}{\ot}\subseteq O^{j-1}_v$.
        \item $\dsepbig{o_j}{O^{j-1}_v\setminus( \parentsdown{o_j}{\ot} \sqcup \{ \ot \} )}{ \parentsdown{o_j}{\ot} \sqcup \{ \ot \} }$.
    \end{enumerate}
    If $\ot=o_i$, then $\G$ contains the cycle $o_i \ra o_j \ra o_i$, which is a contradiction, and thus $\ot \neq o_i$.

    Combining Lemma~\ref{lemma:dsep_adjacent} and the second restriction,
    no point in $O^{j-1}_v \setminus ( \parentsdown{o_j}{\ot} \sqcup \{ \ot \} )$ can be adjacent to $o_j$.
    Because we have the arc $o_j \ra o_i$ we can deduce that $o_i\notin O^{j-1}_v \setminus ( \parentsdown{o_j}{\ot} \sqcup \{ \ot \} )$.

    Since $o_i\in O^{j-1}_v$, this means that
    $o_i \in \parentsdown{o_j}{\ot} \sqcup \{ \ot \}$,
    and therefore $o_i \ra o_j$.
    This provides the cycle $o_i \ra o_j \ra o_i$ which
    gives a contradiction, showing that
    $o_j$ cannot be in $\PossCandOut{O^{j-1}_v}$.
\end{itemize}

\underline{Case 2: $j<i$.} 
In this case, since $o_i \ra v$, $o_i$ was added at the step $i-1$, therefore
we must have $o_i \in \PossCand{O^{i-1}_v}$.
Because $j < i$, we obtain $o_j \in O^{i-1}_v$.
Hence, $\notdsepbig{o_i}{O^{i-1}_v}{\emptyset}$ due to existence of the arc $o_j\ra o_i$. 
We get $o_i\notin \PossCandInd{O^{i-1}_v}$.
Thus, $o_i$ must be in $\PossCandIn{O^{i-1}_v}$ or $\PossCandOut{O^{i-1}_v}$.
Both cases are considered below.
\begin{itemize}
    \item \underline{$o_i\in \PossCandIn{O^{i-1}_v}$}: 
    By \cref{prop:characterization_poss_candidates} there must be a node $\ot \in O^{i-1}_v$ such that $o_i\ra \ot$ satisfying:
    \begin{enumerate}
        \item $\parentsdown{\ot}{o_i}\subseteq O^{i-1}_v$.
        \item $\dsepbig{o_i}{O^{i-1}_v\setminus( \parentsdown{\ot}{o_i} \sqcup \{ \ot \} )}{ \parentsdown{\ot}{o_i} \sqcup \{ \ot \} }$.
    \end{enumerate} 
    Note that $\ot= o_j$ creates the cycle $o_j \ra o_i \ra o_j$ which is a contradiction, and thus $\ot\neq o_j$.
    % If $o_j \in O^{i-1}_v\setminus ( \parentsdown{\ot}{o_i} \sqcup \{ \ot \} )$, then $o_j \ra o_i$ is an active trail from $o_i$ to $O^{i-1}_v\setminus ( \parentsdown{\ot}{o_i} \sqcup \{ \ot \} )$ activated by $\parentsdown{\ot}{o_i} \sqcup \{ \ot \}$.

    As before, combining Lemma~\ref{lemma:dsep_adjacent}, the second restriction, and the arc $o_j \ra o_i$ implies that $o_j\notin O^{i-1}_v\setminus ( \parentsdown{\ot}{o_i} \sqcup \{ \ot \} )$, and therefore $o_j \in \parentsdown{\ot}{o_i} \sqcup \{ \ot \} $ which means that $o_j\ra \ot$, giving us the desired v-structure.
    
    \item \underline{$o_i\in \PossCandOut{O^{i-1}_v}$}: 
    By \cref{prop:characterization_poss_candidates} there must be a node $\ot \in O^{i-1}_v$ such that $\ot\ra o_i$ satisfying:
    \begin{enumerate}
        \item $\parentsdown{o_i}{\ot}\subseteq O^{i-1}_v$.
        \item $\dsepbig{o_i}{O^{i-1}_v\setminus( \parentsdown{o_i}{\ot} \sqcup \{ \ot \} )}{ \parentsdown{o_i}{\ot} \sqcup \{ \ot \} }$.
    \end{enumerate}
    Remark that $w\in \parentsdown{o_i}{o_j}$ and
    $w \notin \Ovk \supseteq O^{j-1}_v$, and thus
    $\parentsdown{o_i}{o_j} \nsubseteq O^{j-1}_v$.
    This shows that $\ot \neq o_j$
    otherwise the first restriction could not be satisfied.

    Combining Lemma~\ref{lemma:dsep_adjacent}, the second restriction, and the existence of the arc $o_j \ra o_i$ implies that $o_j\notin O^{i-1}_v \setminus ( \parentsdown{o_i}{\ot} \sqcup \{ \ot \} )$.
    Since $o_j \in O^{i-1}_v$
    this means that $o_j \in \parentsdown{o_i}{\ot} \sqcup \{ \ot \}$, and thus $o_j \smallerbig{o_i} \ot$.
    Therefore, $\ot \in \parentsup{o_i}{w}$ since $w \smallerbig{o_i} o_j$.
    However, this is a contradiction since $o_j \neq \ot$ was chosen to be the maximum element in $\parentsup{o_i}{w} \cap \Ovk$.
    This shows that it cannot happen that $o_i$ is in $\PossCandOut{O^{i-1}_v}$.
\end{itemize}
\end{proof}

\section{Lemmas to construct sequences of nodes}

\subsection{Possible candidates by incoming arc}
\label{subsec:lemma:existence_trail_incoming_case}

\begin{assumption}
\label{assumpt:lemma:sequence_incoming_arc}
    Assumption~\ref{assumpt:good_graphs} holds.
    Furthermore, $w\in B(\Ovk)\setminus \Ovk$
    and $o \in \Ovk$ are nodes such that

    (i) $w \ra o \in \E$\;
    (ii) $\parentsdown{o}{w} \subseteq \Ovk$,\;
    (iii) $\parentsup{o}{w} \cap \Ovk = \emptyset$.
\end{assumption}

\begin{lemma}
\label{lemma:sequence_of_os_no_converging}
    Assume that Assumption~\ref{assumpt:lemma:sequence_incoming_arc} holds and that $w \notin \PossCandIn{\Ovk}$.
    Then there exists a trail from $w$ to an element of 
    $\Ovk \setminus \big( \parentsdown{o}{w} \sqcup \{o\} \big)$
    which is activated by 
    $\parentsdown{o}{w} \sqcup \{o\}$
    and contains no converging connections.
\end{lemma}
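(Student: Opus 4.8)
The plan is to reduce the statement to a d-connection fact and then extract a converging-connection-free witnessing trail using the minimal-trails machinery of Appendix~A. First I would record that the pair $(w,o)$ already satisfies the inclusion half of condition~\eqref{eq:cond_incoming_arc}: by hypothesis~(i) we have $w \ra o$, and by~(ii) we have $\parentsdown{o}{w} \subseteq \Ovk$. Since $w \in \BOvk \setminus \Ovk$ but $w \notin \PossCandIn{\Ovk}$, the node $o$ cannot witness membership of $w$ in $\PossCandIn{\Ovk}$; as the inclusion half holds, the d-separation half must fail, giving $\notdsepbig{w}{Y}{Z}$ with $Z := \parentsdown{o}{w} \sqcup \{o\}$ and $Y := \Ovk \setminus Z$. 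Thus $\TRAILSbig{\{w\}}{Y}{Z} \neq \emptyset$, the three sets are pairwise disjoint because $w \notin \Ovk \supseteq Y \sqcup Z$, and it remains only to produce an active trail from $w$ to $Y$ that has no converging connection.

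Next I would set up a minimal trail and verify the hypotheses of Theorem~\ref{thm:minimal_trails_local_relationship}. Using~(ii) and $o \in \Ovk$ we get $Z \subseteq \Ovk$, hence $Y \sqcup Z = \Ovk$. By Corollary~\ref{cor:adjacent_os} the set $\Ovk$ has local relationships in the sense of Definition~\ref{def:propadj}, so $Y \sqcup Z$ does too. Let $T$ be a minimal element of $\TRAILSbig{\{w\}}{Y}{Z}$ for the order $\smallerTRAIL$, written as in~\eqref{trail:lemma_XYZ_Cs} with converging nodes $c_1, \dots, c_C$. If $C = 0$, then $T$ has no converging connection; by Lemma~\ref{lemma:trail_activ_emptyset} it is activated by the empty set, and since it is also activated by $Z$ its interior nodes avoid $Z$, so $T$ is exactly the desired trail ending in $Y$. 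Hence the entire content of the proof is to rule out $C \geq 1$.

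The crux is therefore to show that a minimal active trail cannot contain a converging connection, and here I would argue by contradiction from Assumption~\ref{assumpt:good_graphs}. Assume $C \geq 1$. Property~\ref{prop:final_c_in_Z} of Theorem~\ref{thm:minimal_trails_local_relationship} gives $c_C \in Z$, property~\ref{prop:adjacency} makes the consecutive $c_i$ (and $c_C$, $y$) adjacent, and Corollary~\ref{cor:all_cs_in_Z} pins down which $c_i$ lie in $Z$. The key geometric observation is that every element of $Z$ lies in $\{o\} \cup \parentsdown{o}{w} \subseteq \Ovk \subseteq \BOvk \subseteq \pa{v}$, so each collider in $Z$ is a parent of the common child $v$, as are $w$ and the endpoint $y$. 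Combining this with the chord-freeness and collider-freeness of the subtrails (properties~\ref{prop:nodes_along_subtrails}, \ref{prop:no_chords} and \ref{prop:shortest_subtrails} of Theorem~\ref{thm:minimal_trails}) and the explicit subgraphs of property~\ref{prop:subgraphs}, one shows that $w$ and a suitable parent of $v$ are joined by a chordless, converging-connection-free trail which, closed through $v$, produces an active cycle of the form $v \la w \har \cdots \har (\cdot) \ra v$, contradicting Assumption~\ref{assumpt:good_graphs}. In the degenerate configurations where a chord to $v$ or between intermediate nodes prevents the active cycle, that very chord yields an active trail with strictly fewer converging connections (or a strictly shorter one with the same converging structure), contradicting the minimality of $T$ under conditions~\ref{cond:least_converging_not_Z}--\ref{cond:shortest}. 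Condition~(iii), which forbids $\parentsup{o}{w}$ from meeting $\Ovk$, is used to fix the position of $o$ and its candidate colliders within the order $<_o$; the complementary situation $\parentsup{o}{w} \cap \Ovk \neq \emptyset$ is treated separately via Lemma~\ref{lemma:o2_if_higher_os} in the main argument.

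Either way we obtain a contradiction, so $C = 0$ and $T$ is the required trail. The step I expect to be the main obstacle is exactly this last one: converting ``$C \geq 1$'' into a genuine contradiction, since a naive active-cycle argument is spoiled by chords to the common child $v$ and by the need to control colliders that sit in $Z$. The minimal-trails results (Theorems~\ref{thm:minimal_trails} and~\ref{thm:minimal_trails_local_relationship} together with Corollary~\ref{cor:all_cs_in_Z}) are tailored precisely to organise this case analysis, and the special shape of $Z$ as a down-set of parents of $o$ is what makes the colliders land among the parents of $v$, where the no-active-cycle hypothesis can bite.
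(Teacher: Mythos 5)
Your opening reduction is correct and is exactly the paper's: by Assumption~\ref{assumpt:lemma:sequence_incoming_arc}(ii) the inclusion half of \eqref{eq:cond_incoming_arc} holds for the pair $(w,o)$, so $w \notin \PossCandIn{\Ovk}$ forces the d-separation half to fail, giving $\TRAILSbig{\{w\}}{Y}{Z} \neq \emptyset$ with $Y = \Ovk \setminus \big(\parentsdown{o}{w} \sqcup \{o\}\big)$, $Z = \parentsdown{o}{w} \sqcup \{o\}$; one then takes a minimal trail with respect to $\smallerTRAIL$ and must show it has $C = 0$ converging connections. The paper does precisely this and delegates $C=0$ to Lemma~\ref{lemma:minimal_trail_for_incoming_arc}\ref{fact:o7}.

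The genuine gap is in the only hard step, ruling out $C \geq 1$, and the mechanism you sketch for it is not one that can work. First, the no-active-cycle hypothesis never produces a contradiction by "closing a collider-free trail through $v$": it only forces chords to exist (this is exactly what Theorems~\ref{lemma:po_active_cycle1} and~\ref{lemma:v1_trail_to_v2_parents_of_v3} encode, and why the subgraph pictures in the appendix are full of extra arcs), and you have no control over whether the interior trail nodes are adjacent to $v$ --- by Theorem~\ref{thm:minimal_trails}\ref{prop:nodes_along_subtrails} they avoid $X \sqcup Y \sqcup Z$, but they may still be parents or children of $v$, so the closed cycle need not be chordless. Second, and more fundamentally, purely graph-structural arguments (no active cycles, no interfering v-structures, trail minimality) are provably insufficient here: the paper's proof of facts \ref{fact:o5}--\ref{fact:o7} relies essentially on the \emph{algorithmic history} of $\Ovk$. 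Concretely, Lemma~\ref{lemma:super_o_for_sequence} supplies a node $\yt \in \Ovk$ with $o \ra \yt$, $y \ra \yt$ and $Z(c_i) \ra \yt$, and its proof works by taking the element of $\{Z(c_1),\dots,Z(c_C), y\}$ added last by Algorithm~\ref{alg:finding_order_general} and exploiting that it was then a possible candidate by incoming or outgoing arc; likewise, Case~1 of fact \ref{fact:o7} uses that $\smallerbig{o}$ was determined by the algorithm and hence abides by the B-sets (Corollary~\ref{cor:previous_orders_abide_by_bsets}) to force the arc $w \ra t^0_2$. Without the node $\yt$, the interfering-v-structure comparisons in facts \ref{fact:o6} and \ref{fact:o7} have no second B-set to play against, and the "better trail" contradictions you invoke cannot even be set up. Your proposal's only appeal to the algorithm's history is Corollary~\ref{cor:adjacent_os} (used for the local-relationships hypothesis of Theorem~\ref{thm:minimal_trails_local_relationship}), which is far from enough; the missing idea is precisely the content of Lemma~\ref{lemma:minimal_trail_for_incoming_arc} and Lemma~\ref{lemma:super_o_for_sequence}.
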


\begin{proof}
    Since $w \notin \PossCandIn{\Ovk}$,
    $w$ cannot be a possible candidate by the incoming arc $w \ra o$.
    This means that one of the two restrictions must be violated:
    \begin{enumerate}
        \item $\parentsdown{o}{w} \subseteq \Ovk$.
        \item $\dsepbig{w}{
        \Ovk \setminus \big( \parentsdown{o}{w} \sqcup \{o\} \big)}{\parentsdown{o}{w} \sqcup \{o\}}$.
    \end{enumerate}
    The first restrictions is satisfied by the assumption of the lemma.
    Therefore, the second restrictions must be violated, meaning that
    $\notdsepbig{w}{\Ovk \setminus \big( \parentsdown{o}{w} \sqcup \{o\} \big)}{\parentsdown{o}{w} \sqcup \{o\}}$.
    Hence, the set $\TRAILSbig{X}{Y}{Z}$ is not empty.
    Consequently, there exists a minimal trail 
    \begin{equation*}
        w \har \cdots \ra 
        c_1 \la \cdots \ra 
        c_2 \la \cdotslong \ra 
        c_C \la \cdots \har y
    \end{equation*}
    in $\TRAILSbig{X}{Y}{Z}$ according to $\smallerTRAIL$ with $X = \{w\}$, $Y := \Ovk \setminus \big( \parentsdown{o}{w} \sqcup \{o\} \big)$ and $Z := \parentsdown{o}{w} \sqcup \{o\}$.

    \medskip
    
    Therefore, $\TRAILSbig{X}{Y}{Z}$ is not empty.
    Lemma~\ref{lemma:minimal_trail_for_incoming_arc}\ref{fact:o7}
    implies that the existence of a minimal trail in $\TRAILSbig{X}{Y}{Z}$ containing no converging connections.
    This concludes the proof of Lemma~\ref{lemma:sequence_of_os_no_converging}.

    % If this trail contains no converging connections, then the proof of the main statement of the lemma is complete.
    % Therefore, we assume that it contains at least one converging connection.
\end{proof}

% To prove this lemma, we present a more general result.

\begin{lemma}
\label{lemma:minimal_trail_for_incoming_arc}
Assume that Assumption~\ref{assumpt:lemma:sequence_incoming_arc} holds and take a minimal trail
\begin{equation}
    w \har \cdots \ra 
    c_1 \la \cdots \ra 
    c_2 \la \cdotslong \ra 
    c_C \la \cdots \har y
    \label{trail:sequence_of_os}
\end{equation}
in the set $\TRAILSbig{X}{Y}{Z}$ according to $\smallerTRAIL$ with
$X := \{w\}$,
$Y := \Ovk \setminus \big( \parentsdown{o}{w} \sqcup \{o\} \big)$ and
$Z := \parentsdown{o}{w} \sqcup \{o\}$.

\medskip

\noindent
Then, the following statements hold:
\begin{enumerate}[label = (\roman*)]
    \item  If the node $o$ is included in the trail, then it must be the first node, i.e. $o = c_1$.
    If this is the case, then $w \ra o$ is the first subtrail.
    \label{fact:o1}
    
    \item If $c_1 \in Z = \parentsdown{o}{w} \sqcup \{o\}$ then $\G$ contains one of the three subgraphs below presented in Figure~\ref{fig:trail:start}.
    \begin{figure}[H]
    \centering
        \begin{subfigure}{0.3\linewidth}
         \centering
         \begin{tikzpicture}[scale=1, transform shape, node distance=2cm, state/.style={circle, font=\Large, draw=black, minimum size=1cm}]

        \node (w) {$w$};
        \node[right of=w] (ob) {$c_1 = o$};

        \begin{scope}[>={Stealth[length=6pt,width=4pt,inset=0pt]}]
        
            \path [->] (w) edge node {} (ob);
        
        \end{scope}
        \end{tikzpicture}
        \caption{}
        \label{fig:trail:start1}
        \end{subfigure}
        \hfill
        \begin{subfigure}{0.3\linewidth}
        \centering
        \begin{tikzpicture}[scale=1, transform shape, node distance=2cm, state/.style={circle, font=\Large, draw=black, minimum size=1cm}]

        \node (xn-1) {$t^0_{n-1}$};
        \node[above right of=xn-1] (xn) {$t^0_n$};
        \node[below right of=xn] (c1) {$c_1$};
        \node[below right of=xn-1] (ob) {$o$};

        \begin{scope}[>={Stealth[length=6pt,width=4pt,inset=0pt]}]
        
            \path [->] (xn) edge node {} (xn-1);
            \path [->] (xn) edge node {} (c1);
            \path [->] (c1) edge node {} (xn-1);
            \path [->] (c1) edge node {} (ob);
        
        \end{scope}
        \end{tikzpicture}
        \caption{}
        \label{fig:trail:start2}
        \end{subfigure}
        \hfill
        \begin{subfigure}{0.3\linewidth}
        \centering
        \begin{tikzpicture}[scale=1, transform shape, node distance=2cm, state/.style={circle, font=\Large, draw=black, minimum size=1cm}]

        \node (xn) {$t^0_n$};
        \node[below right of=xn] (ob) {$o$};
        \node[above right of = ob] (c1) {$c_1$};

        \begin{scope}[>={Stealth[length=6pt,width=4pt,inset=0pt]}]
        
            \path [->] (xn) edge node {} (c1);
            \path [->] (xn) edge node {} (ob);
            \path [->] (c1) edge node {} (ob);
        
        \end{scope}
        \end{tikzpicture}
        \caption{}
        \label{fig:trail:start3}
        \end{subfigure}
        
        \caption{Subgraphs for which one must be included in $\G$ in case that $c_1 \in Z$.}
        
        \label{fig:trail:start}
    \end{figure}
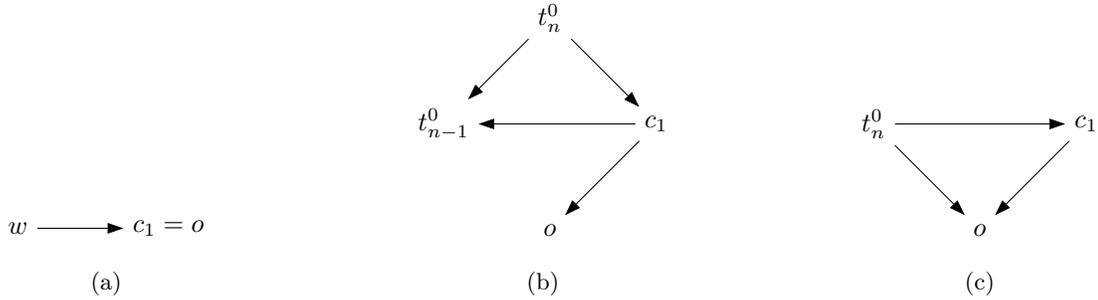
    \label{fact:o2}
    
    \item If $C > 1$, then without loss of generality we can assume that $c_{C-1} \la c_C$,
    in the sense that there exists a minimal trail 
    (according to $\smallerTRAIL$)
    in $\TRAILSbig{X}{Y}{Z}$ such that $c_{C-1} \la c_C$.
    \label{fact:o3}
    
    \item The node $y$ is not in $\pa{o}$.
    \label{fact:o4}
    
    \item There exists a node $\yt \in Y =  \Ovk \setminus \big( \pa{o} \sqcup \{o\} \big)$ such that $o \ra \yt$ and $\forall i=1,\dots,C$, $Z(c_i) \ra \yt$
    (whenever $Z(c_i) \neq \yt$).
    Furthermore, if $\yt \neq y$, then $y \ra \yt$.
    \label{fact:o5}
    
    \item The total number of converging nodes $C$ cannot be strictly larger than $1$.
    \label{fact:o6}
    
    \item The trail has no converging connections, i.e. $C=0$.
    \label{fact:o7}
\end{enumerate}
\end{lemma}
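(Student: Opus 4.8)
By part~\ref{fact:o6} we already know that $C \le 1$, so it suffices to rule out $C = 1$. Assume for contradiction that the minimal trail \eqref{trail:sequence_of_os} has exactly one converging node $c_1$. The first thing I would record is that $Y \sqcup Z = \Ovk$: indeed $\parentsdown{o}{w} \sqcup \{o\} \subseteq \Ovk$ by Assumption~\ref{assumpt:lemma:sequence_incoming_arc}(ii), so $Z \subseteq \Ovk$ and $Y = \Ovk \setminus Z$. By Corollary~\ref{cor:adjacent_os}, the set $\Ovk$ has local relationships in the sense of Definition~\ref{def:propadj}, and hence Theorem~\ref{thm:minimal_trails_local_relationship} applies to the minimal trail \eqref{trail:sequence_of_os}.

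Next I would pin down the local structure around $c_1$. By Theorem~\ref{thm:minimal_trails_local_relationship}\ref{prop:final_c_in_Z} the unique converging node lies in $Z$, i.e. $c_1 \in Z = \parentsdown{o}{w} \sqcup \{o\}$, whence its closest descendant is itself, $Z(c_1) = c_1$; and by \ref{prop:adjacency} the nodes $c_1$ and $y$ are adjacent. Applying part~\ref{fact:o5} then yields a node $\yt \in Y$ with $o \ra \yt$, with $c_1 = Z(c_1) \ra \yt$, and with $y \ra \yt$ whenever $\yt \neq y$. Thus $\yt$ receives arcs from both $o$ and $c_1$, so $\G$ contains the v-structure $o \ra \yt \la c_1$, and from the subgraphs of part~\ref{fact:o2} (Figure~\ref{fig:trail:start}) we additionally have $c_1 \ra o$ whenever $c_1 \neq o$.

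The plan is then to turn this into a configuration forbidden by Assumption~\ref{assumpt:good_graphs}. The first subtrail $w \har t^0_1 \har \cdots \har t^0_m \ra c_1$ has no converging connection (Theorem~\ref{thm:minimal_trails}\ref{prop:shortest_subtrails}), and prepending the arc $o \la w$ (Assumption~\ref{assumpt:lemma:sequence_incoming_arc}(i)) produces a converging-free trail $o \la w \har \cdots \ra c_1$ between the two parents $o, c_1$ of $\yt$; together with $o \ra \yt \la c_1$ this is a cycle in the skeleton built entirely of serial and diverging connections, i.e. an active cycle at $\yt$ in the sense of Definition~\ref{def:active_cycle} once it is made chordless and is seen to contain at least one interior node. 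Since $\G$ contains no active cycle, this is the desired contradiction; the symmetric construction with the last subtrail handles the pair $(c_1, y)$, and the degenerate case $c_1 = o$ (Figure~\ref{fig:trail:start1}) is treated through the arc $o \ra y$, which follows from part~\ref{fact:o4} ($y \notin \pa{o}$) together with the adjacency of $c_1 = o$ and $y$.

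The hard part, where essentially all of the real work sits, is the chord elimination. The direct arc $c_1 \ra o$ (and likewise the $c_1$--$y$ arc) joins the two chosen parents directly and would collapse the candidate cycle into a mere triangle, while $\yt$ may a priori be adjacent to interior nodes of a subtrail. I would dispatch these through a case analysis on the directions of the few possible chord arcs: using the minimality conditions \textnormal{C1}--\textnormal{C5} together with Theorem~\ref{thm:minimal_trails}\ref{prop:no_chords},\ref{prop:shortest_subtrails} to exclude interior chords, acyclicity to eliminate two-cycles, and part~\ref{fact:o4} for the degenerate adjacencies, so that either a genuine chordless active cycle with an interior node survives, or else an interfering v-structure (Definition~\ref{def:interfering_v_structures}) is exhibited among $\{o, c_1, y, \yt, v\}$. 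In either case Assumption~\ref{assumpt:good_graphs} is violated, contradicting the assumption $C = 1$ and forcing $C = 0$.
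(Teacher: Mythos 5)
Your proposal addresses only item \ref{fact:o7} of the seven claims in the lemma: items \ref{fact:o1}--\ref{fact:o6} are invoked as known facts (part \ref{fact:o6} gives $C \le 1$, part \ref{fact:o5} supplies $\yt$, part \ref{fact:o2} supplies the subgraphs, part \ref{fact:o4} gives $y \notin \pa{o}$), yet they belong to the statement being proved, and each one requires a substantial argument of its own in the paper. So at best this is a proof attempt for the final item.

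Even as a proof of \ref{fact:o7}, there is a genuine gap at exactly the step you label ``the hard part''. The cycle $\yt \la o \la w \har \cdots \ra c_1 \ra \yt$ that you construct is never chordless: since $c_1 \in Z = \parentsdown{o}{w} \sqcup \{o\}$, either $c_1 = o$ (and the construction degenerates) or $c_1 \in \parentsdown{o}{w}$, so the arc $c_1 \ra o$ is always present and is a chord of your cycle. Contracting along it leaves only the triangle $c_1 \ra o \ra \yt \la c_1$, which is perfectly admissible in a restricted DAG, so \cref{def:active_cycle} produces no contradiction; in the degenerate case $c_1 = o$, the forced arc $o \ra y$ similarly chords the cycle through $y$. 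The dichotomy you then assert --- ``either a genuine chordless active cycle survives, or an interfering v-structure is exhibited among $\{o, c_1, y, \yt, v\}$'' --- is precisely what would need to be proved, and the graph restrictions alone cannot close it. In the paper's proof, the contradiction in every case is obtained against \emph{minimality} of the trail (one exhibits a strictly better trail in the $\smallerTRAIL$ order, e.g.\ one with fewer converging connections), where the required arcs are forced by applying the no-interfering-v-structures condition to nodes \emph{on the subtrails} (the $t^i_j$'s), not to $\{o, c_1, y, \yt, v\}$; and the case $c_1 = o$ additionally needs the algorithmic property that previously fixed parental orders abide by the B-sets (\cref{cor:previous_orders_abide_by_bsets,cor:B_sets_implies_rightarrow}): there $t^0_1 \in \parentsup{o}{w}$ gives $w \smallerbig{o} t^0_1$, hence $w \in B(o, t^0_2)$ and $w \ra t^0_2$, yielding a converging-free trail $w \ra t^0_2 \ra \cdots \ra y$ better than \eqref{trail:sequence_of_os}. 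None of this machinery (better-trail contradictions, B-set abiding of $\smallerbig{o}$) appears in your sketch, and without it the argument does not go through.
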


\noindent
\textit{Proof of Lemma~\ref{lemma:minimal_trail_for_incoming_arc}}.
First, note that $Y \sqcup Z = \Ovk$ have local relationships in the sense of Definition~\ref{def:propadj}, by Corollary \ref{cor:adjacent_os}.
Therefore, by Theorem~\ref{thm:minimal_trails_local_relationship}\ref{prop:subgraphs}, $\G$ contains the subgraph below with
$y \in Y
= \Ovk \setminus \big( \parentsdown{o}{w} \sqcup \{o\} \big)$.
% Moreover, all properties proven in Lemma~\ref{lemma:trail_properties} hold for the trail \eqref{trail:sequence_of_os}.

\begin{figure}[H]
\centering
\begin{tikzpicture}[scale=1, transform shape, node distance=1cm, state/.style={circle, font=\Large, draw=black}]

%Nodes
\node (c1) {$c_{1}$};
\node[right = of c1] (c2) {$c_2$};
\node[right = 3cm of c2] (cC) {$c_C$};
\node[right = of cC] (o) {$y$};
\node[left = of c1] (t0n) {$t^0_n$};
\node[left = of t0n] (t0n-1) {$t^0_{n-1}$};
\node[left = of t0n-1] (t01) {$t^0_{1}$};
\node[left = of t01] (w) {$w$};

% Y(c_1)
\node[above = of c1] (c1h1) {};
\node[above = of c1h1] (c1h2) {};
\node[above = of c1h2] (Yc1) {$Z(c_1)$};

% Y(c_2)
\node[above = of c2] (c2h1) {};
\node[above = of c2h1] (c2h2) {};
\node[above = of c2h2] (Yc2) {$Z(c_2)$};

% Y(c_C)
\node[above = of cC] (cCh1) {};
\node[above = of cCh1] (cCh2) {};
\node[above = of cCh2] (YcC) {$Z(c_C)$};

\begin{scope}[>={Stealth[length=6pt,width=4pt,inset=0pt]}]
    %\path [->] (y2) edge node {} (oa1);
    %\path [->] (y2) edge node {} (oa2);
    %\path [->] (yA) edge node {} (oaA-1);
    %\path [->] (yA) edge node {} (oaA);

    %\path [->] (yA+1) edge node {} (oaA);
    %\path [->] (yA+1) edge node {} (oc);

    \path [-] (c1) edge[bend left=45] node {} (c2);
    \path [-] (cC) edge[bend left=45] node {} (o);

    \draw[transform canvas={yshift=0.21ex},-left to,line width=0.25mm] (c1) -- (c2);
    \draw[transform canvas={yshift=-0.21ex},left to-,line width=0.25mm] (c1) -- (c2);

    \draw[transform canvas={yshift=0.21ex},-left to,line width=0.25mm] (cC) -- (o);
    \draw[transform canvas={yshift=-0.21ex},left to-,line width=0.25mm] (cC) -- (o);

    \draw[loosely dotted, line width=0.5mm] (c2) -- (cC);

    % Paths to Y(c_i)'s
    \path [->] (c1) edge node {} (c1h1);
    \path [->] (c1h2) edge node {} (Yc1);
    \draw[loosely dotted, line width=0.5mm] (c1h1) -- (c1h2);

    \path [->] (c2) edge node {} (c2h1);
    \path [->] (c2h2) edge node {} (Yc2);
    \draw[loosely dotted, line width=0.5mm] (c2h1) -- (c2h2);

    \path [->] (cC) edge node {} (cCh1);
    \path [->] (cCh2) edge node {} (YcC);
    \draw[loosely dotted, line width=0.5mm] (cCh1) -- (cCh2);

    \draw[loosely dotted, line width=0.5mm] (t01) -- (t0n-1);
    \path [->] (t0n) edge node {} (c1);
    \draw[transform canvas={yshift=0.21ex},-left to,line width=0.25mm] (w) -- (t01);
    \draw[transform canvas={yshift=-0.21ex},left to-,line width=0.25mm] (w) -- (t01);
    \draw[transform canvas={yshift=0.21ex},-left to,line width=0.25mm] (t0n-1) -- (t0n);
    \draw[transform canvas={yshift=-0.21ex},left to-,line width=0.25mm] (t0n-1) -- (t0n);
    
\end{scope}
\end{tikzpicture}
\end{figure}

Each of these properties is proved, respectively in the following Sections~\ref{subsec:proof:lemma:minimal_trail_for_incoming_arc:o1},
\ref{subsec:proof:lemma:minimal_trail_for_incoming_arc:o2}, 
\ref{subsec:proof:lemma:minimal_trail_for_incoming_arc:o3}, 
\ref{subsec:proof:lemma:minimal_trail_for_incoming_arc:o4}, 
\ref{subsec:proof:lemma:minimal_trail_for_incoming_arc:o5}, 
\ref{subsec:proof:lemma:minimal_trail_for_incoming_arc:o6}
and
\ref{subsec:proof:lemma:minimal_trail_for_incoming_arc:o7}.

\subsubsection{Proof of Lemma~\ref{lemma:minimal_trail_for_incoming_arc}\ref{fact:o1}}
\label{subsec:proof:lemma:minimal_trail_for_incoming_arc:o1}

% We must prove that if the node $o$ is included in the trail \eqref{trail:sequence_of_os}, then it must be the first node, i.e. $o:=c_1$.
% If this is the case, then $w \ra o$ is the first subtrail.

% \medskip

% \textit{Proof of \ref{fact:o1}.}
Naturally, if the trail \eqref{trail:sequence_of_os} contains the node
$o \in
Z = \parentsdown{o}{w} \sqcup \{o\}$, 
then it must correspond to a converging connection.
Otherwise, \eqref{trail:sequence_of_os} would be blocked by $Z$.

Consider the case when a node $c_i$ with $i \in \{2, \dots, C \}$ is equal to $o$.
Then, the trail
$$
w \ra o \la \cdots \ra c_{i+1} \la \cdotslong \ra c_C \la \cdots \har y
$$
would be a better trail than \eqref{trail:sequence_of_os} which is a contradiction.
Hence, if $o$ is located along the trail it must be equal to $c_1$.

In this case the subtrail
$
w \har t^0_1 \har \cdots \har t^0_n \ra o
$
takes the form $w \ra o$, since we picked a minimal trail.
This concludes the proof of \ref{fact:o1}.

\subsubsection{Proof of Lemma~\ref{lemma:minimal_trail_for_incoming_arc}\ref{fact:o2}}
\label{subsec:proof:lemma:minimal_trail_for_incoming_arc:o2}

% We show that if
% $c_1 \in Z
% = \parentsdown{o}{w} \sqcup \{o\}$
% then $\G$ contains one of the three claimed subgraphs.

% \medskip

% \textit{Proof of \ref{fact:o2}.}
Consider the subtrail
\begin{equation}
\label{trail:sequence_of_ws_start}
w \har t^0_1 \har \cdots \har t^0_n \ra c_1.
\end{equation}

If $o$ is in the trail~\eqref{trail:sequence_of_os},
then by \ref{fact:o1}, it is the first node along the trail, i.e. $\G$ contains the subgraph in Figure~\ref{fig:trail:start1}.

\medskip

Suppose that $o$ is not located along the trail.
Therefore, $c_1 \neq o$.
Because we assumed that
$c_1 \in Z
= \parentsdown{o}{w} \sqcup \{o\}$,
we know that $c_1$ belongs to $\parentsdown{o}{w}$ and therefore $c_1 \ra o$.
If $t^0_{n}$ is in $\pa{o}$, then $t^0_{n} \ra o$.
In this case, we find the subgraph in Figure~\ref{fig:trail:start3}.

\medskip

We will now show that
if $t^0_{n}$ is not in $\pa{o}$ and $o$ is not in \eqref{trail:sequence_of_os},
then $\G$ must contain the subgraph in Figure~\ref{fig:trail:start2}.
First, we define an integer
$$
p := \max\{i\in\{1,\dots,n \};\, t^0_i\in \pa{o} \}
$$
such that $t^0_p$ is the furthest node from $w$ in trail~\eqref{trail:sequence_of_ws_start} contained in the set $\pa{o}$.
By
Theorem~\ref{thm:minimal_trails}\ref{prop:no_chords},
the subtrail
$$
t^0_p \har \cdots \har t^0_n \ra c_1
$$
contains no chords.
Moreover, the nodes $t^0_p$ and $c_1$ are in $\pa{o}$, and the nodes $t^0_{p+1}, \dots, t^0_n$ are not in $\pa{o}$.

\medskip

Remark that $t^0_p \har \cdots \har t^0_n \ra c_1$ is a shortest trail activated by the empty set from $t^0_p$ to $c_1$ ending with a rightward pointing arrow consisting of nodes in $\V \setminus Z$.
Therefore, we may apply Lemma~\ref{lemma:generalisation_to_K} and Theorem~\ref{lemma:v1_trail_to_v2_parents_of_v3}
(with $v_1=t^0_p$, $v_2=c_1$ and $v_3=o$ in the notation of Theorem~\ref{lemma:v1_trail_to_v2_parents_of_v3})
to find that
$t^0_n \ra t^0_{n-1}$ and
$c_1 \ra t^0_{n-1}$.
Since we also know that $c_1 \ra o$,
we conclude that
to find that $\G$ must contain the subgraph in Figure~\ref{fig:trail:start2}.

\subsubsection{Proof of Lemma~\ref{lemma:minimal_trail_for_incoming_arc}\ref{fact:o3}} 
\label{subsec:proof:lemma:minimal_trail_for_incoming_arc:o3}

% To prove is that if $C > 1$, then without loss of generality we can assume that $c_{C-1} \la c_C$,
% in the sense that there exists a minimal trail 
% (for $\smallerTRAIL$)
% in $\TRAILSbig{X}{Y}{Z}$ such that $c_{C-1} \la c_C$.

% \medskip 

% \textit{Proof of \ref{fact:o3}.}
By Theorem~\ref{thm:minimal_trails_local_relationship}\ref{prop:adjacency},
we know that $c_{C-1}$ and $c_C$ are adjacent.
If $c_{C-1} \ra c_C$ then the proof is completed.
There only remains to study the case where
$c_{C-1} \ra c_C$.
From Theorem~\ref{thm:minimal_trails_local_relationship}\ref{prop:adjacency}, it follows that for all $i = 1, \dots, C$, the nodes $c_i$ and $c_{i+1}$ are adjacent.
By Lemma~\ref{prop:no_diverging_cs}, we know that for all $i = 2, \dots, C$, the trail $c_{i-1} \la c_i \ra c_{i+1}$ cannot be present.
Therefore, we get that for all $i = 1, \dots, C$, $c_i \ra c_{i+1}$.
From Theorem~\ref{thm:minimal_trails_local_relationship}\ref{prop:final_c_in_Z}, it follows that $c_C \in Z$.
Combining this with 
Corollary~\ref{cor:all_cs_in_Z}\ref{prop:rightarroww_all_in_Z},
we find that $\forall i=1, \dots, C-1$,
$c_i \in Z$.
In particular we have that $c_1 \ra c_2$
and $c_1, c_2 \in Z$.

\medskip

We will show that the arc $c_1 \ra c_2$ with $c_1,c_2 \in Z$ leads
% to either a contradiction or
the existence of another minimal trail, that satisfies $c_{C-1} \la c_C$.
%implies the existence of another trail which is equal to \eqref{trail:sequence_of_os} according to $\smallerTRAIL$ and whose first converging node ($c_1$) is equal to $o$.
Since $c_1 \in Z$ we can apply \ref{fact:o2}, to find that $\G$ contains one of three subgraphs in Figure~\ref{fig:trail:start}.
We consider each case separately.

\medskip

\noindent
\underline{Case 1: Subgraph~\ref{fig:trail:start1}.}\\
In this case $c_1 = o$.
Since $c_1 \ra c_2$, we have that $o \ra c_2$.
Moreover, because $c_2 \in Z = \parentsdown{o}{w} \sqcup \{o\}$, we know that $c_2 \ra o$, that leads to the cycle $c_2 \ra o \ra c_2$ which is a contradiction.

\medskip

\noindent
\underline{Case 2: Subgraph~\ref{fig:trail:start2}.}\\
Note that $o$ is not equal to $c_1$ (otherwise we would be in the previous case) and therefore $c_2 \neq o$. Since $c_2 \in Z = \parentsdown{o}{w} \sqcup \{o\}$, we get that $c_2 \ra o$.
Combining this and 
Figure~\ref{fig:trail:start2} with Theorem~\ref{thm:minimal_trails_local_relationship}\ref{prop:subgraphs},
we obtain that
$\G$ contains the subgraph below.
$$
\begin{tikzpicture}[scale=1, transform shape, node distance=1.2cm, state/.style={circle, font=\Large, draw=black}]
\node (v1) {$t^0_{n-1}$};
\node[above right= of v1] (xn) {$t^0_n$};
\node[right = 2cm of v1] (v2) {$c_1$};
\node[right = 6cm of v2] (v3) {$c_2$};

%v2 -- v3
\node[above right = 1cm and 0.01cm of v2] (y1) {$t^{1}_1$};
\node[above right = of y1] (y2) {$t^{1}_2$};
\node[above left = 1cm and 0.01cm of v3] (yn) {$t^{1}_{n}$};
\node[above left= of yn] (yn-1) {$t^{1}_{n-1}$};
\node[below = of v2] (ob) {$o$};
\begin{scope}[>={Stealth[length=6pt,width=4pt,inset=0pt]}]

\path [->] (v2) edge node {} (v1);
\path [->] (xn) edge node {} (v1);
\path [->] (xn) edge node {} (v2);

\path [->] (v2) edge node {} (v3);
\path [->] (v2) edge node {} (y2);
\path [->] (v2) edge node {} (yn-1);
\path [->] (v2) edge node {} (yn);
\path [->] (y1) edge node {} (v2);
\path [->] (y1) edge node {} (y2);
\path [->] (yn-1) edge node {} (yn);
\path [->] (yn) edge node {} (v3);
\draw[loosely dotted, line width=0.5mm] (y2) -- (yn-1);

\path [->] (v2) edge node {} (ob);
\path [->] (v3) edge node {} (ob);
\end{scope}
\end{tikzpicture}
$$
Here, we have
\begin{align*}
    t^1_1&\in B(c_1, t^1_2),\\
    t^0_n&\in B(c_1, t^0_{n-1}).
\end{align*}
Since $\G$ does not contain any interfering v-structures, this means that $t^0_n\ra t^1_2$ or $t^1_1\ra t^0_{n-1}$.
These arcs provide us with the following respective trails
\begin{align*}
    w \har \cdots \har t^0_{n-1} \har t^0_n \ra t^1_2 \har \cdotslong \har y,\\
    w \har \cdots \har t^0_{n-1} \la t^1_1 \har t^1_2 \har \cdotslong \har y,
\end{align*}
which are both better than \eqref{trail:sequence_of_os} in the sense of $\smallerTRAIL$. Indeed, \eqref{trail:sequence_of_os} can be rewritten as
\begin{align*}
    w \har \cdots \har t^0_{n-1} \har t^0_n \ra c_1 \la t^1_1 \har t^1_2 \har \cdotslong \har y.
\end{align*}
We therefore get a contradiction.

\medskip

\noindent
\underline{Case 3: Subgraph~\ref{fig:trail:start3}.}\\
In this case, by Theorem~\ref{thm:minimal_trails_local_relationship}\ref{prop:subgraphs}, 
$\G$ contains the subgraph below, where $c_2 \ra o$ because $c_2 \in Z = \parentsdown{o}{w} \sqcup \{o\}$ and $c_2 \neq o$.

$$
\begin{tikzpicture}[scale=1, transform shape, node distance=1.2cm, state/.style={circle, font=\Large, draw=black}]
\node (v1) {$t^0_n$};
\node[right = 2cm of v1] (v2) {$c_1$};
\node[right = 6cm of v2] (v3) {$c_2$};

%v2 -- v3
\node[above right = 1cm and 0.01cm of v2] (y1) {$t^{1}_1$};
\node[above right = of y1] (y2) {$t^{1}_2$};
\node[above left = 1cm and 0.01cm of v3] (yn) {$t^{1}_{n}$};
\node[above left= of yn] (yn-1) {$t^{1}_{n-1}$};
\node[below = of v2] (ob) {$o$};
\begin{scope}[>={Stealth[length=6pt,width=4pt,inset=0pt]}]

\path [->] (v1) edge node {} (v2);

\path [->] (v2) edge node {} (v3);
\path [->] (v2) edge node {} (y2);
\path [->] (v2) edge node {} (yn-1);
\path [->] (v2) edge node {} (yn);
\path [->] (y1) edge node {} (v2);
\path [->] (y1) edge node {} (y2);
\path [->] (yn-1) edge node {} (yn);
\path [->] (yn) edge node {} (v3);
\draw[loosely dotted, line width=0.5mm] (y2) -- (yn-1);

\path [->] (v1) edge node {} (ob);
\path [->] (v2) edge node {} (ob);
\path [->] (v3) edge node {} (ob);
\end{scope}
\end{tikzpicture}
$$
Here, we have $t^0_n\in B(c_1,o)$ and $t^1_1\in B(c_1,t^1_2)$.
% \begin{align*}
%     t^0_n\in B(c_1,o),\\
%     t^1_1\in B(c_1,t^1_2).
% \end{align*}
By the same argument as above, this means that $t^0_n \ra t^1_2$ or $t^1_1 \ra o$.
The former arc results in a trail from $w$ to $y$, which is a better trail than \eqref{trail:sequence_of_os}, and therefore a contradiction.
Hence, we must have the arc $t^1_1 \ra o$.
This arc provides us with the trail
\begin{equation}
\label{trail:minimal_trail_o_start}
    w \ra o \la t^1_1 \har \cdotslong \har y
\end{equation}
which is better than the trail \eqref{trail:sequence_of_os}, that is
$$
w \har \cdots \har t^0_{n_t(0)} \ra c_1 \la t^1_1 \har \cdotslong \har y,
$$
unless $n_t(0) = 0$.
In this case, \eqref{trail:minimal_trail_o_start}
is also a minimal trail in $\TRAILSbig{X}{Y}{Z}$.

Note that the converging connections of the trail \eqref{trail:minimal_trail_o_start} are $o, c_2, \dots, c_C$. Combining the fact that $o \la c_2$ with 
Lemma~\ref{prop:no_diverging_cs}
we must have 
$$
o \la c_2 \la c_3 \la \cdots \la c_{C}.    
$$
So, we have proven that there exists a minimal trail in $\TRAILSbig{X}{Y}{Z}$ containing an arc $c_{C-1} \la c_C$, completing the proof of \ref{fact:o3}.

% If $C >2$, then we have
% $o \la c_2$ and $c_2 \ra c_3$.
% Indeed, we know that for all $i=1,\dots,C-1$, we have that $c_i \ra c_{i+1}$ and in particular $c_2 \ra c_3$.
% However, by Lemma~\ref{prop:no_diverging_cs}
% the diverging connection $c_1 \la c_2 \ra c_3$ cannot be the case, and thus we have a contradiction.
% 
% Since each of the three cases resulted in a contradiction, the proof of Fact~3 is complete.

\subsubsection{
Proof of Lemma~\ref{lemma:minimal_trail_for_incoming_arc}\ref{fact:o4}}
\label{subsec:proof:lemma:minimal_trail_for_incoming_arc:o4}

% We must show that the node $y$ is not in $\pa{o}$.

% \medskip

% \textit{Proof of \ref{fact:o4}.}
By definition, $y \in Y
= \Ovk \setminus \big( \parentsdown{o}{w} \sqcup \{o\} \big)$.
Observe that
\begin{align*}
    Y \cap \pa{o}
    &= \big( \Ovk \setminus \big( \parentsdown{o}{w} \sqcup \{o\} \big) \big)
    \cap \big( \parentsdown{o}{w} \sqcup \{o\} \sqcup \parentsup{o}{w} \big) \\
    &= \big( \Ovk \setminus \big( \parentsdown{o}{w} \sqcup \{o\} \big) \big)
    \cap \parentsup{o}{w} \\
    &\subseteq \Ovk \cap \parentsup{o}{w}
    = \emptyset,
\end{align*}
by Assumption~\ref{assumpt:lemma:sequence_incoming_arc}, hence $y \notin \pa{o}$.

% Suppose that $y \in \pa{o}$.
% By definition, $y \in Y =  \Ovk \setminus \big( \parentsdown{o}{w} \sqcup \{o\} \big)$.
% Therefore, $y$ is contained in the set $\Ovk$ but not in $\parentsdown{o}{w} \sqcup \{o\}$.
% This means that if $y \in \pa{o}$, then $y$ must be in $\big( \pa{o} \setminus \big( \parentsdown{o}{w} \sqcup \{o\} \big) \big) \cap \Ovk$.
% This set can be rewritten as follows:
% \begin{align*}
%     \Big( \pa{o} \setminus \big( \parentsdown{o}{w} \sqcup \{o\} \big) \Big) \cap \Ovk
%     &= \Big( \big( \parentsdown{o}{w} \sqcup \{o\} \sqcup \parentsup{o}{w} \big) \setminus \big( \parentsdown{o}{w} \sqcup \{o\} \big) \Big) \cap \Ovk\\
%     &= \parentsup{o}{w} \cap \Ovk.
% \end{align*}
% However, by the assumptions of the lemma, $\parentsup{o}{w} \cap \Ovk$ is equal to the empty set.
% Therefore, $y$ cannot be in $\pa{o}$, proving Fact~4.

\subsubsection{Proof of Lemma~\ref{lemma:minimal_trail_for_incoming_arc}\ref{fact:o5}}
\label{subsec:proof:lemma:minimal_trail_for_incoming_arc:o5}

% To prove: There exists a node $\yt \in Y =  \Ovk \setminus \big( \parentsdown{o}{w} \sqcup \{o\} \big)$ such that $o \ra \yt$ and $\forall i=1,\dots,C$, $Z(c_i) \ra \yt$.
% Furthermore, if $\yt \neq y$, then $y \ra \yt$.

% \medskip

% \textit{Proof of \ref{fact:o5}.}
% First, we show that there exists an $\yt \in \Ovk$ such that $o \ra \yt$, $y \ra \yt$ and $\forall i=1,\dots,C$, $Z(c_i) \ra \yt$.
% Hereafter, we will prove that this $\yt$ must be in $Y$.

Suppose that $o$ is located on \eqref{trail:sequence_of_os}.
By \ref{fact:o1}, this means that $o = c_1$, and therefore $\G$ contains the trail
$$
o \la \cdots \ra c_2 \la \cdotslong \ra c_C \la \cdots \har y.
$$
that satisfies all conditions for Lemma~\ref{lemma:super_o_for_sequence}
(by applying Theorem~\ref{thm:minimal_trails}\ref{prop:nodes_along_subtrails}).
Remember that $w \ra o$ by Assumption~\ref{assumpt:lemma:sequence_incoming_arc}.
If $o \neq c_1$, then the trail
$$
o \la w \har \cdots \ra c_1 \la \cdots \ra c_2 \la \cdotslong \ra c_C \la \cdots \har y
$$
also satisfies the conditions of Lemma~\ref{lemma:super_o_for_sequence}
(by applying Theorem~\ref{thm:minimal_trails}\ref{prop:nodes_along_subtrails}).

\medskip

Therefore, in both cases we can apply Lemma~\ref{lemma:super_o_for_sequence} to find that there exists an $\yt$ in $\Ovk$ such that $o \ra \yt$, $y \ra \yt$ and for all $i=1,\dots,C$, $Z(c_i) \ra \yt$
(whenever this does not create a self-loop $\yt \ra \yt$, i.e. in the case where $\yt$ would be equal to $o$, $y$ or some $Z(c_i)$ for $i \in \{1, \dots, C\}$).
It remains to show that this node $\yt$ is in 
$Y =  \Ovk \setminus \big( \pa{o} \sqcup \{o\} \big)$,
Therefore, we only have to show that
$\yt$ cannot be in $\pa{o} \sqcup \{ o \}$.

\medskip

If $\yt = o$, then $\G$ contains the arc $y \ra o$.
This contradicts \ref{fact:o4} which states that $y \notin \pa{o}$.
% , then $\G$ contains the arc $o \ra \yt$.
% If $\yt \in \parentsdown{o}{w}$
If $\yt \in \pa{o}$
% Since $\yt \in \parentsdown{o}{w}$, we know that 
then $\yt \ra o$, and hence $\G$ contains the cycle $\yt \ra o \ra \yt$ which is a contradiction
(because we showed above that $o \ra \yt$).
% Therefore, $\yt$ must be in $Y = \Ovk \setminus \big( \parentsdown{o}{w} \sqcup \{o\} \big)$ which concludes the proof of \ref{fact:o5}.

\subsubsection{Proof of Lemma~\ref{lemma:minimal_trail_for_incoming_arc}\ref{fact:o6}}
\label{subsec:proof:lemma:minimal_trail_for_incoming_arc:o6}

% We have to prove that the total number of converging nodes $C$ cannot be strictly larger than $1$.

% \medskip

% \textit{Proof of \ref{fact:o6}.}
Assume that \eqref{trail:sequence_of_os} has $C>1$ converging connection.
The end of the trail can have several different types of structures.
By \ref{fact:o3} we can assume that $c_{C-1} \la c_C$ without loss of generality.
If $c_C \ra y$, we obtain the subgraph $c_{C-1} \la c_C \ra c_{C+1}$, which is a contradiction by Lemma~\ref{prop:no_diverging_cs}.
Therefore $c_{C} \la y$.

\medskip

Remark that by Theorem~\ref{thm:minimal_trails_local_relationship}\ref{prop:final_c_in_Z}
the node $c_C$ is in the set $Z := \parentsdown{o}{w} \sqcup \{o\}$.
Note that, by \ref{fact:o1}, $c_C \neq o$,
and therefore $c_C \in \parentsdown{o}{w}$,
so $c_C \ra o$.
Consequently, the graph contains the trail
$y \ra c_C \ra o$.
This means that the node $\yt \in Y$
from \ref{fact:o5} cannot be equal to $y$.
Indeed, this would lead to the cycle
$y \ra c_C \ra o \ra y$ which is a contradiction.

\medskip

Furthermore, $c_C \in Z$, therefore $Z(c_C) = c_C \ra \yt$ by \ref{fact:o5}.
Combining the previous results with Theorem~\ref{thm:minimal_trails_local_relationship}\ref{prop:subgraphs} gives the subgraph below.
$$
\begin{tikzpicture}[scale=1, transform shape, node distance=1.2cm, state/.style={circle, font=\Large, draw=black}]
\node (v1) {$c_{C-1}$};
\node[right = 7cm of v1] (v2) {$c_C$};
\node[right = 3cm of v2] (v3) {$y$};
\node[below right= of v3] (ot) {$\yt$};
\node[below = of v2] (ob) {$o$};

%v1 -- v2
\node[above right = 1cm and 0.01cm of v1] (x1) {$t^{C-1}_1$};
\node[above right = of x1] (x2) {$t^{C-1}_2$};
\node[above left = 1cm and 0.01cm of v2] (xn) {$t^{C-1}_{n}$};
\node[above left= of xn] (xn-1) {$t^{C-1}_{n-1}$};

%v2 -- v3

\begin{scope}[>={Stealth[length=6pt,width=4pt,inset=0pt]}]

%V1 -- V2
\path [->] (v2) edge node {} (v1);
\path [->] (v2) edge node {} (x2);
\path [->] (v2) edge node {} (xn-1);
\path [->] (v2) edge node {} (x1);
\path [->] (x1) edge node {} (v1);
\path [->] (x2) edge node {} (x1);
\path [->] (xn) edge node {} (xn-1);
\path [->] (xn) edge node {} (v2);
\draw[loosely dotted, line width=0.5mm] (x2) -- (xn-1);

%V2 -- V3
\path [->] (v3) edge node {} (v2);

% \path [->] (v1) edge node {} (ob);
\path [->] (v2) edge node {} (ob);

% \path [->] (v1) edge node {} (ot);
\path [->] (v2) edge node {} (ot);
\path [->] (v3) edge node {} (ot);
\path [->] (ob) edge node {} (ot);
\end{scope}
\end{tikzpicture}
$$
Here, we have
$t^{C-1}_n \in B(c_C,t^{C-1}_{n-1})$
and $y \in B(c_C, \yt)$.
% \begin{align*}
%     &t^{C-1}_n \in B(c_C,t^{C-1}_{n-1}),\\
%     &y \in B(c_C, \yt).
% \end{align*}
Since $\G$ does not contain interfering v-structures, we must have $t^{C-1}_n \ra \yt$ or $y \ra t^{C-1}_{n-1}$.
Both arcs provide a trail from $w$ to a node in $Y := \Ovk \setminus \big( \parentsdown{o}{w} \sqcup \{o\} \big)$ which is a better trail than \eqref{trail:sequence_of_os}.
Indeed, the trails
\begin{align*}
    w \har \cdotslong \la t^{C-1}_n \ra \yt,\\
    w \har \cdotslong \la t^{C-1}_{n-1} \la y,
\end{align*}
contain one fewer converging connection than \eqref{trail:sequence_of_os}, which is
\begin{equation*}
    w \har \cdotslong \la t^{C-1}_n \ra c_C \la y.
\end{equation*}
This leads to a contradiction because \eqref{trail:sequence_of_os} was assumed to be a minimal trail and the proof of \ref{fact:o6} is concluded.

\subsubsection{Proof of Lemma~\ref{lemma:minimal_trail_for_incoming_arc}\ref{fact:o7}}
\label{subsec:proof:lemma:minimal_trail_for_incoming_arc:o7}

% We show that the trail has $0$ converging connections, i.e. $C=0$.

% \medskip

% \textit{Proof of \ref{fact:o7}.}
By \ref{fact:o6}, the trail \eqref{trail:sequence_of_os} has either $0$ or $1$ converging connection.
If it has zero converging connection, then the existence of this trail completes the proof of \ref{fact:o7}.
Therefore we assume that \eqref{trail:sequence_of_os} has exactly one converging connection, i.e. $C=1$.
Furthermore, by Theorem~\ref{thm:minimal_trails_local_relationship}\ref{prop:final_c_in_Z}
we know that $c_C = c_1 \in Z := \parentsdown{o}{w} \sqcup \{o\}$.
This means that we can apply \ref{fact:o2} to find that $\G$ contains one of the three subgraphs in Figure~\ref{fig:trail:start}.
We consider each subgraph separately.
Furthermore, for each case we will consider two sub-cases; when $c_1 \ra y$ and when $c_1 \la y$, since $c_1$ and $y$ are adjacent by Theorem~\ref{thm:minimal_trails_local_relationship}\ref{prop:adjacency}.

\medskip

\noindent
\underline{Case 1: Subgraph~\ref{fig:trail:start1}.}\\
In this case the node $c_1$ is equal to $o$.
Since $y \notin \pa{o}$ by \ref{fact:o4}, the arc $c_1 \la y$ cannot be present.
Therefore, we must have $c_1 \ra y$.
Thus, by Theorem~\ref{thm:minimal_trails_local_relationship}\ref{prop:subgraphs} we know that $\G$ contains the subgraph below.

$$
\begin{tikzpicture}[scale=1, transform shape, node distance=1.2cm, state/.style={circle, font=\Large, draw=black}]

\node (ob) {$c_1 = o$};
\node[right = 6cm of ob] (oc) {$y$};
\node[above = of ob] (y1) {$t^0_1$};
\node[above right= of y1] (y2) {$t^0_2$};
\node[above = of oc] (yn) {$t^0_n$};
\node[above left= of yn] (yn-1) {$t^0_{n-1}$};

\begin{scope}[>={Stealth[length=6pt,width=4pt,inset=0pt]}]

\path [->] (ob) edge node {} (oc);
\path [->] (y1) edge node {} (ob);
\path [->] (y1) edge node {} (y2);
\path [->] (yn-1) edge node {} (yn);
\path [->] (yn) edge node {} (oc);
\path [->] (ob) edge node {} (y2);
\path [->] (ob) edge node {} (yn-1);
\path [->] (ob) edge node {} (yn);
\draw[loosely dotted, line width=0.5mm] (y2) -- (yn-1);

\end{scope}

\end{tikzpicture}
$$

The node $t^0_1$ is in $\V \setminus \Ovk$ by Theorem~\ref{thm:minimal_trails}\ref{prop:nodes_along_subtrails}, and it is also in $\pa{o}$.
This means that $t^0_1 \in \pa{o} \setminus \Ovk$.
By Assumption~\ref{assumpt:lemma:sequence_incoming_arc}, we have
$\parentsdown{o}{w} \subseteq \Ovk$
and $o \in \Ovk$.
Therefore,
$\pa{o} \setminus \Ovk
= \big( \parentsdown{o}{w} \sqcup \{w\}
\sqcup \parentsup{o}{w} \big) \setminus \Ovk
\subseteq \parentsup{o}{w} \sqcup \{w\}$.
Thus, $t^0_1 \in \parentsup{o}{w} \sqcup \{w\}$.
If $t^0_1 = w$, we find a shorter trail than \eqref{trail:sequence_of_os}, which is a contradiction.
Therefore
$t^0_1 \in \parentsup{o}{w}$
and so $w \smallerbig{o} t^0_1$.

\medskip

Because the parental order $\smallerbig{o}$ has been determined by our algorithm, it abides by the B-sets, see Corollary~\ref{cor:previous_orders_abide_by_bsets}.
Therefore, any B-set corresponding to the node $o$ which contains $t^0_1$ must also contain $w$.
Remark that $t^0_1\in B(o, t^0_2)$.
Consequently, we have that $w \in B(o, t^0_2)$.
This means that $w \ra t^0_2$ leads to the trail
$$
w \ra t^0_2 \ra \cdots \ra t^0_n \ra y
$$
which contains no converging connections.
Thus, this trail is better than the trail \eqref{trail:sequence_of_os} which is a contradiction.

\medskip

\noindent
\underline{Case 2: Subgraph~\ref{fig:trail:start2}.}\\
Note that both cases $c_1 \ra y$ and when $y \ra c_1$ must be considered. For both cases we have that $c_1 \in Z = \parentsdown{o}{w} \sqcup \{o\}$, and therefore $c_1 \ra o$.
First, let us assume that $c_1 \ra y$, then by Theorem~\ref{thm:minimal_trails_local_relationship}\ref{prop:subgraphs} we know that $\G$ contains the subgraph below.

$$
\begin{tikzpicture}[scale=1, transform shape, node distance=1.2cm, state/.style={circle, font=\Large, draw=black}]

\node (ob) {$o$};
\node[above = of ob] (oa1) {$c_1$};
\node[above left = of oa1] (xn) {$t^0_n$};
\node[below left = of xn] (xp) {$t^0_{n-1}$};
\node[right = 6cm of oa1] (oc) {$y$};

%oa1 -- oc
\node[above = of oa1] (y1) {$t^1_1$};
\node[above right = of y1] (y2) {$t^1_2$};
\node[above = of oc] (yn) {$t^1_{n}$};
\node[above left = of yn] (yn-1) {$t^1_{n-1}$};

\begin{scope}[>={Stealth[length=6pt,width=4pt,inset=0pt]}]

\path [->] (oa1) edge node {} (ob);
\path [->] (oa1) edge node {} (oc);

\path [->] (xn) edge node {} (xp);
\path [->] (xn) edge node {} (oa1);
\path [->] (oa1) edge node {} (xp);

%oa1 -- oc
\path [->] (y1) edge node {} (oa1);
\path [->] (y1) edge node {} (y2);
\path [->] (yn-1) edge node {} (yn);
\path [->] (yn) edge node {} (oc);
\path [->] (oa1) edge node {} (yn);
\path [->] (oa1) edge node {} (y2);
\path [->] (oa1) edge node {} (yn-1);
\draw[loosely dotted, line width=0.5mm] (y2) -- (yn-1);

\end{scope}
\end{tikzpicture}
$$

Here, we have that $t^0_n \in B(c_1, t^0_{n-1})$ and $t^1_1 \in B(c_1, t^1_2)$.
Since $\G$ does not contain any interfering v-structures, we must have $t^0_n \ra t^1_2$ or $t^1_1 \ra t^0_{n-1}$.
Both arcs result in a trail from $w$ to $Y$ without converging connections, and therefore lead to contradictions.
Indeed, we find the trails
\begin{align*}
    w \har \cdots \har t^0_n \ra t^1_2 \har \cdots \har y,\\
    w \har \cdots \har t^0_{n-1} \la t^1_1 \har \cdots \har y,
\end{align*}
which are better than \eqref{trail:sequence_of_os}.

Because the arc $c_1 \ra y$ leads to a contradiction, we can assume that $c_1 \la y$.
In this case the $\yt$ whose existence has been established from \ref{fact:o5} cannot be equal to $y$ since this would provide the cycle
$o \ra y \ra c_1 \ra o$, and therefore a contradiction.
Thus, $\G$ contains the subgraph below.
$$
\begin{tikzpicture}[scale=1, transform shape, node distance=1.2cm, state/.style={circle, font=\Large, draw=black}]

\node (ob) {$o$};
\node[above = of ob] (oa1) {$c_1$};
\node[above left = of oa1] (xn) {$t^0_{n}$};
\node[below left = of xn] (xp) {$t^0_{n-1}$};
\node[right = of oa1] (oc) {$y$};
\node[below right = of oc] (ot) {$\yt$};

\begin{scope}[>={Stealth[length=6pt,width=4pt,inset=0pt]}]

\path [->] (oa1) edge node {} (ob);
\path [->] (oc) edge node {} (oa1);

\path [->] (xn) edge node {} (oa1);
\path [->] (xn) edge node {} (xp);
\path [->] (oa1) edge node {} (xp);

\path [->] (oc) edge node {} (ot);
\path [->] (ob) edge node {} (ot);
\path [->] (oa1) edge node {} (ot);

\end{scope}
\end{tikzpicture}
$$

Here, we have $t^0_n \in B(c_1, t^0_{n-1})$ and $y \in B(c_1, \yt)$.
Similarly this means that $t^0_n \ra \yt$ or $y \ra t^0_{n-1}$.
Both arcs result in a trail from $w$ to $Y$ without converging connections, and therefore contradictions.
Indeed, we find the trails
\begin{align*}
    w \har \cdots \har t^0_n \ra \yt,\\
    w \har \cdots \har t^0_{n-1} \la y,
\end{align*}
where the node $\yt$ is in $Y$ by \ref{fact:o5}.
This gives us the existence of the trail as claimed.

\medskip

\noindent
\underline{Case 3: Subgraph~\ref{fig:trail:start3}.}\\
We must consider the two cases $c_1 \la y$ and $c_1 \ra y$.
First, let us assume that $c_1 \la y$, giving us the subgraph below.
Again, $\yt$ cannot be equal to $y$, since this would create a cycle.
Therefore, $\G$ contains the subgraph below.
$$
\begin{tikzpicture}[scale=1, transform shape, node distance=1.2cm, state/.style={circle, font=\Large, draw=black}]

\node (ob) {$o$};
\node[above = of ob] (oa1) {$c_1$};
\node[left = of oa1] (xn) {$t^0_n$};
\node[right = of oa1] (oc) {$y$};
\node[below right = of oc] (ot) {$\yt$};

\begin{scope}[>={Stealth[length=6pt,width=4pt,inset=0pt]}]

\path [->] (oa1) edge node {} (ob);
\path [->] (xn) edge node {} (oa1);
\path [->] (xn) edge node {} (ob);
\path [->] (oc) edge node {} (oa1);

\path [->] (oa1) edge node {} (ot);
\path [->] (ob) edge node {} (ot);
\path [->] (oc) edge node {} (ot);

\end{scope}
\end{tikzpicture}
$$
Here, we have $t^0_n \in B(c_1, o)$ and
$y \in B(c_1, \yt)$.
% \begin{align*}
%     t^0_n & \in B(c_1, o),\\
%     y & \in B(c_1, \yt).
% \end{align*}
Therefore, $\E$ must contain $t^0_n \ra \yt$ or $y \ra o$.
The former arc results in a trail
$$
w \har \cdots \har t^0_n \ra \yt
$$
from $w$ to $Y$ without converging connections (and therefore a better trail than \eqref{trail:sequence_of_os}) and the latter would  contradict $y \notin \pa{o}$ (which we know by \ref{fact:o4}).
This means that $c_1 \ra y$.
Therefore, by combining subgraph~\ref{fig:trail:start3} with Theorem~\ref{thm:minimal_trails_local_relationship}\ref{prop:subgraphs} we obtain the subgraph below.
$$
\begin{tikzpicture}[scale=1, transform shape, node distance=1.2cm, state/.style={circle, font=\Large, draw=black}]

\node (ob) {$o$};
\node[above = of ob] (oa1) {$c_1$};
\node[above left = of oa1] (h) {};
\node[below left = of h] (xn) {$t^0_n$};
\node[right = 6cm of oa1] (oc) {$y$};

%oa1 -- oc
\node[above = of oa1] (y1) {$t^1_1$};
\node[above right = of y1] (y2) {$t^1_2$};
\node[above = of oc] (yn) {$t^1_{n}$};
\node[above left = of yn] (yn-1) {$t^1_{n-1}$};

\begin{scope}[>={Stealth[length=6pt,width=4pt,inset=0pt]}]

\path [->] (oa1) edge node {} (ob);
\path [->] (oa1) edge node {} (oc);

\path [->] (xn) edge node {} (oa1);
\path [->] (xn) edge node {} (ob);

%oa1 -- oc
\path [->] (y1) edge node {} (oa1);
\path [->] (y1) edge node {} (y2);
\path [->] (yn-1) edge node {} (yn);
\path [->] (yn) edge node {} (oc);
\path [->] (oa1) edge node {} (yn);
\path [->] (oa1) edge node {} (y2);
\path [->] (oa1) edge node {} (yn-1);
\draw[loosely dotted, line width=0.5mm] (y2) -- (yn-1);

\end{scope}
\end{tikzpicture}
$$
Here, we have $t^0_n \in B(c_1, o)$
and $t^1_1 \in B(c_1, t^1_2)$.
% \begin{align*}
%     t^0_n & \in B(c_1, o),\\
%     t^1_1 & \in B(c_1, t^1_2).
% \end{align*}
Therefore, we have $t^0_n \ra t^1_2$ or $t^1_1 \ra o$.
The arc $t^0_n \ra t^1_2$ provides the trail 
$$
w \har \cdots \har t^0_n \ra t^1_2 \har \cdots \har y
$$
between $w$ and $Y$ without converging connections, and thus a contradiction with the definition of \eqref{trail:sequence_of_os}.
The arc $t^1_1 \ra o$ provides us with the trail 
\begin{equation}
\label{trail:minimal_trail_o_single_node}
    w \ra o \la t^1_1 \har \cdots \har t^1_n \ra y
\end{equation}
which is better than the trail \eqref{trail:sequence_of_os} according to $\smallerTRAIL$, unless $n_t(0) = 0$.
In that case, \eqref{trail:minimal_trail_o_single_node} is also a minimal trail in $\TRAILSbig{X}{Y}{Z}$ with one converging node which is equal to $o$.
Therefore, we can apply the same argument as in Case 1 to the trail \eqref{trail:minimal_trail_o_single_node}, which leads to a contradiction.
% We know that by Case~1, the existence of this trial leads to a contradiction.

% So, for each case we have found a contradiction.
% Therefore, the number of total converging connections $C$ of \eqref{trail:sequence_of_os} must be equal to zero, proving the lemma.

\subsection{Possible candidates by outgoing arc}
\label{subsec:lemma:existence_trail_outgoing_case}

\begin{assumption}
\label{assumpt:lemma:sequence_outgoing_arc}
    Assumption~\ref{assumpt:good_graphs} holds.
    Furthermore, $w\in B(\Ovk)\setminus \Ovk$
    and $o \in \Ovk$ are nodes such that
    
    (i) $o \ra w$,\;
    (ii) $\parentsdown{w}{o}\setminus \Ovk = \emptyset$,\;
    (iii) There is no arc from $\BOvk \setminus \Ovk$ to $\Ovk$.
\end{assumption}

\begin{lemma}
\label{lemma:sequence_of_ws_no_converging}
    Assume that Assumption~\ref{assumpt:lemma:sequence_outgoing_arc} holds and that $w \notin \PossCandOut{\Ovk}$.
    Then there exists a trail
    from $w$ to a node in
    $\Ovk \setminus ( \parentsdown{w}{o} \sqcup \{ o \} )$ which is activated by 
    $\parentsdown{w}{o} \sqcup \{ o \}$
    and contains no converging connections. 
\end{lemma}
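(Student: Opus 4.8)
The plan is to mirror the proof of Lemma~\ref{lemma:sequence_of_os_no_converging}, replacing the incoming-arc analysis with its outgoing-arc counterpart. First I would unpack the hypothesis $w \notin \PossCandOut{\Ovk}$. By Proposition~\ref{prop:characterization_poss_candidates} together with the defining condition~\eqref{eq:cond_outgoing_arc}, a node $w$ with $o \ra w$ fails to be a possible candidate by outgoing arc exactly when one of the two requirements $\parentsdown{w}{o} \subseteq \Ovk$ or $\dsepbig{w}{\Ovk \setminus (\parentsdown{w}{o} \sqcup \{o\})}{\parentsdown{w}{o} \sqcup \{o\}}$ is violated. Assumption~\ref{assumpt:lemma:sequence_outgoing_arc}(ii) states $\parentsdown{w}{o} \setminus \Ovk = \emptyset$, i.e. $\parentsdown{w}{o} \subseteq \Ovk$, so the first requirement is automatically met. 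Hence the second must fail, giving $\notdsepbig{w}{\Ovk \setminus (\parentsdown{w}{o} \sqcup \{o\})}{\parentsdown{w}{o} \sqcup \{o\}}$.

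Setting $X := \{w\}$, $Y := \Ovk \setminus (\parentsdown{w}{o} \sqcup \{o\})$ and $Z := \parentsdown{w}{o} \sqcup \{o\}$, the failure of d-separation means precisely that $\TRAILSbig{X}{Y}{Z} \neq \emptyset$, so there is at least one trail from $w$ to $Y$ activated by $Z$. I would then pick a minimal element of $\TRAILSbig{X}{Y}{Z}$ with respect to $\smallerTRAIL$ and argue that it contains no converging connection; such a trail is exactly what the lemma asks for.

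The heart of the argument, and the main obstacle, is showing that this minimal trail has $C = 0$ converging connections. This is the outgoing-arc analogue of Lemma~\ref{lemma:minimal_trail_for_incoming_arc} and should be carried out as a separate companion lemma with a chain of facts paralleling (i)--(vii) there. To set it up I would first record that $Y \sqcup Z = \Ovk$ has local relationships (Definition~\ref{def:propadj}) via Corollary~\ref{cor:adjacent_os}, which unlocks Theorem~\ref{thm:minimal_trails_local_relationship}; the structural facts about minimal trails (Theorem~\ref{thm:minimal_trails}, Corollary~\ref{cor:all_cs_in_Z}) then apply. The key new ingredient relative to the incoming case is Assumption~\ref{assumpt:lemma:sequence_outgoing_arc}(iii), that $\E$ contains no arc from $\BOvk \setminus \Ovk$ to $\Ovk$: combined with Lemma~\ref{lemma:bset_btrails} (all interior nodes of a shortest empty-set-activated trail between two nodes of $\BOvk$ lie in $\BOvk$), this pins down the orientation of the terminal arcs and excludes converging nodes at the end of the trail.

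I would run the contradiction argument by assuming $C \geq 1$, examining the last converging node $c_C$ (which lies in $Z$ by Theorem~\ref{thm:minimal_trails_local_relationship}\ref{prop:final_c_in_Z}), and using the absence of interfering v-structures, so that the B-sets of any node are nested (Lemma~\ref{Bsets_eqcond}), to produce a shortcut arc via Corollary~\ref{cor:B_sets_implies_rightarrow}; this shortcut yields a strictly $\smallerTRAIL$-smaller trail, contradicting minimality. The bookkeeping of which nodes lie in $\parentsdown{w}{o}$, in $\{o\}$, or in $\parentsup{w}{o}$, together with repeated appeals to the ordering constraints supplied by the algorithm (Corollary~\ref{cor:previous_orders_abide_by_bsets}), is where the technical difficulty concentrates; the geometric skeleton, however, is identical to the incoming case already treated in Section~\ref{subsec:lemma:existence_trail_incoming_case}, so I would organize the companion lemma along the same sequence of facts and reuse the minimal-trail machinery verbatim wherever the two cases coincide.
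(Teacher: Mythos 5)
Your proposal is correct and follows the paper's own proof essentially verbatim: the reduction via Proposition~\ref{prop:characterization_poss_candidates} and Assumption~\ref{assumpt:lemma:sequence_outgoing_arc}(ii) to the failure of the d-separation condition, the selection of a $\smallerTRAIL$-minimal trail in $\TRAILSbig{\{w\}}{\Ovk \setminus ( \parentsdown{w}{o} \sqcup \{ o \} )}{\parentsdown{w}{o} \sqcup \{ o \}}$, and the deferral of the claim $C = 0$ to a companion lemma are exactly the paper's steps, the companion lemma being its Lemma~\ref{lemma:minimal_trail_for_outgoing_arc}\ref{fact:w5}. The ingredients you identify for that companion lemma (local relationships via Corollary~\ref{cor:adjacent_os} unlocking Theorem~\ref{thm:minimal_trails_local_relationship}, Lemma~\ref{lemma:bset_btrails}, the no-arc condition (iii), and the nested B-sets from the absence of interfering v-structures, with the contradiction split between a strictly better trail and a forbidden arc from $\BOvk \setminus \Ovk$ into $\Ovk$) are precisely those the paper deploys.
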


\begin{proof}
    By assumption we have $w \notin \PossCandOut{\Ovk}$.
    Therefore, $w$ is not a possible candidate by the outgoing arc $o \ra w$.
    By the definition of a possible candidate by outgoing arc (see~\eqref{eq:cond_outgoing_arc}),
    this means that one of the following conditions must be violated.
    \begin{enumerate}
         \item $\parentsdown{w}{o} \subseteq \Ovk$.
         \item $\dsepbig{w}{\Ovk \setminus (\parentsdown{w}{o} \sqcup \{o\})}{\parentsdown{w}{o} \sqcup \{o\}}$.
    \end{enumerate}
    The first condition is satisfied, because $\parentsdown{w}{o}\setminus \Ovk = \emptyset$ by Assumption~\ref{assumpt:lemma:sequence_outgoing_arc}.
    Therefore, the second restriction must be violated, i.e.
    $\notdsepbig{w}{\Ovk \setminus (\parentsdown{w}{o} \sqcup \{o\})}{\parentsdown{w}{o} \sqcup \{o\}}$.
    This means that there exists a trail from $w$ to $\Ovk\setminus (\parentsdown{w}{o} \sqcup \{o\})$ activated by $\parentsdown{w}{o} \sqcup \{o\}$.

    \medskip

    Consequently, the set $\TRAILSbig{X}{Y}{Z}$ with $X = \{w\}$, $Y = \Ovk \setminus (\parentsdown{w}{o} \sqcup \{o\})$ and $Z = \parentsdown{w}{o} \sqcup \{o\}$ is not empty, and thus we can pick a minimal trail in this set:
    \begin{equation}
    \label{trail:sequence_of_ws}
        w \har \cdots \ra c_1 \la \cdotslong \ra c_C \la \cdots \har y.
    \end{equation}
    By Lemma~\ref{lemma:minimal_trail_for_outgoing_arc}\ref{fact:w5}, such a trail has no converging connection, and this completes the proof of this lemma.
\end{proof}

%The Lemma below  ensures the existence of a trail with no converging connection in the set
%$\TRAILSbig{w}{\Ovk \setminus ( \parentsdown{w}{o} \sqcup \{ o \} )}{( \parentsdown{w}{o} \sqcup \{ o \} )}$.

\begin{lemma}
\label{lemma:minimal_trail_for_outgoing_arc}
    Assume that Assumption~\ref{assumpt:lemma:sequence_outgoing_arc} holds and take a minimal trail
    $$
    w \har \cdots \ra c_1 \la \cdotslong \ra c_C \la \cdots \har y
    $$
    in the set $\TRAILSbig{X}{Y}{Z}$ with
    $X = \{w\}$,
    $Y =  \Ovk \setminus ( \parentsdown{w}{o} \sqcup \{ o \} )$
    and
    $Z = \parentsdown{w}{o} \sqcup \{ o \}$.
    Then, the following statements hold:
    \begin{enumerate}[label = (\roman*)]
        \item $c_1 = Z(c_1)$,
        where $Z(c_i)$ denotes the closest descendant of $c_i$ in the sense of Definition~\ref{def:closest_descendant}.
        \label{fact:w1}
        
        \item The graph contains the subgraph below.
        \begin{figure}[H]
        
        \centering
        \begin{tikzpicture}[scale=1, transform shape, node distance=1.2cm, state/.style={circle, font=\Large, draw=black}]
        \node (v1) {$w$};
        \node[right = 6cm of v1] (v2) {$c_1$};
        
        %v1 -- v2
        \node[above right = 1cm and 0.01cm of v1] (x1) {$t^{0}_1$};
        \node[above right = of x1] (x2) {$t^{0}_2$};
        \node[above left = 1cm and 0.01cm of v2] (xn) {$t^{0}_{n}$};
        \node[above left= of xn] (xn-1) {$t^{0}_{n-1}$};
    
        \begin{scope}[>={Stealth[length=6pt,width=4pt,inset=0pt]}]
        \path [->] (v2) edge node {} (v1);
        \path [->] (v2) edge node {} (x2);
        \path [->] (v2) edge node {} (xn-1);
        \path [->] (v2) edge node {} (x1);
        \path [->] (x1) edge node {} (v1);
        \path [->] (x2) edge node {} (x1);
        \path [->] (xn) edge node {} (xn-1);
        \path [->] (xn) edge node {} (v2);
        \draw[loosely dotted, line width=0.5mm] (x2) -- (xn-1);
        
        \end{scope}
        \end{tikzpicture}
        \end{figure}
        
        \label{fact:w2}
        
        \item $c_1 \la c_2$.
        \label{fact:w3}
        
        \item For all $i=1,\dots,C$, we have that $c_i = Z(c_i)$ and $c_i \la c_{i+1}$.
        \label{fact:w4}
        
        \item The number of converging connections is equal to zero, i.e. $C=0$.
        \label{fact:w5}
    \end{enumerate}
   
\end{lemma}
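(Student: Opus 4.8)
The plan is to establish the five facts in the order stated, since each relies on its predecessors, with the real target being \ref{fact:w5}. First I would record the two structural inputs: $Y\sqcup Z=\Ovk$, and $\Ovk$ has local relationships in the sense of \cref{def:propadj} by \cref{cor:adjacent_os}. This makes \cref{thm:minimal_trails,thm:minimal_trails_local_relationship} and \cref{cor:all_cs_in_Z} applicable to the fixed minimal trail; in particular consecutive converging nodes are adjacent (\cref{thm:minimal_trails_local_relationship}\ref{prop:adjacency}), $c_C\in Z$ (\cref{thm:minimal_trails_local_relationship}\ref{prop:final_c_in_Z}), the subtrails are chordless (\cref{thm:minimal_trails}\ref{prop:no_chords}), and every internal node $t^i_j$ avoids $X\sqcup Y\sqcup Z=\{w\}\sqcup\Ovk$ (\cref{thm:minimal_trails}\ref{prop:nodes_along_subtrails}). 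I would use these throughout.

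For \ref{fact:w1} I would show $c_1\in Z$. When $C=1$ this is immediate since $c_1=c_C\in Z$. The hard case, and what I expect to be the main obstacle, is $C\geq 2$: here \cref{thm:minimal_trails_local_relationship}\ref{prop:most_ci_in_Z} only gives $c_1\in Z$ or $c_2\in Z$, and the endpoint $c_1$ of the converging chain is reached neither by the propagation rules \cref{thm:minimal_trails}\ref{prop:rightarrow_in_Z},\ref{prop:leftarrow_in_Z} nor by \cref{cor:all_cs_in_Z}\ref{prop:no_convCon_c_i_in_Z}, which only covers interior $c_i$. I would assume $c_1\notin Z$, deduce $c_2\in Z$ and (by adjacency together with the contrapositive of \cref{thm:minimal_trails}\ref{prop:rightarrow_in_Z}) the arc $c_2\ra c_1$, and then use the explicit form of the second subtrail from \cref{thm:minimal_trails_local_relationship}\ref{prop:subgraphs} to exhibit a B-set configuration at $c_2$; since $\G$ has no interfering v-structures this must force an additional arc yielding a strictly $\smallerTRAIL$-better trail, contradicting minimality. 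This mirrors the interfering-v-structure arguments of the incoming case and is the most delicate step. Once $c_1\in Z$ is known, \ref{fact:w2} follows quickly: since $Z=\parentsdown{w}{o}\sqcup\{o\}\subseteq\pa{w}$ we get $c_1\ra w$, hence $n\geq 1$ (otherwise the $0$-th subtrail would be the arc $w\ra c_1$, contradicting $c_1\ra w$), and reading the $0$-th subtrail from $c_1$ as $c_1\la t^0_n\har\cdots\har t^0_1\har w$ I would apply \cref{lemma:po_active_cycle1} through \cref{lemma:generalisation_to_K} with $K=\V\setminus Z$ to obtain exactly the hub subgraph with $c_1$ pointing to $w,t^0_1,\dots,t^0_{n-1}$ and $t^0_n\ra c_1$.

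Facts \ref{fact:w3} and \ref{fact:w4} I would prove together by induction along the converging chain. For \ref{fact:w3}, $c_1$ and $c_2$ are adjacent; if $c_1\ra c_2$ then, because $c_1\in Z$, applying the same B-set/interfering-v-structure reasoning as in \ref{fact:w2} to the second subtrail produces a better trail, so $c_1\la c_2$. For \ref{fact:w4}, assuming $c_i\in Z$ and $c_i\la c_{i+1}$, \cref{thm:minimal_trails}\ref{prop:leftarrow_in_Z} gives $c_{i+1}\in Z$, and forbidding the diverging pattern $c_i\la c_{i+1}\ra c_{i+2}$ (\cref{prop:no_diverging_cs}) together with adjacency forces $c_{i+1}\la c_{i+2}$; hence every $c_i\in Z$ and the chain reads $c_1\la c_2\la\cdots\la c_C$.

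Finally, for \ref{fact:w5} I would derive a contradiction from $C\geq 1$. By \ref{fact:w1}, $c_1\in Z\subseteq\Ovk\subseteq\BOvk$, and by \ref{fact:w2} the node $w\in\BOvk$ is joined to $c_1\in\BOvk$ by the $0$-th subtrail, which (by minimality) is a shortest trail activated by the empty set consisting of nodes in $\V\setminus Z$. Since $w$ and $c_1$ lie in the common B-set $\BOvk=B(v,q)$, \cref{cor:bset_trails} forces $t^0_j\in\BOvk$ for all $j$; in particular $t^0_n\in\BOvk$, while $t^0_n\notin\Ovk$ by \cref{thm:minimal_trails}\ref{prop:nodes_along_subtrails}, so $t^0_n\in\BOvk\setminus\Ovk$. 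But then the arc $t^0_n\ra c_1$ points from $\BOvk\setminus\Ovk$ into $c_1\in\Ovk$, contradicting Assumption~\ref{assumpt:lemma:sequence_outgoing_arc}(iii). Hence no converging node can exist, i.e.\ $C=0$, which is precisely the conclusion used in \cref{lemma:sequence_of_ws_no_converging}.
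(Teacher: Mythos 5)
Your overall architecture (proving the five facts in order, with \ref{fact:w5} as the target) and your treatment of \ref{fact:w2}, \ref{fact:w3} and \ref{fact:w4} match the paper's proof, but there are two genuine gaps, in \ref{fact:w1} and in \ref{fact:w5}.

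The clearest failure is in \ref{fact:w5}. You apply \cref{cor:bset_trails} to the $0$-th subtrail $w \har t^0_1 \har \cdots \har t^0_n \ra c_1$, but that corollary only constrains trails that are \emph{shortest} among all trails between $w$ and $c_1$ activated by the empty set and consisting of nodes in $K = \V \setminus Z$. The $0$-th subtrail is never such a shortest trail: since $c_1 \in Z = \parentsdown{w}{o} \sqcup \{o\} \subseteq \pa{w}$, the single arc $w \la c_1$ (which is even drawn explicitly in the subgraph of \ref{fact:w2}) is itself a trail between $w$ and $c_1$ activated by the empty set with no intermediate nodes, strictly shorter than the $0$-th subtrail, which has $n \geq 1$ intermediate nodes by \ref{fact:w2}. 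Minimality of the whole trail only makes the $0$-th subtrail shortest among competitors ending with a rightward arrow into $c_1$; a shorter competitor ending $\cdots \la c_1$ turns the connection at $c_1 \in Z$ into a serial one and blocks the whole trail, so it cannot be used against minimality, and \cref{lemma:bset_btrails} and \cref{cor:bset_trails} carry no such arrow restriction. Hence your claim $t^0_n \in \BOvk$ is unsupported and the contradiction with Assumption~\ref{assumpt:lemma:sequence_outgoing_arc}(iii) does not follow. (The fact that your argument never uses \ref{fact:w3} or \ref{fact:w4} is itself a warning sign.) The paper instead works at the \emph{other} end of the trail: by \ref{fact:w4} it has $y \ra c_C$ and the hub structure at $c_C$, it observes that $c_{C-1}$ (or $w$ when $C=1$), $c_C$ and $y$ all lie in $\BOvk$ and therefore all point to $v$ (and to $b_q$ when $\BOvk = B(v,b_q)$), and then the absence of interfering v-structures forces either $y \ra t^{C-1}_{n-1}$ (a strictly better trail, contradiction) or the arcs $t^{C-1}_n \ra v$ and $t^{C-1}_n \ra b_q$; it is this forced membership $t^{C-1}_n \in \BOvk \setminus \Ovk$ that lets one invoke (iii) on the arc $t^{C-1}_n \ra c_C$.

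The second gap is \ref{fact:w1} in the case $C \geq 2$: the step ``exhibit a B-set configuration at $c_2$ \dots this must force an additional arc yielding a strictly $\smallerTRAIL$-better trail'' is exactly what has to be proved, and it is not clear it can be carried out as sketched. With $c_2 \ra c_1$, the hub of the first subtrail sits at $c_2$, so the double-hub configuration at a single node that drives the interfering-v-structure arguments elsewhere (as in the proof of \ref{fact:w3}, which needs \ref{fact:w2}) is not available, and the second common child needed at $c_2$ depends on the unknown orientation between $c_2$ and $c_3$. The paper proves \ref{fact:w1} by a different mechanism, and it is the one place where the no-active-cycle hypothesis genuinely enters: since $Z \subseteq \pa{w}$, one has $Z(c_1) \ra w$, so if $c_1 \neq Z(c_1)$ then the $0$-th subtrail, the descendant path $c_1 \ra d^1_1 \ra \cdots \ra Z(c_1)$ and the arc $Z(c_1) \ra w$ close an undirected cycle through $w$; after excluding all but a few chords, this cycle would be an active cycle in the sense of \cref{def:active_cycle} unless its length is at most $3$, which is shown to be impossible. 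Your sketch never uses $Z(c_1) \ra w$, i.e.\ the fact that $Z$ consists of parents of $w$, which is precisely the feature of the outgoing case that makes \ref{fact:w1} true.
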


% Remark that \ref{fact:w5} is equivalent to the first statement of the lemma if the set $\TRAILSbig{X}{Y}{Z}$ is not empty.

\begin{proof}
    % If this trail contains no converging connections, then the proof is complete.
    % Therefore, we assume that it contains at least one converging connection, that is $C \geq 1$.

    First, note that $Y \sqcup Z = \Ovk$ have local relationships in the sense of Definition~\ref{def:propadj}, by Corollary \ref{cor:adjacent_os}.
    Therefore, by Theorem~\ref{thm:minimal_trails_local_relationship}\ref{prop:subgraphs}, $\G$ contains the subgraph below with
    $y \in Y$.

    \begin{figure}[H]
    \centering
    \begin{tikzpicture}[scale=1, transform shape, node distance=1cm, state/.style={circle, font=\Large, draw=black}]
    
    %Nodes
    \node (c1) {$c_{1}$};
    \node[right = of c1] (c2) {$c_2$};
    \node[right = 3cm of c2] (cC) {$c_C$};
    \node[right = of cC] (o) {$y$};
    \node[left = of c1] (t0n) {$t^0_n$};
    \node[left = of t0n] (t0n-1) {$t^0_{n-1}$};
    \node[left = of t0n-1] (t01) {$t^0_{1}$};
    \node[left = of t01] (w) {$w$};

    % Y(c_1)
    \node[above = of c1] (c1h1) {};
    \node[above = of c1h1] (c1h2) {};
    \node[above = of c1h2] (Yc1) {$Z(c_1)$};

    % Y(c_2)
    \node[above = of c2] (c2h1) {};
    \node[above = of c2h1] (c2h2) {};
    \node[above = of c2h2] (Yc2) {$Z(c_2)$};

    % Y(c_C)
    \node[above = of cC] (cCh1) {};
    \node[above = of cCh1] (cCh2) {};
    \node[above = of cCh2] (YcC) {$Z(c_C)$};
    
    \begin{scope}[>={Stealth[length=6pt,width=4pt,inset=0pt]}]
        %\path [->] (y2) edge node {} (oa1);
        %\path [->] (y2) edge node {} (oa2);
        %\path [->] (yA) edge node {} (oaA-1);
        %\path [->] (yA) edge node {} (oaA);
    
        %\path [->] (yA+1) edge node {} (oaA);
        %\path [->] (yA+1) edge node {} (oc);
    
        \path [-] (c1) edge[bend left=45] node {} (c2);
        \path [-] (cC) edge[bend left=45] node {} (o);
    
        \draw[transform canvas={yshift=0.21ex},-left to,line width=0.25mm] (c1) -- (c2);
        \draw[transform canvas={yshift=-0.21ex},left to-,line width=0.25mm] (c1) -- (c2);
    
        \draw[transform canvas={yshift=0.21ex},-left to,line width=0.25mm] (cC) -- (o);
        \draw[transform canvas={yshift=-0.21ex},left to-,line width=0.25mm] (cC) -- (o);
    
        \draw[loosely dotted, line width=0.5mm] (c2) -- (cC);

        % Paths to Y(c_i)'s
        \path [->] (c1) edge node {} (c1h1);
        \path [->] (c1h2) edge node {} (Yc1);
        \draw[loosely dotted, line width=0.5mm] (c1h1) -- (c1h2);

        \path [->] (c2) edge node {} (c2h1);
        \path [->] (c2h2) edge node {} (Yc2);
        \draw[loosely dotted, line width=0.5mm] (c2h1) -- (c2h2);

        \path [->] (cC) edge node {} (cCh1);
        \path [->] (cCh2) edge node {} (YcC);
        \draw[loosely dotted, line width=0.5mm] (cCh1) -- (cCh2);

        \draw[loosely dotted, line width=0.5mm] (t01) -- (t0n-1);
        \path [->] (t0n) edge node {} (c1);
        \draw[transform canvas={yshift=0.21ex},-left to,line width=0.25mm] (w) -- (t01);
        \draw[transform canvas={yshift=-0.21ex},left to-,line width=0.25mm] (w) -- (t01);
        \draw[transform canvas={yshift=0.21ex},-left to,line width=0.25mm] (t0n-1) -- (t0n);
        \draw[transform canvas={yshift=-0.21ex},left to-,line width=0.25mm] (t0n-1) -- (t0n);
        
    \end{scope}
    \end{tikzpicture}
    \end{figure}

    \subsubsection{Proof of Lemma~\ref{lemma:minimal_trail_for_outgoing_arc}\ref{fact:w1}}
    
    % To prove: $c_1=Z(c_1)$.
    
    % \medskip

    % \textit{Proof of \ref{fact:w1}.}
    Let us assume that $c_1 \neq Z(c_1)$. Since $Z = \parentsdown{w}{o} \sqcup \{o\}$, we have that $Z(c_i)\ra w$ for all $i=1,\dots,C$.
    In particular, we get that $Z(c_1)\ra w$, and therefore $\G$ contains the subgraph below.

    \begin{figure}[H]
    \centering
    \begin{tikzpicture}[scale=1, transform shape, node distance=1cm, state/.style={circle, font=\Large, draw=black}]
    
    %Nodes
    \node (c1) {$c_{1}$};
    \node[left = of c1] (yn) {$t^0_{n_t}$};
    \node[left = of yn] (ym+1) {$t^0_{m+1}$};
    \node[left = of ym+1] (ym) {$t^0_{m}$};
    \node[left = of ym] (ym-1) {$t^0_{m-1}$};
    \node[left = of ym-1] (y1) {$t^0_{1}$};
    \node[left = of y1] (w) {$w$};

    % Y(c_1)
    \node[above = of c1] (c1h1) {$d^1_1$};
    \node[above = of c1h1] (c1h2) {$d^1_{n_Z}$};
    \node[above = of c1h2] (Yc1) {$Z(c_1)$};
    
    \begin{scope}[>={Stealth[length=6pt,width=4pt,inset=0pt]}]   
        % Paths to Y(c_i)'s
        \path [->] (c1) edge node {} (c1h1);
        \path [->] (c1h2) edge node {} (Yc1);
        \draw[loosely dotted, line width=0.5mm] (c1h1) -- (c1h2);

        \path [->] (y1) edge node {} (w);
        \path [->] (yn) edge node {} (c1);
        \path [->] (ym) edge node {} (ym-1);
        \path [->] (ym) edge node {} (ym+1);

        \draw[loosely dotted, line width=0.5mm] (y1) -- (ym-1);
        \draw[loosely dotted, line width=0.5mm] (ym+1) -- (yn);

        \path [->] (Yc1) edge node {} (w);
        
    \end{scope}
    \end{tikzpicture}
    \end{figure}

    The undirected cycle above is an active cycle, unless the appropriate chords are present in $\G$.
    Several chords can be excluded:
    \begin{itemize}
        \item The trails $c_1 \ra d_1 \ra \cdots \ra d_{n_t} \ra Z(c_1)$ and $w \har t^0_1 \har \cdots \har t^0_{n_t}$ do not contain any chords by Theorem~\ref{thm:minimal_trails}\ref{prop:no_chords}.

        % t_l -> ...
        \item $\forall l = 0, \dots, n_t - 1$, $t^0_l \ra c_1$ results in a shorter trail (and $t^0_{n_t} \ra c_1$ is not a chord).

        \item $\forall l = 1, \dots, n_t$, $t^0_l \ra Z(c_1)$ results in a trail with less converging connections not in $Z$.
        
        \item $\forall j= 1, \dots, n_Z$, $\forall l = 1, \dots, n_t$, $t^0_l \ra d^1_j$ results in a trail with shorter descendant paths.
        
        %  d_j -> ...
        \item $\forall j = 0, \dots , n_Z +1$, $\forall l = m, \dots, n_t$,
        $d^1_j \ra t^0_l$ results in a cycle.

        \item $\forall j = 0, \dots , n_Z$, $\forall l = 0, \dots, m-1$,
        $d^1_j \ra t^0_l$ result in a trail with less converging connections.

        % w -> ...
        \item $\forall j = 1, \dots, n_Z$, $w \ra d^1_j$ results in a cycle.

        % % c_1 -> ...
        % \item $\forall l = m, \dots, n_t$, $c_1 \ra t^0_l$ results in a cycle.

        \item $\forall l = 0, \dots, m-1$, $c_1 \ra t^0_l$ results in a trail with less converging connections.

    \end{itemize}

    Therefore, the only remaining chords are $Z(c_1) \ra t^0_l$ with $l = 1, \dots, m-1$.
    It is evident that all such arcs must be present to prevent the appearance of an active cycle in $\G$, giving us the subgraph below.
    
    \begin{figure}[H]
    \centering
    \begin{tikzpicture}[scale=1, transform shape, node distance=1cm, state/.style={circle, font=\Large, draw=black}]
    
    %Nodes
    \node (c1) {$c_{1}$};
    \node[left = of c1] (yn) {$t_{n_t}^0$};
    \node[left = of yn] (ym+1) {$t_{m+1}^0$};
    \node[left = of ym+1] (ym) {$t_{m}^0$};
    \node[left = of ym] (ym-1) {$t_{m-1}^0$};
    \node[left = of ym-1] (y1) {$t_{1}^0$};
    \node[left = of y1] (w) {$w$};
    
    % Y(c_1)
    \node[above = of c1] (c1h1) {$d^1 _1$};
    \node[above = of c1h1] (c1h2) {$d^1_{n_Z}$};
    \node[above = of c1h2] (Yc1) {$Z(c_1)$};
    
    \begin{scope}[>={Stealth[length=6pt,width=4pt,inset=0pt]}]   
        % Paths to Y(c_i)'s
        \path [->] (c1) edge[red] node {} (c1h1);
        \path [->] (c1h2) edge[red] node {} (Yc1);
        \draw[loosely dotted, line width=0.5mm, red] (c1h1) -- (c1h2);

        \path [->] (y1) edge node {} (w);
        \path [->] (yn) edge[red] node {} (c1);
        \path [->] (ym) edge[red] node {} (ym-1);
        \path [->] (ym) edge[red] node {} (ym+1);

        \draw[loosely dotted, line width=0.5mm] (y1) -- (ym-1);
        \draw[loosely dotted, line width=0.5mm, red] (ym+1) -- (yn);

        \path [->] (Yc1) edge node {} (w);
        \path [->] (Yc1) edge node {} (y1);
        \path [->] (Yc1) edge[red] node {} (ym-1);
        
    \end{scope}
    \end{tikzpicture}
    \end{figure}

    The subgraph above contains an undirected cycle with one converging connection (at $t^0_{m-1}$), coloured in red.
    Because there are no more chords which could be present, this undirected cycle must be of length smaller than 4, see Definition~\ref{def:active_cycle}.
    The undirected cycle consists of the nodes $c_1$, $Z(c_1)$,
    $t_{m-1}^0, t^0_{m}, \dots, t^0_{n_t}$ and 
    $d^1_1, \dots,  d^1_{n_Z}$;
    therefore it is of length $2 + n_t - (m-1) + 1 + n_Z = n_Z + n_t - m + 4$.
    This means that $n_Z + n_t - m + 4 \leq 3$, and therefore $n_Z + n_t - m \leq -1$.
    The equality can only hold if $n_Z = 0$ and $m = n_t + 1$.
    This is not possible because $t_m$ is a diverging connection, while $t^0_{n_t+1} = c_1$ is a converging connection.
    % , and therefore $t_{n_t}^0 \ra c_1$.
    % see Definition~\ref{def:common_ancestor}.
    
    So, if $c_1 \neq Z(c_1)$, we have shown that $\G$ contains an active cycle, and therefore we have proven that $c_1 = Z(c_1)$.

    \subsubsection{Proof of Lemma~\ref{lemma:minimal_trail_for_outgoing_arc}\ref{fact:w2}}
    
    % To prove is that the graph contains the subgraph as claimed.

    % \medskip
    
    % \textit{Proof of \ref{fact:w2}.}
    We know that $\G$ contains the trail
    $$
    w \har t^0_1 \har \cdots \har t^0_n \ra c_1.
    $$
    By \ref{fact:w1}, $c_1 \in Z = \parentsdown{w}{o} \sqcup \{o\}$.
    Because this is a shortest trail activated by the empty set ending with a rightward arrow ($t^0_n \ra c_1$) consisting of nodes in $\V \setminus Z$ and $c_1 = Z(c_1) \ra w$ by definition of the set $Z$, we can apply Lemma~\ref{lemma:generalisation_to_K} and Theorem~\ref{lemma:po_active_cycle1} (with $v_1=c_1$ and $v_2= w$ in the notation of Theorem~\ref{lemma:po_active_cycle1}) to find that $\G$ contains the subgraph as claimed.
    
    Furthermore, the length $n$ of this trail must be strictly larger than zero.
    If it were of length zero, then it would simply be the arc $w \ra c_1$.
    However, this would result in a cycle, as we have shown that the arc $c_1 \ra w$ must be present.

    \subsubsection{Proof of Lemma~\ref{lemma:minimal_trail_for_outgoing_arc}\ref{fact:w3}}
    
    % We must prove that $c_1 \la c_2$.
    
    % \medskip

    % \textit{Proof of \ref{fact:w3}.}
    By Theorem~\ref{thm:minimal_trails_local_relationship}\ref{prop:adjacency}, we know that $c_1$ and $c_2$ are adjacent.
    Therefore, it suffices to show that $c_1 \nrightarrow c_2$.
    
    Suppose that $c_1 \ra c_2$. Combining \ref{fact:w2} with Theorem~\ref{thm:minimal_trails_local_relationship}\ref{prop:subgraphs} leads to the conclusion that $\G$ contains the subgraph below.
    
    $$
    \begin{tikzpicture}[scale=1, transform shape, node distance=1.2cm, state/.style={circle, font=\Large, draw=black}]
    \node (v1) {$w$};
    \node[right = 6cm of v1] (v2) {$c_1$};
    \node[right = 6cm of v2] (v3) {$c_2$};
    
    %v1 -- v2
    \node[above right = 1cm and 0.01cm of v1] (x1) {$t^{0}_1$};
    \node[above right = of x1] (x2) {$t^{0}_2$};
    \node[above left = 1cm and 0.01cm of v2] (xn) {$t^{0}_{n}$};
    \node[above left= of xn] (xn-1) {$t^{0}_{n-1}$};
    
    %v2 -- v3
    \node[above right = 1cm and 0.01cm of v2] (y1) {$t^{1}_1$};
    \node[above right = of y1] (y2) {$t^{1}_2$};
    \node[above left = 1cm and 0.01cm of v3] (yn) {$t^{1}_{n}$};
    \node[above left= of yn] (yn-1) {$t^{1}_{n-1}$};
    \begin{scope}[>={Stealth[length=6pt,width=4pt,inset=0pt]}]
    
    \path [->] (v2) edge node {} (v1);
    \path [->] (v2) edge node {} (x2);
    \path [->] (v2) edge node {} (xn-1);
    \path [->] (v2) edge node {} (x1);
    \path [->] (x1) edge node {} (v1);
    \path [->] (x2) edge node {} (x1);
    \path [->] (xn) edge node {} (xn-1);
    \path [->] (xn) edge node {} (v2);
    \draw[loosely dotted, line width=0.5mm] (x2) -- (xn-1);

    \path [->] (v2) edge node {} (v3);
    \path [->] (v2) edge node {} (y2);
    \path [->] (v2) edge node {} (yn-1);
    \path [->] (v2) edge node {} (yn);
    \path [->] (y1) edge node {} (v2);
    \path [->] (y1) edge node {} (y2);
    \path [->] (yn-1) edge node {} (yn);
    \path [->] (yn) edge node {} (v3);
    \draw[loosely dotted, line width=0.5mm] (y2) -- (yn-1);
    
    \end{scope}
    \end{tikzpicture}
    $$
    
    Here, we have that $t^{0}_n \in B(c_i, t^{0}_{n-1})$ and $t^1_1 \in B(c_1, t^1_2)$.
    Since $\G$ does not contain any interfering v-structures, we must have $t^{1}_1 \ra t^{0}_{n-1}$ or $t^{0}_n \ra t^1_2$.
    However, both these arcs result in a better trail than \eqref{trail:sequence_of_ws}.
    Indeed, the trails
    \begin{align*}
        w \la \cdots \la t^0_{n-1} \la t^1_1 \har \cdotslong y,\\
        w \la \cdots \la t^0_n \ra t^1_2 \ra \cdotslong \har y,
    \end{align*}
    contain one fewer converging connection than \eqref{trail:sequence_of_ws}, which gives a contradiction as claimed.

    % Thus, we have shown that $c_1 \nra c_2$ which proves that $c_1 \la c_2$.
    
    \subsubsection{Proof of Lemma~\ref{lemma:minimal_trail_for_outgoing_arc}\ref{fact:w4}}
    
    % We must show that or all $i=1,\dots,C$, we have that $c_i = Z(c_i)$ and $c_i \la c_{i+1}$.
    
    % \medskip

    % \textit{Proof of \ref{fact:w4}.}
    If $C = 1$, then the statement follows immediately by 
    \ref{fact:w1} and \ref{fact:w3}.
    If $C > 1$, by \ref{fact:w1} and \ref{fact:w3}, we have that $c_1 = Z(c_1)$ and $c_1 \la c_2$.
    By Theorem~\ref{thm:minimal_trails_local_relationship}\ref{prop:adjacency}, we know that for all $i = 1, \dots, C$. $c_i$ and $c_{i+1}$ are adjacent
    By Lemma~\ref{prop:no_diverging_cs}, there cannot be any $i$ such that $c_{i-1} \la c_i \ra c_{i+1}$.
    Therefore, $i \in \{1,\dots,C\}$, $c_i \la c_{i+1}$.
    Consequently, we can apply Corollary~\ref{cor:all_cs_in_Z}\ref{prop:leftarrow_all_in_Z}
    to find that for all $i \in \{1,\dots,C\}$, $c_i \in Z$, and therefore $c_i = Z(c_i)$.

    \subsubsection{Proof of Lemma~\ref{lemma:minimal_trail_for_outgoing_arc}\ref{fact:w5}}
    
    % We must prove that the number of converging connections is equal to zero, i.e. $C=0$.
    
    % \medskip
    % \textit{Proof of \ref{fact:w5}.}
    Assume that $C \geq 1$, 
    then by \ref{fact:w4} we get $c_C \la y$.
    Combining this with Theorem~\ref{thm:minimal_trails_local_relationship}\ref{prop:subgraphs}, gives that $\G$ contains the subgraph below with the convention $c_0 := w$ in the case that $C = 1$.
    
    $$
    \begin{tikzpicture}[scale=1, transform shape, node distance=1.2cm, state/.style={circle, font=\Large, draw=black}]
    \node (v1) {$c_{C-1}$};
    \node[right = 7.5cm of v1] (v2) {$c_C$};
    \node[right = of v2] (v3) {$y$};
    
    %v1 -- v2
    \node[above right = 1cm and 0.01cm of v1] (x1) {$t^{C-1}_1$};
    \node[above right = of x1] (x2) {$t^{C-1}_2$};
    \node[above left = 1cm and 0.01cm of v2] (xn) {$t^{C-1}_{n}$};
    \node[above left= of xn] (xn-1) {$t^{C-1}_{n-1}$};
    
    \begin{scope}[>={Stealth[length=6pt,width=4pt,inset=0pt]}]
    
    \path [->] (v2) edge node {} (v1);
    \path [->] (v2) edge node {} (x2);
    \path [->] (v2) edge node {} (xn-1);
    \path [->] (v2) edge node {} (x1);
    \path [->] (x1) edge node {} (v1);
    \path [->] (x2) edge node {} (x1);
    \path [->] (xn) edge node {} (xn-1);
    \path [->] (xn) edge node {} (v2);
    \draw[loosely dotted, line width=0.5mm] (x2) -- (xn-1);

    \path [->] (v3) edge node {} (v2);
    
    \end{scope}
    \end{tikzpicture}
    $$

    \noindent
    We consider two cases; when $\BOvk \neq \pa{v}$ and when $B(\Ovk) = \pa{v}$.

    \medskip

    \noindent
    \underline{Case 1:} 
    Let us assume that $\BOvk \neq \pa{v}$ and let $b_q$ be its corresponding node in the sense of Definition~\ref{def:bsets}.
    Such a $b_q$ always exists otherwise we would necessarily have $B(\Ovk) = \pa{v}$.
    Remark that the nodes $c_C$ and $y$ are in $Y \sqcup Z = \Ovk$, and therefore they are in $\BOvk$.
    Furthermore, if $C > 1$, then the node $c_{C-1}$ is in $Z \subseteq \Ovk \subset \BOvk$, and if $C = 1$, then $c_{C-1} = c_0 := w$ where $w$ is in $\BOvk$ by the assumptions of the lemma.

    \medskip

    By Definition~\ref{def:bsets}, any node in $\BOvk$ has an arc pointing towards both $v$ and $b_q$, giving us the subgraph below.

    $$
    \begin{tikzpicture}[scale=1, transform shape, node distance=1.2cm, state/.style={circle, font=\Large, draw=black}]
    \node (v1) {$c_{C-1}$};
    \node[right = 7.5cm of v1] (v2) {$c_C$};
    \node[right = of v2] (v3) {$y$};
    \node[below = of v2] (v) {$v$};
    \node[left = of v] (bq) {$b_q$};

    %v1 -- v2
    \node[above right = 1cm and 0.01cm of v1] (x1) {$t^{C-1}_1$};
    \node[above right = of x1] (x2) {$t^{C-1}_2$};
    \node[above left = 1cm and 0.01cm of v2] (xn) {$t^{C-1}_{n}$};
    \node[above left= of xn] (xn-1) {$t^{C-1}_{n-1}$};
    
    \begin{scope}[>={Stealth[length=6pt,width=4pt,inset=0pt]}]
    
    \path [->] (v2) edge node {} (v1);
    \path [->] (v2) edge node {} (x2);
    \path [->] (v2) edge node {} (xn-1);
    \path [->] (v2) edge node {} (x1);
    \path [->] (x1) edge node {} (v1);
    \path [->] (x2) edge node {} (x1);
    \path [->] (xn) edge node {} (xn-1);
    \path [->] (xn) edge node {} (v2);
    \draw[loosely dotted, line width=0.5mm] (x2) -- (xn-1);

    \path [->] (v3) edge node {} (v2);
    \path [->] (v1) edge node {} (v);
    \path [->] (v2) edge node {} (v);
    \path [->] (v3) edge node {} (v);
    \path [->] (v1) edge node {} (bq);
    \path [->] (v2) edge node {} (bq);
    \path [->] (v3) edge node {} (bq);

    \path [->] (v) edge node {} (bq);
    
    \end{scope}
    \end{tikzpicture}
    $$
    
    Here, we have that $t^{C-1}_n \in B(c_C, t^{C-1}_{n-1})$, $y \in B(c_C, v)$ and $y \in B(c_C, b_q)$.
    Since $\G$ does not contain any interfering v-structures, we must have $y \ra t^{C-1}_{n-1}$,
    or both $t^{C-1}_n \ra v$ and $t^{C-1}_n \ra b_q$.
    The arc $y \ra t^{C-1}_{n-1}$ results in a trail 
    $$
    w \har \cdots \har t^{C-1}_{n-1} \la y
    $$
    with fewer converging connections than \eqref{trail:sequence_of_ws}.
    By this contradiction, the arcs $t^{C-1}_n \ra v$ and $t^{C-1}_n \ra b_q$ must be present, and therefore
    $t^{C-1}_n \in \pa{v} \cap \pa{b_q} = B(v, b_q) = B_q$.
   Since $B_q = \BOvk$,
    we obtain $t^{C-1}_n \in \BOvk$.
    Moreover, by Theorem \ref{thm:minimal_trails}\ref{prop:nodes_along_subtrails} we know that $t^{C-1}_n \notin \Ovk$, and thus $t^{C-1}_n \in \BOvk \setminus \Ovk$.

    \medskip
    
    The arc $t^{C-1}_n \ra c_C$ is now an arc from a node in $\BOvk \setminus \Ovk$ to a node in $\Ovk$ which is not possible by the assumptions of Lemma~\ref{lemma:sequence_of_ws_no_converging}.
    Therefore, this contradiction completes the proof of the case.

    \medskip

    \noindent
    \underline{Case 2:} 
    If $\BOvk = \pa{v}$, then by a similar argument as in the first case we find that $\G$ must contain the subgraph
    $$
    \begin{tikzpicture}[scale=1, transform shape, node distance=1.2cm, state/.style={circle, font=\Large, draw=black}]
    \node (v1) {$c_{C-1}$};
    \node[right = 7.5cm of v1] (v2) {$c_C$};
    \node[right = of v2] (v3) {$y$};
    \node[below = of v2] (v) {$v$};
    
    %v1 -- v2
    \node[above right = 1cm and 0.01cm of v1] (x1) {$t^{C-1}_1$};
    \node[above right = of x1] (x2) {$t^{C-1}_2$};
    \node[above left = 1cm and 0.01cm of v2] (xn) {$t^{C-1}_{n}$};
    \node[above left= of xn] (xn-1) {$t^{C-1}_{n-1}$};
    
    \begin{scope}[>={Stealth[length=6pt,width=4pt,inset=0pt]}]
    
    \path [->] (v2) edge node {} (v1);
    \path [->] (v2) edge node {} (x2);
    \path [->] (v2) edge node {} (xn-1);
    \path [->] (v2) edge node {} (x1);
    \path [->] (x1) edge node {} (v1);
    \path [->] (x2) edge node {} (x1);
    \path [->] (xn) edge node {} (xn-1);
    \path [->] (xn) edge node {} (v2);
    \draw[loosely dotted, line width=0.5mm] (x2) -- (xn-1);

    \path [->] (v3) edge node {} (v2);
    \path [->] (v1) edge node {} (v);
    \path [->] (v2) edge node {} (v);
    \path [->] (v3) edge node {} (v);
    
    \end{scope}
    \end{tikzpicture}
    $$

    Remark that there are potential interfering v-structures at the nodes $y$ and $t^{C-1}_n$ to $c_C$.
    As in the previous case, we find that the arc $t^{C-1}_n \ra v$ must be present.
    This means that $t^{C-1}_n \in \pa{v} = \BOvk$, and therefore by Theorem~\ref{thm:minimal_trails}\ref{prop:nodes_along_subtrails} we have that $t^{C-1}_n \in \BOvk \setminus \Ovk$.
    Hence, we again find the arc $t^{C-1}_n \ra c_C$ from a node in $\BOvk \setminus \Ovk$ to a node in $\Ovk$ which is a contradiction.

    \medskip
    \noindent
    Thus, both cases  are not possible when $C > 0$, which completes the proof of Lemma~\ref{lemma:sequence_of_ws_no_converging}.
\end{proof}

\subsection{Auxiliary lemmas for the proofs in Sections \ref{subsec:lemma:existence_trail_incoming_case} and \ref{subsec:lemma:existence_trail_outgoing_case}}

\begin{lemma}
    Assume that $\G = (\V, \E)$ is a DAG with no interfering v-structures.
    Let $X,Y,Z \subseteq \V$ be three disjoint subsets and $Y \sqcup Z$ has local relationships (\cref{def:propadj}).
    Assume that $\TRAILSbig{X}{Y}{Z} \neq \emptyset$ and let $T$ a trail of the form (\ref{trail:lemma_XYZ_Cs}) be a minimal element of $\TRAILSbig{X}{Y}{Z}$
    with respect to the order $\smallerTRAIL$.
    
    \medskip
    
    \noindent
    Then for all $i=2, \dots, C$,
    the trail $c_{i-1} \la c_i \ra c_{i+1}$ can not be present in $\G$.
    \label{prop:no_diverging_cs}
\end{lemma}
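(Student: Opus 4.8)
The plan is to argue by contradiction. Suppose that for some $i \in \{2, \dots, C\}$ both arcs $c_i \ra c_{i-1}$ and $c_i \ra c_{i+1}$ are present (using the convention $c_{C+1} := y$, so that the boundary case $i = C$ is handled uniformly). First I would determine the local structure of the two subtrails meeting at $c_i$. By Theorem~\ref{thm:minimal_trails_local_relationship}\ref{prop:adjacency} the chords $c_i \har c_{i-1}$ and $c_i \har c_{i+1}$ exist, and by Theorem~\ref{thm:minimal_trails_local_relationship}\ref{prop:subgraphs} each of the $(i-1)$-th and $i$-th subtrails realizes one of the two displayed subgraphs. Since in both subtrails the chord points \emph{away} from $c_i$, each subtrail realizes the subgraph in which $c_i$ is the common parent (hub) of the remaining nodes. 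Writing $a := t^{i-1}_{n_{i-1}}$ and $b := t^i_1$ for the two neighbours of $c_i$ along the trail (these form the converging connection $a \ra c_i \la b$), and letting $p$ be the predecessor of $a$ on the $(i-1)$-th subtrail and $r$ the successor of $b$ on the $i$-th subtrail (with conventions $p := c_{i-1}$ when $n_{i-1}=1$ and $r := c_{i+1}$ when $n_i=1$), I can read off from the hub structure that $a \ra c_i$, $b \ra c_i$, $a \ra p$, $b \ra r$, together with $c_i \ra p$ and $c_i \ra r$.

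Next I would exploit the absence of interfering v-structures. Since $p, r \in \ch{c_i}$, the sets $B(c_i, p)$ and $B(c_i, r)$ are both B-sets of $c_i$, hence comparable by Lemma~\ref{Bsets_eqcond}. Because $a \ra c_i$ and $a \ra p$ we have $a \in B(c_i, p)$, and because $b \ra c_i$ and $b \ra r$ we have $b \in B(c_i, r)$. Comparability of the two B-sets therefore forces either $a \ra r$ or $b \ra p$.

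In each of these two cases I would build a strictly smaller trail. If $a \ra r$, I replace the segment $t^{i-1}_{n_{i-1}} \ra c_i \la t^i_1 \ra r$ of the trail by the single arc $t^{i-1}_{n_{i-1}} \ra r$; if $b \ra p$, I replace the segment $p \la t^{i-1}_{n_{i-1}} \ra c_i \la t^i_1$ by $p \la t^i_1$. In both cases the converging node $c_i$ is deleted while the endpoints $x,y$ are unchanged, and the nodes whose connection type changes are $t$-nodes (possibly a converging endpoint $c_{i-1}$ or $c_{i+1}$ in the degenerate cases $n_{i-1}=1$, $n_i=1$, but such an endpoint remains converging). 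The newly serial or diverging nodes lie outside $X \sqcup Y \sqcup Z$ by Theorem~\ref{thm:minimal_trails}\ref{prop:nodes_along_subtrails}, so the new trail is still activated by $Z$ and belongs to $\TRAILSbig{X}{Y}{Z}$. Since the rerouting deletes the converging node $c_i$ and creates no new converging node, the new trail has no more converging nodes outside $Z$ (one fewer if $c_i \notin Z$) and strictly fewer converging connections overall; hence it is strictly smaller than $T$ with respect to $\smallerTRAIL$, contradicting minimality.

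The main difficulty will be the bookkeeping in this last step: correctly reading the hub orientation of each subtrail from the direction in which its chord points, treating the short subtrails ($n_{i-1}=1$ or $n_i=1$) and the boundary case $i=C$ (where $c_{C+1}=y$) uniformly, and checking carefully that each rerouting introduces no new converging connection so that the resulting trail is genuinely smaller in $\smallerTRAIL$.
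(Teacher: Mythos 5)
Your proof is correct and essentially identical to the paper's: both use Theorem~\ref{thm:minimal_trails_local_relationship}\ref{prop:subgraphs} to obtain the hub structure at $c_i$, then the absence of interfering v-structures (equivalently, comparability of the B-sets $B(c_i, t^{i-1}_{n-1})$ and $B(c_i, t^{i}_{2})$, your $B(c_i,p)$ and $B(c_i,r)$) to force $t^{i-1}_{n} \ra t^{i}_{2}$ or $t^{i}_{1} \ra t^{i-1}_{n-1}$, and finally reroute the trail through whichever arc exists to obtain a trail with strictly fewer converging connections, contradicting minimality with respect to $\smallerTRAIL$. Your additional bookkeeping (degenerate subtrail lengths, the boundary case $i=C$, and verifying that the rerouted trail stays in $\TRAILSbig{X}{Y}{Z}$) only spells out details the paper leaves implicit.
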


\begin{proof}
    Suppose that there exists such a diverging connection.
    By Theorem~\ref{thm:minimal_trails_local_relationship}\ref{prop:subgraphs},
    $\G$ contains the subgraph below.
    $$
    \begin{tikzpicture}[scale=1, transform shape, node distance=1.2cm, state/.style={circle, font=\Large, draw=black}]
    \node (v1) {$c_{i-1}$};
    \node[right = 6cm of v1] (v2) {$c_i$};
    \node[right = 6cm of v2] (v3) {$c_{i+1}$};
    
    %v1 -- v2
    \node[above right = 1cm and 0.01cm of v1] (x1) {$t^{i-1}_1$};
    \node[above right = of x1] (x2) {$t^{i-1}_2$};
    \node[above left = 1cm and 0.01cm of v2] (xn) {$t^{i-1}_{n}$};
    \node[above left= of xn] (xn-1) {$t^{i-1}_{n-1}$};
    
    %v2 -- v3
    \node[above right = 1cm and 0.01cm of v2] (y1) {$t^{i}_1$};
    \node[above right = of y1] (y2) {$t^{i}_2$};
    \node[above left = 1cm and 0.01cm of v3] (yn) {$t^{i}_n$};
    \node[above left= of yn] (yn-1) {$t^{i}_{n-1}$};
    \begin{scope}[>={Stealth[length=6pt,width=4pt,inset=0pt]}]
    
    \path [->] (v2) edge node {} (v1);
    \path [->] (v2) edge node {} (x2);
    \path [->] (v2) edge node {} (xn-1);
    \path [->] (v2) edge node {} (x1);
    \path [->] (x1) edge node {} (v1);
    \path [->] (x2) edge node {} (x1);
    \path [->] (xn) edge node {} (xn-1);
    \path [->] (xn) edge node {} (v2);
    \draw[loosely dotted, line width=0.5mm] (x2) -- (xn-1);

    \path [->] (v2) edge node {} (v3);
    \path [->] (v2) edge node {} (y2);
    \path [->] (v2) edge node {} (yn-1);
    \path [->] (v2) edge node {} (yn);
    \path [->] (y1) edge node {} (v2);
    \path [->] (y1) edge node {} (y2);
    \path [->] (yn-1) edge node {} (yn);
    \path [->] (yn) edge node {} (v3);
    \draw[loosely dotted, line width=0.5mm] (y2) -- (yn-1);
    
    \end{scope}
    \end{tikzpicture}
    $$
    Remark that 
    \begin{align*}
        &t^{i-1}_{n}\in B(c_i,t^{i-1}_{n-1})
        = \pa{c_i} \cap \pa{t^{i-1}_{n-1}},\\
        &t^i_1\in B(c_i, t^i_2)
        = \pa{c_i} \cap \pa{t^{i}_{2}}.
    \end{align*}
    Since $\G$ does not contain any interfering v-structures, we must have
    $t^i_1 \in B(c_i,t^{i-1}_{n-1})$ or $t^{i-1}_n \in B(c_i, t^i_2)$.
    This means that $t^{i}_1\ra t^{i-1}_{n-1}$ or $t^{i-1}_n\ra t^i_2$.
    However, both arcs result in the existence of trails between $x$ and $y$ that have less converging connections than \eqref{trail:lemma_XYZ_Cs}, and therefore are better trails than \eqref{trail:lemma_XYZ_Cs} whcih contradict assumptions.
   Hence, there cannot be  diverging connection $c_{i-1} \la c_i \ra c_{i+1}$, concluding the proof.
\end{proof}

\medskip

\begin{lemma}
    \label{lemma:super_o_for_sequence}
    Under Assumption~\ref{assumpt:good_graphs}, let $Y,Z$ be two subsets such that $Y \sqcup Z = \Ovk$, and let $y \in Y$.
    Consider the subgraph below where:
    \begin{itemize}
        \item The trail 
        $$
        c_1 \la \cdots \ra c_2 \la \cdotslong \ra c_C \la \cdots \har y
        $$
        has $C$ converging connections corresponding to the nodes $\{ c_i \}_{i=1}^C$ with $C>1$.
        \item Each $c_i$ is either contained in $Z$ or it has a closest descendant $Z(c_i)$ in $Z$.
        \item All nodes on the trail and descendant paths not equal to $y$ or $Z(c_i)$ with $i \in \{1,\dots,C \}$ are in $\V \setminus \Ovk$.
    \end{itemize}
    \begin{figure}[H]
    \centering
    \begin{tikzpicture}[scale=1, transform shape, node distance=1.2cm, state/.style={circle, font=\Large, draw=black, minimum size=1cm}]
        \node(c1) {$c_1$};
        \node[right of=c1] (c1r) {};
        \node[right of=c1r] (c2l) {};
        \node[right of=c2l] (c2) {$c_2$};
        \node[right of=c2] (c2r) {};
        \node[right= 2.5cm of c2r] (cCl) {};
        \node[right of=cCl] (cC) {$c_C$};
        \node[right of=cC] (cCr) {};
        \node[right of=cCr] (cl) {};
        \node[right of=cl] (o) {$y$};

        % Nodes above
        \node[above of = c1] (c1a) {};
        \node[above of = c1a] (oc1u) {};
        \node[above of = oc1u] (oc1) {$Z(c_1)$};

        \node[above of = c2] (c2a) {};
        \node[above of = c2a] (oc2u) {};
        \node[above of = oc2u] (oc2) {$Z(c_2)$};

        \node[above of = cC] (cCa) {};
        \node[above of = cCa] (ocCu) {};
        \node[above of = ocCu] (ocC) {$Z(c_C)$};

        \begin{scope}[>={Stealth[length=6pt,width=4pt,inset=0pt]}]
            % c_1 -> c_2
            \path [->] (c1r) edge node {} (c1);
            \draw[loosely dotted, line width=0.5mm] (c1r) -- (c2l);
            \path [->] (c2l) edge node {} (c2);
            
            % c_2 -> c_C
            \path [->] (c2r) edge node {} (c2);
            \draw[loosely dotted, line width=0.5mm] (c2r) -- (cCl);
            \path [->] (cCl) edge node {} (cC);
            
            % c_C -> oc
            \path [->] (cCr) edge node {} (cC);
            \draw[loosely dotted, line width=0.5mm] (cCr) -- (cl);
            \draw[transform canvas={yshift=0.21ex},-left to,line width=0.25mm] (cl) -- (o);
            \draw[transform canvas={yshift=-0.21ex},left to-,line width=0.25mm] (cl) -- (o);

            % Nodes above the c_j's
            \path [->] (c1) edge node {} (c1a);
            \path [->] (oc1u) edge node {} (oc1);
            \draw[loosely dotted, line width=0.5mm] (c1a) -- (oc1u);

            \path [->] (c2) edge node {} (c2a);
            \path [->] (oc2u) edge node {} (oc2);
            \draw[loosely dotted, line width=0.5mm] (c2a) -- (oc2u);

            \path [->] (cC) edge node {} (cCa);
            \path [->] (ocCu) edge node {} (ocC);
            \draw[loosely dotted, line width=0.5mm] (cCa) -- (ocCu);
        
        \end{scope}
    \end{tikzpicture}
    \end{figure}
    
    Then, there exists a node $\ot \in \Ovk$ such that $y \ra \ot$ and for all $i=1,\dots,C$, $Z(c_i) \ra \ot$
    whenever this does not result in the self-loop $\ot \ra \ot$.
    Indeed, the node $\ot$ may be equal to any node in $\Ovk$ including $y$ and $Z(c_i)$ with $i \in \{1, \dots, C \}$.
\end{lemma}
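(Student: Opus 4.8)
The plan is to reduce the statement to a claim about the set $S := \{y, Z(c_1), \dots, Z(c_C)\}$, all of whose elements lie in $\Ovk \subseteq \pa{v}$, and to exhibit a single common child of every node of $S$ that again lies in $\Ovk$. First I would establish that consecutive nodes of the list $Z(c_1), \dots, Z(c_C), y$ are adjacent. For each $i \in \{1, \dots, C-1\}$ I would splice the reversed descendant path of $c_i$, the $i$-th subtrail $c_i \la t^i_1 \har \cdots \har t^i_n \ra c_{i+1}$, and the descendant path of $c_{i+1}$ into a single trail $T_i$ from $Z(c_i)$ to $Z(c_{i+1})$; the analogous construction using the final subtrail gives a trail from $Z(c_C)$ to $y$. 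The connections at $c_i$ and $c_{i+1}$ become serial (an incoming subtrail arc meeting an outgoing descendant arc), the interior of each subtrail carries no converging connection by \cref{thm:minimal_trails}\ref{prop:shortest_subtrails}, and the descendant paths are directed; hence each $T_i$ has no converging connection and is activated by the empty set (\cref{lemma:trail_activ_emptyset}). By the third hypothesis all interior nodes of $T_i$ lie in $\V \setminus \Ovk$, so \cref{cor:adjacent_os} applies and yields that $Z(c_i)$ and $Z(c_{i+1})$ (respectively $Z(c_C)$ and $y$) are adjacent.

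The core of the argument, which I would carry out by induction on $C$, is to merge these pairwise adjacencies into one common child. Starting at the tail, I would use the subgraph produced by \cref{thm:minimal_trails_local_relationship}\ref{prop:subgraphs}, the adjacency of $Z(c_C)$ and $y$, and the absence of interfering v-structures (in the form of \cref{Bsets_eqcond} and \cref{cor:B_sets_implies_rightarrow}) to produce a node $\ot'$ with $y \ra \ot'$ and $Z(c_C) \ra \ot'$. The point is that $\ot'$ can be chosen to be a parent of $v$: since $y$ and $Z(c_C)$ both point to $v$, any obstruction to a common child would realise the forbidden configuration of \cref{def:interfering_v_structures}, so the common child is pinned inside the B-set structure of $v$ and, being adjacent to nodes already in $\Ovk$, can be shown to belong to $\Ovk$ (possibly $\ot'\in\{y, Z(c_C)\}$ itself, namely the sink of that adjacency). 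I would then splice $\ot'$ in so as to eliminate the converging node $c_C$, obtaining a configuration with $C-1$ converging connections whose final endpoint is $\ot'$, and apply the induction hypothesis to $y, Z(c_1), \dots, Z(c_{C-1})$ together with $\ot'$; acyclicity of $\G$ guarantees that the successive directed merges never create a cycle.

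The base case (small $C$) would follow directly from a single adjacency and the merging step of the previous paragraph. I expect the main obstacle to be exactly the requirement that the merged child lie in $\Ovk$ rather than merely being a child of $v$: the crude B-set reasoning only shows that all nodes of $S$ share the child $q$ with $\BOvk = B(v,q)$, and this $q$ is a child of $v$, hence in general $q \notin \Ovk$. Controlling the directions of the arcs among the $Z(c_i)$ via \cref{cor:B_sets_implies_rightarrow} and the no-interfering-v-structure hypothesis, so that one and the same node of $\Ovk$ receives arcs from every element of $S$ without producing a directed cycle, is where the genuine work lies, and it is the reason this statement is isolated as a separate lemma rather than handled inline.
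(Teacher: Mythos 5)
Your first step (pairwise adjacency of consecutive elements of $Z(c_1),\dots,Z(c_C),y$ via spliced converging-free trails and \cref{cor:adjacent_os}) is sound, but the heart of the lemma is exactly what you leave open, and the induction you sketch in its place does not close. First, the merging step is not provable by the tools you cite: adjacency only tells you that one of $Z(c_i)\ra Z(c_{i+1})$ or $Z(c_{i+1})\ra Z(c_i)$ holds, and when the arcs along the chain fail to point at a single sink (say $Z(c_1)\ra Z(c_2)$ and $Z(c_2)\ra y$, with $Z(c_1)$ and $y$ not adjacent), \cref{Bsets_eqcond} only gives an inclusion between $B\big(Z(c_2),y\big)$ and $B\big(Z(c_2),v\big)$, and the useless direction $B\big(Z(c_2),y\big)\subseteq B\big(Z(c_2),v\big)$ cannot be excluded without information about parental orderings; as you yourself concede, the crude B-set argument only pins the common child $q$ with $B(v,q)=\BOvk$, which is a child of $v$ and in general not in $\Ovk$. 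Second, even granting a merge of $y$ and $Z(c_C)$ into a common child $\ot'$, the induction does not produce the conclusion: applying the inductive hypothesis to the configuration ending at $\ot'$ yields some $\ot''$ with $\ot'\ra\ot''$ and $Z(c_i)\ra\ot''$ for $i<C$, whereas the lemma demands the arcs $y\ra\ot''$ and $Z(c_C)\ra\ot''$ themselves, and arcs do not compose along $y\ra\ot'\ra\ot''$. (When $\ot'=y$ there is the further problem that the re-spliced final subtrail passes through $Z(c_C)\in\Ovk$, violating the third hypothesis you need in order to invoke the inductive hypothesis at all.)

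The paper closes this gap by a mechanism your sketch never invokes: the algorithmic history of $\Ovk$. Setting $Z(c_{C+1}):=y$, it selects the element $Z(c_j)$ of $\{Z(c_i)\}_{i=1}^{C+1}$ that was added last by the algorithm, so that $Z(c_j)\in\PossCand{\Ot}$ for a partial order $\Ot\subseteq\Ovk$ containing all the other $Z(c_i)$. The converging-free trails along the chain rule out candidacy by independence, so $Z(c_j)\in\PossCandIn{\Ot}$ or $Z(c_j)\in\PossCandOut{\Ot}$; the d-separation requirement in \cref{prop:characterization_poss_candidates}, tested against those same trails, then forces---by propagating outward from $j$ along the chain, which is precisely the induction your merge was meant to supply---every other $Z(c_i)$ into $\parentsdown{\ot}{Z(c_j)}\sqcup\{\ot\}$ (incoming case: common child $\ot$), respectively into $\parentsdown{Z(c_j)}{\ot}\sqcup\{\ot\}$ (outgoing case: common child $Z(c_j)$ itself). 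It is this appeal to the possible-candidate conditions that each element of $\Ovk$ satisfied at the moment it was added, not adjacency or B-set inclusions, that produces a single receiving node inside $\Ovk$; without it the conclusion is out of reach.
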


\begin{proof}
    Remark that 
    $\forall i=1, \dots, C$, $Z(c_i)\in Z \subseteq \Ovk$, and the node $y \in Y \subseteq \Ovk$.
    For convenience, we use the conventions $c_{C+1} := y$ and $Z(c_{C+1}) := y$.
    Therefore, the set
    $\{Z(c_i)\}_{i=1}^{C+1}$
    must have a highest node according to the ordered set $\Ovk$.
    We denote such a node by $o_{max}$ and pick $j \in \{1, \dots, C+1 \}$ such that $Z(c_j) = o_{max}$.
    This highest node $o_{max}$ must have been a possible candidate to some partial order $\Ot \subseteq \Ovk$ which contains all other nodes in the set, i.e.
    $\{Z(c_i): i = 1, \dots, C+1; i \neq j \}$.

    \medskip

    Consequently, the node $Z(c_j)$ must be a possible candidate to a set $\Ot \subseteq \Ovk$ which contains
    $\{ Z(c_i): i = 1, \dots, C+1; i \neq j \}$.
    We now prove that it cannot be a candidate by independence. 
    Observe that at least one of the following trails exist:
    \begin{align*}
        Z(c_j) \la \cdots \la c_j \la \cdots \ra c_{j+1} \ra \cdots \ra Z(c_{j+1}),\\
        Z(c_j) \la \cdots \la c_j \la \cdots \ra c_{j+1} \ra \cdots \ra Z(c_{j-1}).
    \end{align*}
    These are trails with no converging connections between $Z(c_j)$ and
    $\{Z(c_{j-1}), Z(c_{j+1}) \} \subseteq \Ot$
    consisting of nodes in $\V \setminus \Ovk \subseteq \V \setminus \Ot$.
    Therefore, we have that $\notdsepbig{Z(c_j)}{\Ot}{\emptyset}$, and therefore $Z(c_j) \notin \PossCandInd{\Ot}$, see Proposition~\ref{prop:characterization_poss_candidates}.
    This means that $Z(c_j)$ must be in $\PossCandIn{\Ot}$ or $\PossCandOut{\Ot}$.
    We consider both cases.

    \medskip
    
    \noindent
    \underline{Case 1:} Suppose that $Z(c_j) \in \PossCandIn{\Ot}$.
    Then, by Proposition~\ref{prop:characterization_poss_candidates}, there exists an $\ot\in \Ot$ such that $Z(c_j) \ra \ot$ satisfying
        \begin{enumerate}[label=\arabic*.]
            \item $\parentsdown{\ot}{Z(c_j)}\subseteq \Ot$,
            \item $\dsepbig{Z(c_j)}{\Ot \setminus (\parentsdown{\ot}{Z(c_j)} \sqcup \{\ot\} )}{\parentsdown{\ot}{Z(c_j)} \sqcup \{\ot\}}$.
        \end{enumerate}
    We will show that this implies that for all 
    $i\neq j$, $Z(c_i) \in \parentsdown{\ot}{Z(c_j)} \sqcup \{ \ot \}$.
    This means that each $Z(c_i)$ with $i \neq j$ points towards $\ot$ or is equal to $\ot$.
    Moreover, by the construction above we also know that $Z(c_j) \ra \ot$.
    This finishes the proof of Lemma~\ref{lemma:super_o_for_sequence}.

    \medskip
    Consider the nodes $Z(c_{j-1})$ and $Z(c_{j+1})$ (assuming that they exist).
    Suppose that the nodes $Z(c_{j-1})$ and $Z(c_{j+1})$ are not in $\parentsdown{\ot}{Z(c_j)} \sqcup \{\ot\}$.
    They are connected to $Z(c_j)$ by the trails
    \begin{align*}
        Z(c_j) \la \cdots \la c_j \la \cdots \ra c_{j+1} \ra \cdots \ra Z(c_{j+1}),\\
        Z(c_j) \la \cdots \la c_{j} \la \cdots \ra c_{j-1} \ra \cdots \ra Z(c_{j-1}),
    \end{align*}
    which contain no converging connections nor nodes in $\parentsdown{\ot}{Z(c_j)} \sqcup \{\ot\} \subseteq  \Ot \subseteq \Ovk$.
    Therefore, these trails are activated by 
    $\parentsdown{\ot}{Z(c_j)} \sqcup \{\ot\}$.
    Thus, they are trails from $Z(c_j)$ to 
    $\Ot\setminus (\parentsdown{\ot}{Z(c_j)} \sqcup \{\ot\})$ activated by $\parentsdown{\ot}{Z(c_j)} \sqcup \{\ot\}$.
    This means that
    $\notdsepbig{Z(c_j)}{\Ot\setminus (\parentsdown{\ot}{Z(c_j)} \sqcup \{\ot\})}{\parentsdown{\ot}{Z(c_j)} \sqcup \{\ot\}}$
    which contradicts the assumption that the second restriction is satisfied.
    Therefore, we must have $Z(c_{j-1}),Z(c_{j+1}) \in \parentsdown{\ot}{Z(c_j)} \sqcup \{\ot\}$.

    \medskip
    
    Now, the trails
    \begin{align*}
        Z(c_j) \la \cdots \la c_j \la  \cdots \ra c_{j+1} \la \cdots \ra c_{j+2}\ra \cdots \ra Z(c_{j+2}),\\
        Z(c_j) \la \cdots \la c_j \la  \cdots \ra c_{j-1} \la \cdots \ra c_{j-2}\ra \cdots \ra Z(c_{j-2})
    \end{align*}
    are activated by $\parentsdown{\ot}{Z(c_j)} \sqcup \{\ot\}$ since the converging connections at $c_{j-1}$ and $c_{j+1}$ have a descendant ($Z(c_{j-1})$ and $Z(c_{j+1})$, respectively) in $\parentsdown{\ot}{Z(c_j)} \sqcup \{\ot\}$.
    Thus, by the same argument $Z(c_{j-2})$ and $Z(c_{j+2})$ are in $\parentsdown{\ot}{Z(c_j)} \sqcup \{\ot\}$.

    \medskip

    We conclude this proof by induction. Indeed, the same argument can be repeated to show that for any $k$,
    $Z(c_{j+k}) \in \parentsdown{\ot}{Z(c_j)} \sqcup \{\ot\}$
    (resp. $Z(c_{j-k}) \in \parentsdown{\ot}{Z(c_j)} \sqcup \{\ot\}$)
    whenever $1 \leq j+k \leq C+1$
    (resp. $1 \leq j-k \leq C+1$).
    Therefore, we have proved that for all $i\neq j$, $Z(c_{i}) \in \parentsdown{\ot}{Z(c_j)} \sqcup \{\ot\}$, which completes the proof in this case.

    \medskip

    \noindent
    \underline{Case 2:}
    Suppose that $Z(c_j) \in \PossCandOut{\Ot}$.
    Then, there exists an $\ot \in \Ot$ with $\ot \ra Z(c_j)$ satisfying
        \begin{enumerate}[label=\arabic*.]
            \item $\parentsdown{Z(c_j)}{\ot}\subseteq \Ot$,
            \item $\dsepbig{Z(c_j)}{\Ot\setminus (\parentsdown{Z(c_j)}{\ot} \sqcup \{\ot\})}{\parentsdown{Z(c_j)}{\ot} \sqcup \{\ot\}}$.
        \end{enumerate}
    We will show that for all $i\neq j$, $Z(c_i) \in \parentsdown{Z(c_j)}{\ot} \sqcup \{\ot\}$. Therefore, $i\neq j$, $Z(c_i) \ra Z(c_j) \in \Ovk$, so $Z(c_j)$ is our desired $\ot$. This finishes the proof of \ref{lemma:super_o_for_sequence} in this case.

    \medskip

    First, consider the nodes $Z(c_{j+1})$ and $Z(c_{j-1})$ (assuming that they exist).
    If they are both in $\Ot \setminus (\parentsdown{Z(c_j)}{\ot} \sqcup \{\ot\})$,
    then
    \begin{align*}
        Z(c_j) \la \cdots \la c_j \la \cdots \ra c_{j+1} \la \cdots \ra Z(c_{j+1}),\\
         Z(c_j) \la \cdots \la c_j \la \cdots \ra c_{j-1} \la \cdots \ra Z(c_{j-1}),
    \end{align*}
    contain no converging connections nor nodes in $\parentsdown{Z(c_j)}{\ot} \sqcup \{\ot\} \subseteq \Ovk$.
    Therefore, they are activated by $\parentsdown{Z(c_j)}{\ot} \sqcup \{\ot\}$.
    This means that
    $\notdsepbig{Z(c_j)}{\Ot\setminus (\parentsdown{Z(c_j)}{\ot} \sqcup \{\ot\})}{\parentsdown{Z(c_j)}{\ot} \sqcup \{\ot\}}$
    which contradicts the assumption that the second restriction is satisfied.
    Therefore, we must have $Z(c_{j-1}),Z(c_{j+1}) \in \parentsdown{Z(c_j)}{\ot} \sqcup \{\ot\}$.

    \medskip

    Now, the trails
    \begin{align*}
    Z(c_j) \la \cdots \la c_j \la \cdots \ra c_{j+1} \la \cdots \ra c_{j+2}\ra \cdots \ra Z(c_{j+2}),\\
    Z(c_j) \la \cdots \la c_j \la \cdots \ra c_{j-1} \la \cdots \ra c_{j-2}\ra \cdots \ra Z(c_{j-2}),
    \end{align*}
    are activated by $\parentsdown{Z(c_j)}{\ot} \sqcup \{\ot\}$.
    Thus, by the same argument $Z(c_{j+2}), Z(c_{j-2}) \in \parentsdown{Z(c_j)}{\ot} \sqcup \{\ot\}$.

    \medskip

    Similarly as in the first case, the proof is finished by an induction argument, showing that
    for all $i\neq j$, $Z(c_{i}) \in \parentsdown{Z(c_j)}{\ot} \sqcup \{\ot\}$, as claimed above.
\end{proof}

\end{document}